\documentclass[sigconf]{acmart}
\settopmatter{printacmref=false, printfolios=true}


\renewcommand\footnotetextcopyrightpermission[1]{} 

\AtBeginDocument{%
  }
\usepackage{bm}
\usepackage{mathtools}
\usepackage{soul}
\usepackage{amsmath}
\usepackage{amsthm}
\usepackage{apxproof}
\usepackage{algorithmicx,algpseudocode, algorithm}
\usepackage{multirow}
\usepackage{subcaption}
\usepackage{makecell}  
\usepackage{booktabs}
\usepackage{siunitx}
\usepackage{tabularx}

\usepackage{eso-pic}

\usepackage{xcolor}

\DeclareMathOperator{\PLRV}{\mathrm{PLRV}}
\DeclareMathOperator{\Beta}{\mathsf{Beta}}
\DeclareMathOperator{\BP}{\mathsf{BetaPrime}}
\DeclareMathOperator{\R}{\mathbb{R}}

\newtheorem{definition}{Definition}
\newtheorem{remark}{Remark}
\newtheorem{theorem}{Theorem}
\newtheorem{lemma}{Lemma}
\newtheorem{corollary}{Corollary}
\newtheorem{proposition}{Proposition} 
\newtheorem{observation}{Observation} 

\newcommand{\cellvcenter}{\rule{0pt}{2.5ex}\rule[1ex]{0pt}{0pt}}

\setcopyright{acmlicensed}
\copyrightyear{2026}
\acmYear{2026}
\acmDOI{XXXXXXX.XXXXXXX}
\acmConference[SIGMOD '26]{the ACM International Conference on Management of Data }{May 31--June 5,
  2026}{Bengaluru, IN}
  \acmBooktitle{Proceedings of the ACM International Conference on Management of Data (SIGMOD '26), May 31--June 5, 2026, Bengaluru, India}

\acmISBN{978-1-4503-XXXX-X/2018/06}

\begin{document}
\AddToShipoutPicture*{
    \AtPageUpperLeft{
        \raisebox{-1cm}{
            \hspace{0.65in}
            \parbox{\textwidth}{
                \centering \textit{A preliminary version of this work is accepted by   ACM International Conference on Management of Data (SIGMOD '26). This is the full version.}
            }
        }
    }
}

\title{Privacy Loss of Noise Perturbation via Concentration Analysis of A Product Measure}
\author{Shuainan Liu}
\orcid{0009-0000-9477-2922}
\email{Shuainan.Liu@ttu.edu}
\affiliation{%
  \institution{Texas Tech University}
  \city{}
  \state{}
  \country{}}
  
\author{Tianxi Ji }
 \authornote{Corresponding author.}
\email{tiji@ttu.edu}
\affiliation{%
  \institution{Texas Tech University}
  \city{}
  \state{}
  \country{}}

\author{Zhongshuo Fang}
\email{zhonfang@ttu.edu}
\affiliation{%
  \institution{Texas Tech University}
  \city{}
  \state{}
  \country{}}

\author{Lu Wei}
\email{luwei@ttu.edu}
\affiliation{%
  \institution{Texas Tech University}
  \city{}
  \state{}
  \country{}}

\author{Pan Li}
\email{lipan@ieee.org}
\affiliation{%
  \institution{Hangzhou Dianzi University}
  \city{}
  \state{}
  \country{}}
\renewcommand{\shortauthors}{Shuainan Liu, Tianxi Ji, Zhongshuo Fang, Lu Wei, and Pan Li}

\begin{abstract}
Noise perturbation is one of the most fundamental approaches for achieving $(\epsilon,\delta)$-differential privacy (DP) guarantees when releasing the result of a query or function $f(\cdot)\in\R^M$ evaluated on a sensitive dataset $\bm{x}$. In this approach, calibrated noise $\mathbf{n}\in\R^M$ is used to obscure the difference vector $f(\bm{x})-f(\bm{x}')$, where $\bm{x}'$ is known as a neighboring dataset. A DP guarantee is obtained by studying the tail probability bound of a privacy loss random variable (PLRV), defined as the Radon-Nikodym derivative between two distributions. When $\mathbf{n}$ follows a multivariate Gaussian distribution, the PLRV is characterized as a specific univariate Gaussian. In this paper, we propose a novel scheme to generate $\mathbf{n}$ by leveraging the fact that the perturbation noise is typically spherically symmetric (i.e., the distribution is rotationally invariant around the origin). The new noise generation scheme allows us to investigate the privacy loss from a geometric perspective and express the resulting PLRV using a product measure, $W\times U$; measure $W$ is related to a radius random variable controlling the magnitude of $\mathbf{n}$, while measure $U$ involves a directional random variable governing the angle between $\mathbf{n}$ and the difference $f(\bm{x})-f(\bm{x}')$. We derive a closed-form moment bound on the product measure to prove $(\epsilon,\delta)$-DP. Under the same $(\epsilon,\delta)$-DP guarantee, our mechanism yields a   smaller expected noise magnitude than the classic Gaussian noise in high dimensions, thereby significantly improving the utility of the noisy result $f(\bm{x})+\mathbf{n}$. To validate this, we consider privacy-preserving convex and non-convex empirical risk minimization (ERM) problems in high dimensional space. We propose leveraging our developed noise in output perturbation, objective perturbation, and gradient perturbation to establish DP guarantees when solving ERMs.  Experiments on multiple datasets show that our method achieves significant utility improvements for convex ERM models (e.g., regression and SVM) under the same privacy guarantees. For non-convex models (e.g., neural networks), it provides substantially stronger  privacy guarantees under comparable utility.\footnote{Code  Repository: \url{https://github.com/issleepgroup/Product-Noise-ERM}}
\end{abstract}

\begin{CCSXML}
<ccs2012>
 <concept>
  <concept_id>00000000.0000000.0000000</concept_id>
  <concept_desc>Do Not Use This Code, Generate the Correct Terms for Your Paper</concept_desc>
  <concept_significance>500</concept_significance>
 </concept>
 <concept>
  <concept_id>00000000.00000000.00000000</concept_id>
  <concept_desc>Do Not Use This Code, Generate the Correct Terms for Your Paper</concept_desc>
  <concept_significance>300</concept_significance>
 </concept>
 <concept>
  <concept_id>00000000.00000000.00000000</concept_id>
  <concept_desc>Do Not Use This Code, Generate the Correct Terms for Your Paper</concept_desc>
  <concept_significance>100</concept_significance>
 </concept>
 <concept>
  <concept_id>00000000.00000000.00000000</concept_id>
  <concept_desc>Do Not Use This Code, Generate the Correct Terms for Your Paper</concept_desc>
  <concept_significance>100</concept_significance>
 </concept>
</ccs2012>
\end{CCSXML}

\ccsdesc[500]{Theory of computation~Randomness, geometry and discrete structures}
\ccsdesc[500]{Security and privacy~Data anonymization and sanitization}

\keywords{Differential Privacy, Measure Concentration, Optimization}


\maketitle
\thispagestyle{plain} 
\section{Introduction} \label{sec:intro}

Differential privacy (DP)~\cite{dwork2014algorithmic}  is widely applied to protect the privacy of a sensitive dataset $\bm{x}$ when releasing computation results, denoted as $f(\bm{x})\in\R^M$. An important privacy guarantee is $(\epsilon, \delta)$-DP. It means that, with the exception of a small probability $\delta$ (e.g., $<10^{-5}$), the inclusion or exclusion of any single data record in $\bm{x}$ cannot change the probability of observing a specific computation result by more than a multiplicative factor of $e^{\epsilon}$. $(\epsilon,\delta)$-DP enables a unified and comparable privacy notion for privacy-preserving algorithms, and has been widely applied to quantify privacy loss in various privacy-sensitive tasks, such as covariance matrix estimation in principal component analysis (PCA)~\cite{chaudhuri2013near,liu2022dp}, recommendation systems~\cite{McSherry2009differentially,wadhwa2020data}, data mining~\cite{friedman2010data,li2012PrivBasis}, graph data publication~\cite{kasiviswanathan2013analyzing,ji2021differentially}, machine learning and artificial intelligence~\cite{chaudhuri2011differentially,iyengar2019towards,abadi2016deep}.

The classic Gaussian mechanism~\cite{dwork2006our} is the most common way to achieve $(\epsilon,\delta)$-DP. It perturbs the computation result $f(\bm{x})\in\R^M$ using Gaussian noise $\mathbf{n}\in\R^M$ and releases the noisy result $\mathbf{s}$, i.e.,
\begin{equation}\label{eq:add_noise}
    \begin{aligned}
        \mathbf{s} = f(\bm{x}) + \mathbf{n}.
    \end{aligned}
\end{equation}
There are quite a few variants of this classic mechanism, e.g.,~\cite{balle2018improving, chanyaswad2018mvg}. In general, they all establish $(\epsilon,\delta)$-DP guarantee by investigating the tail probability bound of the \textbf{privacy loss random variable} (PLRV), defined as the Radon-Nikodym derivatives (cf. Definition~\ref{def:plrv} in Section~\ref{sec:preliminaries}). In essence, the PLRV quantifies the ratio between the probability density functions (PDFs) of two randomized outputs generated from slightly different inputs.

Looking from the lens of PLRV, all existing DP mechanisms using Gaussian noise share common characteristic:    when Gaussian noise with variance $\sigma^2$ is used to perturb $f(\bm{x})$ with $l_2$ sensitivity $\Delta_2f$,\footnote{$\Delta_2f = \sup_{\bm{x}\sim\bm{x}'}||f(\bm{x})-f(\bm{x}')||_2$, where $\bm{x}'$ differs with $\bm{x}$ by only one data record.}   the resulted  PLRV is characterized as another Gaussian distribution,  $\mathcal{N}\left(\frac{(\Delta_2f)^2}{2\sigma^2},2\frac{(\Delta_2f)^2}{2\sigma^2}\right)$ (cf.~\cite[Proposition 3]{steinke2022composition} and~\cite{balle2018improving}). Then, a certain $(\epsilon,\delta)$-DP guarantee is constructed by analyzing the tail probability of such Gaussian distributed PLRV. Other DP variants that leverage Gaussian noise, such as Concentrated DP~\cite{dwork2016concentrated,bun2016concentrated} or R\'enyi DP~\cite{mironov2017renyi}  handle the PLRV in much the same way; the considered PLRV is still bounded using the measure concentration of a Gaussian random variable. For example, to compose the privacy loss in DP-stochastic gradient descent, Abadi et al.~\cite{abadi2016deep} use the Cram\'er-Chernoff method to bound the moment generation function of a Gaussian-type PLRV.

From the perspective of the noise itself, the multivariate Gaussian noise in (\ref{eq:add_noise}) is known to be spherically symmetric~\cite{paolella2018linear,fang1990generalized}, which means that the noise can be represented as a random unit vector $\bm{h}\in\R^M$ scaled by a non-negative random scalar $R$ and a privacy related constant $\sigma$, i.e., $\mathbf{n} = \sigma R\bm{h}$ (the details are deferred to Section~\ref{sec:a-new-noise}). Since $\|\bm{h}\|_2=1$, we have $\|\mathbf{n}\|_2 \propto R$, thus, a straightforward approach to control the perturbation noise 
in (\ref{eq:add_noise})  is to make the non-negative random scalar $R$ as small as possible. Additionally, to control the event where a query result $f(\bm{x})$ is overwhelmed by large noise, we wish extremely large values of $\|\mathbf{n}\|_2$ (or outliers) are less likely to happen. Statistically speaking, we prefer $\|\mathbf{n}\|_2$ (or equivalently $R$) to have   large kurtosis and skewness~\cite{casella2002statistical,westfall2014kurtosis}. 

Unfortunately, as we will rigorously show in Section~\ref{sec:a-new-noise}, the commonly adopted multivariate Gaussian noise leads to two intrinsic limitations:  (i) a large     $R$, distributed as $\chi_M$ with degrees of freedom equal to the output dimension of $f(\bm{x})\in\R^M$, and
(ii) minimal kurtosis and skewness that scale inversely with the dimension, i.e., $\bm{O}(\frac{1}{M})$.
To overcome these drawbacks, we introduce a new product noise construction, where the non-negative scalar $R$ is instead drawn from $\chi_1$.

As one might expect, there exists a trade-off in the new noise construction 
as the degree of freedom of $R$ is reduced from $M$ to 1. Specifically, this trade-off manifests in the privacy-related constant, which becomes dependent on the dimension $M$, and we denote it as $\sigma_M$ in our noise (a counterpart of $\sigma$ in the Gaussian mechanism). Nevertheless, we will develop advanced techniques to control this trade-off, ensuring the new constant $\sigma_M$ grows only sublinearly with respect to $M$. Our novel techniques have two key components:

$\bullet$ We construct a \textbf{non-Gaussian type} PLRV by interpreting noise perturbation in (\ref{eq:add_noise}) as a random triangle instance in $\R^M$. Under this geometric view, the new form of PLRV is a product of two measures; one (denoted as $W$) is related to a random variable controlling the magnitude of our noise, and the other (indicated by $U$) involves a directional random variable governing the angle between our noise and the sensitive vector that needs to be obfuscated.

$\bullet$ To control the value of the constructed PLRV (i.e., $W\times U$), we adopt the moment bound~\cite{philips1995moment} to directly constrain its tail probability, which can be represented in the form of $\Pr\left[WU\geq \frac{\sigma_M}{\Delta_2f}\epsilon\right]\leq \delta$. Moment bound is always tighter than the Cram\'er–Chernoff bound~\cite{philips1995moment}, which is widely used by existing mechanisms~\cite{bun2016concentrated,mironov2017renyi,abadi2016deep} to analyze the tail probability of a PLRV through bounding its moment generating function. As a result, for a given dimension $M$, we can make the required $\sigma_M$ scale sublinearly with $M$. 

Since the aforementioned trade-off is effectively managed, under the same $(\epsilon,\delta)$-DP guarantee, the expected magnitude of the required noise is significantly reduced compared with the classic Gaussian noise in high-dimensional applications. In fact, our empirical guidance suggests that when fixing $\delta=10^{-5}$, it becomes preferable to adopt our proposed product noise rather than the classic Gaussian noise whenever $M \geq 14$. 

\subsection{Contributions}
Our main contributions are summarized as follows:
 
\textbf{(1) New noise and new interpretation.}
 We propose a novel product noise that facilitates the interpretation of  DP through noise perturbation from the perspective of random geometry and offers an innovative characterization of PLRV as a product measure.  

\textbf{(2) Theory.}
We first rigorously derive the moment bound of the product measure to establish $(\epsilon,\delta)$-DP guarantee for our product noise. Building on this theory, we provide an approach to set the noise magnitude and conduct a systematic comparison with Gaussian noise in both low and high-dimensional settings, yielding actionable guidance for noise selection. 
Under the same DP guarantee, our noise has significantly lower magnitude compared with the classic Gaussian noise. We also theoretically analyze  the privacy and utility guarantees when the product noise is utilized  to learn machine learning models (formulated as  empirical risk minimization, ERM) in a differentially private manner.

\textbf{(3) Application.}
We apply the proposed product noise to differentially private ERM. We consider both convex ERM (logistic regression and support vector machines) and non-convex ERM (deep neural networks). For the convex scenario, we solve the ERM via both output perturbation~\cite{chaudhuri2011differentially,wu2017bolt} and objective perturbation~\cite{iyengar2019towards}. experiment results show that, under the same or even stricter privacy guarantees, perturbation using our proposed noise can achieve higher utility (i.e., testing accuracy) compared with other noise.  We use gradient perturbation~\cite {abadi2016deep,bu2020deep}  to solve the non-convex ERMs. Experiment results demonstrate that, to achieve comparable utility, our proposed noise requires smaller privacy parameters ($\epsilon$ and $\delta$) compared with classic Gaussian noise, thereby providing a stronger privacy guarantee.
 
\noindent\textbf{Roadmap.}
Section~\ref{sec:relatedwork} reviews the related work.
Section~\ref{sec:preliminaries} presents the preliminaries.
Section~\ref{sec:main_results} provides the main results, guidance and simulation of this work.
Section~\ref{sec:case-studies} theoretically explores the feasibility of applying the product noise to differentially private ERMs.
Section~\ref{sec:experiments} empirically evaluates the privacy and utility trade-off of the product noise on various ERM tasks and datasets.
Finally, Section~\ref{sec:Conclusion} concludes the paper.
\section{Related Work} \label{sec:relatedwork}
In this section, we  review some representative works
in various domains related to this work.

\noindent\ul{\textbf{Noise design.}} 
Many works have attempted to develop variants of Gaussian noise to achieve $(\epsilon, \delta)$-DP. For example, the Analytic Gaussian Mechanism~\cite{balle2018improving}     determines noise variance using the exact Gaussian cumulative distribution function (CDF). 
The MVG mechanism~\cite{chanyaswad2018mvg}  addresses matrix-valued queries with directional noise by using prior   knowledge on the data to determine noise distribution parameters. 
Our work generates a novel noise by leveraging the polar decomposition of spherically symmetric distributions to decompose multivariate noise into the product of two independent random components, i.e.,  a magnitude (radius) of the noise and its direction.

\noindent\ul{\textbf{Geometry insights.}}
Our noise generation scheme allows a geometric interpretation of the PLRV in noise perturbation. In particular, we show that PLRV can be upper bounded by the random diameter of a circumcircle of a random triangle formed by two noise instances $\mathbf{n}$ and $\mathbf{n}'$ and the vector $f(\bm{x})-f(\bm{x}')$. This further implies a product measure~\cite{durrett2019probability} and requires advanced measure concentration analysis to study the privacy guarantee.

The random triangle interpretation was first observed by Ji et al.~\cite{ji-r1smg}, yet they equire a strong independence and nearly orthogonality assumption between $\mathbf{n}$ and $\mathbf{n}'$, which are used to perturb $f(\bm{x})$ and $f(\bm{x}')$, respectively. They proposed a DP mechanism that is only valid for extremely restricted privacy regimes (e.g., $\epsilon < \frac{1}{M}$). In contrast, we analyze a random triangle by lifting this less practical assumption of $\mathbf{n}$ and $\mathbf{n}'$ being independent and nearly orthogonal. This allows our proposed noise perturbation mechanism to work for all $\epsilon>0$. 

Some other works also use geometric properties, yet their considered geometry is completely different from ours. In particular, they focus on the structure and error lower bounds of DP database queries. For example, Hardt and Talwar~\cite{hardt2010geometry} established lower bounds for linear queries by estimating the volume of the $\ell_1$ unit ball under query mappings and proposed optimal mechanisms based on $K$-norm sampling. Nikolov et al.~\cite{nikolov2013geometry} extended this approach to $(\epsilon, \delta)$-DP by introducing minimum enclosing ellipsoids and singular value-based projections. 
These works share a common design principle; leveraging the geometric structure of the \textbf{query space} to guide noise mechanism design. In contrast, we focus on the random geometry in   \textbf{noise space}. Reimherr et al.~\cite{reimherr2019elliptical} introduce an elliptical perturbation mechanism by generalizing the Gaussian distribution to have a non-spherical covariance, enabling anisotropic noise tailored to directions of varying sensitivity.  \cite{weggenmann2021differential} considers the randomness of angular statistics and proposes to use the von Mises–Fisher distribution and  Purkayastha distribution to protect the privacy of directional data.

\noindent \ul{\textbf{Ways to bound PLRV.}} 
In the literature, Concentrated DP (CDP)~\cite{dwork2016concentrated} and zero-Concentrated DP (zCDP)~\cite{bun2016concentrated} are established based on  PLRV. They obtain the tail bound of PLRV by using its moment generating function (MGF). In contrast,  we derive the tail bound of the PLRV using its $q$-th moments. An important conclusion in measure concentration theory is that the moment bound (which we adopted) for tail probabilities is always better than the ones derived using MGFs (see detailed discussions in Section~\ref{sec:moments-W-U}). Consequently, by leveraging the sharper moment bounds, it is plausible that CDP/zCDP analyses could be further improved in the Gaussian noise setting. However, this direction of research lies beyond the scope of the present work, which focuses on designing a new noise. Additionally, our PLRV is related to a $\BP$ distribution which does not have an MGF.  Hence, it is not feasible to characterize our mechanism using CDP/zCDP, which requires the underlying PLRV to be sub-Gaussian and to  admit MGF. Notably, a recent work ~\cite{yang2025plrvo} proposes to directly optimize the PLRV in differentially private stochastic gradient descent (DPSGD)   by considering  randomized scale parameters governed by a Gamma distribution. This method enables  task-specific optimization of DP-based  training by adapting noise and clip parameters to the learning setup  in DPSGD.

\section{Preliminaries}\label{sec:preliminaries}

In this section, we review some preliminaries in DP, statistics, and special functions. See more details  in  Appendix~\ref{app:additional-app} and~\ref{app:dp}.

\begin{definition}
\label{def_dp}
A randomized mechanism $\mathcal{M}$ satisfies $(\epsilon,\delta)$-DP if for any two neighboring datasets, $\bm{x},\bm{x}'\in\mathbb{N}^{|\mathcal{X}|}$ that differ by only one data record, 
$\epsilon>0$ and $0< \delta <1$, it satisfies 
\begin{equation*}
    \begin{aligned}
        \Pr[\mathcal{M}(\bm{x})\in \mathcal{S}] \leq e^{\epsilon}\Pr[\mathcal{M}(\bm{x}')\in \mathcal{S}]+\delta,
    \end{aligned}
\end{equation*}
where $\mathcal{S}$ denotes the output set. 
\end{definition}

\begin{definition}[PLRV~\cite{bun2016concentrated}]\label{def:plrv}
Let  $\bm{x}, \bm{x}' \in \mathbb{N}^{|\mathcal{X}|}$ be two neighboring datasets, $\mathcal{M}: \mathbb{N}^{|\mathcal{X}|} \to \mathcal{S}$ be a randomized mechanism, and   $P$ and $Q$ be the   distributions of $\mathcal{M}(\bm{x})$ and $\mathcal{M}(\bm{x}')$, respectively.   PLRV associated with   $\mathcal{M}$ and a randomized output $\mathbf{s}$ is defined as $\PLRV_{\mathcal{M}}(s) = \ln \left( \frac{\mathrm{d} P(\mathcal{M}(\bm{x})=\mathbf{s})}{\mathrm{d} Q(\mathcal{M}(\bm{x}')=\mathbf{s})} \right)$, where $\frac{\mathrm{d} P(\mathcal{M}(\bm{x})=\mathbf{s})}{\mathrm{d} Q(\mathcal{M}(\bm{x}')=\mathbf{s})}$ denotes the Radon-Nikodym derivative of $P$ with respect to $Q$ evaluated at $s \in \mathcal{S}$.
\end{definition}

\begin{remark}\label{remark:plrv}
Note that the traditional notation of  $\frac{\mathrm{d} P(\mathcal{M}(\bm{x})=\mathbf{s})}{\mathrm{d} Q(\mathcal{M}(\bm{x}')=\mathbf{s})}$ is due to the   analogies with differentiation~\cite{billingsley2017probability}. The derivative is a measurable function which only exists when $P$ is absolutely continuous with respect to $Q$, denoted as $P\ll Q$. When both $P$ and $Q$ are continuous and share the same reference measure,  $\frac{\mathrm{d} P(\mathcal{M}(\bm{x})=\mathbf{s})}{\mathrm{d} Q(\mathcal{M}(\bm{x}')=\mathbf{s})}$  reduces to the point-wise ratio of their probability density functions (PDF).
\end{remark}

\begin{definition} [Beta distribution and Beta Prime distribution~\cite{johnson1995continuous}] A random variable $X$ has a Beta distribution, i.e., $X\sim\Beta(\alpha,\beta)$, if it has the following probability density function
\begin{equation*}
    f_{X}(x)=\frac{1}{B(\alpha,\beta)}x^{\alpha-1}(1-x)^{\beta-1},  
\end{equation*}
where $\alpha>0, \beta>0, 0 \leq x\leq 1$, and $B(\alpha, \beta) = \frac{\Gamma (\alpha) \Gamma ( \beta) } { \Gamma ( \alpha+\beta) }$ is the Beta function. 

A random variable $X$ has a Beta Prime distribution, i.e., $X\sim\BP(\alpha,\beta)$, if it has the following probability density function
\begin{equation*}
     f_{X}(x)=\frac{1}{B(\alpha,\beta)}x^{\alpha-1}(1+x)^{-\alpha-\beta}, 
\end{equation*}
where $\alpha>0, \beta>0, x>0$.
\label{app:def-beta-and-beta-prime} 
\end{definition} 

\begin{lemma}\label{lemma:beta-betaprime}
    Beta distributed random variable can be converted to Beta Prime distributed random variable~\cite{siegrist2017probability}.
\begin{enumerate}
    \item If $X \sim \Beta(\alpha, \beta)$, then  $\frac{X}{1-X} \sim \BP(\alpha, \beta)$.
    \item If $X \sim \BP(\alpha, \beta)$, then $\frac{1}{X} \sim \BP(\beta, \alpha)$.
\end{enumerate}
\end{lemma}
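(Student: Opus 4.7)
The plan is to prove both parts by a direct change-of-variables on the density, since each assertion is about the law of a smooth, monotone transformation of a continuous random variable. I will verify that the pushforward density under the stated transformation exactly matches the Beta Prime density from Definition~\ref{app:def-beta-and-beta-prime}, and invoke the uniqueness of densities (with respect to Lebesgue measure) to conclude.

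For part (1), I would let $X \sim \Beta(\alpha,\beta)$ and define $Y = X/(1-X)$. Since the map $x \mapsto x/(1-x)$ is a strictly increasing $C^1$ bijection from $(0,1)$ onto $(0,\infty)$, its inverse is $x = y/(1+y)$, giving $1-x = 1/(1+y)$ and the Jacobian $\left|\mathrm{d}x/\mathrm{d}y\right| = (1+y)^{-2}$. Substituting into $f_X$ yields
\begin{equation*}
f_Y(y) = \frac{1}{B(\alpha,\beta)} \left(\frac{y}{1+y}\right)^{\alpha-1}\left(\frac{1}{1+y}\right)^{\beta-1}\frac{1}{(1+y)^{2}},
\end{equation*}
and combining the powers of $(1+y)$ gives exponents $-(\alpha-1)-(\beta-1)-2 = -\alpha-\beta$, so $f_Y(y) = \frac{1}{B(\alpha,\beta)} y^{\alpha-1}(1+y)^{-\alpha-\beta}$, which is exactly the $\BP(\alpha,\beta)$ density.

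For part (2), I would let $X \sim \BP(\alpha,\beta)$ and define $Y = 1/X$. Again $y \mapsto 1/y$ is a smooth bijection on $(0,\infty)$ with $x = 1/y$ and $|\mathrm{d}x/\mathrm{d}y| = y^{-2}$. Substituting,
\begin{equation*}
f_Y(y) = \frac{1}{B(\alpha,\beta)} y^{-(\alpha-1)} \left(1+\tfrac{1}{y}\right)^{-\alpha-\beta} y^{-2} = \frac{1}{B(\alpha,\beta)} y^{-\alpha-1}\left(\frac{1+y}{y}\right)^{-\alpha-\beta},
\end{equation*}
and distributing the last factor as $y^{\alpha+\beta}(1+y)^{-\alpha-\beta}$ consolidates the $y$ exponent to $\beta-1$. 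Using the symmetry $B(\alpha,\beta) = B(\beta,\alpha)$, this is precisely the $\BP(\beta,\alpha)$ density.

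Neither step poses a real obstacle; both are routine one-dimensional changes of variables, and the only place to be careful is bookkeeping of the exponents of $(1+y)$ in part (1) and of $y$ in part (2), together with noting the symmetry of the Beta function so that the normalizing constant is recognized as the one belonging to the target distribution. The argument is fully rigorous because the transformations involved are $C^{1}$-diffeomorphisms between the supports, so the transformation formula for densities applies without further justification.
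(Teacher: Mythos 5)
Your proof is correct: both changes of variables are computed accurately, the exponent bookkeeping checks out (the $(1+y)$ powers collapse to $-\alpha-\beta$ in part (1), and the $y$ powers collapse to $\beta-1$ in part (2)), and invoking the symmetry $B(\alpha,\beta)=B(\beta,\alpha)$ correctly identifies the normalizing constant of $\BP(\beta,\alpha)$. The paper itself gives no proof of this lemma — it is stated as a known fact with a citation to a standard reference — and your change-of-variables derivation is precisely the textbook argument underlying that citation, so there is nothing to reconcile.
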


\begin{definition} (Chi distribution~\cite[p. 73]{forbes2011statistical}) The probability density function (PDF) of the Chi distribution with  $\nu$  degrees of freedom is
\begin{equation*}\label{eq:chi}
   f(x;\nu)=\begin{cases}
 & \frac{x^{\nu-1}e^{-\frac{x^2}{2}}}{2^{\frac{\nu}{2}-1}\Gamma(\frac{\nu}{2})} , x \ge 0 \\
& 0, \text{otherwise}
\end{cases}.
\end{equation*}
\label{def:chi} 
\end{definition}

\begin{definition}\label{def:hypergeometric-function} The confluent hypergeometric function of the first kind~\cite[p. 322]{daalhuis2010confluent} is defined as $${}_1F_1(a; b; z) = \sum_{k=0}^{\infty } \frac{(a)_k}{(b)_k} \frac{z^k}{k!},$$  
where $(a)_k$ and $(b)_k$ is  the rising factorial, and $(a)_{0}=1$, $(a)_k=a (a+1)(a+2)\cdots(a+k-1)$.
\end{definition}

\begin{definition} \label{de: parabolic-cylinder-function} 
The parabolic cylinder function~\cite[p. 1028]{edition2007table} is 
\begin{equation*}
\resizebox{0.475\textwidth}{!}{$
\begin{aligned}
    D_p(z)  =& \frac{e^{-\frac{z^2}{4}}}{\Gamma (-p)} \int_{0}^{\infty } e^{-xz-\frac{x^2}{2} }x^{-p-1}dx\\
    =& 2^{\frac{p}{2}}e^{-\frac{z^2}{4}} \left \{ \frac{\sqrt{\pi} }{\Gamma (\frac{1-p}{2} )} 
    {}_1F_1\left(-\frac{p}{2} ; \frac{1}{2} ; \frac{z^2}{2} \right)  -\frac{\sqrt{2\pi }z }{\Gamma (-\frac{p}{2} )}
    {}_1F_1\left(\frac{1-p}{2} ; \frac{3}{2} ; \frac{z^2}{2} \right)  \right \},
\end{aligned}
$}
\end{equation*}
where $p<0$ is a real number, and ${}_1F_1(a; b; x)$ is the confluent hypergeometric function of the first kind defined in Definition~\ref{def:hypergeometric-function}.
\end{definition}

\section {Main Results, Guidance and Simulation} \label{sec:main_results}
This section elaborates on the core technical components of this paper. To facilitate readability, we unfold the discussion as follows.

$\bullet$ In Section~\ref{sec:a-new-noise}, we present the design principle of our new product noise based on the polar decomposition of spherical noise. 

$\bullet$ In Section~\ref{sec:new_plrv}, we derive the PLRV  associated with our product noise by leveraging a novel geometric insight on random triangles. 

$\bullet$ In Section~\ref{sec:moments-W-U}, We first construct a product measure based on the derived PLRV and then obtain a tight tail probability by analyzing the moment bound of such measure. 

$\bullet$ In Section~\ref{sec:privacy-guarantee}, we establish the privacy guarantee of our product noise by minimizing the obtained tail probability. 

$\bullet$ In Section~\ref{sec:guidance-and-simulation}, we demonstrate the advantages of product noise in high-dimensional settings, provide empirical guidelines for low dimensions, and validate all theoretical claims through simulations.

\subsection{A New Noise } \label{sec:a-new-noise}
The design principle of our new noise is inspired by the following observation. We notice that when $f(\bm{x})\in\R^M$, the multivariate noise considered in (\ref{eq:add_noise}) 
is spherically symmetric. This type of noise can be equivalently generated via the product between a positive random variable $R$ (usually referred to as the radius random variable) and a directional random variable $\boldsymbol{h}\in\mathbb{S}^{M-1}$ (the unit sphere embedded in $\R^M$). This is known as the polar decomposition~\cite{paolella2018linear}.

\begin{lemma} [Decomposition of Spherically Symmetric Random Variable~\cite{paolella2018linear}]\label{thm:ss_decomp}   
For a spherically symmetric distributed random variable  $\mathbf{n}\in\R^M$, one can express it as $\mathbf{n}\stackrel{d}{=}R\boldsymbol{h}$, where $\stackrel{d}{=}$ means equality in distribution, $R$ is a continuous univariate random variable (the radius) $\Pr[R\geq0]=1$, and $\boldsymbol{h}$ (independent of $R$) is uniformly distributed on the unit sphere surface $\mathbb{S}^{M-1}$ embedded in $\R^M$. 
\end{lemma}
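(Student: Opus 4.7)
The plan is to construct $R$ and $\boldsymbol{h}$ explicitly from $\mathbf{n}$ via the map $\mathbf{n}\mapsto (\|\mathbf{n}\|_2,\mathbf{n}/\|\mathbf{n}\|_2)$, and then to verify each required property using only the defining invariance of a spherically symmetric distribution, namely that $\mathbf{n}\stackrel{d}{=}Q\mathbf{n}$ for every orthogonal matrix $Q\in O(M)$. First I would set $R:=\|\mathbf{n}\|_2$ and $\boldsymbol{h}:=\mathbf{n}/\|\mathbf{n}\|_2$, noting that the event $\{\mathbf{n}=\boldsymbol{0}\}$ has probability zero under any continuous spherically symmetric law (it is the only single point left invariant by all rotations, so it must carry mass $0$ since the law is absolutely continuous), so $\boldsymbol{h}$ is almost surely well-defined and can be assigned arbitrarily on the null set. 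The trivial identity $\mathbf{n}=R\boldsymbol{h}$ then holds almost surely, giving $\mathbf{n}\stackrel{d}{=}R\boldsymbol{h}$.

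Next I would establish the two distributional claims. For the marginal distribution of $\boldsymbol{h}$, the key observation is that $\|Q\mathbf{n}\|_2=\|\mathbf{n}\|_2$ for any $Q\in O(M)$, so spherical symmetry of $\mathbf{n}$ implies $\boldsymbol{h}\stackrel{d}{=}Q\boldsymbol{h}$ for every $Q$. Since the uniform (normalized surface) measure on $\mathbb{S}^{M-1}$ is the unique Borel probability measure on $\mathbb{S}^{M-1}$ invariant under the full orthogonal group (a standard Haar-uniqueness fact), this forces $\boldsymbol{h}\sim\mathrm{Unif}(\mathbb{S}^{M-1})$. For the marginal distribution of $R$, one simply notes that $R=\|\mathbf{n}\|_2\geq 0$ by definition, and its continuity is inherited from the continuity of $\mathbf{n}$ via the smooth push-forward under the norm map on $\R^M\setminus\{\boldsymbol{0}\}$.

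The final and most delicate step is proving independence of $R$ and $\boldsymbol{h}$. The plan is to show that the conditional law of $\boldsymbol{h}$ given $R=r$ is uniform on $\mathbb{S}^{M-1}$ for (almost) every $r$, which does not depend on $r$ and hence yields independence. To this end, for any Borel sets $A\subseteq[0,\infty)$ and $B\subseteq\mathbb{S}^{M-1}$ and any $Q\in O(M)$, one has
\begin{equation*}
\Pr[R\in A,\,\boldsymbol{h}\in B]=\Pr[\|Q\mathbf{n}\|_2\in A,\,Q\mathbf{n}/\|Q\mathbf{n}\|_2\in B]=\Pr[R\in A,\,\boldsymbol{h}\in Q^{-1}B],
\end{equation*}
using only $\mathbf{n}\stackrel{d}{=}Q\mathbf{n}$ and $\|Q\mathbf{n}\|_2=\|\mathbf{n}\|_2$. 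Thus the joint law of $(R,\boldsymbol{h})$ is invariant under $\mathrm{id}\times Q$ for every orthogonal $Q$. Disintegrating against the marginal law of $R$ yields regular conditional distributions $\mu_r$ on $\mathbb{S}^{M-1}$ that are $O(M)$-invariant for almost every $r$, and by the same Haar-uniqueness argument each $\mu_r$ equals the uniform measure on $\mathbb{S}^{M-1}$. Since the conditional law is the same for almost every $r$, $R$ and $\boldsymbol{h}$ are independent, completing the proof.

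I expect the main obstacle to be the independence argument rather than the two marginal identifications: one must be careful to work with regular conditional probabilities and to invoke uniqueness of the Haar-invariant probability measure on $\mathbb{S}^{M-1}$ correctly, rather than trying to argue independence directly from a factorization of densities (which would require additional regularity on the law of $\mathbf{n}$ beyond spherical symmetry).
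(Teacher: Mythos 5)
The paper does not prove this lemma at all: it is imported verbatim from the cited reference (Paolella), so there is no in-paper argument to compare against. Judged on its own, your proof is the standard polar-decomposition argument and is essentially correct: defining $R=\|\mathbf{n}\|_2$ and $\boldsymbol{h}=\mathbf{n}/\|\mathbf{n}\|_2$, identifying the law of $\boldsymbol{h}$ via uniqueness of the $O(M)$-invariant probability measure on $\mathbb{S}^{M-1}$, and deducing independence from invariance of the joint law under $\mathrm{id}\times Q$ is exactly how this is done in the literature. Two points are worth tightening. First, your justification that $\Pr[\mathbf{n}=\boldsymbol{0}]=0$ appeals to absolute continuity of the law, which does not follow from spherical symmetry alone (a point mass at the origin is spherically symmetric); you should either state explicitly that you are assuming the law is continuous (which is consistent with the lemma's conclusion that $R$ is a continuous random variable and with how the paper uses it) or adopt the usual convention of drawing $\boldsymbol{h}$ independently uniform on the atom $\{\mathbf{n}=\boldsymbol{0}\}$. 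Second, in the disintegration step the invariance for each fixed $Q$ only gives $\mu_r=Q_*\mu_r$ for $r$ outside a $Q$-dependent null set; passing to "for a.e.\ $r$, $\mu_r$ is invariant under all $Q$" requires a countable dense subgroup of $O(M)$ together with continuity of $Q\mapsto Q_*\mu_r$, or, more cleanly, you can bypass disintegration entirely by averaging the identity $\mathbb{E}[f(R)\mathbf{1}_B(\boldsymbol{h})]=\mathbb{E}[f(R)\mathbf{1}_{Q^{-1}B}(\boldsymbol{h})]$ over Haar measure on $O(M)$ and applying Fubini, which yields $\mathbb{E}[f(R)\mathbf{1}_B(\boldsymbol{h})]=\mathbb{E}[f(R)]\,\mathrm{Unif}(B)$ in one step. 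Neither issue invalidates the approach; both are routine to repair.
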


Lemma~\ref{thm:ss_decomp} offers new insights into understanding the PLRV, highlights a drawback of the classic Gaussian noise, and motivates the design of a novel product noise that can be used to perturb $f(\bm{x})$ under $(\epsilon,\delta)$-DP guarantee. To be more specific, the classic Gaussian mechanism uses multivariate noise $\mathbf{n}\sim\mathcal{N}(\mathbf{0},\sigma^2\mathbf{I}_{M\times M})$ to perturb $f(\bm{x})\in\R^M$. 
This noise can be decomposed as 
\begin{equation*}
    \begin{aligned}
        \mathbf{n}  \stackrel{d}{=} \sigma \mathcal{N}(\mathbf{0}, \mathbf{I}_{M\times M}) \stackrel{d}{=} \sigma R \boldsymbol{h},
    \end{aligned}
\end{equation*} 
where $\sigma \ge \frac{\Delta_2f}{\epsilon} \sqrt{2\log (\frac{1.25}{\delta})}$, $R  \sim \chi_M$ (Chi distribution with $M$ degrees of freedom) and $\boldsymbol{h} \sim \mathbb{S}^{M-1}$~[p. 748, Example C.7]~\cite{paolella2018linear}.  Clearly, the squared magnitude of the classic Gaussian noise $||\mathbf{n}||_2^{2} = \sigma^2 ||R \boldsymbol{h}||_2^2 = \sigma^2 R^2$, is a scaled Chi-squared random variable with $M$ degrees of freedom (here $R^2\sim \chi^2_M$ for   multivariate  Gaussian). This creates a potential bottleneck in controlling the utility loss of the query result, particularly in high-dimensional settings. \ul{We argue that   $R$ with a small degree of freedom is promising in improving the utility and privacy trade-off.}

To show this, let a spherically symmetric noise takes the form of $$\mathbf{z} = \sigma_M R \bm{h}\in\R^M,$$ where    $\sigma_M$ is some constant,  $R\sim\chi_{\nu}$ (Chi random variable with $\nu$ degrees of freedom, $\nu \in [1, M]$) and $\bm{h}~\sim\mathbb{S}^{M-1}$. Clearly,  for a given degree of freedom $\nu$,  $\mathbb{E}[||\mathbf{z}||_2^2] = \sigma_M^2 \nu \propto \nu$, thus setting $\nu=1$ is a promising choice to reduce the noise magnitude/improve utility. Additionally, the kurtosis and skewness of $||\mathbf{z}||_2^{2}$   are $3+ \frac{12}{\nu}$ and $\sqrt{\frac{8}{\nu}}$, respectively~\cite{casella2002statistical}, both of which decrease  as $\nu$ increases. 
A smaller kurtosis indicates that outliers or extremely large values are more likely to be generated, and a smaller skewness indicates that most samples are located in the right region of the PDF. Hence,  to ensure that it is less likely to have large noise and let the mass of $||\mathbf{z}||_2^2$ concentrate in the left region of PDF, we \ul{require large values of kurtosis and skewness}, which are also maximized when $\nu=1$~\cite{paolella2018linear}. Thus, in this paper, we propose to explore the  following noise generation scheme
\begin{equation}\label{eq:noise-generation}
\boxed{\text{new  product noise:} \quad \mathbf{n} = \sigma_M R \boldsymbol{h}, R\sim\chi_1\ \text{and\ }\boldsymbol{h} \sim \mathbb{S}^{M-1},}
\end{equation}
where $\sigma_M$ is a to-be-determined constant decided by the dimension $M$ and privacy parameters $\epsilon$ and $\delta$, $\chi_1$ denotes the Chi distribution with 1 degree of freedom, and $\boldsymbol{h} \sim \mathbb{S}^{M-1}$ means uniform distribution on sphere $\mathbb{S}^{M-1}$ (embedded in $\R^M$).

\subsection{A New PLRV} \label{sec:new_plrv}
Clearly, the new noise in (\ref{eq:noise-generation})  involves the product of two different measures; one involves $\chi_1$ (which is the random radius), and the other one involves the direction of the noise. To analyze the privacy loss caused by the additive product noise in (\ref{eq:noise-generation}), we resort to the Radon-Nikodym derivative of the product measure and recall an important Lemma below.

\begin{lemma}[Product Measure~\cite{durrett2019probability}]
\label{lemma:prod-measure}
Let $\left(X, \mathcal{A} \right)$ \(\) and $\left ( (Y, \mathcal{B}) \right)$ be measurable spaces, let $\mu_1$ and $\nu_1$ be finite measures on $\left ( X, \mathcal{A} \right)$, and $\mu_2$ and $\nu_2$ be finite measures on $\left (Y, \mathcal{B}\right)$. If $\left(\nu_1 \ll \mu_1 \right)$ and $\left(\nu_2 \ll \mu_2 \right)$, then $\left(\nu_1 \times \nu_2 \ll \mu_1 \times \mu_2 \right)$, and $\frac{\mathrm{d}(\nu_1 \times \nu_2)}{\mathrm{d}(\mu_1 \times \mu_2)}(x, y) = \frac{\mathrm{d}\nu_1}{\mathrm{d}\mu_1}(x) \frac{\mathrm{d}\nu_2}{\mathrm{d}\mu_2}(y)$, for $\mu_1 \times \mu_2$\text{-almost everywhere}, where $\ll$ stands for absolutely continuous,   $\frac{\mathrm{d}\nu_1}{\mathrm{d}\mu_1}(x)$ denotes the Radon-Nikodym derivative of $\nu_1$ with respect to $\mu_1$ evaluated at $x$,   similarly for $\frac{\mathrm{d}\nu_2}{\mathrm{d}\mu_2}(y)$.
\end{lemma}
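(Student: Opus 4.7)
The plan is to prove the result in two parts: first establish the absolute continuity $\nu_1 \times \nu_2 \ll \mu_1 \times \mu_2$, then verify that the pointwise product of the two Radon--Nikodym derivatives serves as the derivative on the product space. Denote $f := \frac{\mathrm{d}\nu_1}{\mathrm{d}\mu_1}$ and $g := \frac{\mathrm{d}\nu_2}{\mathrm{d}\mu_2}$, both of which exist by hypothesis and the Radon--Nikodym theorem, are nonnegative, and are measurable on $(X,\mathcal{A})$ and $(Y,\mathcal{B})$ respectively. I would first record that $h(x,y) := f(x)\,g(y)$ is $(\mathcal{A}\otimes\mathcal{B})$-measurable, since coordinate projections are measurable and products of nonnegative measurable functions are measurable.

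For the absolute continuity, I would take any $E \in \mathcal{A}\otimes\mathcal{B}$ with $(\mu_1 \times \mu_2)(E)=0$ and slice it as $E_x = \{y : (x,y)\in E\}$. Tonelli's theorem (applicable since the measures are finite) gives
$$\int_X \mu_2(E_x)\,\mathrm{d}\mu_1(x) = (\mu_1 \times \mu_2)(E) = 0,$$
so $\mu_2(E_x) = 0$ for $\mu_1$-almost every $x$. Using $\nu_2 \ll \mu_2$ this upgrades to $\nu_2(E_x) = 0$ on the same set, and the exceptional $\mu_1$-null set is also $\nu_1$-null because $\nu_1 \ll \mu_1$. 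Applying Tonelli once more to $\nu_1 \times \nu_2$ then gives $(\nu_1\times\nu_2)(E) = \int_X \nu_2(E_x)\,\mathrm{d}\nu_1(x) = 0$, which is the desired absolute continuity.

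For the derivative formula, I would define the candidate measure $\lambda(E) := \int_E f(x)g(y)\,\mathrm{d}(\mu_1\times\mu_2)(x,y)$ and show $\lambda = \nu_1 \times \nu_2$. On a measurable rectangle $A \times B$, Tonelli factorizes the integral as
$$\lambda(A\times B) = \int_A f\,\mathrm{d}\mu_1 \cdot \int_B g\,\mathrm{d}\mu_2 = \nu_1(A)\,\nu_2(B) = (\nu_1\times\nu_2)(A\times B).$$
Since the measurable rectangles form a $\pi$-system generating $\mathcal{A}\otimes\mathcal{B}$, and both $\lambda$ and $\nu_1\times\nu_2$ are finite measures agreeing on this $\pi$-system, Dynkin's $\pi$--$\lambda$ theorem (equivalently, the uniqueness clause of Carath\'eodory extension) forces them to coincide on all of $\mathcal{A}\otimes\mathcal{B}$. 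This identifies $f(x)g(y)$ as a version of $\frac{\mathrm{d}(\nu_1\times\nu_2)}{\mathrm{d}(\mu_1\times\mu_2)}$, with uniqueness holding $\mu_1\times\mu_2$-almost everywhere.

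The main delicacy in the argument is the bookkeeping of null sets when transferring the ``almost-every'' statement for $\mu_2(E_x)$ into one for $\nu_2(E_x)$, which is where the two absolute continuity hypotheses are both consumed in sequence. Everything else is a standard consequence of Tonelli's theorem and the extension theorem; the finiteness of all four measures stated in the lemma removes any integrability subtleties and makes the two applications of Tonelli unconditional.
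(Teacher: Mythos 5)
Your proof is correct. The paper does not prove this lemma at all — it is quoted as a known result from Durrett — and your argument (Tonelli-based slicing for the absolute continuity, then agreement of the candidate measure $\lambda(E)=\int_E f(x)g(y)\,\mathrm{d}(\mu_1\times\mu_2)$ with $\nu_1\times\nu_2$ on the $\pi$-system of measurable rectangles plus Dynkin's theorem) is exactly the standard textbook derivation, with the null-set bookkeeping handled properly.
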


As a result, the PLRV associated with the noise defined in (\ref{eq:noise-generation}) can be alternatively characterized as in Proposition~\ref{prop: PLRV}.   This characterization differs from the existing PLRVs discussed in Section~\ref{sec:intro} and  is \textbf{not} a Gaussian-type PLRV anymore.

\begin{proposition}\label{prop: PLRV}The PLRV of perturbation using the  product noise proposed in (\ref{eq:noise-generation}) has the following upper bound
\begin{equation} \label{eq: plrv-pnpm}
    \begin{aligned}
\mathrm{\PLRV}_{\mathcal{M}} (\mathbf{s})\leq  \frac{\Delta_2f}{\sigma_M}\left (R + \frac{\Delta_2f}{\sigma_M}\right) \frac{1}{\sin \theta},
    \end{aligned}
\end{equation}
where   $R\sim\chi_1$  and $\theta$ is the \textbf{random angle} formed by a random noise $\mathbf{n}$ and $\bm{v}$. We use $\bm{v}$ to denote the difference between the computation function $f(\cdot)$ evaluated on a pair of neighboring dataset $\bm{x}$ and $\bm{x}'$, i.e., $\bm{v} \triangleq f(\bm{x})-f(\bm{x}')$ shown in Figure~\ref{fig:geo-dp}.
\end{proposition}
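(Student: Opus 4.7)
The plan is to compute the PLRV directly as a log-Radon-Nikodym derivative of the product noise and then translate the result through the geometry of the random triangle with vertices $0$, $\mathbf{n}=\mathbf{s}-f(\bm{x})$, and $\mathbf{n}'=\mathbf{s}-f(\bm{x}')=\mathbf{n}+\bm{v}$, where $\bm{v}=f(\bm{x})-f(\bm{x}')$ satisfies $\|\bm{v}\|_2\le\Delta_2f$.

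First, I would express the density of the product noise in the polar product space $[0,\infty)\times\mathbb{S}^{M-1}$ via Lemma~\ref{thm:ss_decomp}: the radial factor is the $\chi_1$ density $f_R(r)\propto e^{-r^2/2}$ and the directional factor is the uniform measure on the sphere. Lemma~\ref{lemma:prod-measure} then factorizes the Radon-Nikodym derivative of $P$ against $Q$ into a radial ratio and a directional ratio, and the directional piece collapses because both spherical marginals are the same uniform measure. Using the explicit form of $f_R$, the surviving radial log-ratio is
$$\PLRV_{\mathcal{M}}(\mathbf{s})\;=\;\frac{\|\mathbf{n}'\|_2^{2}-\|\mathbf{n}\|_2^{2}}{2\sigma_M^{2}}.$$

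Second, I would convert this radial quantity into the claimed geometric form. Factoring the difference of squares and applying the reverse triangle inequality $|\|\mathbf{n}'\|_2-\|\mathbf{n}\|_2|\le\|\bm{v}\|_2\le\Delta_2f$ gives $\|\mathbf{n}'\|_2^{2}-\|\mathbf{n}\|_2^{2}\le\Delta_2f\,(\|\mathbf{n}\|_2+\|\mathbf{n}'\|_2)$. To control the remaining sum, I would bring in the random triangle: since $\theta$ is the angle between $\mathbf{n}$ and $\bm{v}=\vec{\mathbf{n}\mathbf{n}'}$, the interior angle at vertex $\mathbf{n}$ is $\pi-\theta$, so by the law of sines the circumdiameter equals $2\rho=\|\mathbf{n}'\|_2/\sin\theta$. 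Because every side of a triangle is bounded above by its circumdiameter, $\|\mathbf{n}\|_2+\|\mathbf{n}'\|_2\le 4\rho=2\|\mathbf{n}'\|_2/\sin\theta$, which yields
$$\PLRV_{\mathcal{M}}(\mathbf{s})\;\le\;\frac{\Delta_2f\,\|\mathbf{n}'\|_2}{\sigma_M^{2}\sin\theta}.$$

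Finally, I would close by applying the ordinary triangle inequality $\|\mathbf{n}'\|_2\le\|\mathbf{n}\|_2+\|\bm{v}\|_2=\sigma_M R+\Delta_2f$ and distributing $\Delta_2f/\sigma_M^{2}$ to recover the right-hand side $\frac{\Delta_2f}{\sigma_M}(R+\Delta_2f/\sigma_M)/\sin\theta$. The main obstacle is the second step: identifying $\|\mathbf{n}\|_2+\|\mathbf{n}'\|_2$ as at most twice the circumdiameter via the law of sines applied to the angle $\pi-\theta$ is precisely the move that converts a purely norm-based algebraic bound into the product of a magnitude factor in $R$ and a directional factor $1/\sin\theta$, exposing the product-measure structure $W\times U$ advertised in the introduction. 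A closely related subtlety is ensuring that the product-measure factorization in the first step truly suppresses the angular Jacobian of the polar change of variables; the natural way to enforce this is to take the product reference measure $dr\times d\boldsymbol{h}$ on $[0,\infty)\times\mathbb{S}^{M-1}$ throughout rather than Cartesian Lebesgue measure.
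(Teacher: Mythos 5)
Your proposal is correct and follows essentially the same route as the paper's proof: the product-measure factorization with cancellation of the uniform spherical parts, the explicit $\chi_1$ log-ratio $\frac{\|\mathbf{n}'\|_2^2-\|\mathbf{n}\|_2^2}{2\sigma_M^2}$, the difference-of-squares plus reverse triangle inequality, the law-of-sines circumdiameter bound on $\|\mathbf{n}\|_2+\|\mathbf{n}'\|_2$ via the interior angle $\pi-\theta$, and the final triangle inequality $\|\mathbf{n}'\|_2\le \sigma_M R+\Delta_2 f$. The paper's argument is identical in substance and ordering, so no further comparison is needed.
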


\begin{figure}[htbp]
\centering
\includegraphics[width=0.8\linewidth]{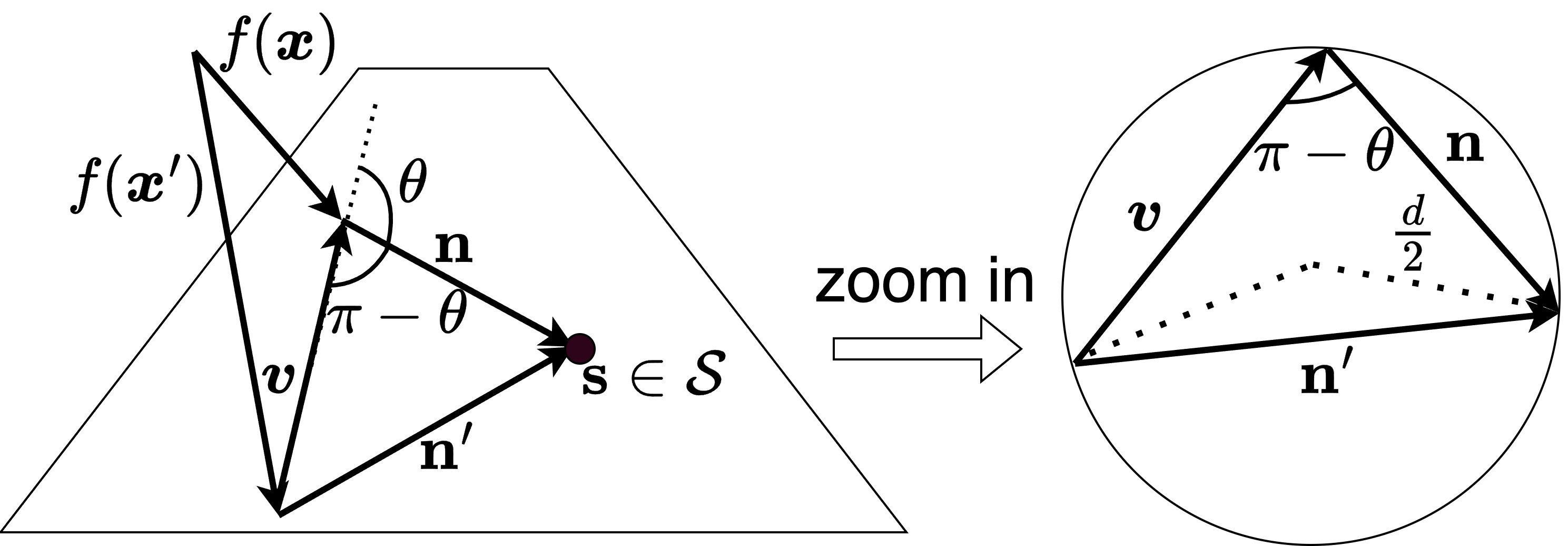}
\Description{Geometric interpretation of differentially private output perturbation.}
\caption{Geometric interpretation of noise perturbation.}
\label{fig:geo-dp}
\end{figure}

Figure~\ref{fig:geo-dp} geometrically visualizes that, to protect privacy, we are interested in the difference $\bm{v} \triangleq f(\bm{x})-f(\bm{x}')$, since it is what additive noise must obscure~\cite[p. 264]{dwork2014algorithmic}. In other words, $\bm{v}$, $\mathbf{n}$ (used to perturb $f(\bm{x})$), and $\mathbf{n}'$ (used to perturb $f(\bm{x}')$)  forms a   \textbf{random} triangle, 
i.e., $\mathbf{n}' - \mathbf{n} = \bm{v}$. The PLRV in (\ref{eq: plrv-pnpm}) differs from the existing works by explicitly considering this geometric interpretation. In particular, the result of (\ref{eq: plrv-pnpm}) represents a diameter (scaled by a constant $\frac{\Delta_2f}{\sigma_M}$) of the circumcircle of a formed \textbf{random} triangle. This geometric idea is inspired by~\cite{ji-r1smg}, which is the seminal work that leverages the randomness of a triangle to perform measure concentration analysis on PLRV. However, \cite{ji-r1smg} imposes a strong assumption on the independence of $\mathbf{n}$ and $\mathbf{n}'$  and consider that the angle formed by $\mathbf{n}$ and $\mathbf{n}'$ are nearly orthogonal (i.e., within a range of $\pm\theta_0$ of $\frac{\pi}{2}$ for some small $\theta_0$). The mechanism developed in \cite{ji-r1smg}  only works under extremely restricted privacy parameters, e.g., $\epsilon<\frac{1}{M}$. In this work, we (i)  lift the assumption on the independence between $\mathbf{n}$ and $\mathbf{n}'$, (ii) consider the random angle $\theta$ formed by $\mathbf{n}$ and $f(\bm{x})-f(\bm{x}')$, (iii) explicitly derive the closed-form PDF of $\frac{1}{\sin \theta}$. Our proposed product noise in (\ref{eq:noise-generation}) works for all $\epsilon>0$.

Next, we provide the proof of Proposition~\ref{prop: PLRV}.
\begin{proof}
    Let $\mathbf{n} = \sigma_M R\boldsymbol{h}$ and $\mathbf{n}' = \sigma_M R'\boldsymbol{h}'$ be the noise used to perturb $f(\bm{x})$ and $f(\bm{x}')$, respectively. Then, we have $f(\bm{x})+\mathbf{n} = f(\bm{x}')+\mathbf{n}' = \mathbf{s}\in\mathcal{S}$ and $\frac{||\mathbf{s}-f(\bm{x})||_2}{\sigma_M} \sim \chi_1$, $\frac{||\mathbf{s}-f(\bm{x}')||_2}{\sigma_M} \sim \chi_1$.

By defining the corresponding PDF notations, e.g., $f_{R\bm{h}}$, $f_{R'\bm{h'}}$, $f_{R}$, and $f_{\bm{h}}$,
the PLRV associated with our product noise in (\ref{eq:noise-generation}) is
\begin{equation}
\begin{aligned}
&\mathrm{\mathrm{\PLRV}}_{\mathcal{M}} (\mathbf{s})\\
\stackrel{(a)}=& \ln\left(\frac{f_{R\bm{h}}   [f(\bm{x})+ \sigma_M R\bm{h} = \mathbf{s} \big| \mathbf{s} \in \mathcal{S}]}{f_{R\bm{h}}   [f(\bm{x}')+ \sigma_M R'\bm{h}' = \mathbf{s}\big| \mathbf{s}  \in \mathcal{S}]}\right)  
\\
=& \ln\left(\frac{f_{R\bm{h}}   [R\bm{h} = \frac{\mathbf{s}- f(x)}{\sigma_M} \big| \mathbf{s} \in \mathcal{S}]}{f_{R'\bm{h}'}  [R'\bm{h}' = \frac{\mathbf{s} -f(\bm{x}')}{\sigma_M} \big| \mathbf{s} \in \mathcal{S}]}\right)\\
\stackrel{(b)}=& \ln\left(\frac{f_{R}  [R= \frac{||\mathbf{s}-f(\bm{x})||_2}{\sigma_M}] }{f_{R' }  [R'=\frac{||\mathbf{s}-f(\bm{x}')||_2}{\sigma_M}] } \cdot \frac{ f_{\bm{h}}  [\bm{h} \sim \mathbb{S}^{M-1}]}{f_{\bm{h}'}  [\bm{h}' \sim \mathbb{S}^{M-1}]}\right) 
\\ 
\stackrel{(c)}=& \ln\left(\frac{f_{R}  [R= \frac{||\mathbf{s}-f(\bm{x})||_2}{\sigma_M}] }{f_{R'}  [R'=\frac{||\mathbf{s}-f(\bm{x}')||_2}{\sigma_M}]}\right) \\
\stackrel{(d)}=& \ln\left(\frac{\frac{1}{2^{-1/2} \Gamma (1/2)} e^{-\left(\frac{||\mathbf{s}-f(\bm{x})||_2}{\sigma_M} \right)^{2}/2 } }{\frac{1}{2^{-1/2} \Gamma (1/2)} e^{-\left(\frac{||\mathbf{s}-f(\bm{x}')||_2}{\sigma_M} \right)^{2}/2 }}\right) 
\\
=& \frac{1}{2\sigma_M^{2}}(||\mathbf{s}-f(\bm{x}')||_2^{2}-||\mathbf{s}-f(\bm{x})||_2^{2}) \\
=& \frac{1}{2\sigma_M^{2}}(||\mathbf{s}-f(\bm{x}')||_2 + ||\mathbf{s}-f(\bm{x})||_2)(||\mathbf{s}-f(\bm{x}')||_2-||\mathbf{s}-f(\bm{x})||_2)\\ 
\leq& \frac{1}{2\sigma_M^{2}}(||\mathbf{n}'||_2+||\mathbf{n}||_2)||\bm{v}||_2,
 \label{eq:plrv-chi}
\end{aligned}
\end{equation}
where $(a)$ is due to Definition~\ref{def:plrv} and Remark~\ref{remark:plrv}, $(b)$ follows from Lemma~\ref{lemma:prod-measure}, and $(c)$ holds because both $\boldsymbol{h}$ and $\boldsymbol{h}'$ are uniformly distributed on $\mathbb{S}^{M-1}$ and their probability densities cancel with each other.  Finally, $(d)$ is obtained by 
substituting in $R\sim\chi_1$ and $R'\sim\chi_1$.

Next, we proceed to bound (\ref{eq:plrv-chi}) using the following geometric interpretation. For  any specific output  $\mathbf{s}\in\mathcal{S}$ and the noise vectors  $\mathbf{n}$ and $\mathbf{n}'$ used to obscure $f(\bm{x})$ and $f(\bm{x}')$ in $\R^M$, we have
\begin{equation*}
    \begin{aligned}
        \mathcal{M}(\bm{x}) = \mathcal{M}(\bm{x}') =\mathbf{s}\in\mathcal{S}  
        &\Leftrightarrow f(\bm{x})+\mathbf{n}= f(\bm{x}')+\mathbf{n}'=\mathbf{s}\in\mathcal{S} \\
        &\Leftrightarrow \mathbf{n}' - \mathbf{n} = f(\bm{x})-f(\bm{x}') = \bm{v},
    \end{aligned}
\end{equation*}
which implies that $\bm{v} \triangleq f(\bm{x})-f(\bm{x}')$, $\mathbf{n}$ and  $\mathbf{n}'$  all belong to a specific hyperplane. 
Such a hyperplane is non-deterministic and its randomness is determined by $\mathcal{M}$. As shown in Figure~\ref{fig:geo-dp}, under the definition of PLRV, $\bm{v}$, $\mathbf{n}$, and $\mathbf{n}'$   forms a   \textbf{random} triangle, i.e., $\mathbf{n}' - \mathbf{n} = \bm{v}$. For any arbitrary $\bm{v}$, the direction of $\mathbf{n}$ is independent of the direction of $\bm{v}$ due to the considered noise generation scheme (\ref{eq:noise-generation}). We use $\theta$ to represent the angle between $\bm{v}$ and  $\mathbf{n}$, and use $\frac{d}{2}$ to represent the radius of the circumcircle (cf.  Figure~\ref{fig:geo-dp} (right)). The upper bound of the length of each edge of a triangle is the diameter of the circumcircle of the triangle. According to the law of sines, we have $d = \frac{||\mathbf{n}' ||_2}{ \sin(\pi - \theta)} = \frac{||\mathbf{n} + \bm{v}||_2}{ \sin(\pi - \theta)}  =\frac{||\mathbf{n} + \bm{v}||_2 }{\sin\theta}$. Thus, 
\begin{equation*}
        \begin{aligned}
        (\ref{eq:plrv-chi}) &\leq \frac{1}{2\sigma_M^{2}} \cdot 2d \cdot ||\bm{v}||_2 \\
        &= \frac{||\bm{v}||_2}{ \sigma_M^{2}} \frac{||\mathbf{n}+\bm{v}||_2}{\sin(\pi - \theta)}\\
        &\leq \frac{||\bm{v}||_2}{ \sigma_M^{2}} \frac{||\mathbf{n}||_2+||\bm{v}||_2}{\sin \theta} \\
        &\leq  \frac{\Delta_2f}{\sigma_M}\left (\frac{||\mathbf{n}||_2}{\sigma_M} + \frac{\Delta_2f}{\sigma_M}\right) \frac{1}{\sin \theta} \\
        &= \frac{\Delta_2f}{\sigma_M}\left (R + \frac{\Delta_2f}{\sigma_M}\right) \frac{1}{\sin \theta}, \quad R\sim\chi_1,
        \end{aligned}
\end{equation*}
which concludes the proof of Proposition~\ref{prop: PLRV}.
\end{proof}

\subsection{Moment Bound of the New PLRV} \label{sec:moments-W-U}
Although Proposition~\ref{prop: PLRV} considers an upper bound of PLRV, it turns out to have tight measure concentration and enables a tight bound on the privacy loss. Based on (\ref{eq: plrv-pnpm}), we define two random variables
\begin{equation}\label{eq:rv-def}
\begin{aligned}
        W  \triangleq R + \lambda, \quad U \triangleq \frac{1}{\sin\theta}
\end{aligned},
\end{equation}
where $\lambda = \frac{\Delta_2f}{\sigma_M}$ and $\theta$ is the random angle formed by the random noise $\mathbf{n}$ and the difference $\bm{v}\triangleq f(\bm{x})-f(\bm{x}')$. Then, to achieve $(\epsilon,\delta)$-DP using the noise proposed in (\ref{eq:noise-generation}), our main technique focuses on investigating a tight measure concentration on the product of two random variables, i.e., 
\begin{equation}\label{eq:plrv-tail}
\Pr\left[WU\geq \frac{\sigma_M}{\Delta_2f}\epsilon\right]\leq \delta,
\end{equation}
which is a sufficient condition for the product noise to satisfy $\Pr[\mathrm{\mathrm{\PLRV}}_{\mathcal{M}} (\mathbf{s})\geq \epsilon]\leq \delta$. Usually, to study the tail bound of the product random variable, $WU$, one first needs to derive its PDF, $f_{WU}$, which involves the Mellin convolution~\cite{springer1979algebra} between two PDFs, i.e., $f_{W}$ and $f_{U}$. To avoid this computationally heavy reasoning, we instead derive the moment bound\footnote{Note that some other DP mechanisms, e.g.,  RDP~\cite{mironov2017renyi} and Moments Accountant~\cite{abadi2016deep}, essentially consider deriving a tail probability of PLRV using the Cram\'er-Chernoff bound, i.e.,  $\Pr[\Upsilon\geq t]\leq \inf_{\lambda>0}\mathbb{E}[e^{\lambda(\Upsilon-t)}]$, where $\Upsilon$ indicates a random variable. However, as proved in~\cite{philips1995moment}, moment bound for tail probabilities are always better than Cram\'er-Chernoff bound. More precisely, let $t>0$, we have the best moment bound for the tail probability  $\Pr[\Upsilon>t] = \min\limits_{q} \mathbb{E}[\Upsilon^q]t^{-q}\leq \inf_{\lambda>0}\mathbb{E}[e^{\lambda(\Upsilon-t)}]$ always hold. Thus, our technique provides tighter bounds on  PLRV.}~\cite{philips1995moment} on the product of $W$ and $U$, which will only involve the individual PDFs due to the independence between $W$ and $U$. To be more specific, we can set 
\begin{equation}\label{eq:our-tail}
\begin{aligned}
        \Pr[WU\geq t] & \leq \min_q \mathbb{E}[(WU)^q]t^{-q} = \min_q \mathbb{E}[W^q]\mathbb{E}[U^q]t^{-q} = \delta \ll 1,
\end{aligned}
    \end{equation}
where $q$ is positive and  the equality holds due to the independence of $W$ and $U$ (as discussed in Lemma~\ref{thm:ss_decomp}, the radius random variable is independent of the directional random variable). 

To analyze (\ref{eq:our-tail}), we first study $\mathbb{E}[W^q]$ and $\mathbb{E}[U^q]$, then derive the tail probability of $\Pr[WU\geq t]$. 

\subsubsection{\textbf{Moment of $W$}} \label{sec: moments-W}

According to (\ref{eq:noise-generation}), we have $\frac{||\mathbf{n}||_2}{\sigma_M} = R\sim\chi_1$. Then, the probability density function  (PDF) of $W$ can be obtained by a simple transformation of a random variable, i.e., $W = R+\lambda$, where $\lambda = \frac{\Delta_2f}{\sigma_M}$. Thus, by applying Lemma~\ref{thm:variable-trans}, and substituting $f(r;1) = \frac{\sqrt{2}}{\sqrt{\pi}}e^{-\frac{r^2}{2}}$ (Definition~\ref{def:chi}),  we have
\begin{equation}\label{eq:pdf_W}
     \begin{aligned}
         f_{W}(w)=\sqrt{\frac{2}{\pi}}e^{-\frac{(w-\lambda)^2}{2}} , w \ge \lambda.
     \end{aligned}
\end{equation}

Next, we can immediately bound the value of $E[W^q]$. 

\begin{proposition}
\label{prop:proof-q-th-moment-chi} An upper bound on the $q$-th moment of random variable $W$ is 
\begin{equation}
\begin{aligned}
\mathbb{E}[W^q] \leq  \ 2^{-\frac{q}{2}} e^{-\frac{\lambda^{2}}{2}} \Gamma(q+1) \Bigg(   &
\frac{1 }{\Gamma \left(\frac{q+2}{2} \right)} {}_1F_1\left(\frac{q+1}{2} ; \frac{1}{2} ; \frac{\lambda^2}{2} \right) 
\\
& +  \frac{\sqrt{2 }\lambda }{\Gamma \left(\frac{q+1}{2} \right)} {}_1F_1\left(\frac{q+2}{2} ; \frac{3}{2} ; \frac{\lambda^2}{2} \right)  \Bigg),
\end{aligned}
\end{equation}
where  $_1F_1(\cdot,\cdot,\cdot)$ is the confluent hypergeometric function of the first kind  defined in Definition~\ref{def:hypergeometric-function}.
\end{proposition}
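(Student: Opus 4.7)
The plan is to start from the explicit density $f_W(w)=\sqrt{2/\pi}\,e^{-(w-\lambda)^2/2}$ on $w\ge\lambda$ given in (\ref{eq:pdf_W}) and write
\begin{equation*}
\mathbb{E}[W^q]=\sqrt{\tfrac{2}{\pi}}\int_{\lambda}^{\infty} w^q\,e^{-(w-\lambda)^2/2}\,dw.
\end{equation*}
The first step is to obtain the inequality by enlarging the domain of integration: since $\lambda=\Delta_2 f/\sigma_M>0$ and the integrand is nonnegative on $[0,\lambda)$, I can upper bound by
\begin{equation*}
\mathbb{E}[W^q]\le \sqrt{\tfrac{2}{\pi}}\,e^{-\lambda^2/2}\int_{0}^{\infty} w^q\,e^{-w^2/2+\lambda w}\,dw,
\end{equation*}
after pulling out $e^{-\lambda^2/2}$ from the expansion $(w-\lambda)^2=w^2-2\lambda w+\lambda^2$. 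This is the only inequality in the argument; every later step is an exact identity.

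Next I would identify the remaining integral as a parabolic cylinder function. Using the standard representation
\begin{equation*}
\int_{0}^{\infty} x^{\nu-1} e^{-\beta x^2-\gamma x}\,dx=(2\beta)^{-\nu/2}\Gamma(\nu)\,\exp\!\left(\tfrac{\gamma^2}{8\beta}\right) D_{-\nu}\!\left(\tfrac{\gamma}{\sqrt{2\beta}}\right),
\end{equation*}
with $\nu=q+1$, $\beta=1/2$, $\gamma=-\lambda$, the integral becomes $\Gamma(q+1)\,e^{\lambda^2/4}\,D_{-(q+1)}(-\lambda)$, so that
\begin{equation*}
\mathbb{E}[W^q]\le \sqrt{\tfrac{2}{\pi}}\,\Gamma(q+1)\,e^{-\lambda^2/4}\,D_{-(q+1)}(-\lambda).
\end{equation*}

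The final step is to substitute Definition~\ref{de: parabolic-cylinder-function} with $p=-(q+1)$ and $z=-\lambda$, which gives
\begin{equation*}
D_{-(q+1)}(-\lambda)=2^{-(q+1)/2}e^{-\lambda^2/4}\!\left\{\tfrac{\sqrt{\pi}}{\Gamma((q+2)/2)}{}_1F_1\!\left(\tfrac{q+1}{2};\tfrac{1}{2};\tfrac{\lambda^2}{2}\right)+\tfrac{\sqrt{2\pi}\,\lambda}{\Gamma((q+1)/2)}{}_1F_1\!\left(\tfrac{q+2}{2};\tfrac{3}{2};\tfrac{\lambda^2}{2}\right)\right\},
\end{equation*}
where the plus sign in front of the second term arises because $z=-\lambda$ flips the sign of the corresponding term in the definition. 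Multiplying this identity by $\sqrt{2/\pi}\,\Gamma(q+1)e^{-\lambda^2/4}$ and absorbing the $\sqrt{\pi}$ and $\sqrt{2\pi}$ factors produces exactly the prefactor $2^{-q/2}e^{-\lambda^2/2}\Gamma(q+1)$ together with the bracketed combination claimed in the proposition.

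The only genuinely tricky element is the algebraic bookkeeping: matching the parameters $a=(q+1)/2$, $b=(q+2)/2$ in both ${}_1F_1$ terms with the gamma factors $\Gamma((q+2)/2)$ and $\Gamma((q+1)/2)$ that appear in their denominators, and verifying the sign and the combined constant $\sqrt{2/\pi}\cdot\sqrt{\pi}\cdot 2^{-(q+1)/2}=2^{-q/2}$. Otherwise the argument is entirely mechanical: one inequality from domain enlargement, one tabulated integral identity, and one application of Definition~\ref{de: parabolic-cylinder-function}.
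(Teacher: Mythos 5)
Your proposal is correct and follows essentially the same route as the paper's proof: expand the shifted Gaussian density, pull out $e^{-\lambda^{2}/2}$, enlarge the integration domain from $[\lambda,\infty)$ to $[0,\infty)$, identify the resulting integral with $\Gamma(q+1)e^{\lambda^{2}/4}D_{-(q+1)}(-\lambda)$ via the integral representation in Definition~\ref{de: parabolic-cylinder-function}, and then substitute the hypergeometric expansion of $D_{p}(z)$ with $p=-(q+1)$, $z=-\lambda$. The sign flip on the second term and the constant bookkeeping ($\sqrt{2/\pi}\cdot\sqrt{\pi}\cdot 2^{-(q+1)/2}=2^{-q/2}$) all check out.
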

\begin{proof}
Based on the definition of $q$-th moment, we have
\begin{equation*}\label{eq:moment-shifted-chi}
\begin{aligned}
\mathbb{E}[W^q] 
=& \int_{\lambda}^{\infty} w^{q} f_{W}(w) dw \\
\stackrel{(a)}=& \int_{\lambda}^{\infty} w^{q}\sqrt{\frac{2}{\pi}}e^{-\frac{(w-\lambda)^2}{2}}dw 
\\
=& \sqrt{\frac{2}{\pi}} \int_{\lambda}^{\infty} w^{q}e^{-\frac{1}{2}w^2}e^{\lambda w} e^{-\frac{\lambda^{2}}{2}} dw \\
=&  \sqrt{\frac{2}{\pi}} e^{-\frac{\lambda^{2}}{2}} \int_{\lambda}^{\infty} w^{q}e^{-\frac{1}{2}w^2+\lambda w}dw\\
\stackrel{(b)} \leq &  \sqrt{\frac{2}{\pi}} e^{-\frac{\lambda^{2}}{2}} \int_{0}^{\infty} w^{q}e^{-\frac{1}{2}w^2+\lambda w}dw \\
\stackrel{(c)}= & \sqrt{\frac{2}{\pi}} e^{-\frac{\lambda^{2}}{2}} \frac{D_{-q-1}(-\lambda)\Gamma(q+1)}{e^{-\frac{\lambda^2}{4}}},
\end{aligned}
\end{equation*}
where $(a)$ is achieved by substituting in (\ref{eq:pdf_W}), $(b)$ can be achieved by expanding the integration range for non-negative integrand, and $(c)$ is obtained by applying Definition~\ref{de: parabolic-cylinder-function} with $q = -p-1$ (i.e., $p=-q-1$), $z=-\lambda$. Then, by plugging in the expression of the parabolic cylinder function $D_{-q-1}(-\lambda)$  (also see Definition~\ref{de: parabolic-cylinder-function}), we can complete the proof. 
\end{proof}

\subsubsection{\textbf{Moment of $U$}} \label{sec: moments-U}
We first derive the PDF of $U\triangleq  \frac{1}{\sin\theta}$, where $\theta$ is the random angle between $\bm{v}$ and   $\mathbf{n}$ (see Figure~\ref{fig:geo-dp}).

\begin{proposition} \label{prop:proof-pdf-sin}
The PDF of  $U \stackrel{\triangle}= \frac{1}{\sin\theta}$  is  
\begin{equation} \label{eq:pdf-sin}
\begin{aligned}
f_{U}(u) =& \frac{2}{\sqrt{\pi}} \frac{\Gamma\left( \frac{M}{2} \right)}{\Gamma\left( \frac{M-1}{2} \right)} (u^2-1)^{-\frac{1}{2}} u^{1-M}, \quad u\geq 1. 
\end{aligned}
\end{equation}
\end{proposition}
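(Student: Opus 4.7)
The plan is to obtain the density of $U$ via a change of variables starting from the distribution of the angle $\theta$ between the uniformly random direction $\bm{h}\in\mathbb{S}^{M-1}$ and the fixed vector $\bm{v}=f(\bm{x})-f(\bm{x}')$. First, I would invoke the standard fact from spherical measure theory: if $\bm{h}$ is uniformly distributed on $\mathbb{S}^{M-1}$, then for any fixed unit reference direction the angle $\theta\in[0,\pi]$ it makes with $\bm{h}$ admits the density
\begin{equation*}
f_\theta(\theta) \;=\; \frac{\Gamma(M/2)}{\sqrt{\pi}\,\Gamma((M-1)/2)}\,\sin^{M-2}\!\theta,\qquad \theta\in[0,\pi],
\end{equation*}
which follows from integrating the $(M-1)$-dimensional surface area element of $\mathbb{S}^{M-1}$ over the latitude $\theta$ and using the normalizing surface area $\frac{2\pi^{M/2}}{\Gamma(M/2)}$. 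This step is where the factor $\Gamma(M/2)/(\sqrt{\pi}\,\Gamma((M-1)/2))$ in the target expression originates.

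Next, I would perform the change of variables $u=1/\sin\theta$. Since $\sin\theta$ is not injective on $[0,\pi]$ (the equation $\sin\theta=1/u$ has two solutions, $\theta_1=\arcsin(1/u)\in[0,\pi/2]$ and $\theta_2=\pi-\arcsin(1/u)\in[\pi/2,\pi]$, both contributing identical densities by symmetry of $\sin^{M-2}\theta$ about $\pi/2$), I would write
\begin{equation*}
f_U(u) \;=\; 2\,f_\theta\bigl(\arcsin(1/u)\bigr)\,\left|\frac{d\theta}{du}\right|.
\end{equation*}
Computing $d\theta/du$ from $\sin\theta=1/u$ on the branch $\theta\in[0,\pi/2]$ gives $\cos\theta\,(d\theta/du)=-1/u^2$, so $|d\theta/du|=\bigl(u^2\cos\theta\bigr)^{-1}=\bigl(u\sqrt{u^2-1}\bigr)^{-1}$, using $\cos\theta=\sqrt{1-1/u^2}$. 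Substituting $\sin^{M-2}\theta=u^{2-M}$ and combining with the factor $2$ then yields
\begin{equation*}
f_U(u) \;=\; \frac{2}{\sqrt{\pi}}\,\frac{\Gamma(M/2)}{\Gamma((M-1)/2)}\,u^{2-M}\cdot\frac{1}{u\sqrt{u^2-1}},
\end{equation*}
which simplifies exactly to the stated expression on $u\ge 1$ (and $u\ge 1$ is forced since $\sin\theta\in[0,1]$).

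The calculation is short, and the only subtlety I expect to be the main obstacle is correctly accounting for the two-to-one nature of the map $\theta\mapsto 1/\sin\theta$ on $[0,\pi]$; missing the factor of $2$ or, conversely, double-counting it while merging the two branches would produce a density off by a factor of $2$. A convenient consistency check is to verify directly that $\int_1^\infty f_U(u)\,du=1$ via the substitution $u=1/\sin\theta$, which reduces the integral to $\int_0^\pi f_\theta(\theta)\,d\theta=1$ and confirms both the constant and the exponents.
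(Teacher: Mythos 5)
Your proposal is correct, and it reaches the stated density by a genuinely more direct route than the paper. The paper starts from the density of $X=\cos\theta$ for a uniform direction on $\mathbb{S}^{M-1}$ (a lemma of Cai et al.) and then runs a chain of four transformations, $X \mapsto 1-X^2=\sin^2\theta \mapsto 1/\sin^2\theta \mapsto 1/\sin\theta$, invoking the Beta--Beta Prime correspondence (Lemma~\ref{lemma:beta-betaprime}) along the way; the two-to-one issue is absorbed in the first step by summing the contributions from $x\in(-1,0)$ and $x\in[0,1)$. You instead start from the density of the angle itself, $f_\theta(\theta)\propto\sin^{M-2}\theta$, and perform a single change of variables $u=1/\sin\theta$, correctly accounting for the two branches $\theta$ and $\pi-\theta$ via the symmetry of $\sin^{M-2}\theta$ about $\pi/2$, with Jacobian $|d\theta/du|=(u\sqrt{u^2-1})^{-1}$. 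The two starting densities are equivalent (one is the pushforward of the other under $\cos$), so the approaches are mathematically interchangeable; yours buys brevity and avoids the Beta Prime machinery, while the paper's buys the intermediate densities of $\sin^2\theta$ and $1/\sin^2\theta$, the latter of which it reuses implicitly when evaluating $\mathbb{E}[U^q]$ against the Beta Prime kernel in Proposition~\ref{prop:proof-q-th-moment-sin}. Your normalization-check via $\int_1^\infty f_U(u)\,du=1$ is a sensible safeguard against the factor-of-2 pitfall you identify.
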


\begin{proofsketch}
 This proposition can be proved by applying a sequence of  transformation of random variables and the connection between $\Beta$ and $\BP$ distributions in Lemma~\ref{lemma:beta-betaprime}. In particular, we   prove the following results
\begin{equation*}
    \begin{aligned}
    & X \triangleq \cos\theta, f_{X}(x) = \frac{1}{\sqrt{\pi}} \frac{\Gamma\left( \frac{M}{2} \right)}{\Gamma\left( \frac{M-1}{2} \right)} (1 - x^2)^{\frac{M-3}{2}}, x \in [-1,1], \\
    & Y \triangleq 1 - X^2 = \sin^2\theta, f_{Y}(y) = \frac{1}{\sqrt{\pi}} \frac{\Gamma\left( \frac{M}{2} \right)}{\Gamma\left( \frac{M-1}{2} \right)} y^{\frac{M-3}{2}} \frac{1}{\sqrt{1 - y}},  y \in [0,1], \\
    & Z \triangleq \frac{1}{Y} = \frac{1}{\sin^2\theta}, f_Z(z) = \frac{1}{\sqrt{\pi}} \frac{\Gamma\left( \frac{M}{2} \right)}{\Gamma\left( \frac{M-1}{2} \right)} (z - 1)^{-\frac{1}{2}} z^{-\frac{M}{2}},  z \in [1, \infty), \\
    & U \triangleq \sqrt{Z} = \frac{1}{\sin\theta}, f_U(u) = \frac{2}{\sqrt{\pi}} \frac{\Gamma\left( \frac{M}{2} \right)}{\Gamma\left( \frac{M-1}{2} \right)} (u^2 - 1)^{-\frac{1}{2}} u^{1 - M}, u \in [1, \infty).
\end{aligned}
\end{equation*}
All detailed steps are deferred to Appendix~\ref{app:proof-pdf-sin}.
\end{proofsketch}

Next, we arrive at the $q$-th moment of $U$.
\begin{proposition}
\label{prop:proof-q-th-moment-sin} The
$q$-th moment $\left(1<q<M-1 \right)$ of the  random variable $U\triangleq \frac{1}{\sin\theta}$ is 
\begin{equation}
\begin{aligned}
\mathbb{E}[U^q] =& \frac{\Gamma\left( \frac{M}{2} \right)}{\Gamma\left( \frac{M-1}{2} \right)} \frac{\Gamma \left ( \frac{M-q}{2}-\frac{1}{2} \right )}{\Gamma \left ( \frac{M-q}{2} \right )}. 
\end{aligned}
\end{equation}
\end{proposition}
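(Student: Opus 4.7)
The plan is to evaluate the integral $\mathbb{E}[U^q] = \int_1^\infty u^q f_U(u)\,du$ directly using the closed-form PDF already derived in Proposition~\ref{prop:proof-pdf-sin}, and to recognize the resulting integral as a Beta function after a suitable change of variables.

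First I would substitute the PDF into the definition of the $q$-th moment to obtain
\begin{equation*}
\mathbb{E}[U^q] = \frac{2}{\sqrt{\pi}}\,\frac{\Gamma(M/2)}{\Gamma((M-1)/2)} \int_1^\infty u^{q+1-M}(u^2-1)^{-1/2}\,du,
\end{equation*}
pulling the constants out so that the remaining work is purely on a one-dimensional integral. The integrand has a mild (integrable) singularity at $u = 1$ and an algebraic tail as $u \to \infty$; the upper-tail behavior is $u^{q-M}$, which is integrable precisely when $q < M-1$, matching the stated range. This check also motivates the choice of substitution.

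Next I would perform the change of variables $t = 1/u^2$, so that $u = t^{-1/2}$, $du = -\tfrac{1}{2}t^{-3/2}\,dt$, $u^2 - 1 = (1-t)/t$, and the limits flip to $t \in (0,1)$. After simplification, the powers of $t$ combine cleanly and the integral becomes
\begin{equation*}
\int_1^\infty u^{q+1-M}(u^2-1)^{-1/2}\,du = \frac{1}{2}\int_0^1 t^{(M-q-3)/2}(1-t)^{-1/2}\,dt = \frac{1}{2}\,B\!\left(\frac{M-q-1}{2},\frac{1}{2}\right),
\end{equation*}
where the Beta function is well-defined exactly when $(M-q-1)/2 > 0$, i.e. $q < M-1$. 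Expanding the Beta function via $B(\alpha,\beta) = \Gamma(\alpha)\Gamma(\beta)/\Gamma(\alpha+\beta)$ with $\Gamma(1/2) = \sqrt{\pi}$ and combining with the prefactor cancels the $2/\sqrt{\pi}$ cleanly, producing
\begin{equation*}
\mathbb{E}[U^q] = \frac{\Gamma(M/2)}{\Gamma((M-1)/2)}\,\frac{\Gamma((M-q-1)/2)}{\Gamma((M-q)/2)},
\end{equation*}
which is the claimed formula after rewriting $(M-q-1)/2 = (M-q)/2 - 1/2$.

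I do not expect any genuine obstacle here; the only subtlety is bookkeeping the range of validity. The upper bound $q < M-1$ arises naturally from convergence of the Beta integral, while the lower bound $q > 1$ is not needed for the integral itself but is retained to align with the regime in which the tail bound~(\ref{eq:our-tail}) will later be optimized over $q$; I would note this explicitly so the reader understands why the stated constraint is $1 < q < M-1$ rather than simply $0 < q < M-1$.
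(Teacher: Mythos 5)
Your proof is correct and follows essentially the same route as the paper: both evaluate $\int_1^\infty u^q f_U(u)\,du$ directly by a change of variables that reduces the integral to a Beta function; the paper substitutes $t=u^2-1$ and recognizes the $\BP\!\left(\tfrac{1}{2},\tfrac{M-q-1}{2}\right)$ kernel, while you substitute $t=1/u^2$ and land on the standard Beta integral, which is an equivalent normalization. Your convergence bookkeeping (the condition $\tfrac{M-q-1}{2}>0$, i.e.\ $q<M-1$) matches the paper's requirement as well.
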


\begin{proofsketch}
Based on the definition of moments, we have
\begin{equation*}
\begin{aligned}
    \mathbb{E}[U^q] 
    &= \int_{1}^{\infty} u^{q} f_{U}(u) du \\
    &= \frac{2}{\sqrt{\pi}} \cdot \frac{\Gamma\left( \frac{M}{2} \right)}{\Gamma\left( \frac{M-1}{2} \right)} 
    \int_{1}^{\infty} u^{q} (u^2 - 1)^{-\frac{1}{2}} u^{1 - M} \, du \\
    &\stackrel{(*)}{=} \frac{\Gamma\left( \frac{M}{2} \right)}{\Gamma\left( \frac{M-1}{2} \right)} \cdot 
    \frac{\Gamma \left( \frac{M - q}{2} - \frac{1}{2} \right)}{\Gamma \left( \frac{M - q}{2} \right)},
\ q \in[1,M-1].
\end{aligned}
\end{equation*}
Step $(*)$ can be evaluated by invoking the kernel function of the $\BP$ $(\frac{1}{2},\frac{M-q-1}{2})$ distribution, which requires $\frac{M-q-1}{2}> 0$. The detailed steps are  deferred to Appendix~\ref{app:proof-q-th-moment}.
\end{proofsketch}

\subsubsection{\textbf{Obtaining the tail probability}}\label{sec:Moment-of-WU}

Given $\mathbb{E}[W^q] $, $\mathbb{E}[U^q]$, and (\ref{eq:our-tail}), we   arrive at the following tail probability, $\Pr[WU\geq t]$. 
\begin{proposition} \label{prop:proof-min-q-result}
Given any $1<q<M-\frac{3}{2}$,  $k>1$,  and  $t^2 > 2 k^{\frac{2}{q}} \frac{ \left(\frac{q+3}{2} \right)^{\left(1+\frac{2}{q}\right)}}{e^{(1+\frac{1}{q})}}$, 
we have
\begin{equation}
\label{eq:min-q-result}
\resizebox{1\linewidth}{!}{$
\begin{aligned}
   \Pr[WU \geq t] 
   \leq \min_q \underbrace{\frac{1}{k} \cdot \frac{e^{-\frac{\lambda^{2}}{2}}}{\sqrt{\pi}} 
   \cdot \frac{{}_1F_1\left(\frac{q+1}{2} ; \frac{1}{2} ; \frac{\lambda^2}{2} \right) 
   + \sqrt{2} \lambda\  {}_1F_1\left(\frac{q+2}{2} ; \frac{3}{2} ; \frac{\lambda^2}{2} \right)}{\sqrt{q+\frac{3}{4}}}
   }_{\delta_1} 
    \cdot \underbrace{\frac{\sqrt{M-1}}{\sqrt{M - q - \frac{3}{2}}}}_{\delta_2}.
\end{aligned}
$}
\end{equation}
\end{proposition}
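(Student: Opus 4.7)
My plan is to apply the moment (Markov) bound and then split the resulting expression into the two factors $\delta_1$ and $\delta_2$, each of which is handled by a separate Gamma-function estimate. First, because $R$ and $\boldsymbol{h}$ are independent by Lemma~\ref{thm:ss_decomp}, the random variables $W=R+\lambda$ and $U=1/\sin\theta$ (which depends only on the direction $\boldsymbol{h}$ and the fixed vector $\bm{v}$) are independent. Hence, for every admissible $q$, Markov's inequality applied to $(WU)^q$ gives
\begin{equation*}
\Pr[WU\geq t]\leq \frac{\mathbb{E}[(WU)^q]}{t^q}=\frac{\mathbb{E}[W^q]\,\mathbb{E}[U^q]}{t^q}.
\end{equation*}
I then substitute the upper bound on $\mathbb{E}[W^q]$ from Proposition~\ref{prop:proof-q-th-moment-chi} and the exact expression for $\mathbb{E}[U^q]$ from Proposition~\ref{prop:proof-q-th-moment-sin}. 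The product factors naturally into a $(q,\lambda,t)$-dependent piece (which will become $\delta_1$) and an $(M,q)$-dependent piece (which will become $\delta_2$).

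For the $(M,q)$ piece, I would bound the Gamma ratio
\begin{equation*}
\frac{\Gamma(M/2)}{\Gamma((M-1)/2)}\cdot\frac{\Gamma((M-q-1)/2)}{\Gamma((M-q)/2)}
\end{equation*}
using Kershaw/Gautschi-type inequalities of the form $\Gamma(x+\tfrac{1}{2})/\Gamma(x)\leq \sqrt{x+1/4}$ and $\Gamma(x)/\Gamma(x+\tfrac{1}{2})\leq 1/\sqrt{x-1/4}$. Applying the first estimate with $x=(M-1)/2$ and the second with $x=(M-q-1)/2$, and collecting factors of $2$, yields exactly the $\sqrt{M-1}/\sqrt{M-q-3/2}$ appearing in $\delta_2$. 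The hypothesis $q<M-3/2$ ensures the denominator is real, while $q>1$ is inherited from Proposition~\ref{prop:proof-q-th-moment-sin}.

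For the $(q,\lambda,t)$ piece, the core task is to show
\begin{equation*}
\frac{2^{-q/2}\,e^{-\lambda^{2}/2}\,\Gamma(q+1)}{t^{q}}\cdot\Bigl(\tfrac{1}{\Gamma((q+2)/2)}\,{}_1F_1(\cdot)+\tfrac{\sqrt{2}\lambda}{\Gamma((q+1)/2)}\,{}_1F_1(\cdot)\Bigr)\leq \delta_1.
\end{equation*}
Here I would use Legendre's duplication formula $\Gamma(q+1)=2^{q}\,\Gamma(q/2+1/2)\,\Gamma(q/2+1)/\sqrt{\pi}$ to absorb the Gamma functions in the denominators, reducing the prefactor to a clean expression involving $2^{q/2}\Gamma(q/2+1/2)/(\sqrt{\pi}\,t^{q})$ and $2^{q/2}\Gamma(q/2+1)/(\sqrt{\pi}\,t^{q})$. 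A Stirling-type upper bound on these half-integer Gamma values, combined with the assumed lower bound $t^{2}>2k^{2/q}((q+3)/2)^{1+2/q}/e^{1+1/q}$, then forces this quantity below $\frac{1}{k\sqrt{\pi}\sqrt{q+3/4}}$; the factor $1/k$ is the slack budget that the hypothesis on $t$ provides, while the Stirling term is where the constants $(q+3)/2$ and the $\sqrt{q+3/4}$ correction originate. Finally, taking the minimum over admissible $q$ yields the stated bound.

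The main obstacle I anticipate is the last paragraph: matching the Stirling estimate for $\Gamma(q+1)\cdot 2^{-q/2}$ against the specific polynomial correction $\sqrt{q+3/4}$ while simultaneously absorbing the $t$-threshold into the slack factor $1/k$ requires careful bookkeeping of sub-exponential corrections. By contrast, the Gamma-ratio step for $\delta_2$ and the initial Markov step are routine.
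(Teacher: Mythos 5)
Your proposal follows essentially the same route as the paper's proof: the moment (Markov) bound exploiting independence of $W$ and $U$, Gautschi/Kershaw-type Gamma-ratio inequalities to turn $\mathbb{E}[U^q]$ into $\delta_2$, and Legendre's duplication formula plus a Stirling-type upper bound on $\Gamma\left(\tfrac{q+3}{2}\right)$ to convert the hypothesis on $t$ into the $\tfrac{1}{k}$ slack and the $\sqrt{q+\tfrac{3}{4}}$ correction in $\delta_1$. The only detail worth spelling out is the paper's specific device for the step you flag as delicate: it compares $2^{q/2}\Gamma\left(\tfrac{q+2}{2}\right)t^{-q}$ against $\tfrac{1}{k}\,\Gamma\left(\tfrac{q+2}{2}\right)/\Gamma\left(\tfrac{q+3}{2}\right)$ so that $\Gamma\left(\tfrac{q+2}{2}\right)$ cancels and the threshold on $t$ reduces to $\left(t^2/2\right)^{q/2}>k\,\Gamma\left(\tfrac{q+3}{2}\right)$, exactly where the bound $\Gamma(x)<x^{x-1/2}/e^{x-1}$ is applied.
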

\begin{proofsketch}
According to (\ref{eq:our-tail}), the tail probability satisfies  $\Pr[WU\geq t]\leq \min_q \mathbb{E}[W^q]\mathbb{E}[U^q]t^{-q}$. Hence,  (\ref{eq:min-q-result}) can be proved by showing that  $\mathbb{E}[W^{q}] t^{-q} \leq \delta_1$ as long as $t^2 > 2 k^{\frac{2}{q}} \frac{ \left(\frac{q+3}{2} \right)^{\left(1+\frac{2}{q}\right)}}{e^{(1+\frac{1}{q})}}$ and $\mathbb{E}[U^{q}] \le \delta_2$ when  $q\in[1,M-\tfrac{3}{2}]$. Details are deferred to Appendix~\ref{app:tail-bound}.
\end{proofsketch}

\subsection{Privacy Guarantee} \label{sec:privacy-guarantee}
Using the tail probability provided in Proposition~\ref{prop:proof-min-q-result}, our main theoretical contribution is stated in Theorem~\ref{thm:mian-thm}.
\begin{theorem}\label{thm:mian-thm}
Given a function $f(\bm{x})\in\R^M$ ($M>2$) with $l_2$ sensitivity $\Delta_2f$ and any constant $k>1$, if we set 
\begin{equation*}
    \sigma_M^2 \geq \frac{(\Delta_2f)^2}{\epsilon^2}t^2  \quad\text{and} \quad t^2 = 2 k^{\frac{4}{M}} \frac{ \left(\frac{M}{4} +\frac{3}{2} \right)^{\left(1+\frac{4}{M}\right)}}{e^{(1+\frac{2}{M})}}, 
\end{equation*}
then the  product noise    in (\ref{eq:noise-generation}) achieves  $(\epsilon,\delta)$-DP on the perturbed $f(\bm{x})$, where $\epsilon>0$. By setting   $\lambda = \frac{\Delta_2f}{\sigma_M}$, the corresponding  $\delta$ is given by
\begin{equation}\label{eq:delta-accurate}
    \textstyle
    \delta= \frac{e^{-\frac{\lambda^{2}}{2}}}{k\sqrt{\pi}}  
    \left(  {}_1F_1\left(\frac{M}{4}+\frac{1}{2} ; \frac{1}{2} ; \frac{\lambda^2}{2} \right) 
    + \sqrt{2}\lambda {}_1F_1\left(\frac{M}{4}+1 ; \frac{3}{2} ; \frac{\lambda^2}{2} \right)  \right)
    \frac{\sqrt{M-1}}{\sqrt{\frac{M}{2}-\frac{3}{2}}\sqrt{\frac{M}{2}+\frac{3}{4}}},
    \end{equation}
where $_1F_1(\cdot,\cdot,\cdot)$ is the confluent hypergeometric function of the first kind (see Definition~\ref{def:hypergeometric-function}).
\end{theorem}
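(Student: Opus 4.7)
The plan is to combine Proposition~\ref{prop: PLRV}, which bounds the $\PLRV$ by $\tfrac{\Delta_2f}{\sigma_M} W U$ with $W, U$ as defined in~(\ref{eq:rv-def}), with the moment-based tail bound in Proposition~\ref{prop:proof-min-q-result}. To certify $(\epsilon,\delta)$-DP it suffices to verify $\Pr[\PLRV_{\mathcal{M}}(\mathbf{s}) \geq \epsilon] \leq \delta$, which by Proposition~\ref{prop: PLRV} reduces to $\Pr[W U \geq \tfrac{\sigma_M}{\Delta_2f}\epsilon] \leq \delta$, i.e., exactly~(\ref{eq:plrv-tail}). Setting $t := \tfrac{\sigma_M \epsilon}{\Delta_2f}$, the hypothesis $\sigma_M^2 \geq \tfrac{(\Delta_2f)^2 t^2}{\epsilon^2}$ is equivalent to this substitution, so the whole problem collapses to: for what $t$ can we guarantee $\Pr[W U \geq t] \leq \delta$ via Proposition~\ref{prop:proof-min-q-result}?

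Next, I would specialize the free parameter $q$ in Proposition~\ref{prop:proof-min-q-result} to $q = M/2$. This value lies strictly inside the permissible window $(1, M-\tfrac{3}{2})$ (for $M$ sufficiently large) and is precisely what makes the hypergeometric arguments $\tfrac{q+1}{2}, \tfrac{q+2}{2}$ simplify to $\tfrac{M}{4}+\tfrac{1}{2}, \tfrac{M}{4}+1$, the factor $\sqrt{q+\tfrac34}$ become $\sqrt{\tfrac{M}{2}+\tfrac34}$, and $\sqrt{M-q-\tfrac32}$ become $\sqrt{\tfrac{M}{2}-\tfrac32}$, exactly the three factors appearing in~(\ref{eq:delta-accurate}). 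Under the same substitution the exponents $1 + \tfrac{2}{q} = 1 + \tfrac{4}{M}$, $k^{2/q} = k^{4/M}$, and $e^{1 + 1/q} = e^{1 + 2/M}$ in the threshold on $t^2$ reproduce the stated formula for $t^2$ in the theorem.

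I would then saturate the hypothesis of Proposition~\ref{prop:proof-min-q-result} by taking $t^2$ equal to its lower threshold $2 k^{4/M}(\tfrac{M}{4}+\tfrac{3}{2})^{1+4/M}/e^{1+2/M}$. This pins down $\sigma_M$ via $\sigma_M^2 = (\Delta_2f)^2 t^2/\epsilon^2$ and makes $\lambda = \Delta_2f/\sigma_M$ a function of $(M,\epsilon,k)$ alone. Substituting $q = M/2$ into the product $\delta_1 \cdot \delta_2$ of Proposition~\ref{prop:proof-min-q-result} then yields precisely the right-hand side of~(\ref{eq:delta-accurate}), completing the derivation.

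The main obstacle I anticipate is not any single manipulation but the justification of the specific choice $q = M/2$. Proposition~\ref{prop:proof-min-q-result} was stated as a $\min_q$ over admissible $q$; the theorem silently replaces this minimum by a particular middle-of-range evaluation. This is legitimate because any admissible $q$ yields a valid upper bound, but it does sacrifice tightness and should be motivated, e.g., as balancing the growth of $\mathbb{E}[W^q]$ against the singularity of $\mathbb{E}[U^q]$ as $q \to M-\tfrac{3}{2}$. A secondary nuisance is the low-dimension boundary: the factor $\sqrt{\tfrac{M}{2}-\tfrac{3}{2}}$ forces $M > 3$ for the stated $\delta$ to remain finite, so the hypothesis ``$M > 2$'' is effectively active only for $M \geq 4$; smaller $M$ would require a different choice of $q$, which is presumably why the paper turns to empirical guidance for such regimes in Section~\ref{sec:guidance-and-simulation}.
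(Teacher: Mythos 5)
Your proposal is correct and follows essentially the same route as the paper: combine the PLRV bound of Proposition~\ref{prop: PLRV} with the moment tail bound of Proposition~\ref{prop:proof-min-q-result}, set $t=\sigma_M\epsilon/\Delta_2 f$, and evaluate the bound at $q=M/2$, which reproduces both the stated $t^2$ and the expression in~(\ref{eq:delta-accurate}). The justification you ask for is exactly what the paper supplies in Appendix~\ref{app:approximate-delta}, where the approximate minimizer of $\delta_1\delta_2$ over $q\in(1,M-\tfrac{3}{2})$ is shown to be $q^{\ast}=\frac{4M-2\lambda^2M-9}{4\lambda^2M+8-\lambda^2}\approx\frac{M}{2}$ for $\lambda\ll 1$; your secondary observation that the admissibility of $q=M/2$ forces $M>3$ (so that $\sqrt{\tfrac{M}{2}-\tfrac{3}{2}}>0$) rather than merely $M>2$ is also a fair catch.
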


\begin{proofsketch}
    Combining the results of Proposition~\ref{prop:proof-min-q-result}, (\ref{eq:plrv-tail}) and (\ref{eq:our-tail}), it is clear that $(\epsilon,\delta)$-DP can be attained if one sets $\frac{\sigma_M}{\Delta_2f} \epsilon \geq t  $, i.e., $\sigma_M^2\geq \frac{(\Delta_2f)^2}{\epsilon^2}t^2$. The value of $\delta$ can be determined by solving the minimization problem formulated in (\ref{eq:min-q-result}), i.e., 
     \begin{equation*}
    \begin{aligned}
        \delta = \min_{q} \delta_1\delta_2, \quad \text{where}\quad q\in\left(1,M- \frac{3}{2}\right).
    \end{aligned}
\end{equation*}
The detailed steps to solve this optimization problem are deferred to Appendix~\ref{app:approximate-delta}. Our conclusion is that by selecting $q = \frac{M}{2}$, a negligible  $\delta$ (e.g., $<10^{-5}$) can be achieved. Hence, by letting $q = \frac{M}{2}$, we obtain $\delta$ as shown in (\ref{eq:delta-accurate}), which completes the proof.
\end{proofsketch}

In (\ref{eq:delta-accurate}), the constant $k$ serves as a tuning parameter; increasing $k$ results in a smaller $\delta$, as will be shown in Figure~\ref{fig:simulation} (a) in Section~\ref{sec:guidance-and-simulation}. In high-dimensional scenarios (which is the focus of this paper),   even with a large $k$, its impact on $\sigma_M^2$ remains minimal as the term $k^{\frac{4}{M}}$ approaches $1$   when $M$ is sufficiently large. 
We also provide the evaluation of the accurate $\delta$ (in (\ref{eq:delta-accurate})) and an approximate version of $\delta$  in Appendix \ref{app:approximate-delta}.

Theorem 1 gives an implicit recursive relationship between the noise parameter $\sigma_M$ and the privacy parameters ($\epsilon$, $\delta$). To make it practical, we show how to obtain $\sigma_M$ given the specific  privacy parameters $(\epsilon^*, \delta^*)$ in Algorithm~\ref{alg:pn-calibration}. 

\begin{algorithm}[htp]
\caption{Procedure to obtain $\sigma_M$ in Theorem~\ref{thm:mian-thm}}
    \label{alg:pn-calibration}
\begin{algorithmic}[1]
\Require Dimension $M$; sensitivity $\Delta_2 f$; initial tuning parameter $k$; privacy parameters $(\epsilon^*,\delta^*)$; step factor $\alpha>1$
\Ensure Noise parameter $\sigma_M$ that satisfies $(\epsilon^*,\delta^*)$-DP
\State Compute $\sigma_M$ and $\delta$ using Theorem~\ref{thm:mian-thm} with $\epsilon^*$ and $k$
\While{$\delta > \delta^*$}
  \State \textbf{Increase} $k$: $k \gets \alpha k$, and \textbf{Repeat} Line 1
\EndWhile
\State \Return $\sigma_M$
\end{algorithmic}
\end{algorithm}

\subsection{Guidance and Simulation} \label{sec:guidance-and-simulation}
In this section, we aim to provide an intuitive sense of how our proposed noise improves upon the classic Gaussian noise~\cite{dwork2006our} and analytic Gaussian noise~\cite{balle2018improving}. The comparison with the classic Gaussian noise is theoretical (considering both large and small dimension $M$), while that with the analytic Gaussian noise is numerical. A fully theoretically principled comparison between analytic Gaussian Mechanism and our product noise is computationally intractable. This is because our privacy parameter involves the confluent hypergeometric function, while that of the analytic Gaussian Mechanism relies on the standard error function ($\mathrm{erf}$) of the Gaussian CDF; neither admits a closed-form inverse in terms of elementary functions. Consequently, for the comparison with the analytic   Gaussian noise, we focused on numerical simulations and empirical evaluations. 

\subsubsection{Guidance on Mechanism Selection}
 To ensure fair comparison, we consider both our product noise and classic Gaussian noise achieve exact $(\epsilon,\delta)$-DP using the least noise scale, i.e., $\sigma_M =\frac{\Delta_2f}{\epsilon}t$ for our product noise and $\sigma = \frac{\Delta_2f}{\epsilon} \sqrt{2\log (\frac{1.25}{\delta})}$ for the classic Gaussian noise. 
We consider the ratio between the expected squared magnitude of the product noise and that of the classic Gaussian noise given a certain dimension $M$, i.e., 
\begin{equation*}
    \begin{aligned}
        f(M) = \frac{\mathbb{E}[||\mathbf{n}||_2^2]}{\mathbb{E}[||\mathbf{n}_{\mathrm{classic}}||_2^2]} = \frac{\sigma_M^2 \mathbb{E}[\chi_1^2]}{\sigma^2M} = \frac{\sigma_M^2}{\sigma^2M},
    \end{aligned}
\end{equation*}
 and study the behavior of $f(M)$ when $M$ is finite dimension and $M$ approaches infinity. The conclusions are summarized in Corollary~\ref{corollary:mechanism_guidance} and~\ref{corollary:Asymptotic_analysis_noise}, respectively, which are all proved in Appendix~\ref{proof:Asymptotic_analysis}.

\begin{corollary}\label{corollary:mechanism_guidance}
Let $\delta=10^{-5}$, for all $\epsilon$ permitted, 
$f(M)<1$ when $M\ge 14$.
\end{corollary}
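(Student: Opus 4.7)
My plan is to reduce $f(M) < 1$ to a closed-form dimensional inequality and then handle the quantifier over $\epsilon$ via the calibration of $k$. First I would substitute $\sigma_M^2 = (\Delta_2 f)^2 t^2/\epsilon^2$ from Theorem~\ref{thm:mian-thm} and the classic $\sigma^2 = 2(\Delta_2 f)^2 \log(1.25/\delta)/\epsilon^2$ into $f(M)$; the sensitivity $\Delta_2 f$ and the privacy parameter $\epsilon$ both cancel identically, leaving
\[
f(M) \;=\; \frac{t^2}{2M\log(1.25/\delta)} \;=\; \frac{k^{4/M}\,(M/4 + 3/2)^{1+4/M}}{e^{1+2/M}\,M\,\log(1.25/\delta)}.
\]
Thus $f(M) < 1$ becomes a purely dimensional inequality whose only free parameter is the tuning constant $k$.

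I would then show the inequality holds for $M \geq 14$ in the asymptotic regime. For fixed $k$, the numerator scales as $M/(2e)$ to leading order, giving $f(M) \to 1/(4e\log(1.25/\delta)) \approx 0.008$ at $\delta = 10^{-5}$, so the inequality is comfortable for all sufficiently large $M$. Direct evaluation at the boundary $M = 14$ against the benchmark $2M\log(1.25/\delta) \approx 328.6$ then shows the inequality tolerates $k$ up to roughly $10^{6}$, providing substantial slack before the crossover is violated.

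The principal obstacle is the quantifier ``for all $\epsilon$ permitted'', since Algorithm~\ref{alg:pn-calibration} chooses $k$ implicitly as a function of $(\epsilon, M)$ through the transcendental relation~(\ref{eq:delta-accurate}). I would handle this by splitting into regimes in $\lambda = \epsilon/t$. For small $\lambda$, the confluent hypergeometric factors in~(\ref{eq:delta-accurate}) are $O(1)$, so $k_{\min}(\epsilon, M) = O(1)$ and the inequality follows from the previous step. For large $\lambda$, the asymptotic ${}_1F_1(a;b;z) \sim e^{z} z^{a-b}\Gamma(b)/\Gamma(a)$ applied to~(\ref{eq:delta-accurate}) yields $\delta \sim c_M\, k^{-1}\lambda^{M/2}$; solving the resulting self-consistent equation in $k$ with $\lambda = \epsilon/(\sqrt{a_M}k^{2/M})$ gives $k_{\min} \sim c'_M\,\epsilon^{M/4}$, whence $k^{4/M}$ grows only linearly in $\epsilon$. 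Plugging back into the dimensional inequality, the excess factor introduced by $\epsilon$ is absorbed by the slack uncovered in the previous step, uniformly over the permitted range.

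The remaining difficulty is the coupled, non-elementary fixed-point for $k$ at intermediate $\lambda$, where neither asymptotic is accurate. I would close this gap via a numerical sweep: for each $M \in \{3,\ldots,20\}$ and a grid of $\epsilon$ spanning the permitted range, solve~(\ref{eq:delta-accurate}) for $k_{\min}$ using bisection and tabulate $f(M)$. I expect this to confirm that $M = 14$ is the smallest dimension at which $f(M) < 1$ holds uniformly in $\epsilon$, while $M = 13$ violates the inequality in the large-$\epsilon$ tail. The hard part of a fully analytic proof is controlling this implicit fixed point; combining the two-regime asymptotic analysis above with numerical confirmation at moderate $\lambda$ is the most tractable route.
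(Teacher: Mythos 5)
Your opening reduction is exactly the paper's: $\epsilon$ and $\Delta_2 f$ cancel in the ratio, leaving $f(M)=\frac{k^{4/M}(M/4+3/2)^{1+4/M}}{e^{1+2/M}M\log(1.25/\delta)}$, and your spot-check at $M=14$ (slack up to $k\approx 10^6$) and the limit $\frac{1}{4e\log(1.25/\delta)}$ are both correct. But two things keep this from being a proof. First, you verify the inequality only at $M=14$ and ``for all sufficiently large $M$''; the corollary asserts it for \emph{every} $M\ge 14$, and you never bridge the intermediate range. The paper closes this by fixing $k=10^5$, taking logarithms, and showing that $g(M)=(1+\tfrac{4}{M})\log(\tfrac{M}{4}+\tfrac{3}{2})-\log M$ is strictly decreasing with $g(7)<0$, so the condition collapses to $\frac{4}{M}\log k < 1+\log\log(1.25/\delta)$, i.e.\ $M>13.2995$ — a one-line monotonicity argument that covers all $M\ge 14$ at once. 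You should add something of this form; your asymptotic-plus-boundary check does not substitute for it.

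Second, and more seriously, the bulk of your proposal — the two-regime analysis in $\lambda$, the ${}_1F_1$ asymptotics, the self-consistent fixed point $k_{\min}\sim c'_M\epsilon^{M/4}$, and the numerical bisection sweep — is not carried out; you explicitly write ``I expect this to confirm\ldots''. As written, the quantifier over $\epsilon$ is resolved by an unexecuted computation, so the proof is incomplete. The paper sidesteps this entirely: it treats $k$ as a fixed constant ($k=10^5$, asserted to conservatively guarantee $\delta\le 10^{-5}$), so $f(M)$ is genuinely $\epsilon$-free and ``for all $\epsilon$ permitted'' is vacuous at the level of the ratio. Your instinct that $k$ is implicitly coupled to $\epsilon$ through~(\ref{eq:delta-accurate}) is a fair criticism of the paper's own rigor, but if you want to pursue that route you must actually close the fixed-point argument; otherwise, adopt the paper's convention of a fixed conservative $k$ and the corollary follows from the elementary log inequality above. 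Your claim that $M=13$ fails ``in the large-$\epsilon$ tail'' is also unsupported — with $k=10^5$ the failure at $M=13$ is already visible in the $\epsilon$-free inequality.
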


\begin{remark}
    Although this work targets high-dimensional settings (the minimum dimension $M$ in our experiments presented in  Section~\ref{sec:experiments} is larger than 100), Corollary~\ref{corollary:mechanism_guidance} provides practical guidance for determining when our product noise should be applied. 
    Specifically, when dimension $M \ge 14$, it is preferable to adopt the proposed product noise over the classic Gaussian noise due to its lower noise magnitude. Furthermore, the classic Gaussian mechanism is only valid when $\epsilon \in (0,1)$, whereas our product noise can be applied for any $\epsilon>0$.
\end{remark}

\begin{corollary}\label{corollary:Asymptotic_analysis_noise}
For all $\delta \in (0,1)$   and all $\epsilon$ permitted, $f(M)$ converges to $\frac{1}{4e \log\left(\frac{1.25}{\delta}\right)}$ as $M$ approaches infinity.
\end{corollary}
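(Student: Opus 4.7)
The plan is to reduce $f(M)$ to a scalar-free ratio and then evaluate the limit term by term, with the only delicate point being the size of the auxiliary constant $k$ from Theorem~\ref{thm:mian-thm}.

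First, I would substitute the two calibrations $\sigma_M^2 = (\Delta_2 f)^2 t^2/\epsilon^2$ from Theorem~\ref{thm:mian-thm} and the analytic formula $\sigma^2 = 2(\Delta_2 f)^2 \log(1.25/\delta)/\epsilon^2$ for the classic Gaussian mechanism. The factor $(\Delta_2 f)^2/\epsilon^2$ cancels in the ratio, leaving $f(M) = t^2/(2M\log(1.25/\delta))$, which already explains why the limit in the statement depends only on $\delta$ and not on $\epsilon$ or $\Delta_2 f$. All that remains is the asymptotic of $t^2/M$.

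Next, substituting $t^2 = 2 k^{4/M}(M/4+3/2)^{1+4/M}/e^{1+2/M}$ and factoring,
\begin{equation*}
\frac{t^2}{M} \;=\; \frac{2\,k^{4/M}}{e^{1+2/M}}\Bigl(\tfrac{1}{4}+\tfrac{3}{2M}\Bigr)(M/4+3/2)^{4/M}.
\end{equation*}
I would then pass to the limit in each factor: $e^{1+2/M}\to e$, $(1/4+3/(2M))\to 1/4$, and $(M/4+3/2)^{4/M}\to 1$ since $(4/M)\log(M/4+3/2)\to 0$. Provided $k$ stays bounded as $M\to\infty$, we also get $k^{4/M}\to 1$, and combining these with the reduction above yields $f(M)\to 1/(4e\log(1.25/\delta))$.

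The main obstacle is to justify that the $k$ needed to meet a fixed target $\delta$ indeed stays bounded. I would return to (\ref{eq:delta-accurate}): the trailing Gamma ratio $\sqrt{M-1}/[\sqrt{M/2-3/2}\sqrt{M/2+3/4}] = \Theta(1/\sqrt{M})$ already pushes $\delta$ toward zero, and for the confluent hypergeometric pieces I would invoke the classical limit ${}_1F_1(a;b;z)\to {}_0F_1(;b;az)$ valid when $a\to\infty$, $z\to 0$ with $az$ bounded, which applies here because $a\sim M/4$ while $z=\lambda^2/2=\epsilon^2/(2t^2)\sim e\epsilon^2/(Mk^{4/M})$, making $az$ tend to a constant depending only on $\epsilon$. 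The net effect is $\delta(M,k)=\Theta(1/(k\sqrt{M}))$, so for every fixed target $\delta^*\in(0,1)$ and all sufficiently large $M$ the constraint $\delta(M,k)\le\delta^*$ is satisfied even by $k$ arbitrarily close to $1$. This gives $k^{4/M}\to 1$ and closes the argument.
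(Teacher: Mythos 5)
Your proof is correct and follows essentially the same route as the paper's: substitute the two noise calibrations, cancel the common factor $(\Delta_2 f)^2/\epsilon^2$, and take the limit of each factor of $t^2/M$ separately, namely $k^{4/M}\to 1$, $\left(\frac{M}{4}+\frac{3}{2}\right)^{1+4/M}/M\to \frac{1}{4}$, and $e^{1+2/M}\to e$. Your additional paragraph arguing that the tuning parameter $k$ can stay bounded while still meeting a fixed target $\delta$ (via the $\Theta\left(1/(k\sqrt{M})\right)$ decay of the expression in (\ref{eq:delta-accurate})) addresses a point the paper's proof passes over silently by treating $k$ as a fixed constant; this is a refinement of the same argument rather than a different one.
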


\begin{remark} \label{Asymptotic_analysis_noise}
In real-world applications, we usually set $\delta = \frac{1}{n}$, where $n$ is the size of the dataset. Then, $\lim_{M\rightarrow \infty} f(M) = \bm{\Theta}\left(\frac{1}{\log n}\right)$;  the expected magnitude of our product noise is reduced by a factor of $\bm{\Theta}(\log n)$ relative to the classic Gaussian noise on average. For example, when $\epsilon =0.1$, $\delta = 10^{-5}$, and  $M$ is sufficiently large, the expected squared magnitude of our noise is approximately $\frac{1}{127}$ of that required by the classic  Gaussian noise. 
\end{remark}

\subsubsection{Simulation}
To corroborate Corollary~\ref{corollary:mechanism_guidance}, Figure~\ref{fig:noise_comparison} presents the mesh plot comparing the expected squared magnitude of our product noise against the classic Gaussian noise and analytic Gaussian noise. We vary $\epsilon$ from 0 to 1, increase dimension $M$ from 14 to 100, and fix $\delta = 10^{-5}$ and $k = 10^5$. Given the same $\epsilon$, both classic Gaussian noise and analytic Gaussian noise (blue and green surfaces) grow linearly with dimension $M$. In contrast, our product noise  (red surface)  consistently maintains lower magnitudes, 
reflecting a sublinear increase in $M$. 
\begin{figure}[htbp]
  \centering
  \begin{minipage}[t]{0.49\linewidth}
    \centering
    \includegraphics[width=0.85\linewidth]{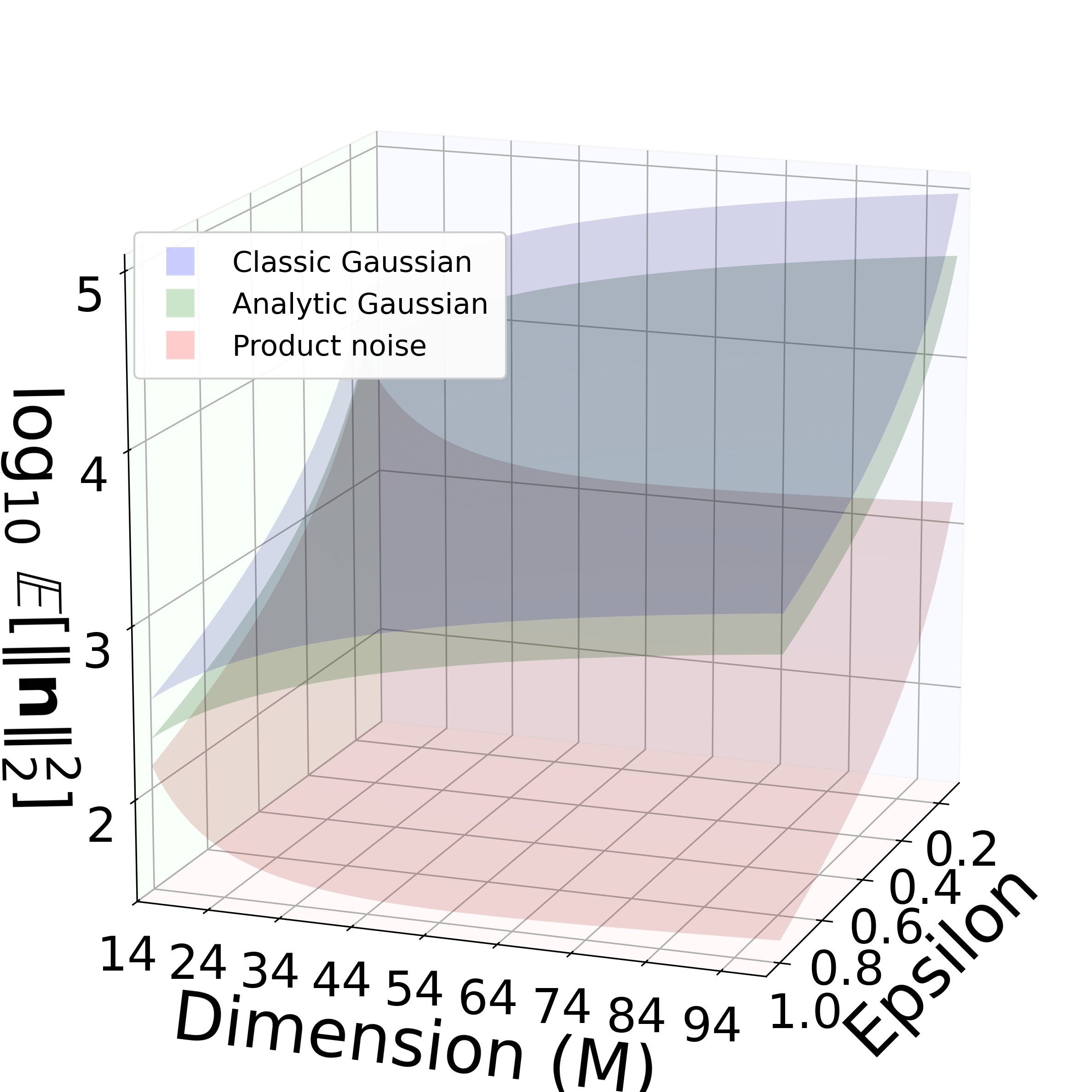}
    
    \captionof{figure}{$\log_{10} \mathbb{E}[||\mathbf{n}||_2^2]$ under varying $M$ and $\epsilon$. 
    }
    \Description{plot comparing product Gaussian noise and classical Gaussian noise as dimension increases.}
    \label{fig:noise_comparison}
  \end{minipage}
  \hfill
  \begin{minipage}[t]{0.49\linewidth}
  \centering
  \includegraphics[width=\linewidth]{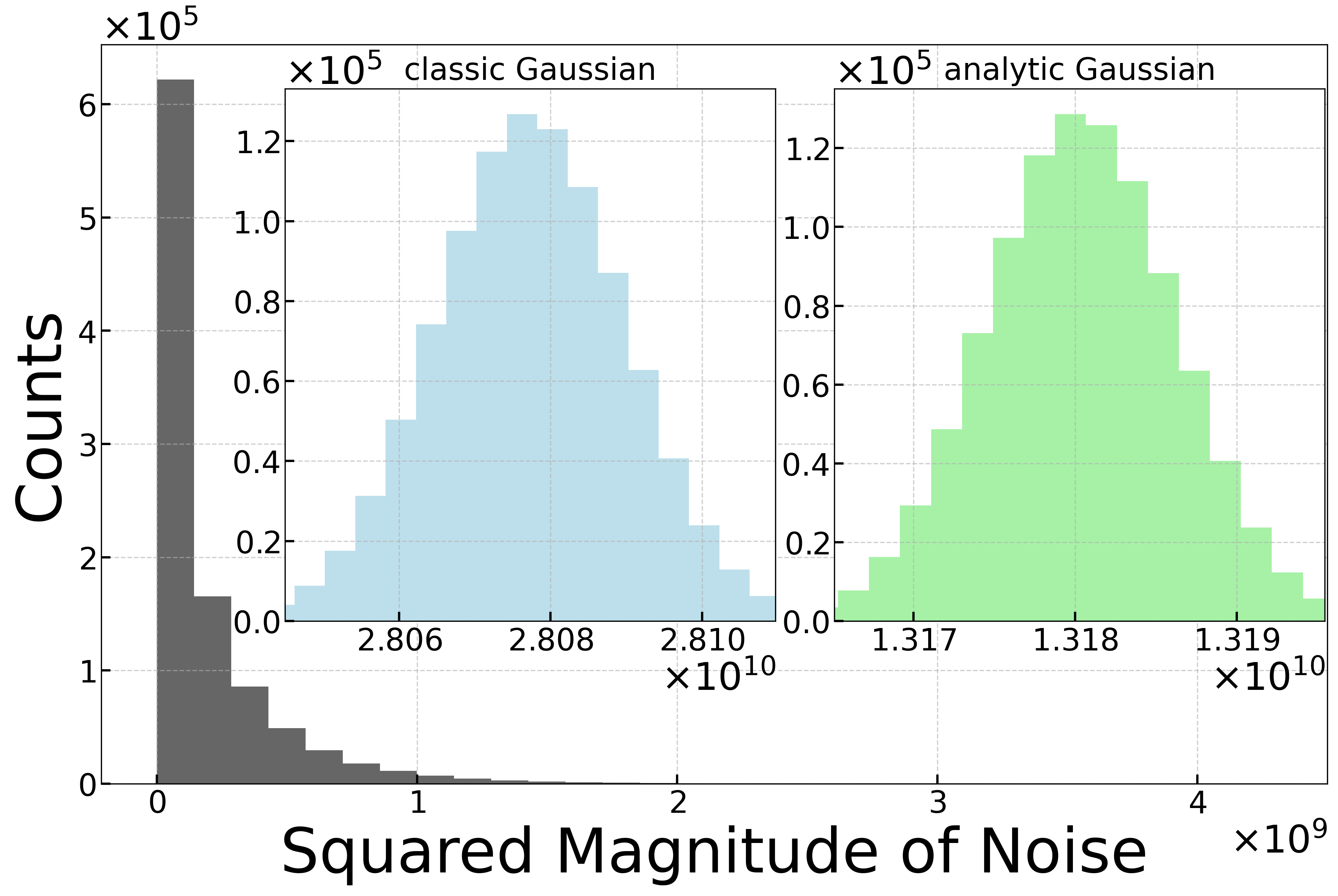}
  \captionof{figure}{Histogram of $\|\mathbf{n}\|_2^{2}$ for various noise.}
  \Description{Histogram showing the squared magnitude of noise for the proposed noise mechanism.}
  \label{fig:histogram-proposed-noise}
\end{minipage}
\end{figure}

To corroborate Corollary~\ref{corollary:Asymptotic_analysis_noise}, we also perform simulations in high-dimensional settings. In Figure~\ref{fig:histogram-proposed-noise}, we compare the histograms of squared noise magnitude ($||\mathbf{n}||_2^{2}$) achieved by our product noise and the Gaussian noises. Specifically, we set $\Delta_2f=1$, $\epsilon =0.1$, $\delta = 10^{-6}$, dimension $M=10^7$, and the constant $k=10^5$. For each noise, we generate a million samples. As illustrated in Figure \ref{fig:histogram-proposed-noise}, our product noise has lower squared magnitudes (e.g., on the order of $10^9$). In contrast, the classic Gaussian noise and analytic Gaussian noise have high squared magnitudes (e.g., on the order of $10^{10}$). Furthermore, the squared noise magnitude of our product noise is considerably more leptokurtic and right-skewed than that of the classic Gaussian noise and analytic Gaussian noise. This indicates that the kurtosis and skewness of the proposed mechanism are substantially increased compared to the classic Gaussian noise and analytic Gaussian noise, thereby enabling enhanced utility. 
\begin{figure}[htbp]
    \centering
    \begin{subfigure}{0.49\columnwidth}  
        \centering
        \includegraphics[width=\linewidth]{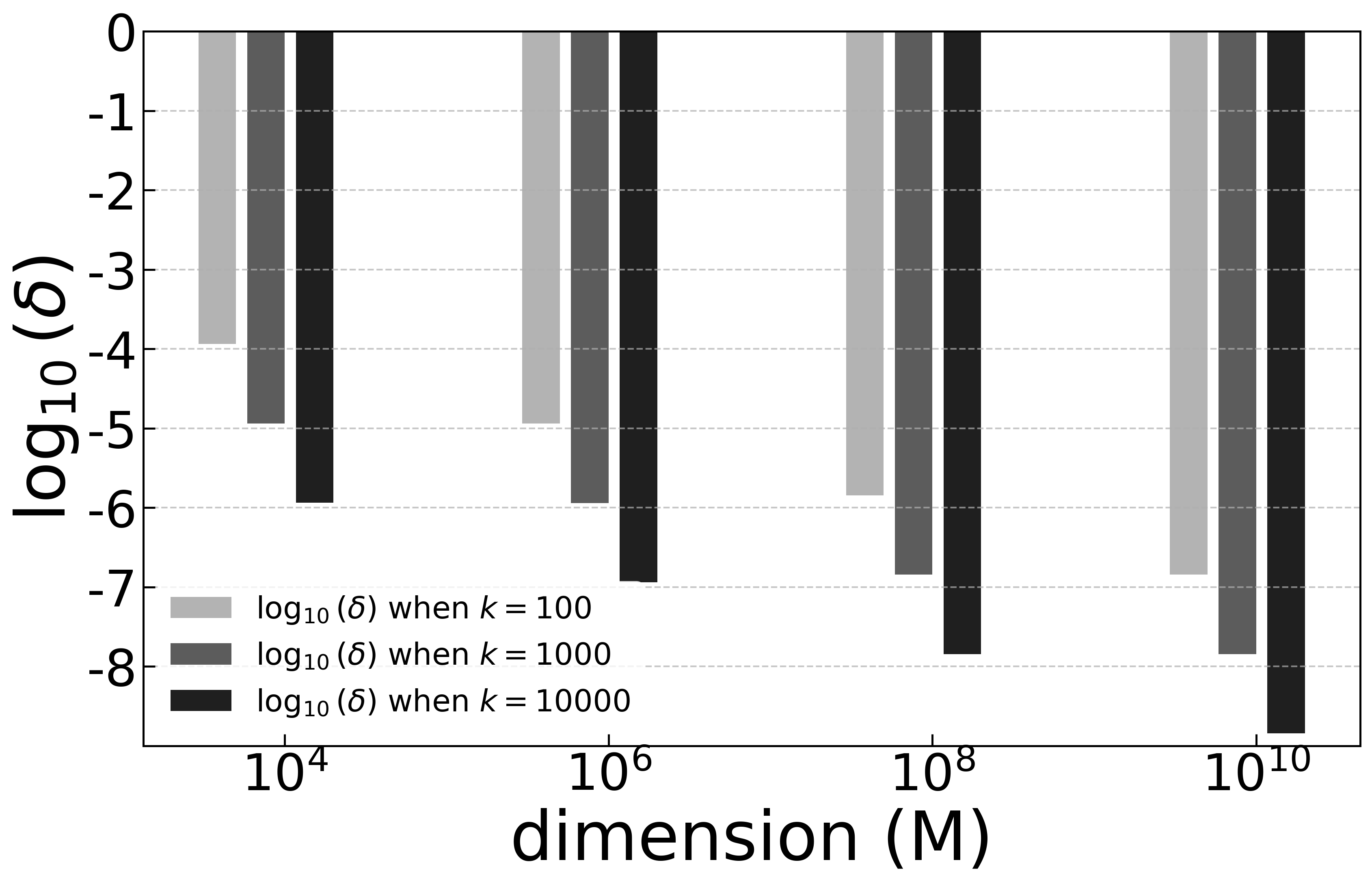}
        \Description{Plot of log10(delta) versus M.}
        \caption{$\log_{10}(\delta)$ v.s. $M$ for various tuning parameters k}
        \label{fig:simulation_delata_dimension}
    \end{subfigure}
    \hfill
    \begin{subfigure}{0.49\columnwidth}
        \centering
        \includegraphics[width=\linewidth]{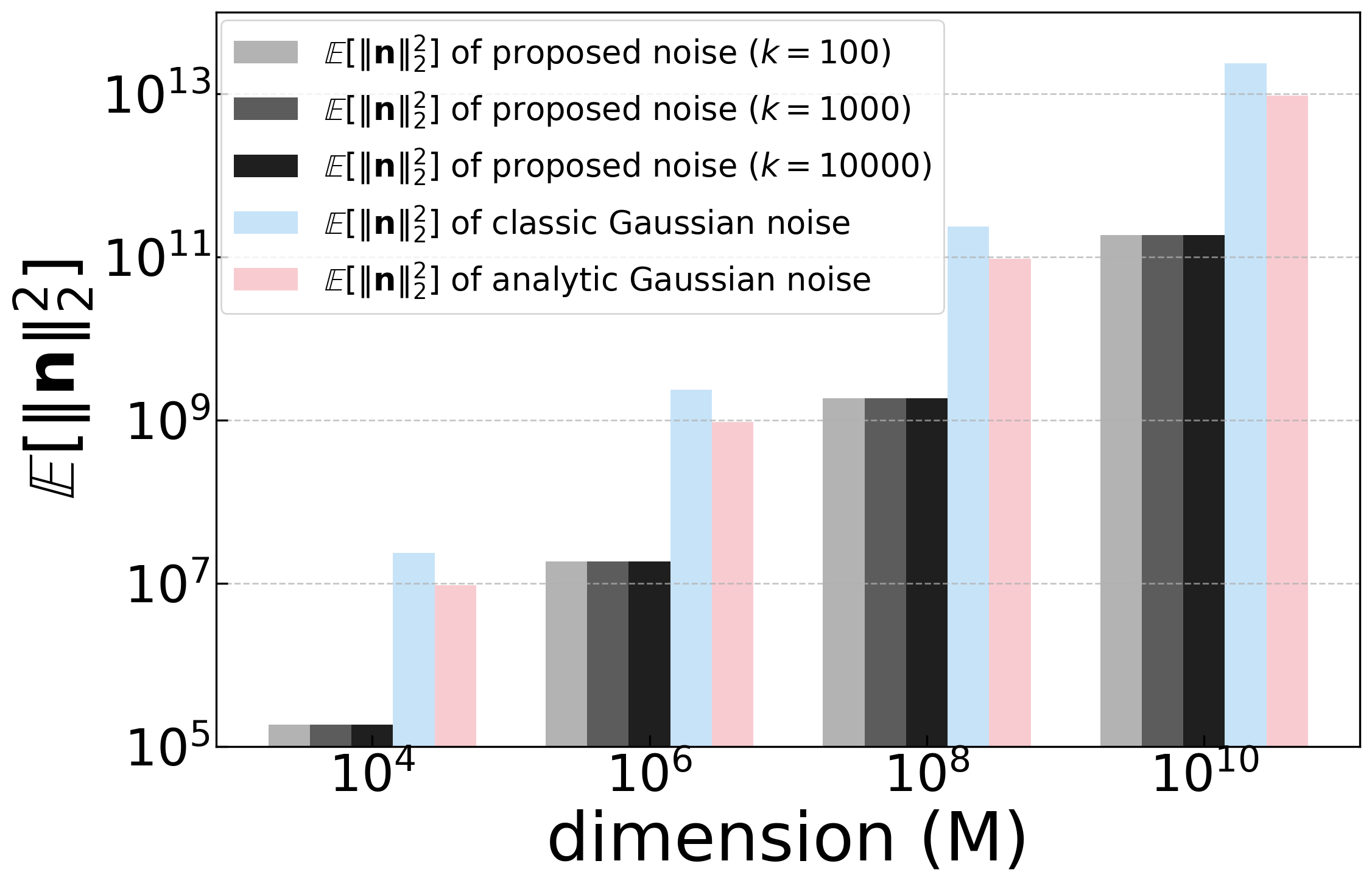}
        \Description{Plot of log10(expected squared noise magnitude) versus dimension M for various noises.}
        \caption{$\mathbb{E}[\|\mathbf{n}\|_2^2]$ v.s.  $M$ for various noises}
        \label{fig:expected squared noise magnitude}
    \end{subfigure}
    \caption{Evaluations of $\delta$ and $\mathbb{E}[||\mathbf{n}||_2^2]$.}
     \label{fig:simulation} 
\end{figure}

To verify the impact of the tuning parameter $k$, we also perform simulations by  considering $\Delta_2f = 1$, $\epsilon = 0.1$, and dimension $M \in \{ 10^4, 10^6, 10^8, 10^{10}\}$. For each $M$, we consider $k \in  \{ 100, 1000, 10000 \} $. First, we show the value of $\log_{10}(\delta)$ achieved by (\ref{eq:delta-accurate}) in Figure~\ref{fig:simulation}(a). Clearly, a small $\delta$ (i.e., $<10^{-5}$) can be obtained by choosing a large value of the constant $k$. In Figure~\ref{fig:simulation}(b),  by fixing $\Delta_2f = 1$ and $\epsilon=0.1$, we evaluate the value of $\mathbb{E}[||\mathbf{n}||_2^2]$, under various  $M$ and $k$. Besides, we also compare with the classic Gaussian noise and analytic Gaussian noise. It is obvious that our noise magnitude  (i) does not change significantly as $k$ increases, and (ii) is far smaller than that of the classic Gaussian noise and analytic Gaussian noise. 
\section{Applying to Privacy-Preserving Convex and Non-convex ERM}\label{sec:case-studies}

In this section, we discuss how our proposed product noise can be applied to solve empirical risk minimization (ERM) while ensuring differential privacy. Specifically, we consider both convex and non-convex ERM problems. For the convex case, we achieve DP guarantees using both output perturbation \cite{chaudhuri2011differentially,wu2017bolt} and objective perturbation \cite{iyengar2019towards}. For the non-convex case, we establish DP guarantees by leveraging gradient perturbation, specifically differentially private stochastic gradient descent (DPSGD) \cite{abadi2016deep, bu2020deep}.

\subsection{Background on ERM}\label{sec:erm_background}
Many classic machine learning or AI tasks, e.g., regression, classification, and deep learning, can be formulated as ERM problems. In general, the formulated problem minimizes a finite sum of loss functions over a training dataset $D$. Given a dataset $D = \left \{ d_1, d_2, \dots, d_n \right \} $ where $d_i \in \mathbb{R}^M$. The goal of an ERM problem is to get an optimal model $\bm{\hat{\omega}} \in \mathbb{R}^M$ via solving the following optimization problem, 
\begin{equation*}\label{eq:erm}
    \begin{aligned}
    \hat{\bm{\omega} } = \arg \min\limits_{\bm{\omega} \in \mathbb{R}^M} \mathcal{L} (\bm{\omega}; D).
\end{aligned}
\end{equation*}
$\mathcal{L}(\bm{\omega}; D) = \frac{1}{n} \sum_{i=1}^{n}  \ell(\bm{\omega}; d_i)$ is   the empirical risk function, and $\ell (\bm{\omega}; d_i)$ is the loss function evaluated on each data sample $d_i$.

\noindent\ul{\textbf{Convex Case.}}  Under this case, the considered $\ell (\bm{\omega}; d_i)$ is convex and usually assumed to be  $L$-Lipschitz (cf. Appendix~\ref{sec:common-assumptions}). Some studies also assume $\ell (\bm{\omega}; d_i)-\ell (\bm{\omega}; {d^{\prime}_i})$ is $L$-Lipschitz, such that $D$ and $D'$ only differ by the data of the $i$-th user, i.e., $d_i$ and $d_i'$. In this paper, we consider using output perturbation and objective perturbation to solve convex ERM with DP guarantees.

$\bullet$ \textbf{Output perturbation.} This approach first obtains the optimal solution $\hat{\bm{\omega} } $ to the convex ERM and then perturbs  $\hat{\bm{\omega} }$ using calibrated Gaussian~\cite{chaudhuri2008privacy, chaudhuri2011differentially} or Laplace noise~\cite{wu2017bolt}. The key idea to establish the privacy guarantee is to examine the gradient and the Hessian of functions $\mathcal{L}(\bm{\omega}; D)$ and $\ \ell(\bm{\omega}; d_i)$.

$\bullet$ \textbf{Objective perturbation.} This approach introduces a noise term (i.e., $\mathbf{n}^{\top}\boldsymbol{\bm{\omega}}$) into the ERM objective function and minimizes the perturbed ERM using privacy-preserving solvers, e.g., the DP Frank-Wolfe~\cite{talwar2015nearly} and stochastic gradient descent (SGD)~\cite{bassily2014private, wu2017bolt}. The privacy guarantee is established by leveraging the fact that both the original $\mathcal{L}(\bm{\omega}; D)$ and the noise term are differentiable everywhere; thus, there exists a unique $\bm{\omega}^*$ that optimizes both $\mathcal{L}(\bm{\omega}; D) + \mathbf{n}^{\top}\bm{\omega}$ and $\mathcal{L}(\bm{\omega};D')+\mathbf{n}^{\top}\bm{\omega}$. Traditional objective perturbation attains DP only at the exact minimum when the gradient becomes $\mathbf{0}$, which is impractical for large-scale optimization due to practical computational limitations. Approximate Minima Perturbation (AMP) \cite{iyengar2019towards} improves the traditional objective perturbation by ensuring DP guarantee even with approximate minima,  i.e., when the SGD algorithm does not converge. Thus, our work adopts the AMP approach in objective perturbation.

\noindent\ul{\textbf{Non-convex Case.}} Under this case, the considered loss function $\ell(\bm{\omega};d_i)$ is non-convex, e.g., image classification using deep neural networks. The most common approach to achieve DP in this case is to perturb the stochastic gradients using calibrated Gaussian noises~\cite{abadi2016deep, bu2020deep}. Since the gradients are perturbed in each iteration, it is important to accurately measure the cumulative privacy loss to make this approach practical. 

\subsection{Output Perturbation for Convex ERM} \label{sec:DP_Convex_Output}
For convex ERM, we consider the following optimization problem,
\begin{equation}\label{eq:obj_with_reg}
    \begin{aligned}
        \mathcal{L}(\bm{\omega}; D) = \frac{1}{n} \sum_{i=1}^{n} \ell(\bm{\omega}; d_i) + \frac{\Lambda}{2n} \|\bm{\omega}\|_2^2,
    \end{aligned}
\end{equation}
where $\Lambda$ is the regularization parameter.

In the classic output perturbation, a calibrated multivariate Gaussian noise $\mathbf{n} \sim \mathcal{N} (0, \sigma^2 \mathbf{I})$ is added to the optimal solution $\bm{\hat{\omega}}$ to guarantee ($\epsilon, \delta$)-DP. It is straightforward to replace the Gaussian noise with our product noise in the output perturbation to achieve a more favorable privacy and utility trade-off. The key steps are as follows.

\ul{First}, compute the optimal solution, i.e.,  $\hat{\bm{\omega}} = \arg \min \limits_{\bm{\omega} \in \mathbb{R}^M} \mathcal{L} (\bm{\omega}; D)$.
\ul{Next}, draw a product noise $\mathbf{n} = {\sigma_M} R \boldsymbol{h}$ ($R\sim\chi_1$ and $\boldsymbol{h} \sim \mathbb{S}^{M-1}$) to obfuscate $\hat{\bm{\omega}}$. The detailed steps are presented in Algorithm~\ref{alg:pn-output} (Product Noise-Based Output Perturbation) in Appendix~\ref{app:Pseudocode-pn-output-perturbation}.

In the following, we present the privacy and utility guarantees of output perturbation using product noise (i.e., Algorithm~\ref{alg:pn-output}). The proofs build upon the ideas in \cite{chaudhuri2011differentially}, with modifications on the analysis to account for the use of our product noise. We show the results in Corollary~\ref{privacy_guarantee_np_output} and ~\ref{utility_guarantee_np_output}. The details in Appendix~\ref{app:output-privacy-proof} and~\ref{app:output-utility-proof}.

\begin{corollary}[Privacy Guarantee of Product Noise-Based Output Perturbation (i.e., Algorithm \ref{alg:pn-output} in Appendix~\ref{app:Pseudocode-pn-output-perturbation})]\label{privacy_guarantee_np_output} 
   The product noise-based Output Perturbation solves (\ref{eq:obj_with_reg}) in a ($\epsilon, \delta$)-DP manner if product noise $\mathbf{n} = {\sigma_M} R \boldsymbol{h}$ is added to the  optimal solution $\bm{\hat{\omega}}$, where     ${\sigma_M}$ is $ \frac{2L}{\Lambda} \cdot \frac{\sqrt{2} k^{\frac{2}{M}} \left(\frac{M}{4} +\frac{3}{2} \right)^{ \left( \frac{1}{2} + \frac{2}{M} \right)}}{\epsilon e^{\left( \frac{1}{2} + \frac{1}{M}\right)}}$. Here,  $\delta$ is determined by a tuning parameter $k$ and the dimension of the problem $M$  (ref. (\ref{eq:delta-accurate})).
\end{corollary}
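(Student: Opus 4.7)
The plan is to reduce the corollary to Theorem~\ref{thm:mian-thm} by bounding the $\ell_2$ sensitivity of the map $D \mapsto \hat{\bm{\omega}}(D)$, and then plugging that sensitivity into the $\sigma_M$ formula of Theorem~\ref{thm:mian-thm}. In other words, I view output perturbation as invoking the mechanism $\mathcal{M}(D) = \hat{\bm{\omega}}(D) + \mathbf{n}$ with $f(D) = \hat{\bm{\omega}}(D) \in \R^M$, so all that remains is to (i) compute $\Delta_2 f$ and (ii) verify that the prescribed $\sigma_M$ meets the threshold required by Theorem~\ref{thm:mian-thm}.

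First I would establish the sensitivity bound $\Delta_2 f \le \tfrac{2L}{\Lambda}$ using the standard stability argument for regularized convex ERM. The regularizer $\tfrac{\Lambda}{2n}\|\bm{\omega}\|_2^2$ makes $\mathcal{L}(\cdot;D)$ strongly convex with parameter $\tfrac{\Lambda}{n}$. For neighboring datasets $D, D'$ differing in the $i$-th record, the difference of objectives $\mathcal{L}(\bm{\omega};D) - \mathcal{L}(\bm{\omega};D')$ equals $\tfrac{1}{n}(\ell(\bm{\omega};d_i) - \ell(\bm{\omega};d_i'))$, whose gradient has $\ell_2$-norm at most $\tfrac{2L}{n}$ by the $L$-Lipschitz assumption on $\ell$. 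Writing the first-order optimality conditions $\nabla \mathcal{L}(\hat{\bm{\omega}}(D);D) = \nabla \mathcal{L}(\hat{\bm{\omega}}(D');D') = \mathbf{0}$, subtracting, and pairing with $\hat{\bm{\omega}}(D) - \hat{\bm{\omega}}(D')$, strong convexity on one side and Cauchy--Schwarz on the other give
\begin{equation*}
  \frac{\Lambda}{n}\,\|\hat{\bm{\omega}}(D) - \hat{\bm{\omega}}(D')\|_2^{\,2} \;\le\; \frac{2L}{n}\,\|\hat{\bm{\omega}}(D) - \hat{\bm{\omega}}(D')\|_2,
\end{equation*}
which yields $\Delta_2 f \le \tfrac{2L}{\Lambda}$. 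This is essentially the Chaudhuri--Monteleoni--Sarwate sensitivity lemma, and it is the one piece of machinery I borrow from prior work.

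Next I would invoke Theorem~\ref{thm:mian-thm} with this sensitivity. The theorem guarantees $(\epsilon,\delta)$-DP provided $\sigma_M^2 \ge (\Delta_2 f)^2 t^2/\epsilon^2$ with $t^2 = 2 k^{4/M}\bigl(\tfrac{M}{4}+\tfrac{3}{2}\bigr)^{1+4/M} e^{-(1+2/M)}$. Substituting $\Delta_2 f = \tfrac{2L}{\Lambda}$ and taking square roots,
\begin{equation*}
  \sigma_M \;\ge\; \frac{2L}{\Lambda\epsilon}\cdot \sqrt{2}\,k^{2/M}\,\frac{\bigl(\tfrac{M}{4}+\tfrac{3}{2}\bigr)^{1/2+2/M}}{e^{1/2+1/M}},
\end{equation*}
which is precisely the $\sigma_M$ stated in the corollary. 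The corresponding $\delta$ is then inherited directly from (\ref{eq:delta-accurate}) of Theorem~\ref{thm:mian-thm}, which depends only on $M$ and the tuning parameter $k$ once $\lambda = \Delta_2 f / \sigma_M$ is fixed. Post-processing invariance of DP is not strictly needed here, since the mechanism is exactly of the form analyzed in Theorem~\ref{thm:mian-thm}.

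The main obstacle I expect is the sensitivity step rather than the plug-in: Theorem~\ref{thm:mian-thm} does the heavy lifting, so the only genuine derivation is the $\tfrac{2L}{\Lambda}$ bound. One subtlety to watch is the normalization convention in (\ref{eq:obj_with_reg})---the regularizer carries a $\tfrac{1}{n}$ factor, so the strong-convexity constant is $\tfrac{\Lambda}{n}$ and the per-sample gradient gap is $\tfrac{2L}{n}$; the $n$'s must cancel to recover the clean $\tfrac{2L}{\Lambda}$ sensitivity. A second minor care point is that Theorem~\ref{thm:mian-thm} requires $M > 2$, which is automatically satisfied in the convex ERM regime targeted here; otherwise the proof is essentially a direct application of the master theorem.
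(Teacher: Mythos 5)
Your proposal is correct and follows essentially the same route as the paper: bound the $\ell_2$ sensitivity of $D \mapsto \hat{\bm{\omega}}(D)$ by $\tfrac{2L}{\Lambda}$ via the first-order optimality conditions and the $L$-Lipschitz loss, then plug that sensitivity into Theorem~\ref{thm:mian-thm} to read off $\sigma_M$ and inherit $\delta$ from (\ref{eq:delta-accurate}). If anything, your sensitivity step (pairing the subtracted optimality conditions with $\hat{\bm{\omega}}(D)-\hat{\bm{\omega}}(D')$ and using strong convexity plus Cauchy--Schwarz) is the more careful version of the argument the paper sketches by direct cancellation.
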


\begin{corollary} [Utility Guarantee of Product Noise-Based Output Perturbation (i.e., Algorithm \ref{alg:pn-output} in Appendix~\ref{app:Pseudocode-pn-output-perturbation})] \label{utility_guarantee_np_output}
In Algorithm \ref{alg:pn-output}, let $\hat{\bm{\omega}}=\arg \min \limits_{\bm{\omega} \in \mathbb{R}^M}  \frac{1}{n} \sum_{i=1}^{n} \ell(\bm{\omega}; d_i)$, and $\ell(\bm{\omega}; d_i)$ is $L$-Lipschitz, then we have the expected excess empirical risk
\begin{equation*}
    \begin{aligned}
        \mathbb{E} \left [ \mathcal{L} (\bm{\omega}_{\mathrm{priv}}; D) -\mathcal{L }(\bm{\hat{\omega}}; D) \right ] = L \mathbb{E} \left [ \left \| \mathbf{n} \right \|_2 \right ].
    \end{aligned}
\end{equation*}
\end{corollary}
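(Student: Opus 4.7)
The plan is to exploit the $L$-Lipschitz continuity of the per-sample loss $\ell(\cdot; d_i)$, which transfers directly to the empirical risk $\mathcal{L}(\cdot; D) = \frac{1}{n}\sum_{i=1}^n \ell(\cdot; d_i)$ since a finite average of $L$-Lipschitz functions is itself $L$-Lipschitz in $\bm{\omega}$. Combined with the fact that Algorithm~\ref{alg:pn-output} simply returns $\bm{\omega}_{\mathrm{priv}} = \hat{\bm{\omega}} + \mathbf{n}$, this gives a one-line control of the excess empirical risk in terms of $\|\mathbf{n}\|_2$.

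First, I would make explicit that $\bm{\omega}_{\mathrm{priv}} - \hat{\bm{\omega}} = \mathbf{n}$, so the parameter-space perturbation distance is exactly $\|\mathbf{n}\|_2 = \sigma_M R$ with $R\sim\chi_1$, independent of the uniform direction $\bm{h}$. Second, I would apply the Lipschitz inequality at the pair $(\bm{\omega}_{\mathrm{priv}}, \hat{\bm{\omega}})$ to obtain the pointwise bound
$$\mathcal{L}(\bm{\omega}_{\mathrm{priv}}; D) - \mathcal{L}(\hat{\bm{\omega}}; D) \;\leq\; L\,\|\bm{\omega}_{\mathrm{priv}} - \hat{\bm{\omega}}\|_2 \;=\; L\,\|\mathbf{n}\|_2.$$
Third, since $\mathbf{n}$ is drawn independently of the data (and therefore independently of $\hat{\bm{\omega}}$), taking expectation over the noise gives
$$\mathbb{E}\bigl[\mathcal{L}(\bm{\omega}_{\mathrm{priv}}; D) - \mathcal{L}(\hat{\bm{\omega}}; D)\bigr] \;\leq\; L\,\mathbb{E}\bigl[\|\mathbf{n}\|_2\bigr],$$
which matches the claimed utility bound. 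If a closed-form expression were desired, one could additionally substitute $\mathbb{E}[\|\mathbf{n}\|_2] = \sigma_M\,\mathbb{E}[R] = \sigma_M\sqrt{2/\pi}$ using $R\sim\chi_1$, and then plug in the value of $\sigma_M$ supplied by Corollary~\ref{privacy_guarantee_np_output} to make the $(\epsilon,\delta)$-dependence of the excess risk explicit.

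There is no real obstacle to the argument, only a minor bookkeeping point worth flagging: the symbol $\mathcal{L}(\cdot; D)$ in the statement must be interpreted as the unregularized empirical risk $\frac{1}{n}\sum_i \ell(\cdot; d_i)$ that $\hat{\bm{\omega}}$ is defined to minimize, so that the Lipschitz constant $L$ of each $\ell(\cdot; d_i)$ applies without needing to control the gradient of the quadratic regularizer $\frac{\Lambda}{2n}\|\bm{\omega}\|_2^2$ at $\bm{\omega}_{\mathrm{priv}}$ (which is not globally Lipschitz). With this reading, the proof is essentially a two-line consequence of the Lipschitz hypothesis and the additive form of the perturbation, and does not rely on any properties of the privacy analysis beyond the structural fact that the released output is an additive perturbation of $\hat{\bm{\omega}}$.
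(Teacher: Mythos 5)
Your proof is correct and follows essentially the same route as the paper's: apply the $L$-Lipschitz property to bound $\mathcal{L}(\bm{\omega}_{\mathrm{priv}};D)-\mathcal{L}(\hat{\bm{\omega}};D)$ by $L\|\bm{\omega}_{\mathrm{priv}}-\hat{\bm{\omega}}\|_2=L\|\mathbf{n}\|_2$ and take expectation over the noise. Note that, like the paper's own derivation, your argument actually establishes the upper bound $\mathbb{E}[\mathcal{L}(\bm{\omega}_{\mathrm{priv}};D)-\mathcal{L}(\hat{\bm{\omega}};D)]\le L\,\mathbb{E}[\|\mathbf{n}\|_2]$ rather than the equality stated in the corollary, and your remark about interpreting $\mathcal{L}(\cdot;D)$ as the unregularized risk is a legitimate bookkeeping point the paper leaves implicit.
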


\begin{remark} \label{remark-pn-output}
    According to the   results in Section \ref{sec:main_results}, product noise exhibits much lower expected squared magnitude and a more stable noise distribution (characterized by kurtosis and skewness). It  can significantly reduce the  loss, $\mathcal{L} (\bm{\omega}_{\mathrm{priv}}; D) -\mathcal{L }(\bm{\hat{\omega}}; D)$, under the same privacy parameter compared to classic Gaussian noise. We empirically evaluate this in Section~\ref{sec:case-study-output-perturbation} by considering differentially private logistic regression and support vector machine tasks on 6 benchmark datasets.
\end{remark}

\subsection{Objective Perturbation for Convex ERM}\label{sec:DP_Convex_Objective}
In this section, we consider the   product noise in objective perturbation to solve (\ref{eq:obj_with_reg}) in a differentially private manner. In particular, we consider the Approximate Minimum Perturbation (AMP) approach developed in~\cite{iyengar2019towards}, which is a variant of the classic objective perturbation~\cite{chaudhuri2011differentially,chaudhuri2008privacy} and can establish DP guarantee even if (\ref{eq:obj_with_reg}) does not achieve the global minimum in limited SGD steps.

Similar to~\cite{iyengar2019towards}, we consider (\ref{eq:obj_with_reg}) satisfies the following standard assumptions, i.e., the loss function of each data point, $\ell(\bm{\omega}; d_i)$ is $L$-Lipschitz, convex in $\bm{\omega}$, has a continuous Hessian, and is $\beta$-smooth with respect to both  $\bm{\omega}$ and $d_i$.  Details of the common assumptions for ERM problems are provided in Appendix~\ref{sec:common-assumptions}.

According to~\cite{iyengar2019towards}, AMP introduces noise at two stages to ensure $(\epsilon,\delta)$-DP even when the optimizer does not achieve the exact minimum. To be more specific,   a product noise $\mathbf{n}_{1} = {\sigma_M}_1 R_1 \boldsymbol{h}_1$ ($R_1\sim\chi_1$ and $\boldsymbol{h}_1 \sim \mathbb{S}^{M-1}$) is first added as a linear term to (\ref{eq:obj_with_reg}), i.e., $\mathcal{L}_{\mathrm{priv}}(\bm{\omega}; D) = \frac{1}{n} \sum_{i=1}^{n} \ell(\bm{\omega}; d_i) + \frac{\Lambda}{2n} \|\bm{\omega}\|_2^2 + \mathbf{n}_{1}^{\top} \bm{\omega}$. After obtaining an approximate minimizer of $\mathcal{L}_{\mathrm{priv}}(\bm{\omega}; D)$ denoted as $\bm{\omega}_{\mathrm{approx}}$, another product noise $\mathbf{n}_{2} = {\sigma_M}_2 R_2 \boldsymbol{h}_2$ ($R_2\sim \chi_1$ and $\boldsymbol{h}_2 \sim \mathbb{S}^{M-1}$) is added to obscure $\bm{\omega}_{\mathrm{approx}}$. The value of ${\sigma_M}_1$ and ${\sigma_M}_2$ are calibrated using privacy parameters, $\Lambda$, and the number of training data $n$. The detailed steps are  presented in Algorithm \ref{alg:PN-Based-AMP}  (Product Noise-Based AMP) in Appendix~\ref{app:Pseudocode-pn-amp}.

In what follows, we present the privacy and utility guarantee of Algorithm \ref{alg:PN-Based-AMP}. Since it replaces the  Gaussian noise with our proposed product noise, its privacy and utility guarantees can be established by adapting the proofs in~\cite{iyengar2019towards}. We show the results in Corollary \ref{privacy_guarantee_np_amp} and \ref{utility_guarantee_np_amp}. The proof details are in Appendix~\ref{app:obj-privacy-proof} and \ref{app:obj-utility-proof}.

\begin{corollary} [Privacy Guarantee of Product Noise-Based AMP (i.e., Algorithm \ref{alg:PN-Based-AMP})]\label{privacy_guarantee_np_amp}
    The product noise-based AMP solves (\ref{eq:obj_with_reg}) in a ($\epsilon, \delta$)-DP manner  if ${\sigma_M}_1$ and ${\sigma_M}_2$ are, respectively, set as ${\sigma_M}_1 = \frac{2L}{n} \cdot \frac{\sqrt{2} k^{\frac{2}{M}} \left(\frac{M}{4} +\frac{3}{2} \right)^{ \left( \frac{1}{2} + \frac{2}{M} \right)}}{\epsilon_3 e^{\left( \frac{1}{2} + \frac{1}{M}\right)}}$ and ${\sigma_M}_2 = \frac{n\gamma}{\Lambda} \cdot \frac{\sqrt{2} k^{\frac{2}{M}} \left(\frac{M}{4} +\frac{3}{2} \right)^{ \left( \frac{1}{2} + \frac{2}{M} \right)}}{\epsilon_2 e^{\left( \frac{1}{2} + \frac{1}{M}\right)}}$.  $\delta$ is determined by a tuning parameter $k$ and the dimension of the problem $M$  (ref.  (\ref{eq:delta-accurate})).
\end{corollary}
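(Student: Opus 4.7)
The plan is to adapt the three-way budget-splitting argument of the Approximate Minima Perturbation analysis in~\cite{iyengar2019towards}, replacing each Gaussian tail-bound invocation with a direct application of Theorem~\ref{thm:mian-thm} to our product noise. Concretely, I would allocate $\epsilon = \epsilon_1 + \epsilon_2 + \epsilon_3$ and $\delta = \delta_1 + \delta_2 + \delta_3$ across three sources of privacy loss: (i) the Jacobian term induced by changing variables from $\mathbf{n}_1$ to the exact minimizer $\bm{\omega}^{\ast}$ of the noisy objective, (ii) the density ratio of $\mathbf{n}_1$ across neighboring datasets, and (iii) the output perturbation by $\mathbf{n}_2$ that masks the optimization gap. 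The fact that the stated $\sigma_{M,1}$ and $\sigma_{M,2}$ coincide exactly with $\tfrac{\Delta_2 f}{\epsilon_j}\cdot t$ from Theorem~\ref{thm:mian-thm} for $\Delta_2 f = \tfrac{2L}{n}$ and $\Delta_2 f = \tfrac{n\gamma}{\Lambda}$ respectively strongly suggests that each noise stage should reduce to a single invocation of Theorem~\ref{thm:mian-thm}.

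I would first handle the objective-perturbation stage. Starting from the first-order optimality condition $\nabla\mathcal{L}(\bm{\omega}^{\ast}; D) + \tfrac{\Lambda}{n}\bm{\omega}^{\ast} + \mathbf{n}_1 = \mathbf{0}$, I apply a change of variables from $\mathbf{n}_1$ to $\bm{\omega}^{\ast}$ so that the density of $\bm{\omega}^{\ast}$ factors as a Jacobian determinant (the Hessian of the regularized empirical risk) times the density of $\mathbf{n}_1$ evaluated at the induced shift. Across neighboring datasets $D$ and $D'$, the Jacobian ratio is handled deterministically via $\beta$-smoothness of $\ell$ combined with $\tfrac{\Lambda}{n}$-strong convexity of the regularizer, absorbing a factor of $e^{\epsilon_1}$ (with $\delta_1 = 0$), exactly as in the classical objective perturbation analysis of~\cite{chaudhuri2011differentially}. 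The remaining density-ratio factor is a comparison of the product noise density at two points whose separation has $\ell_2$-norm at most $\tfrac{2L}{n}$ by $L$-Lipschitz continuity of $\ell$; invoking Theorem~\ref{thm:mian-thm} with sensitivity $\tfrac{2L}{n}$ and privacy budget $\epsilon_3$ forces $\sigma_{M,1}$ to take exactly the stated value and yields $(\epsilon_3, \delta_3)$-DP for this factor.

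Next I would address the approximation gap. From $\tfrac{\Lambda}{n}$-strong convexity of $\mathcal{L}_{\mathrm{priv}}(\bm{\omega}; D)$ together with the stopping tolerance $\|\nabla\mathcal{L}_{\mathrm{priv}}(\bm{\omega}_{\mathrm{approx}}; D)\|_2 \leq \gamma$ enforced by the AMP solver, one obtains $\|\bm{\omega}_{\mathrm{approx}} - \bm{\omega}^{\ast}\|_2 \leq \tfrac{n\gamma}{\Lambda}$. Treating the release of $\bm{\omega}_{\mathrm{approx}} + \mathbf{n}_2$ as an additive-noise mechanism whose underlying (already privatized) query has $\ell_2$ sensitivity $\tfrac{n\gamma}{\Lambda}$, a second application of Theorem~\ref{thm:mian-thm} with sensitivity $\tfrac{n\gamma}{\Lambda}$ and budget $\epsilon_2$ pins $\sigma_{M,2}$ to the stated value and yields $(\epsilon_2, \delta_2)$-DP. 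Basic sequential composition over the three components then assembles the overall $(\epsilon, \delta)$-DP guarantee, and the three $\delta_j$'s are simultaneously controlled by a shared tuning parameter $k$ through~(\ref{eq:delta-accurate}) as in Algorithm~\ref{alg:pn-calibration}.

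The main obstacle is the change-of-variables step for the objective-perturbation stage. Unlike the Gaussian case, whose explicit exponential density allows the density-ratio factor to be manipulated in closed form together with the Hessian determinant, our product noise has a non-Gaussian density arising from the polar decomposition in~(\ref{eq:noise-generation}), so the ratio cannot be handled symbolically. The key idea is that once the change of variables has reduced the neighboring-dataset comparison to a shift of $\mathbf{n}_1$ by a vector of $\ell_2$-norm at most $\tfrac{2L}{n}$, Theorem~\ref{thm:mian-thm} already provides a dataset-independent tail bound on the PLRV under exactly this kind of shift. The non-Gaussianity is thereby encapsulated entirely inside the black-box guarantee of Theorem~\ref{thm:mian-thm} and does not propagate into the rest of the AMP argument.
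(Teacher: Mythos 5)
Your proposal follows essentially the same route as the paper's proof: decompose the density of the noisy-objective minimizer via a change of variables into a Jacobian-determinant ratio (bounded deterministically through $\beta$-smoothness and the choice of $\Lambda$, consuming $\epsilon_1-\epsilon_3$ with no $\delta$) times a product-noise density ratio over a shift of norm at most $\tfrac{2L}{n}$ (handled by a black-box invocation of Theorem~\ref{thm:mian-thm} with budget $\epsilon_3$), then bound $\|\bm{\omega}_{\mathrm{approx}}-\bm{\omega}_{\mathrm{min}}\|_2\le\tfrac{n\gamma}{\Lambda}$ via strong convexity and apply Theorem~\ref{thm:mian-thm} again for the output stage before composing. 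Apart from a minor relabeling of how the budget is indexed, this matches the paper's argument, including the key observation that Theorem~\ref{thm:mian-thm} absorbs all the non-Gaussian density manipulation.
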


\begin{corollary} [Utility Guarantee of Product Noise-Based AMP (i.e., Algorithm \ref{alg:PN-Based-AMP})] \label{utility_guarantee_np_amp}
Let $\hat{\bm{\omega}}=\arg \min \limits_{\bm{\omega} \in \mathbb{R}^M}  \frac{1}{n} \sum_{i=1}^{n} \ell(\bm{\omega}; d_i)$, 
and $r = \min\{  M, 2 \cdot (\textit{upper bound on rank of } \ell \textit{'s }$ $\textit{Hessian}) \}$. In Algorithm \ref{alg:PN-Based-AMP}, if $\epsilon_i = \frac{\epsilon}{2}$ for $i \in \{1, 2  \} $, $\epsilon_3 = \max \{ \frac{\epsilon_1}{2}, \epsilon_1- 0.99\} $, and  the regularization parameter  satisfies $\Lambda =\bm{\Theta} \left( \frac{1}{ \left \| \hat{\bm{\omega}} \right \|_2} \left(  \frac{L \sqrt{rM}} { \epsilon} +  n \sqrt{\frac{L\gamma \sqrt{M}}{ \epsilon}}\right)  \right)$, then $\mathbb{E}\left[ \mathcal{L}(\bm{\omega}_{\mathrm{out}}; D) - \mathcal{L}(\hat{\bm{\omega}}; D) \right] =\bm{O} \left(  \frac{ \left \| \hat{\bm{\omega}} \right \|_2 L \sqrt{rM}} { n \epsilon} +  \left \| \hat{\bm{\omega}} \right \|_2 \sqrt{\frac{L\gamma \sqrt{M}}{ \epsilon}}\right)$.
\end{corollary}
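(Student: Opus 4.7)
The plan is to adapt the utility blueprint of AMP in~\cite{iyengar2019towards}, substituting the magnitudes of our product noise (cf.\ Corollary~\ref{privacy_guarantee_np_amp}) for those of the Gaussian noise. Let $\bm{\omega}^{*}_{\mathrm{priv}}\triangleq\arg\min_{\bm{\omega}}\left[\mathcal{L}(\bm{\omega};D)+\mathbf{n}_{1}^{\top}\bm{\omega}\right]$ denote the exact minimizer of the privatized objective and let $\tilde{\bm{\omega}}\triangleq\arg\min_{\bm{\omega}}\mathcal{L}(\bm{\omega};D)$ denote the unperturbed regularized minimizer. I would first telescope the excess risk as $\mathcal{L}(\bm{\omega}_{\mathrm{out}};D)-\mathcal{L}(\hat{\bm{\omega}};D)=T_{1}+T_{2}+T_{3}$, where $T_{1}=\mathcal{L}(\bm{\omega}_{\mathrm{out}};D)-\mathcal{L}(\bm{\omega}_{\mathrm{approx}};D)$ is the release-noise gap from adding $\mathbf{n}_{2}$, $T_{2}=\mathcal{L}(\bm{\omega}_{\mathrm{approx}};D)-\mathcal{L}(\bm{\omega}^{*}_{\mathrm{priv}};D)$ is the optimization gap, and $T_{3}=\mathcal{L}(\bm{\omega}^{*}_{\mathrm{priv}};D)-\mathcal{L}(\hat{\bm{\omega}};D)$ absorbs both the perturbation drift (from $\tilde{\bm{\omega}}$ to $\bm{\omega}^{*}_{\mathrm{priv}}$ via the linear term $\mathbf{n}_{1}^{\top}\bm{\omega}$) and the regularization bias (from $\hat{\bm{\omega}}$ to $\tilde{\bm{\omega}}$).

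Next I would bound each piece using standard tools. For $T_{1}$, $L$-Lipschitzness of $\mathcal{L}$ gives $\mathbb{E}[|T_{1}|]\leq L\,\mathbb{E}[\|\mathbf{n}_{2}\|_{2}]$. For $T_{2}$, the $\gamma$-approximate-minimizer condition combined with the $(\Lambda/n)$-strong convexity of $\mathcal{L}_{\mathrm{priv}}$ yields $|T_{2}|\leq n\gamma^{2}/(2\Lambda)$. For $T_{3}$, I would absorb the regularization bias via $\mathcal{L}(\tilde{\bm{\omega}};D)-\mathcal{L}(\hat{\bm{\omega}};D)\leq\Lambda\|\hat{\bm{\omega}}\|_{2}^{2}/(2n)$, then handle the drift by Taylor-expanding $\mathcal{L}$ around $\tilde{\bm{\omega}}$, using the first-order condition $\nabla\mathcal{L}(\bm{\omega}^{*}_{\mathrm{priv}};D)=-\mathbf{n}_{1}$, and splitting the resulting quadratic form $\tfrac{1}{2}\mathbf{n}_{1}^{\top}H^{-1}\mathbf{n}_{1}$ (with $H=\nabla^{2}\mathcal{L}(\tilde{\bm{\omega}};D)$) on the eigen-basis of $H$: eigenvectors lying in the range of $\ell$'s Hessian (dimension $\leq r$) contribute a low-rank term, while those in its complement are damped by $\Lambda/n$ and contribute a full-dimensional term.

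Then I would substitute the product-noise magnitudes. Since $\mathbf{n}_{i}=\sigma_{M,i}R_{i}\bm{h}_{i}$ with $R_{i}\sim\chi_{1}$ and $\bm{h}_{i}\sim\mathbb{S}^{M-1}$ independent, the rotational invariance of $\bm{h}_{i}$ yields $\mathbb{E}[\|\mathbf{n}_{i}\|_{2}]=\sigma_{M,i}\sqrt{2/\pi}$, $\mathbb{E}[\|\mathbf{n}_{i}\|_{2}^{2}]=\sigma_{M,i}^{2}$, and $\mathbb{E}[\|\Pi_{r}\mathbf{n}_{i}\|_{2}^{2}]=\sigma_{M,i}^{2}\,r/M$ for any fixed $r$-dimensional projector $\Pi_{r}$. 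Plugging in $\sigma_{M,1}$ and $\sigma_{M,2}$ from Corollary~\ref{privacy_guarantee_np_amp} under the budget split $\epsilon_{1}=\epsilon_{2}=\epsilon/2$ and $\epsilon_{3}=\max\{\epsilon_{1}/2,\epsilon_{1}-0.99\}$, and then minimizing over $\Lambda$ by balancing the regularization bias against the two noise-driven pieces, recovers the stated choice of $\Lambda$ and the claimed excess-risk bound.

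The main obstacle I expect is the rank-$r$ step inside $T_{3}$. The Gaussian-noise argument of~\cite{iyengar2019towards} exploits that the coordinates of $\mathbf{n}_{1}$ are independent with a single per-coordinate variance, whereas our product noise factorizes as a shared radial scalar $R_{1}\sim\chi_{1}$ times a uniform direction $\bm{h}_{1}\sim\mathbb{S}^{M-1}$, so independence across coordinates is replaced by exchangeability. I would handle this by rotating into the eigen-basis of $H$ (allowed because $\bm{h}_{1}$ is rotationally invariant), conditioning on $R_{1}$, and then applying the projection identity above to peel off the range and complement contributions separately; the remainder is a direct substitution into the AMP template of~\cite{iyengar2019towards}.
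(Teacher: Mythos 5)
Your proposal is correct in outline, but it takes a noticeably different route from the paper's proof. The paper does not re-derive the excess-risk decomposition at all: it imports the deterministic inequality $\mathcal{L}(\bm{\omega}_{\mathrm{out}}; D) - \mathcal{L}(\hat{\bm{\omega}}; D) \leq L\bigl( \tfrac{n\gamma}{\Lambda} + \| \mathbf{n}_2 \|_2  \bigr) + \tfrac{\Lambda \| \bm{\omega} \|_2^{2} }{2n} + \tfrac{2n \|\mathbf{n}_1 \|_2^{2}}{\Lambda}$ wholesale from \cite[Lemma A.1]{iyengar2019towards}, controls $\|\mathbf{n}_s\|_2$ via a high-probability tail bound on the $\chi_1^2$ radius (from \cite[Lemma 2]{dasgupta2007probabilistic}, giving $\|\mathbf{n}_s\|_2 \leq \sqrt{\mu}\,{\sigma_M}_s$ w.p. $\geq 1-\delta/2$), imposes the side conditions $\tfrac{\Lambda}{n}\geq {\sigma_M}_1\sqrt{r}$ and $\tfrac{\Lambda}{n}\geq L{\sigma_M}_2$ to preserve strong convexity, and then balances $\Lambda$. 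Your $T_1+T_2+T_3$ telescoping, the strong-convexity bound on the optimization gap, and the quadratic-form/eigen-basis treatment of the drift term are essentially a re-proof of that imported lemma rather than an application of it. What your route buys: (i) it works directly in expectation, which matches the form of the stated corollary more cleanly than the paper's with-probability-$1-\delta$ bound; and (ii) your projection identity $\mathbb{E}[\|\Pi_r\mathbf{n}_1\|_2^2]=\sigma_{M,1}^2\, r/M$, justified by rotational invariance of $\bm{h}_1$ after conditioning on $R_1$, extracts the rank-$r$ dependence organically from the noise distribution, whereas the paper only injects $r$ through the ad hoc strong-convexity requirement on $\Lambda$. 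The cost is length: you must carefully verify that the rank-splitting argument of \cite{iyengar2019towards} survives the loss of coordinate independence, which you correctly flag as the main obstacle and for which exchangeability plus conditioning on the shared radius is an adequate fix. Both routes land on the same choice of $\Lambda$ and the same final rate.
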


\begin{remark} \label{remark-pn-amp}
According to \cite{iyengar2019towards},   $\delta = \frac{1}{n^2}$ ($n$ is the number of training samples), then the utility loss achieved by Gaussian noise-based AMP is  $\bm{O} \left(  \frac{ \left \| \hat{\bm{\omega}} \right \|_2 L \sqrt{rM \log n}} { n \epsilon} +  \left \| \hat{\bm{\omega}} \right \|_2 \sqrt{\frac{L\gamma \sqrt{M \log n}}{ \epsilon}}\right)$~\cite[Theorem 2]{iyengar2019towards}.
Clearly, our product noise-based AMP reduces the utility loss by a factor of $\Theta(\sqrt{\log n})$. It implies that our method can obtain a better ERM model while maintaining equivalent privacy guarantees, and this advantage becomes more significant as the number of samples increases. In Section~\ref{sec:case-study-objective-perturbation}, we empirically validate this theoretical result through experiments on differentially private logistic regression and support vector machine tasks on 4 benchmark datasets.
\end{remark}

\subsection{Gradient Perturbation for Non-Convex ERM}\label{sec:gradient-erm}
In this section, we use the product noise in gradient perturbation to solve ERM where the loss function for each data point, $\ell(\bm{\omega};d_i)$, is non-convex. In non-convex ERM applications (e.g., deep neural networks), Differentially Private SGD (DPSGD) \cite{abadi2016deep} has been widely accepted as the state-of-the-art technique. It extends the traditional SGD by enforcing gradient clipping followed by noise addition to guarantee privacy.
To be more specific, given a mini-batch of training data of size $I_t$, each sample gradient $\bm{g}_t(d_i)$ is first clipped to ensure that its $L_2$-norm does not exceed a predefined threshold $C$, i.e., $\bar{\bm{g}}_t (d_i) = \bm{g}_t(d_i) / \max \{1, \frac{\left \|\bm{g}_t(d_i)  \right \|_2 }{C} \}$. 
Next, the clipped gradients are aggregated, and calibrated Gaussian noise $\mathbf{n}_t \sim \mathcal{N} (0, \sigma^2 C^2 \mathbf{I})$ is added, yielding the noisy gradient $\tilde{\bm{g}}_t  = \frac{1}{I_t} \sum_{i \in I_t} \bar{\bm{g}_t} (d_i) +\mathbf{n}_t$.

According to our theoretical findings (Section~\ref{sec:main_results}), our product noise exhibits superior properties compared to classic Gaussian noise. Under the same privacy parameters, product noise significantly reduces noise magnitude. Consequently, we can replace $\mathbf{n}_t$
with the product noise in   DPSGD to achieve a more favorable privacy and utility trade-off. We defer the pseudocode to  Algorithm~\ref{alg:PN-Based-SGD}  (Product Noise-Based DPSGD) in Appendix~\ref{app:pn-dpsgd}. The privacy guarantee is shown in  Corollary~\ref{privacy_guarantee_np_sgd}  and proved in Appendix~\ref{app:prove-pn-dpsgd}.

 \begin{corollary}[Privacy Guarantee of Product Noise-Based DPSGD (i.e., Algorithm \ref{alg:PN-Based-SGD})]\label{privacy_guarantee_np_sgd} 
     Given a gradient norm clipping parameter $C$, the product noise-based DPSGD achieves ($\epsilon, \delta$)-DP for each gradient descend step, if ${\sigma_M} = \frac{  2 \sqrt{2} C} k^{\frac{2}{M}} \left(\frac{M}{4} +\frac{3}{2} \right)^{ \left( \frac{1}{2} + \frac{2}{M} \right)}{\epsilon e^{\left( \frac{1}{2} + \frac{1}{M}\right)}}$. Note $\delta$ is determined by $k$ and the dimension of the problem $M$  (ref.  (\ref{eq:delta-accurate})).
 \end{corollary}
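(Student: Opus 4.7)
My plan is to reduce the statement directly to Theorem~\ref{thm:mian-thm} by identifying the appropriate vector-valued query and computing its $\ell_2$ sensitivity.

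First, I would fix an arbitrary gradient step $t$ and view the object that must be privatized as the sum of the per-sample clipped gradients, $f(D) \triangleq \sum_{i \in I_t} \bar{\bm{g}}_t(d_i) \in \R^M$, where $M$ is the model dimension. Since per-sample clipping guarantees $\|\bar{\bm{g}}_t(d_i)\|_2 \leq C$, for any two neighboring datasets $D \sim D'$ (which, per Definition~\ref{def_dp}, differ in a single record by substitution) the difference vector satisfies
\begin{equation*}
\|f(D) - f(D')\|_2 \;=\; \|\bar{\bm{g}}_t(d_i) - \bar{\bm{g}}_t(d_i')\|_2 \;\leq\; \|\bar{\bm{g}}_t(d_i)\|_2 + \|\bar{\bm{g}}_t(d_i')\|_2 \;\leq\; 2C,
\end{equation*}
so $\Delta_2 f \leq 2C$. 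This is the key quantity I need to plug into the product-noise guarantee.

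Next, I would invoke Theorem~\ref{thm:mian-thm} with $\Delta_2 f = 2C$. That theorem gives $(\epsilon,\delta)$-DP whenever $\sigma_M^2 \geq \frac{(\Delta_2 f)^2}{\epsilon^2}\, t^2$ with $t^2 = 2k^{4/M}(M/4+3/2)^{1+4/M}/e^{1+2/M}$; taking the square root and substituting $\Delta_2 f = 2C$ yields exactly the stated
\begin{equation*}
\sigma_M \;=\; \frac{2\sqrt{2}\, C\, k^{2/M}\, (M/4 + 3/2)^{1/2 + 2/M}}{\epsilon\, e^{1/2 + 1/M}},
\end{equation*}
with the corresponding $\delta$ inherited from (\ref{eq:delta-accurate}) and governed by the tuning parameter $k$. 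Finally, I would note that the deterministic normalization by $1/I_t$ used to produce $\tilde{\bm{g}}_t = \frac{1}{I_t}\bigl(f(D) + \mathbf{n}_t\bigr)$ is a post-processing of $f(D) + \mathbf{n}_t$, and by the well-known post-processing immunity of DP this operation preserves the $(\epsilon,\delta)$ guarantee of the released noisy gradient.

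The proof is essentially a plug-in of Theorem~\ref{thm:mian-thm}, so there is no deep technical obstacle; the only points that require care are (i) correctly tracking the factor of $2$ in the sensitivity (the substitution convention in Definition~\ref{def_dp} gives $2C$, not $C$, which is exactly what the coefficient $2\sqrt{2}\,C$ in the stated $\sigma_M$ reflects), and (ii) clarifying that noise is calibrated to the sum before normalization, so that the averaging step falls under post-processing rather than requiring a separate sensitivity analysis. No additional subsampling-amplification argument is invoked, since the statement only claims a per-step guarantee.
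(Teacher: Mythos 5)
Your proposal is correct and matches the paper's own argument: both reduce the claim to bounding the $\ell_2$-sensitivity of the sum of clipped per-sample gradients by $2C$ (the paper expands $\|\bar{\bm{g}}_t(d_i)-\bar{\bm{g}}_t(d_i')\|_2^2$ via the inner product while you use the triangle inequality, but the bound is identical) and then plug $\Delta_2 f = 2C$ into Theorem~\ref{thm:mian-thm}. Your explicit post-processing remark about the $1/I_t$ normalization is a harmless addition that the paper leaves implicit.
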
 

\begin{remark} \label{remark-pn-dpsgd}
Given a data sampling probability $p$ and training steps $T$, the cumulative privacy loss of our product noise-based DPSGD is evaluated using privacy amplification via subsampling (see Theorem~\ref{thm:Poisson-Subsampling-ampilification} in Appendix~\ref{app:dp}) followed by a distribution-independent composition~\cite{he2021tighter} (see Theorem~\ref{thm: composition_theorem_1} in Appendix~\ref{app:dp}).

The popular privacy composition techniques, e.g., the moments accountant (MA)~\cite{abadi2016deep} and central limit theorem (CLT)-based composition~\cite{dong2019gaussian}, cannot accommodate our product noise because they specifically require the perturbation noise to be a multivariate Gaussian.  The numerical composition technique,  PRV-based composition~\cite{gopi2021numerical},  is also not applicable because it requires the CDF of the PLRV. However, our PLRV involves the product of two random variables, each governed by a complex distribution, making the CDF analytically intractable.

As a result, the best composition technique that works for our product noise is the considered distribution-independent composition~\cite{he2021tighter}, which makes no assumptions on the noise distribution and remains applicable under arbitrary PLRV forms. In Section~\ref{sec:case-study-gradient-perturbation}, we will show that our method can provide higher privacy guarantees when achieving comparable utility (training and testing accuracy).
\end{remark}

\section{Experiments}\label{sec:experiments}
In this section, we conduct various case studies to evaluate the performance of the product noise in privacy-preserving convex and non-convex ERM. The dataset considered in each case study and the compared methods are summarized in Table \ref{table:dp_erm_summary}.
\begin{table}[htp]
    \caption{Datasets and Compared Methods Overview}
    \label{table:dp_erm_summary}
    \setlength{\tabcolsep}{6pt}
    \renewcommand{\arraystretch}{1.2}
    \small
    \resizebox{\columnwidth}{!}{  
    \begin{tabular}{|>{\centering\arraybackslash}m{2.6cm}|
                    >{\centering\arraybackslash}m{3.5cm}|
                    >{\centering\arraybackslash}m{2.3cm}|}
        \hline
        \textbf{DP-ERM} & \textbf{Compared Methods} & \textbf{Datasets} \\
        \hline
        \multirow{2}{=}{\centering Convex ERM via Output Perturbation} 
            & classic Gaussian noise and analytic Gaussian noise~\cite{chaudhuri2011differentially}
            & \multirow{2}{=}{\centering Adult, KDDCup99, MNIST, Synthetic-H, Real-sim, RCV1} \\
            & multivariate Laplace noise~\cite{wu2017bolt} & \\
        \hline
        Convex ERM via Objective Perturbation 
            & classic Gaussian noise and analytic Gaussian noise based AMP \cite{iyengar2019towards}
            & MNIST, Synthetic-H, Real-sim, RCV1 \\
        \hline
        \multirow{3}{=}{\centering Non-Convex ERM via Gradient Perturbation} 
            & classic DPSGD using MA~\cite{abadi2016deep}
            & \multirow{3}{=}{\centering Adult, IMDb, MovieLens, MNIST, CIFAR-10} \\
            & classic DPSGD using CLT~\cite{bu2020deep} & \\
            & classic DPSGD using PRV~\cite{gopi2021numerical} & \\
            \hline
    \end{tabular}
    \label{tab:method_overview}
    }
\end{table}

\noindent\textbf{Convex  ERM  Experiment Setups.} For the convex case, we consider Logistic Regression (LR) and Huber Support Vector Machine (SVM) as 
applications. The corresponding loss functions are provided in Appendix~\ref{sec:convex-setup}, and the dataset statistics are summarized in Table~\ref{tab:datasets_objective} therein. For output perturbation, we compare our product noise-based output perturbation with that using the classic Gaussian noise~\cite{chaudhuri2011differentially}, analytic Gaussian noise~\cite{balle2018improving}, and multivariate Laplace noise (i.e., Permutation-based SGD)~\cite{wu2017bolt}. Regarding objective perturbation, we adopt the same strategy for hyperparameter-free (H-F) optimization in~\cite{iyengar2019towards} and compare product noise-based AMP with H-F AMP that uses both classic Gaussian noise and analytic Gaussian noise. For all datasets used in the convex case studies, we randomly shuffled the data and split it into $80\%$ training and $20\%$ testing sets. We evaluate the performance using the averaged test accuracy and standard deviation, which reflect models' stability across various experiment instances. 

The goal of these convex experiment cases is to verify that, under the same privacy parameters $\epsilon$ and $\delta$, convex ERM using our product noise achieves significantly higher utility (i.e., test accuracy) than using alternative noise   (e.g., such as Gaussian and Laplace noises). We highlight the key observations of convex experiments as follows. 

\begin{observation}[Higher Utility under the Same or Stricter Privacy Guarantees] \label{observation-higher-accuracy}
     The model using product noise consistently achieves higher test accuracy in all considered tasks for convex ERM under the same privacy parameter. Notably, on specific datasets, the performance even surpasses that of the non-private baseline. For some datasets,  even under stricter privacy guarantees (i.e., smaller $\epsilon$), the product noise-based method achieves higher accuracy than the other noise in convex ERM    (ref. Section~\ref{sec:case-study-output-perturbation} and~\ref{sec:case-study-objective-perturbation}). 
\end{observation}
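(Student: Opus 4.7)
The plan is to justify Observation~\ref{observation-higher-accuracy} through a two-pronged argument that couples the utility analysis established in Section~\ref{sec:case-studies} with a carefully controlled empirical evaluation on the datasets summarized in Table~\ref{tab:method_overview}. On the theoretical side, I would first invoke the bound $\mathbb{E}[\mathcal{L}(\bm{\omega}_{\mathrm{priv}}; D) - \mathcal{L}(\hat{\bm{\omega}}; D)] = L \mathbb{E}[\|\mathbf{n}\|_2]$ from Corollary~\ref{utility_guarantee_np_output} for output perturbation, together with the asymptotic utility bound in Remark~\ref{remark-pn-amp} for objective perturbation. Combined with Corollary~\ref{corollary:mechanism_guidance} (product noise has strictly lower expected squared magnitude than the classic Gaussian noise whenever $M \geq 14$) and Corollary~\ref{corollary:Asymptotic_analysis_noise} (a $\Theta(\log n)$ asymptotic reduction factor), this directly yields a tighter utility upper bound for the product noise mechanisms at any fixed $(\epsilon,\delta)$.

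Next, I would design an experimental protocol that cleanly isolates the effect of the noise choice. For each convex task (LR and Huber SVM) and each dataset, I fix a target $(\epsilon^*, \delta^*)$ pair and calibrate every competing mechanism so that it achieves \emph{exactly} that guarantee: the classic Gaussian noise via the closed form $\sigma \geq \frac{\Delta_2 f}{\epsilon^*}\sqrt{2\log(1.25/\delta^*)}$, the analytic Gaussian noise via its CDF inversion, the Laplace noise via its standard calibration, and our product noise via Algorithm~\ref{alg:pn-calibration}. I would use $80\%/20\%$ train/test splits across multiple random seeds, holding the regularization $\Lambda$, optimizer, batch size, and stopping criterion identical across all mechanisms, and report averaged test accuracy together with standard deviation to reflect stability. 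Observation~\ref{observation-higher-accuracy} would then be evidenced by the product noise mechanism having higher mean accuracy and, in many cases, lower variance than the alternatives.

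The \emph{stricter privacy} half of the observation is supported by noting that the ratio of noise magnitudes $\mathbb{E}[\|\mathbf{n}\|_2^2] / \mathbb{E}[\|\mathbf{n}_{\mathrm{classic}}\|_2^2] = f(M)$ is independent of $\epsilon$ (both mechanisms scale as $1/\epsilon$), so the \emph{absolute} magnitude gap widens as $\epsilon$ decreases; consequently, the product noise at a smaller $\epsilon$ can still inject less noise than a competitor at a larger $\epsilon$, and therefore achieve higher accuracy while offering a strictly stronger privacy guarantee. I would verify this directly by sweeping $\epsilon$ for the competitors and identifying the crossover point where their accuracy matches ours at our chosen $\epsilon$.

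The hard parts will be threefold. First, guaranteeing that the calibration step produces mechanisms at truly identical privacy levels requires numerically inverting an expression in the confluent hypergeometric function of Theorem~\ref{thm:mian-thm}; I would handle this via the iterative tuning of $k$ in Algorithm~\ref{alg:pn-calibration} and cross-check the resulting $\delta$ against (\ref{eq:delta-accurate}) to high precision. Second, explaining the occasional case where the private model surpasses the non-private baseline is not captured by worst-case utility bounds; this should be attributed to the implicit regularization effect of additive noise, which I would document as an empirical phenomenon rather than a consequence of the theory. Third, because test accuracy is a non-smooth function of the learned weights, the translation from a smaller excess empirical risk to strictly higher test accuracy is not automatic; I would support it by reporting consistent trends across multiple seeds and by appealing to standard generalization arguments that monotonically relate empirical-risk suboptimality to expected loss on the test distribution.
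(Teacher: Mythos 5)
Your proposal matches the paper's approach: since this statement is an empirical \emph{observation}, the paper "proves" it exactly as you describe, by calibrating each competing mechanism (classic Gaussian, analytic Gaussian, multivariate Laplace, and the product noise via Algorithm~\ref{alg:pn-calibration}) to the same $(\epsilon,\delta)$, running LR and Huber SVM on the Table~\ref{tab:method_overview} datasets with $80\%/20\%$ splits and repeated trials, and reporting mean test accuracy and standard deviation, with the theoretical backing drawn from Corollaries~\ref{utility_guarantee_np_output}, \ref{utility_guarantee_np_amp}, \ref{corollary:mechanism_guidance}, and~\ref{corollary:Asymptotic_analysis_noise} and the surpassing-of-the-non-private-baseline attributed to the implicit regularization effect of the added noise. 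Your additional caveats (the non-automatic link between excess empirical risk and test accuracy, and the $\epsilon$-independence of the noise-magnitude ratio underlying the "stricter privacy" claim) are consistent with, and slightly more explicit than, the paper's own treatment.
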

\begin{observation}[Higher Stability under the Same Privacy Guarantees] \label{observation-higher-stability}
    In the convex ERM, perturbation using product noise significantly reduces the model utility fluctuation (measured in terms of the standard deviation of test accuracy) under the same privacy guarantee  (ref. Section~\ref{sec:case-study-output-perturbation} and~\ref{sec:case-study-objective-perturbation}).
\end{observation}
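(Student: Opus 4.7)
My plan is to verify Observation~2 by combining a theoretical stability argument with the experimental protocol described in Sections~\ref{sec:case-study-output-perturbation} and~\ref{sec:case-study-objective-perturbation}. The observation is empirical in nature -- it concerns the sample standard deviation of test accuracy across repeated runs of the private ERM pipeline -- so the ``proof'' must pair a formal argument about noise concentration with the actual measurements tabulated in those sections. Conceptually, the statement decomposes as (i) product noise is more concentrated (in both absolute variance and tail shape) than the competing noises under matched $(\epsilon,\delta)$, and (ii) the variance of the released model transfers to the variance of test accuracy through a standard smoothness argument.

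For the theoretical half I would first reduce ``fluctuation of accuracy'' to ``fluctuation of noise magnitude.'' In output perturbation the released model is $\hat{\bm{\omega}}+\mathbf{n}$, so the run-to-run dispersion of any downstream statistic is driven by the per-coordinate covariance of $\mathbf{n}$, which for both mechanisms equals $\mathbb{E}[\|\mathbf{n}\|_2^2]/M$ by rotational invariance. Calibrating product noise to $(\epsilon,\delta)$-DP via Theorem~\ref{thm:mian-thm} and Gaussian noise via the classic calibration $\sigma=(\Delta_2 f/\epsilon)\sqrt{2\log(1.25/\delta)}$, Corollaries~\ref{corollary:mechanism_guidance} and~\ref{corollary:Asymptotic_analysis_noise} already give $\sigma_M^2/M<\sigma^2$ for all $M\geq 14$, and in fact a factor $\bm{\Theta}(1/\log(1/\delta))$ smaller per coordinate as $M\to\infty$. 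I would complement this with the kurtosis/skewness observations of Section~\ref{sec:a-new-noise} and Figure~\ref{fig:histogram-proposed-noise}, which certify that the product noise is not only smaller in mean magnitude but also more leptokurtic, so extreme perturbations of $\hat{\bm{\omega}}$ are rarer; this is precisely what suppresses the standard deviation of the empirical accuracy.

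Next I would lift from noise concentration to accuracy concentration. Corollary~\ref{utility_guarantee_np_output} already bounds excess empirical risk in output perturbation by $L\,\mathbb{E}[\|\mathbf{n}\|_2]$; a first-order Taylor expansion of a smoothed surrogate for 0-1 test accuracy around $\hat{\bm{\omega}}$ gives $\mathrm{acc}(\bm{\omega}_{\mathrm{priv}})-\mathrm{acc}(\hat{\bm{\omega}})=\langle\nabla\mathrm{acc}(\hat{\bm{\omega}}),\mathbf{n}\rangle+O(\|\mathbf{n}\|_2^2)$, so isotropy of $\mathbf{n}$ yields $\mathrm{Var}(\mathrm{acc})\approx \|\nabla\mathrm{acc}\|^2 \cdot \mathbb{E}[\|\mathbf{n}\|_2^2]/M$, which is strictly smaller for our noise by the coordinate-variance comparison above. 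The analogous bound for objective perturbation follows from Corollary~\ref{utility_guarantee_np_amp} by tracking how $\bm{\omega}_{\mathrm{approx}}$ depends on the two injected noises $\mathbf{n}_1$ and $\mathbf{n}_2$ and observing that the sensitivity analysis there is linear in the noise scales ${\sigma_M}_1,{\sigma_M}_2$, so the same shrinkage carries through.

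For the empirical half I would repeat each LR and SVM experiment on the datasets listed in Table~\ref{tab:method_overview} a sufficient number of times (say $50$--$100$) so that the sample standard deviation of test accuracy is itself statistically reliable, and then report mean and standard deviation side by side across product noise, classic Gaussian, analytic Gaussian, and multivariate Laplace noise for a grid of $\epsilon$ values with $\delta$ fixed. The main obstacle I foresee is the Lipschitz-in-parameters step used to translate noise variance into accuracy variance: classification accuracy is piecewise constant in $\bm{\omega}$ and is only smooth under mild mass-on-boundary assumptions. Justifying this cleanly -- either through a margin-based smoothing argument or by instead reporting the variance of a continuous surrogate such as logistic loss or SVM hinge loss, and then appealing to calibration inequalities -- is the one place where the argument requires genuine care rather than routine calculation; once that bridge is in place, everything else reduces to plugging the variance ratios derived from Theorem~\ref{thm:mian-thm} into the stability bounds already proved in Section~\ref{sec:case-studies}.
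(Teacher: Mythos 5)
The statement you are asked to support is an \emph{Observation}, and the paper establishes it purely empirically: in Sections~\ref{sec:case-study-output-perturbation} and~\ref{sec:case-study-objective-perturbation} each configuration is rerun $10$ times and the sample standard deviation of test accuracy is reported side by side for product, classic Gaussian, analytic Gaussian, and Laplace noise (e.g., $0.00845$ vs.\ $0.02711$ and $0.02772$ on RCV1 at $\epsilon=10^{-3/2}$ for output perturbation, and $0.0002$ vs.\ $0.05897$ and $0.04895$ at $\epsilon=10^{1/2}$ for AMP). Your empirical half is therefore essentially the paper's own argument, just with more repetitions; that part is fine and is all the paper actually does.

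Your theoretical half goes beyond the paper, and the comparison is worth making explicit. The per-coordinate variance reduction you extract from Corollaries~\ref{corollary:mechanism_guidance} and~\ref{corollary:Asymptotic_analysis_noise} is a legitimate and clean way to explain \emph{why} the dispersion of $\bm{\omega}_{\mathrm{priv}}$ shrinks, and it is something the paper never states for Observation~\ref{observation-higher-stability} (it only invokes magnitude and shape of $\|\mathbf{n}\|_2^2$ in Remark~\ref{remark-pn-output}). However, two steps in your argument do not hold as written. First, the leptokurtosis claim is backwards as a concentration statement: higher kurtosis means heavier tails \emph{relative to the scale}, i.e., a larger share of the variance is contributed by rare extreme draws, so leptokurtosis by itself does not certify that ``extreme perturbations are rarer''; the honest statement is only that the second moment (hence the Chebyshev-type spread) is smaller. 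Second, the Taylor bridge from $\mathrm{Var}(\mathbf{n})$ to $\mathrm{Var}(\mathrm{acc})$ fails exactly where you say it does --- $0$--$1$ accuracy is piecewise constant in $\bm{\omega}$, so $\nabla\mathrm{acc}$ does not exist and no margin/surrogate smoothing argument is supplied in the paper or in your proposal. Since the observation is stated as an empirical finding, this gap does not invalidate the claim, but you should present the variance-transfer step as a heuristic motivation rather than as part of the proof; the load-bearing evidence remains the tabulated standard deviations.
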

\noindent\textbf{Non-convex  ERM  Experiment Setups.} For the non-convex case, we consider various neural networks as classification models. In this case, we compare product noise-based DPSGD with the classic DPSGD using Gaussian noise. As discussed in Remark~\ref{remark-pn-dpsgd}, given a specific data subsampling probability and training epoch, the cumulative privacy loss of our approach is evaluated using privacy amplification followed by a distribution-independent composition~\cite{he2021tighter}. In contrast, the cumulative privacy loss caused by standard DPSGD is evaluated using the MA~\cite{abadi2016deep}, PRV~\cite{gopi2021numerical}, and CLT~\cite{bu2020deep} approaches. 

The non-convex experiments aim to validate that, under comparable utility (i.e., test accuracy), our product noise-based DPSGD achieves stronger privacy guarantees (characterized by smaller privacy parameters $\epsilon$ and $\delta$) than the classic DPSGD. The experiment results are consistent with our analysis in Section~\ref{sec:guidance-and-simulation}, which shows that when introducing noise of comparable magnitude, our mechanism requires smaller privacy parameters, thereby offering tighter privacy preservation. We highlight the key observation in non-convex ERM as follows.

\begin{observation}[Stronger Privacy Guarantees under Comparable Utility] \label{observation-stronger-privacy}
To achieve comparable utility (i.e., testing and training accuracy across epochs)  in non-convex ERM, models trained with product noise require smaller privacy parameters than those using classic Gaussian noise, thus yielding stronger privacy preservation (ref. Section~\ref{sec:case-study-gradient-perturbation}).
\end{observation}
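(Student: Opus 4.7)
The plan is to treat Observation~3 as an empirical claim and to validate it via a controlled experimental study on non-convex ERM, since a purely analytic separation between product-noise DPSGD and Gaussian DPSGD is obstructed by the fact that the two mechanisms use different composition tools (distribution-independent composition via Theorem~\ref{thm: composition_theorem_1} for ours; MA, CLT, and PRV for Gaussian). First, I would fix a family of non-convex architectures and tasks drawn from Table~\ref{tab:method_overview} (Adult, IMDb, MovieLens, MNIST, CIFAR-10), keeping the model, optimizer, batch size, clipping norm $C$, data-sampling probability $p$, and number of epochs $T$ identical across mechanisms so that the only varying ingredient is the per-step perturbation noise and the associated accountant. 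The utility metric is the mean (and standard deviation across seeds) of test and training accuracy, tracked epoch by epoch.

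Next, I would match utility rather than privacy: for each task I would sweep the noise scale $\sigma$ of classic Gaussian DPSGD and the scale $\sigma_M$ of product-noise DPSGD and select pairs $(\sigma,\sigma_M)$ that yield accuracy curves within a small tolerance of each other across epochs (for example, test accuracy differing by at most a preselected margin throughout training). For every such utility-matched pair I would then compute the cumulative $(\epsilon,\delta)$ actually spent by each algorithm: for Gaussian DPSGD using MA~\cite{abadi2016deep}, CLT-GDP~\cite{bu2020deep}, and the numerical PRV accountant~\cite{gopi2021numerical}; for product-noise DPSGD using Corollary~\ref{privacy_guarantee_np_sgd} for the per-step guarantee, privacy amplification via subsampling, and the distribution-independent composition of \cite{he2021tighter}. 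The observation is then validated if, in every utility-matched pair, the $(\epsilon,\delta)$ reported by our accountant is strictly smaller (for instance, smaller $\epsilon$ at the same $\delta$, or smaller $\delta$ at the same $\epsilon$) than the best of the three Gaussian accountants.

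To align this empirical evidence with the theory in Section~\ref{sec:guidance-and-simulation}, I would additionally record the per-step noise magnitude $\mathbb{E}[\|\mathbf{n}\|_2^2]$ actually injected by each mechanism at the utility-matched operating points. Corollary~\ref{corollary:mechanism_guidance} and Corollary~\ref{corollary:Asymptotic_analysis_noise} predict that, in the high-dimensional regime relevant to neural networks (all considered models have $M\gg 14$), two mechanisms producing the same expected squared magnitude require very different privacy budgets: our product noise admits a substantially smaller $(\epsilon,\delta)$ than a Gaussian noise of comparable magnitude. Since utility in DPSGD is primarily governed by the signal-to-noise ratio of the clipped gradient estimator, matched utility should translate into matched effective noise magnitude, and hence into the predicted separation in privacy cost.

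The main obstacle will be certifying that the comparison is genuinely fair. The subtle points are (i) the distribution-independent composition of \cite{he2021tighter} is generally looser than MA/CLT/PRV, so any reported advantage must survive using the most favorable Gaussian accountant rather than the weakest one; (ii) utility-matching must be robust to hyperparameter tuning, clipping threshold, learning-rate schedule, and random seed, so I would report confidence bands over multiple seeds and perform a small grid over $C$ and learning rate for \emph{each} mechanism independently, then match utility across the per-mechanism best configurations; and (iii) the tuning parameter $k$ in Theorem~\ref{thm:mian-thm} must be chosen so that the resulting $\delta$ is comparable to the Gaussian $\delta$, following Algorithm~\ref{alg:pn-calibration}, so that $\epsilon$-to-$\epsilon$ comparisons at a fixed $\delta$ are well-defined. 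If the separation in $(\epsilon,\delta)$ persists across tasks, seeds, and the three Gaussian accountants under these controls, Observation~\ref{observation-stronger-privacy} is supported.
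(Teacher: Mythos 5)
Your proposal is correct and follows essentially the same route as the paper: Section~\ref{sec:case-study-gradient-perturbation} likewise fixes the architecture, clipping norm, sampling probability, and epochs, tunes the noise scales (noise multiplier for Gaussian DPSGD; initial $\epsilon$ and $k$ for product-noise DPSGD) so the accuracy curves are comparable, and then compares the cumulative $(\epsilon,\delta)$ computed via amplification plus the distribution-independent composition against all three Gaussian accountants (MA, PRV, CLT), verifying the product-noise budget is smaller than the best of the three on each dataset. Your additional controls (seed averaging, per-mechanism hyperparameter grids, matching $\delta$ via Algorithm~\ref{alg:pn-calibration}) are refinements of, not departures from, the paper's validation.
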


\subsection{Case Study \MakeUppercase{\romannumeral 1}: Output Perturbation} \label{sec:case-study-output-perturbation}
This case study evaluates the privacy and utility trade-off achieved by product noise in output perturbation for LR and Huber SVM tasks. As described in the experiment setups, we compare product noise-based output perturbation with  the classic Gaussian noise~\cite{chaudhuri2011differentially}, analytic Gaussian noise~\cite{balle2018improving}, and the multivariate Laplace noise~\cite{wu2017bolt}. The regularization parameter $\Lambda$ is set to $10^{-2}$ for Adult and KDDCup99 and $10^{-4}$ for MNIST, Synthetic-H, Real-sim, and RCV1. The Lipschitz constant $L$ is fixed at 1, and the tuning parameter $k$ is set to 1000 for all datasets. To cover a wide range of privacy parameters, we vary $\epsilon$ from $\{ 10^{-4}, 10^{-\frac{7}{2}}, 10^{-3}, 10^{-\frac{5}{2}}, 10^{-2}, 10^{-\frac{3}{2}}, 10^{-1} \}$. For privacy parameter $\delta$, we set it to $\frac{1}{n^2}$, where $n$ is the size of the dataset. Each experiment was independently repeated 10 times.
\begin{figure}[htp]
    \centering
    \begin{subfigure}{0.49\columnwidth}
        \centering
        \includegraphics[width=\linewidth]{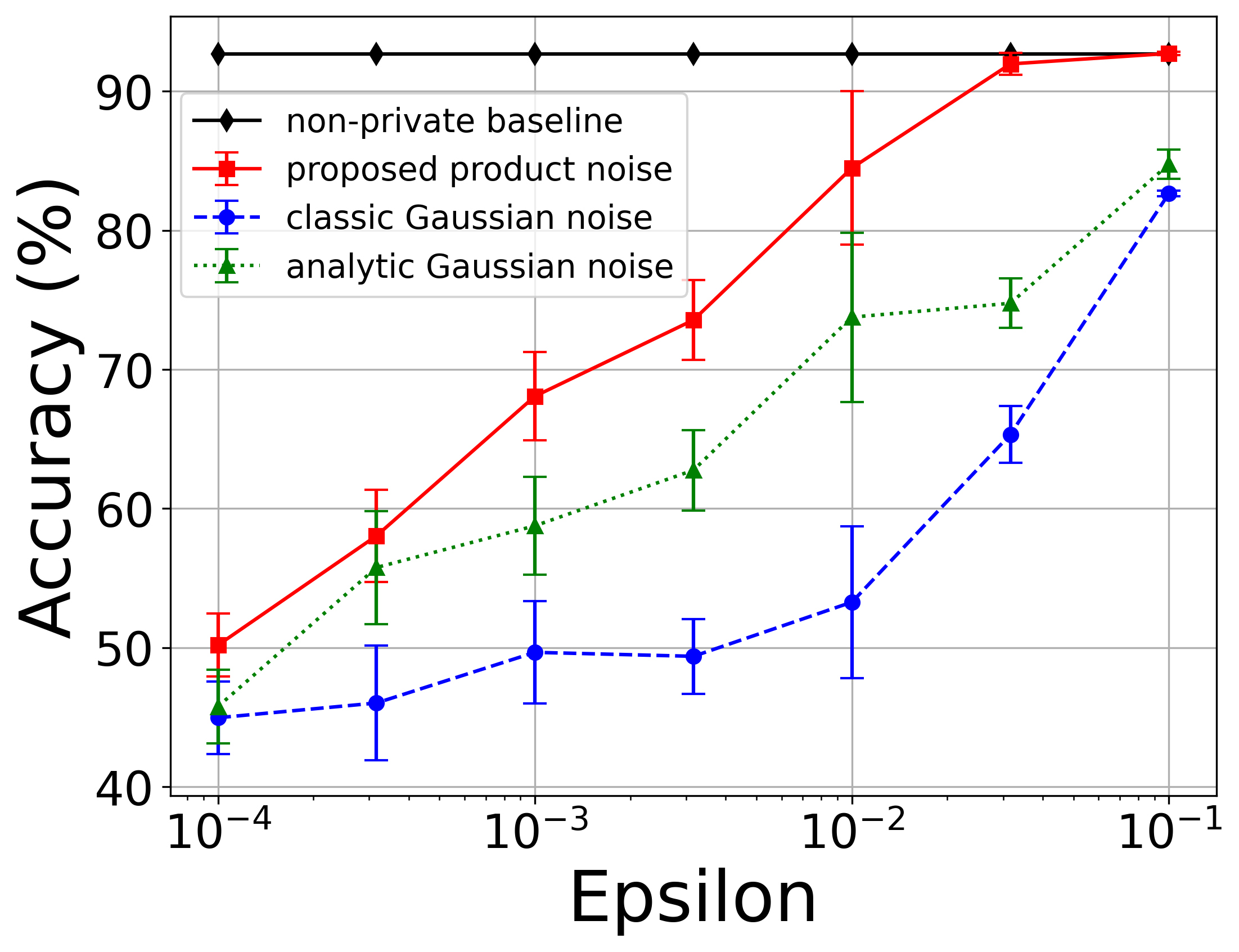}
        \Description{Accuracy results of output perturbation for MNIST dataset (Low-Dim).}
        \caption{\centering  MNIST }
        \label{fig:mnist_test_accuracy_output_lr}
    \end{subfigure}
    \hfill
    \begin{subfigure}{0.49\columnwidth}
        \centering
        \includegraphics[width=\linewidth]{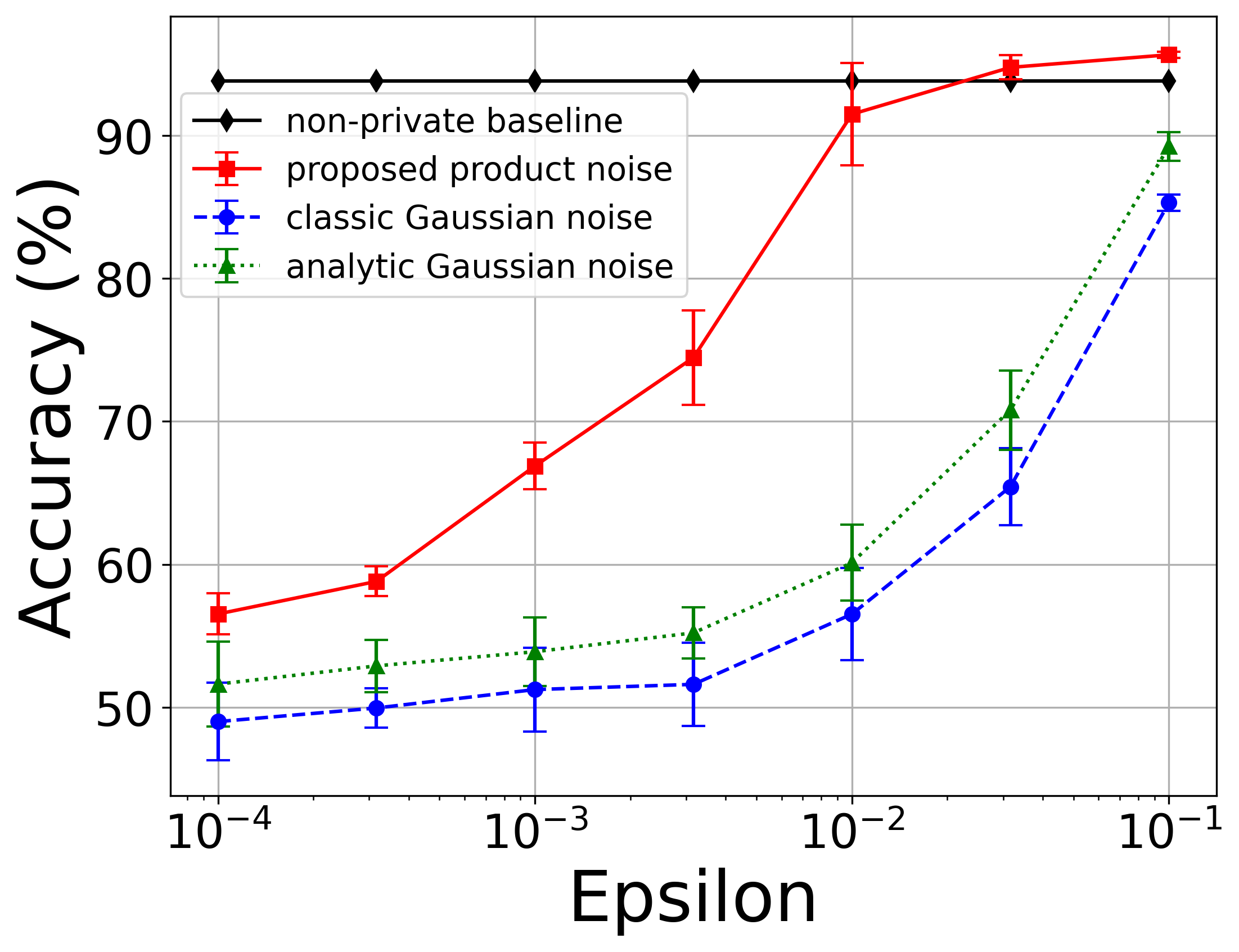}
        \Description{Accuracy results of output perturbation for RCV1 dataset (High-Dim).}
        \caption{\centering RCV1 }
        \label{fig:rcv1_test_accuracy_output_lr}
    \end{subfigure}
    \caption{ Test accuracy of Output Perturbation on LR. 
    Product noise v.s. classic Gaussian noise and analytic Gaussian noise. The experiment setups, e.g., iteration and solver,  follow ~\cite{chaudhuri2011differentially}.}
    \label{fig:output_lr}
\end{figure}

We first evaluate the performance of our product noise and Gaussian noise on  LR tasks~\cite{chaudhuri2011differentially}. As shown in Figure~\ref{fig:output_lr}, product noise-based output perturbation consistently outperforms the classic Gaussian mechanism and analytic Gaussian mechanism for all considered privacy guarantees. 
At higher privacy regimes, our method even approaches or exceeds the performance of non-private baselines, particularly on high-dimensional datasets. On RCV1 with $\epsilon = 10^{-1}$, we  achieve $95.64\%$ test accuracy, surpassing the non-private baseline ($93.79\%$). Even under a stringent privacy setting ($\epsilon = 10^{-\frac{5}{2}}$), our method attains $74.46\%$ test accuracy on RCV1, while the one using classic Gaussian noise and analytic Gaussian noise achieves only $65.43\%$ and $70.80\%$, respectively, under a looser privacy guarantee ($\epsilon = 10^{-\frac{3}{2}}$). These results confirm our finding in Observation~\ref{observation-higher-accuracy}.

In terms of utility stability, product noise-based output perturbation yields lower standard deviations of test accuracy on multiple random experiment trials across all datasets compared to the classic and analytic Gaussian mechanisms, indicating  a more stable utility. For example, on the RCV1 dataset with $\epsilon = 10^{-\frac{3}{2}}$, our method achieves a standard deviation of 0.00845, significantly lower than the classic Gaussian mechanism (0.02711) and analytic Gaussian mechanism (0.02772). This supports our finding in Observation~\ref{observation-higher-stability}. 

Besides test accuracy, we further quantify the  $\ell_2$ error between the private and non-private models. In output perturbation, $\ell_2$ error directly measures the magnitude of the noise added to the optimal parameters. Therefore, a smaller $\ell_2$ error indicates that our private parameters are closer in value to the non-private optimal parameters. Additionally, we also evaluate model robustness using the False Positive Rate (FPR). The experiment results of $\ell_2$ error and FPR are provided in Table~\ref{tab:l2_error_output_lr} and~\ref{tab:fpr_output_lr} in Appendix~\ref{app:supplemental_exp_output}, which clearly demonstrate that our product noise yields not only  the smallest $l_2$ error of the models, but also the lowest FPR. 

\noindent \textbf{More experiments on LR and Huber SVM. }
The experiment results considering other datasets and using multivariate Laplace noise are provided in Appendix~\ref{app:supplemental_exp_output} (see Figures~\ref{fig:output_lr_other_datasets},~\ref{fig:output_svm}, ~\ref{fig:ppsgd_lr}, and~\ref{fig:ppsgd_svm}).
\subsection{Case Study \MakeUppercase{\romannumeral 2}: Objective Perturbation} \label{sec:case-study-objective-perturbation}

This case study evaluates the privacy and utility trade-off achieved by the product noise in objective perturbation. As discussed in the experiment setups, we adopt the same hyperparameter-free (H-F) optimization strategy used in~\cite{iyengar2019towards}, which sets the Gaussian noise scale $\sigma$ according to ~\cite[Lemma 4]{nikolov2013geometry} and can accommodate all $\epsilon>0$. Additionally, we also consider the analytic Gaussian noise~\cite{balle2018improving} in  H-F-based AMP.
For fair comparison, we use the same optimization parameters (e.g., Lipschitz constant and stopping criterion) provided  in~\cite{iyengar2019towards}. Privacy parameter $\epsilon$ is selected from $\{ 10^{-2}, 10^{-\frac{3}{2}}, 10^{-1},10^{-\frac{1}{2}}, 10^{0}, 10^{\frac{1}{2}},  10^{1} \}$ and $\delta=\frac{1}{n^2}$, where $n$ is the number of training samples.
\begin{figure}[htp]
    \centering
    \begin{subfigure}{0.49\columnwidth}  
        \centering
        \includegraphics[width=\linewidth]{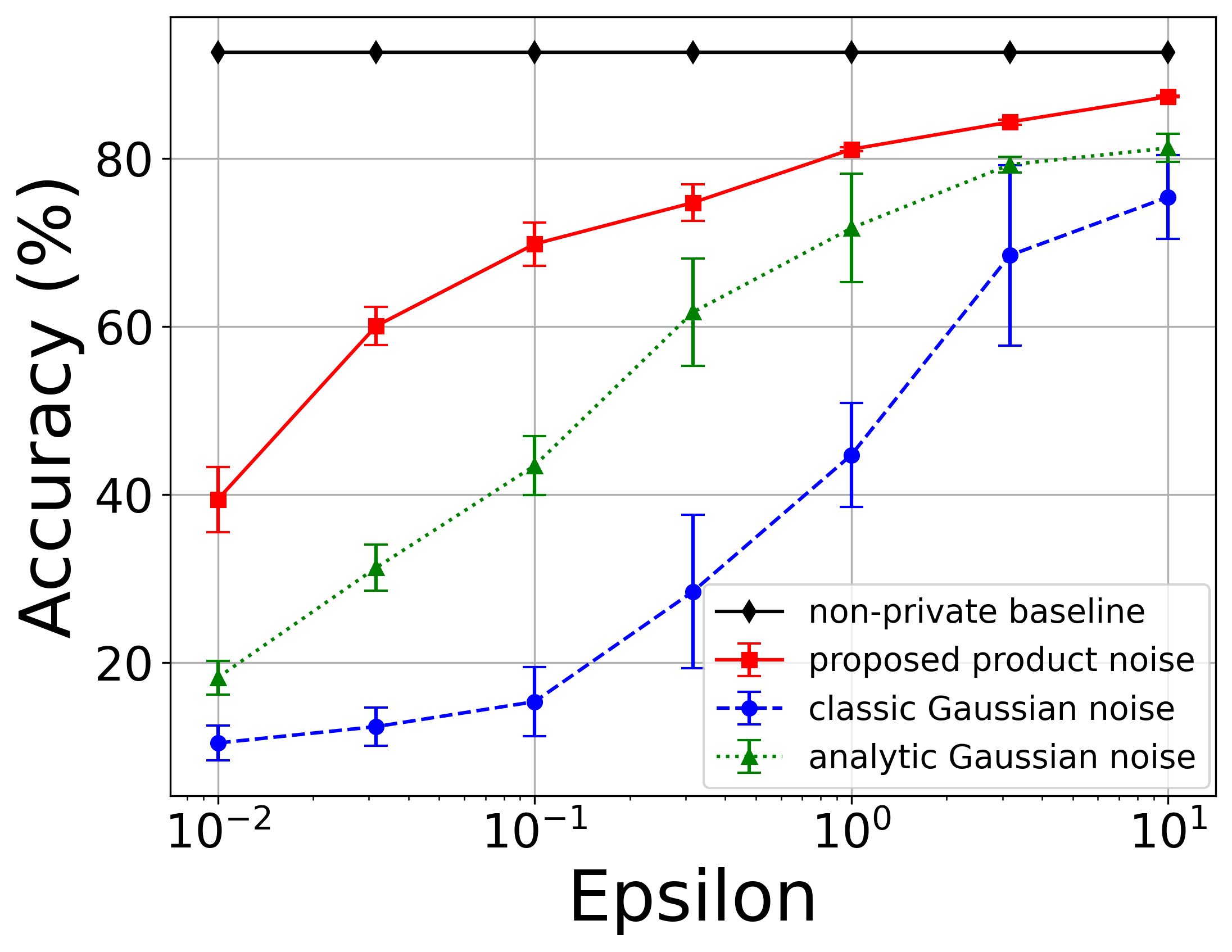}
        \Description{Graph showing accuracy versus epsilon for MNIST dataset (Low-Dim) under objective perturbation for LR.}
        \caption{\centering  MNIST }
        \label{fig:mnist_test_accuracy_objective_lr}
    \end{subfigure}
    \hfill
    \begin{subfigure}{0.49\columnwidth}
        \centering
        \includegraphics[width=\linewidth]{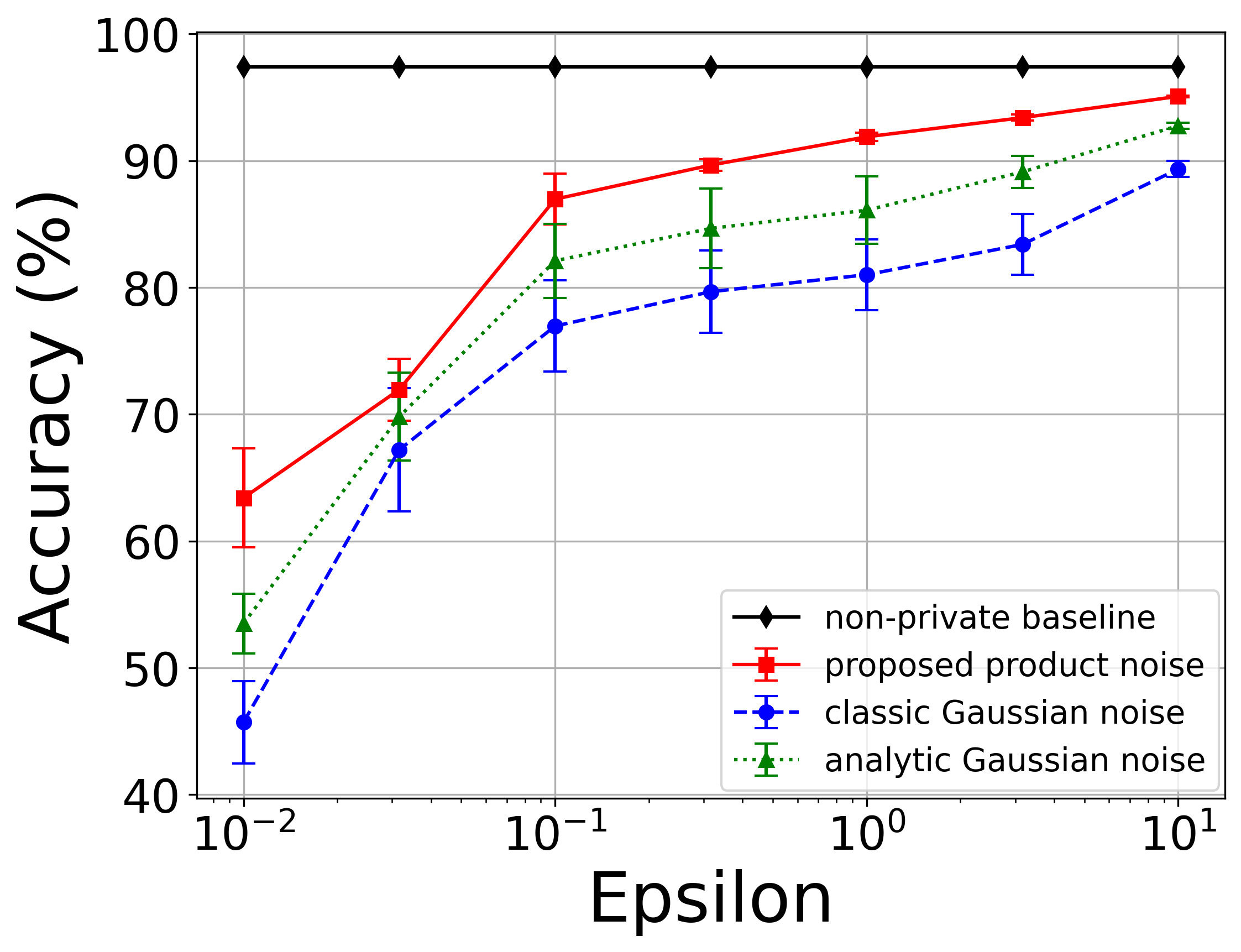}
        \Description{Graph showing accuracy versus epsilon for Synthetic-H dataset (High-Dim) under objective perturbation for LR.}
        \caption{\centering  Synthetic-H }
        \label{fig:syntheticH_test_accuracy_objective_lr}
    \end{subfigure}
    \hfill
    \begin{subfigure}{0.49\columnwidth}
        \centering
        \includegraphics[width=\linewidth]{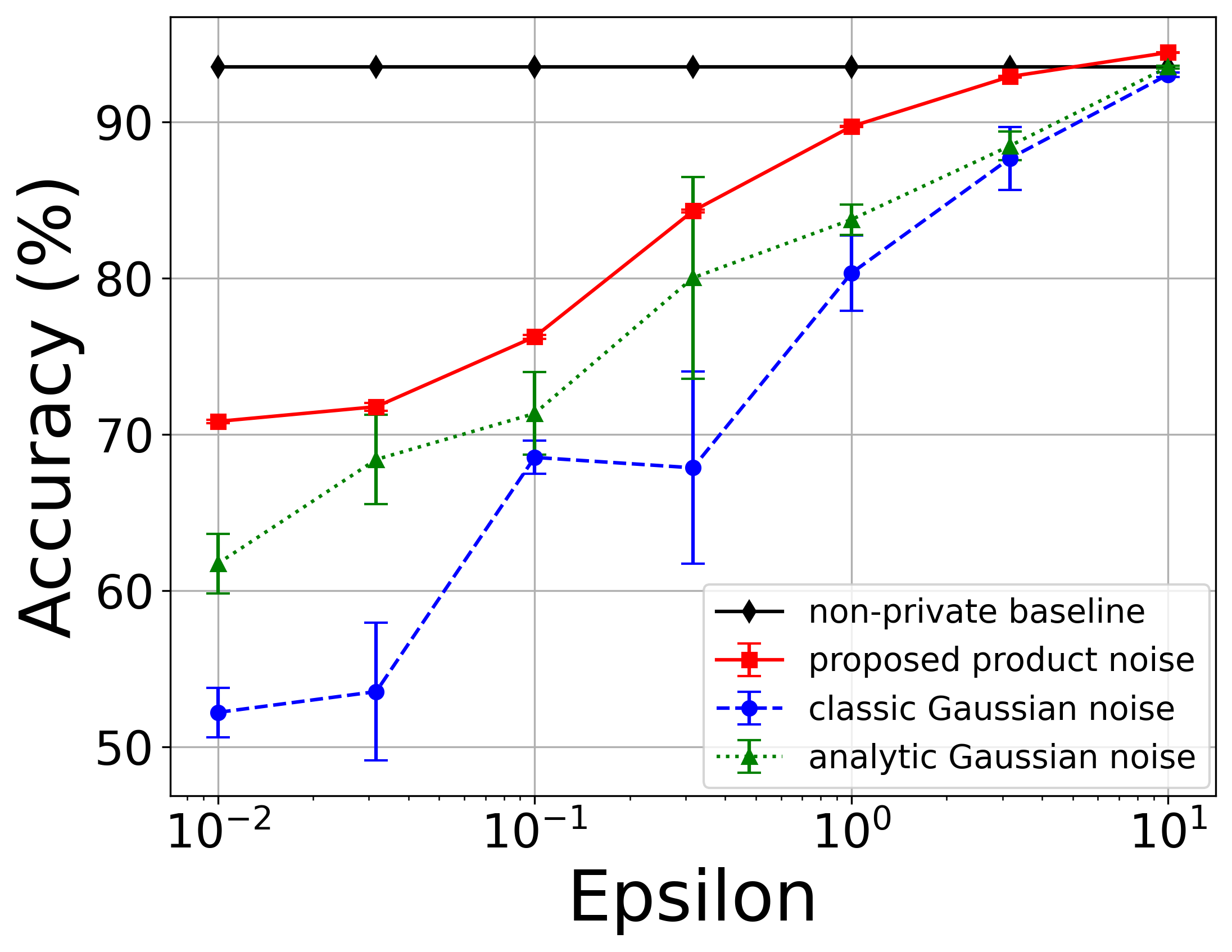}
        \Description{Graph showing accuracy versus epsilon for Real-sim dataset (High-Dim) under objective perturbation for LR.}
        \caption{\centering  Real-sim}
        \label{fig:realsim_test_accuracy_objective_lr}
    \end{subfigure}
    \hfill
    \begin{subfigure}{0.49\columnwidth}
        \centering
        \includegraphics[width=\linewidth]{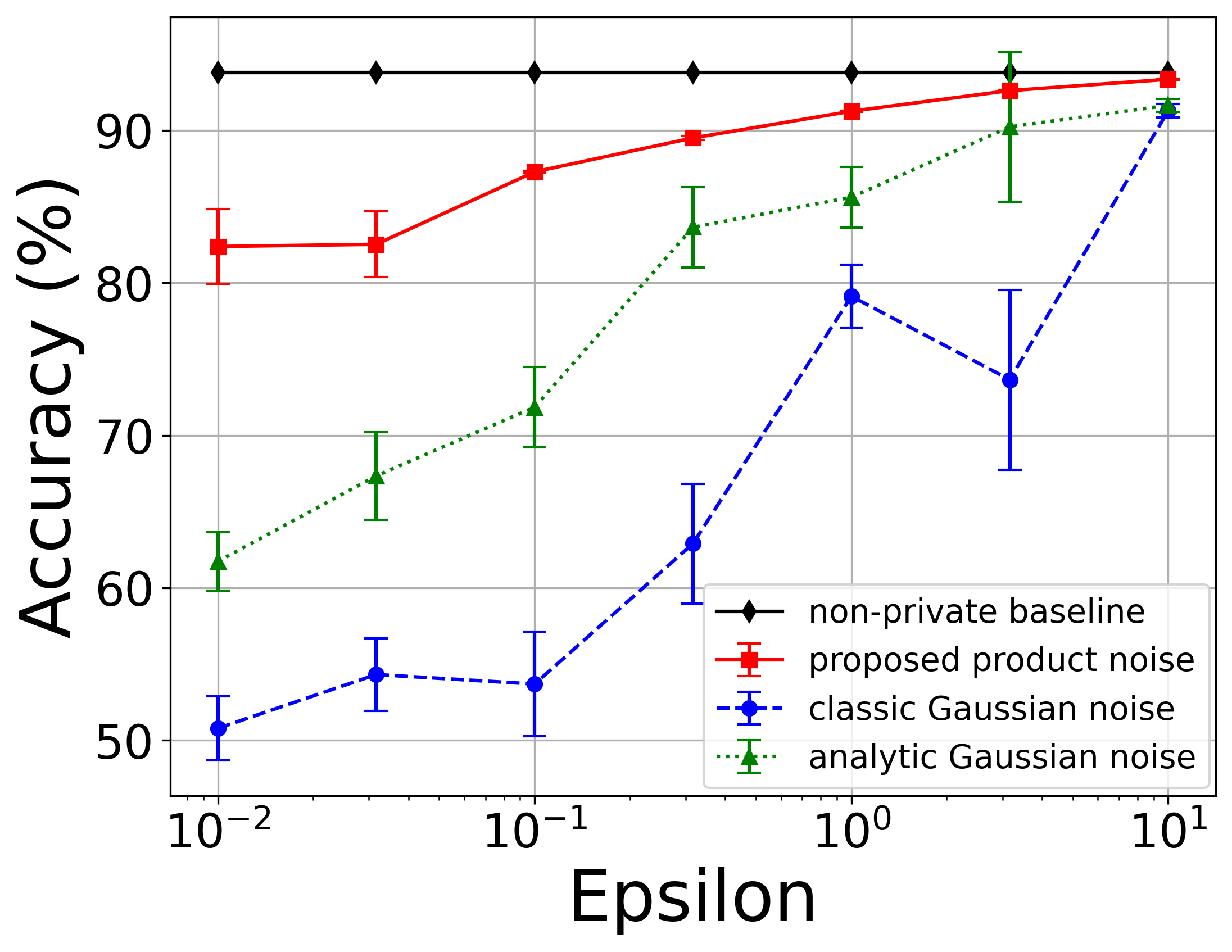}
        \Description{Graph showing accuracy versus epsilon for RCV1 dataset (High-Dim) under objective perturbation for LR.}
        \caption{\centering  RCV1 }
        \label{fig:rcv1_test_accuracy_objective_lr}
    \end{subfigure}
    \caption{ Test accuracy of Objective Perturbation on LR.}  
    \label{fig:objective_lr}
\end{figure}\

The experiment results on LR are  shown in Figure \ref{fig:objective_lr}. It demonstrates that product noise-based AMP consistently outperforms those using Gaussian noises on all datasets under the same privacy parameters. can even surpass the non-private baselines. For example, on the Real-sim dataset with $\epsilon = 10$, our method achieves test accuracy of $94.45\%$, exceeding the classic Gaussian noise ($93.03\%$), analytic Gaussian noise ($93.51\%$) and the non-private baseline ($93.52\%$). 

Moreover, our method retains strong utility even under stringent privacy constraints, especially on high-dimensional datasets. With $\epsilon=10^{-1}$, product noise-based AMP achieves $86.96\%$ and $87.29\%$ accuracy on Synthetic-H and RCV1, respectively, surpassing both classic and analytic Gaussian noise-based H-F AMP even when $\epsilon=1$. These results further support our claim in Observation~\ref{observation-higher-accuracy}.

Regarding utility stability, product noise-based AMP consistently achieves lower standard deviation of test accuracy than Gaussian noise-based H-F AMP across most datasets, indicating greater stability for objective perturbation with product noise. For instance, on RCV1 at $\epsilon=10^{\frac{1}{2}}$, the standard deviation is $0.0002$ for product-noise based AMP, compared to $0.05897$ for Gaussian noise-based H-F AMP and $0.04895$ for analytic Gaussian noise-based H-F AMP. This further substantiates Observation~\ref{observation-higher-stability}.
\begin{table}[htp]
  \centering
  \caption{ $\ell_2$ error of Objective Perturbation on LR.}
  \label{tab:l2_error_obj_LR}
  \footnotesize
  \setlength{\tabcolsep}{10pt}
  \renewcommand{\arraystretch}{1.0}
  \resizebox{\columnwidth}{!}{%
  \begin{tabular}{c|c|c|c|c|c}
    \toprule
    \multirow{2}{*}{\textbf{Dataset}} &
    \multirow{2}{*}{\textbf{Mechanism}} &
    \multicolumn{4}{c}{$\boldsymbol{\epsilon}$} \\
    \cline{3-6}
     &  & {\cellvcenter $\mathbf{10^{-2}}$}
        & {\cellvcenter $\mathbf{10^{-1}}$}
        & {\cellvcenter $\mathbf{10^{0}}$}
        & {\cellvcenter $\mathbf{10^{1}}$} \\
    \midrule
    \multirow[c]{3}{*}{MNIST}
      & classic  & 92.1 & 44.6 & 44.2 & 59.5 \\
      & analytic & 22.4 & 24.1 & 26.00 & 27.5 \\
      & ours     & \textbf{0.2} & \textbf{6.0} & \textbf{1.3} & \textbf{4.5} \\
    \midrule
    \multirow[c]{3}{*}{Synthetic-H}
      & classic  & 70.5 & 63.6 & 70.4 & 49.3 \\
      & analytic & 32.8 & 35.8 & 41.2 & 46.7 \\
      & ours     & \textbf{3.0} & \textbf{1.6} & \textbf{5.6} & \textbf{1.2} \\
    \midrule
    \multirow[c]{3}{*}{ Real-sim}
      & classic  & 79.2 & 63.5 & 77.9 & 119.1 \\
      & analytic & 36.6 & 39.8 & 42.9 & 49.3 \\
      & ours     & \textbf{4.9} & \textbf{0.7} & \textbf{1.3} & \textbf{1.2} \\
    \midrule
    \multirow[c]{3}{*}{RCV1}
      & classic  & 102.6 & 141.9 & 102.8 & 85.7 \\
      & analytic & 53.4 & 58.3 & 63.3 & 73.6 \\
      & ours     & \textbf{2.6} & \textbf{0.1} & \textbf{0.9} & \textbf{2.3} \\
    \bottomrule
  \end{tabular}}
\end{table}

\begin{table}[htp]
  \centering
  \caption{FPR of Objective Perturbation on LR.}
  \label{tab:fpr_obj_lr}
  \footnotesize
  \setlength{\tabcolsep}{2pt}
  \renewcommand{\arraystretch}{1.0}
  \resizebox{\columnwidth}{!}{%
  \begin{tabular}{c|c|c|c|c|c}
    \toprule
    \multirow{2}{*}{\textbf{Dataset}} &
    \multirow{2}{*}{\textbf{Mechanism}} &
    \multicolumn{3}{c}{\textbf{$\boldsymbol{\epsilon}$}} \\ 
    \cline{3-6}
     &  & {\cellvcenter $\mathbf{10^{-2}}$}
        & {\cellvcenter $\mathbf{10^{-1}}$}
        & {\cellvcenter $\mathbf{10^{0}}$}
        & {\cellvcenter $\mathbf{10^{1}}$} \\
    \midrule
    \multirow[c]{3}{*}{MNIST} 
      & classic  & 0.099$\pm$0.016 & 0.084$\pm$0.015 & 0.041$\pm$0.008 & 0.022$\pm$0.004 \\
      & analytic & 0.099$\pm$0.014 & 0.081$\pm$0.015 & 0.027$\pm$0.004 & 0.016$\pm$0.002 \\
      & ours     & \textbf{0.031$\pm$0.004} & \textbf{0.036$\pm$0.006} & \textbf{0.022$\pm$0.002} & \textbf{0.015$\pm$0.003} \\
    \midrule
    \multirow[c]{3}{*}{Synthetic-H} 
      & classic  & 0.492$\pm$0.046 & 0.420$\pm$0.015 & 0.207$\pm$0.008 & 0.120$\pm$0.004 \\
      & analytic & 0.228$\pm$0.035 & 0.170$\pm$0.023 & 0.103$\pm$0.013 & 0.073$\pm$0.007 \\
      & ours     & \textbf{0.143$\pm$0.011} & \textbf{0.077$\pm$0.009} & \textbf{0.049$\pm$0.005} & \textbf{0.029$\pm$0.002} \\
    \midrule
    \multirow[c]{3}{*}{Real-sim} 
      & classic  & 0.463$\pm$0.075 & 0.199$\pm$0.022 & 0.150$\pm$0.024 & 0.069$\pm$0.010 \\
      & analytic & 0.440$\pm$0.068 & 0.292$\pm$0.036 & 0.205$\pm$0.025 & 0.105$\pm$0.002 \\
      & ours     & \textbf{0.444$\pm$0.074} & \textbf{0.091$\pm$0.030} & \textbf{0.020$\pm$0.009} & \textbf{0.007$\pm$0.000} \\
    \midrule
    \multirow[c]{3}{*}{RCV1} 
      & classic  & 0.657$\pm$0.094 & 0.477$\pm$0.082 & 0.137$\pm$0.022 & 0.065$\pm$0.008 \\
      & analytic & 0.322$\pm$0.057 & 0.235$\pm$0.031 & 0.075$\pm$0.013 & 0.064$\pm$0.012 \\
      & ours     & \textbf{0.183$\pm$0.036} & \textbf{0.027$\pm$0.004} & \textbf{0.048$\pm$0.010} & \textbf{0.057$\pm$0.006} \\
    \bottomrule
  \end{tabular}}
\end{table}

\begin{figure*}[htp]
    \centering
    \begin{subfigure}{0.33\textwidth}
        \centering
        \includegraphics[width=\linewidth]{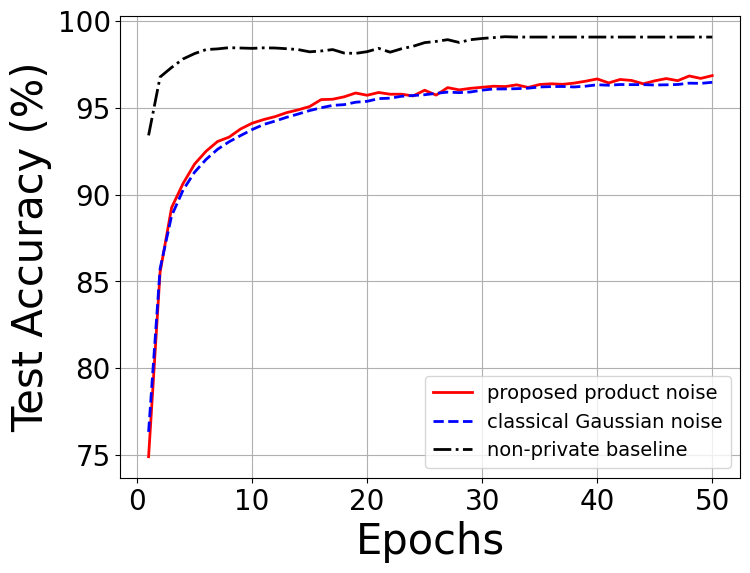}
        \Description{Graph showing test accuracy results on the MNIST dataset using DPSGD.}
        \caption{\centering Test Accuracy vs Epochs}
        \label{fig:MNIST_test_accuracy}
    \end{subfigure}
    \hfill
    \begin{subfigure}{0.33\textwidth}
        \centering
        \includegraphics[width=\linewidth]{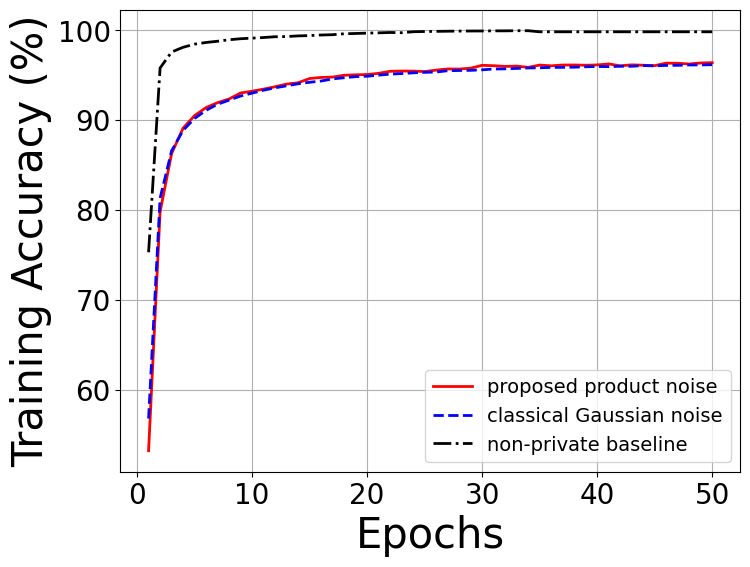}
        \Description{Graph showing Training accuracy results on the MNIST dataset using DPSGD.}
        \caption{\centering Training Accuracy vs Epochs}
        \label{fig:MNIST_Training_accuracy}
    \end{subfigure}
    \hfill
    \begin{subfigure}{0.33\textwidth}
        \centering
        \includegraphics[width=\linewidth]{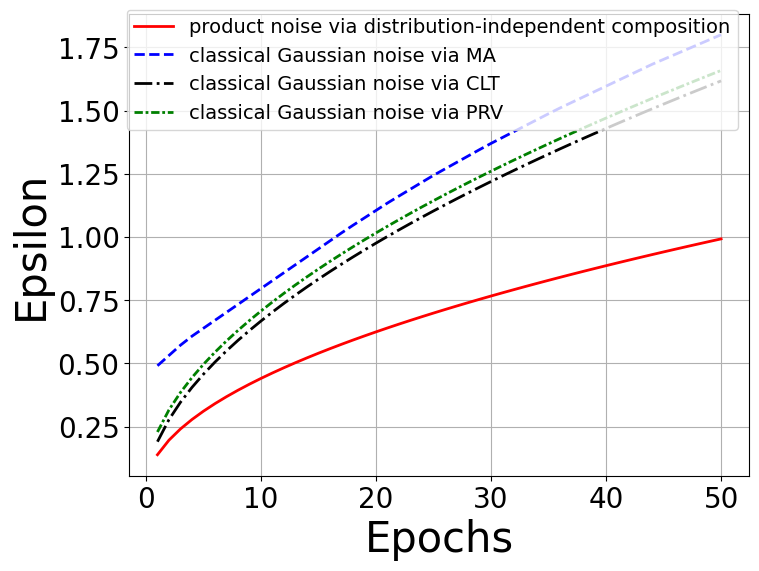}
        \Description{Graph comparing epsilon values on the MNIST dataset under DPSGD.}
        \caption{\centering Epsilon vs Epochs}
        \label{fig:MNIST_epsilon}
    \end{subfigure}
    \caption{DPSGD results on the MNIST dataset: Test accuracy, training accuracy, and privacy parameter ($\epsilon$) over epochs.}
    \label{fig:MNIST_dpsgd}
\end{figure*}

\begin{figure*}[htp]
    \centering
    \begin{subfigure}{0.33\textwidth}  
        \centering
        \includegraphics[width=\linewidth]{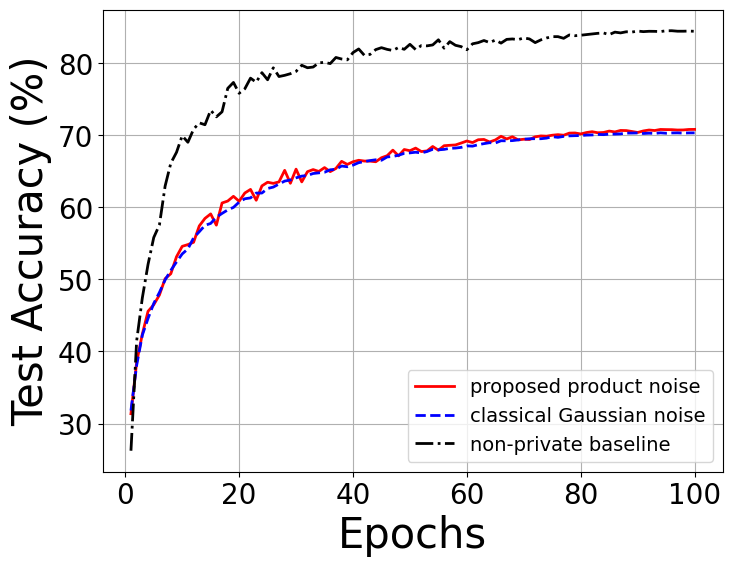}
        \Description{Graph showing test accuracy results on the CIFAR-10 dataset using DPSGD.}
        \caption{Test Accuracy vs Epochs}
        \label{fig:CIFAR-10_test_accuracy}
    \end{subfigure}
    \hfill
    \begin{subfigure}{0.33\textwidth}
        \centering
        \includegraphics[width=\linewidth]{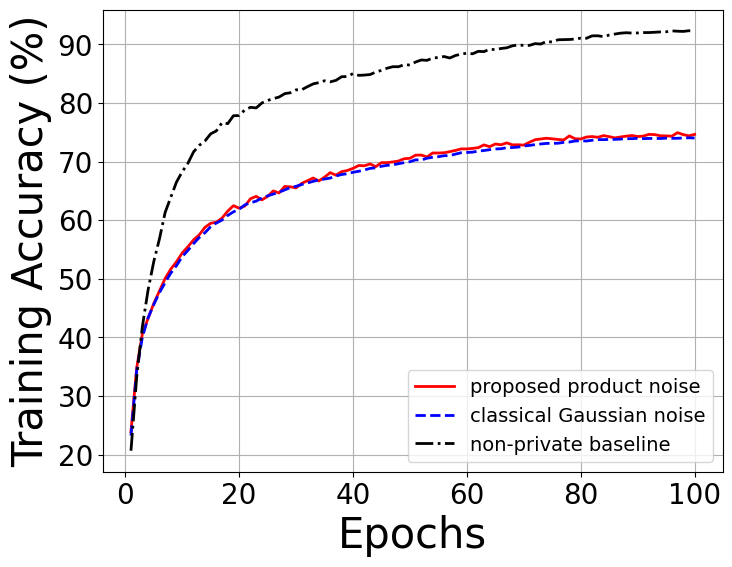}
        \caption{Training Accuracy vs Epochs}
        \Description{Training accuracy curve of the CIFAR-10 dataset with DPSGD.}
        \label{fig:CIFAR-10_Training_accuracy}
    \end{subfigure}
        \hfill
      \begin{subfigure}{0.33\textwidth}
        \centering
        \includegraphics[width=\linewidth]{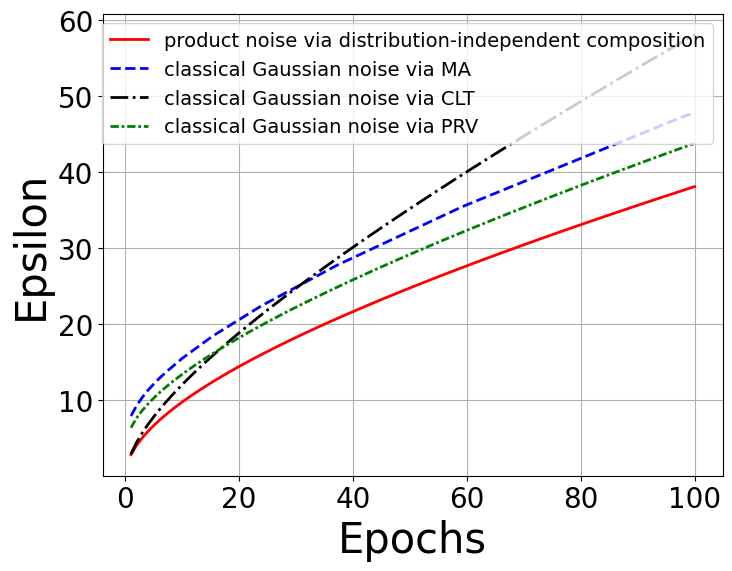}
        \Description{Graph comparing epsilon values on the CIFAR-10 dataset under DPSGD.}
        \caption{Epsilon vs Epochs}
        \label{fig:CIFAR-10_epsilon}
    \end{subfigure}
    \caption{ DPSGD results on the CIFAR-10 dataset: Test accuracy,  training accuracy, and privacy parameter ($\epsilon$) over epochs.}
    \label{fig:CIFAR-10_dpsgd}
\end{figure*}

In Table~\ref{tab:l2_error_obj_LR}, we also  present the $l_2$ error between the private and non-private models for this case study. The product noise-based Objective Perturbation consistently achieves significantly lower $\ell_2$ error than both the classic and analytical Gaussian noises across all privacy parameters. Again, our method effectively mitigates   learning model  distortion, ensuring that the obtained ERM models stay  close to the non-privacy ones.

In Table~\ref{tab:fpr_obj_lr}, we present the FPR obtained by various models given different $\epsilon$. Again, our product noise achieves the lowest FPRs among all mechanisms. These results confirm that product noise maintains stable decision boundaries, reduces misclassifications, and at the same time, preserves strong DP guarantees.
Combining Figure~\ref{fig:objective_lr} and  Table~\ref{tab:fpr_obj_lr}, we observe that the increased accuracy of our method is accompanied by lower FPR, suggesting that the observed utility gains in certain experiments (e.g., Figure~\ref{fig:objective_lr} (c)) can be attributed to improved generalization.

\begin{remark}
    Notably, although differential privacy is traditionally perceived to incur performance degradation, our experiments reveal a contrasting observation, i.e.,  in specific datasets (e.g., Real-sim in Figure~\ref{fig:objective_lr} (c) and  Synthetic-H in Figure~\ref{fig:objective_svm} (b)), the product noise-based private models avoid accuracy loss and even outperform the non-private baselines in test accuracy. 
    This occurs because the non-private baseline may overfit the training data, particularly when the number of samples is of the same order as the data dimension (e.g., the Synthetic-H, Real-sim, and RCV1 datasets).  The addition of DP noise through output or objective perturbation acts as a form of regularization, which suppresses overfitting and thereby improves generalization performance on unseen test data. In general,  when the number of features (dimensions) is comparable to or exceeds the number of training samples, the model has sufficient flexibility to “memorize” the training data, leading to overfitting. In such cases, DP noise can act as an effective regularizer to mitigate overfitting. This phenomenon has also been observed in prior works, e.g., ~\cite{bassily2014private,dwork2015generalization,iyengar2019towards}. 
\end{remark}

The experiments   on Huber SVM  using objective perturbation are similar to the LR tasks, i.e., product noise-based AMP can achieve higher test accuracy and utility stability. The detailed plots and tables  are shown in  Appendix~\ref{app:svm-exp-objective} (see Figure~\ref{fig:objective_svm}, Table~\ref{tab:l2_error_obj_SVM} and~\ref{tab:fpr_obj_svm}). 

\subsection{Case Study \MakeUppercase{\romannumeral 3}: Gradient Perturbation}\label{sec:case-study-gradient-perturbation}
In this case study, we first corroborate that, with less privacy leakage (measured in terms of cumulative $\epsilon$), our proposed product-noise-based-DPSGD can achieve   utility that is comparable with classic DPSGD~\cite{abadi2016deep}. Then, we show that we can provide the same level of robustness against membership inference attacks~\cite{wei2023dpmlbench}. 

\subsubsection{Comparable Utility, Yet Higher Privacy} \label{sec: acc_dpsgd}
First, we compare our product noise method with the classic Gaussian noise under identical training settings. More specifically, we aim to compare the privacy guarantees offered by each method when achieving comparable utility. We compute the cumulative privacy loss using the distribution-independent composition (Theorem~\ref{thm: composition_theorem_1}), and compare the results against those obtained using the MA~\cite{abadi2016deep}, PRV~\cite{gopi2021numerical}, and CLT~\cite{bu2020deep} composition approaches.

The experiments are conducted on 5 datasets: Adult,  MovieLens, IMDb, MNIST, and CIFAR-10. 
Detailed setups for various datasets, including batch size, learning rate, gradient clipping threshold, noise multiplier (only used by classic DPSGD), initial $\epsilon$, and tuning parameter $k$ (initial $\epsilon$ and $k$ only used by our product noise-based DPSGD)  are summarized in Table~\ref{tab:dpsgd_parameters} (Appendix~\ref{app:non-convex-setup}). Each experiment is independently repeated 5 times.

\noindent \textbf{MNIST.}
This dataset~\cite{lecun1998gradient} contains $60,000$ training images and $10,000$ test images. We construct a convolutional neural network with the same architecture as in \cite{bu2020deep}. To ensure that different methods achieve comparable utility, we select the key training parameters as follows: the learning rate $\eta_t$ is set to 0.15. For the Gaussian noise-based DPSGD, we set the noise multiplier $\sigma = 1.3$, and the gradient clipping norm is fixed at $C = 1.0$. For our product noise-based DPSGD, we set the initial privacy parameter $\epsilon=0.3$ and the tuning parameter $k=40,000$.

As shown in Figure~\ref{fig:MNIST_dpsgd} (a) and (b), our product noise-based DPSGD and the Gaussian noise-based DPSGD achieve comparable utility. However, our product noise-based DPSGD   offers significantly stronger privacy preservation. As shown in Figure~\ref{fig:MNIST_dpsgd} (c), for 50 epochs, our method achieves $(0.99, 9.87 \times 10^{-6})$-DP (evaluated using privacy amplification followed by distribution-independent composition). This is notably smaller than the privacy guarantees obtained by the Gaussian noise-based DPSGD under MA ($(1.80, 10^{-5})$-DP), PRV ($(1.66, 10^{-5})$-DP), and CLT ($(1.62, 10^{-5})$-DP) composition approaches. Furthermore, the smaller $\delta$ value indicates a lower failure probability for the DP guarantee. These results support our argument in Observation~\ref{observation-stronger-privacy} that our method can achieve tighter privacy guarantees under comparable utility.

\noindent \textbf{CIFAR-10. }
This dataset \cite{cifar10} consists of images belonging to  10 classes, with $50,000$ training and $10,000$ test examples. We also construct a convolutional neural network with the same architecture  as in \cite{bu2020deep}. To ensure that different methods achieve comparable utility, we select the key training parameters as follows: the learning rate $\eta_t$ is set to 0.25. For the Gaussian noise-based DPSGD, we set the noise multiplier $\sigma = 0.50$, and the gradient clipping norm is fixed at $C = 1.5$. For our product noise-based DPSGD, we set the initial privacy parameter $\epsilon=1.8$ and the tuning parameter $k=300,000$.

As shown in Figure~\ref{fig:CIFAR-10_dpsgd} (a) and (b), our product noise-based DPSGD exhibits utility performance comparable to that of the Gaussian noise-based DPSGD throughout the training process, i.e., both methods also show a similar trend regarding training and testing accuracy. However, under a comparable utility, our product noise method demonstrates a notable advantage regarding privacy preservation. As illustrated in Figure~\ref{fig:CIFAR-10_dpsgd} (c), by the 100th epoch, our method achieves a privacy guarantee of $(38.11,9.94 \times 10^{-6})$-DP, which is substantially better than the guarantees obtained by the classical DPSGD under MA ($(47.83,10^{-5})$-DP), PRV ($(43.80,10^{-5})$-DP), and CLT ($(58.09,10^{-5})$-DP) composition approaches. This result further supports the claim made in Observation~\ref{observation-stronger-privacy}. 

\begin{remark}
    Our theoretical analysis (e.g., Theorem~\ref{thm:mian-thm} and Corollary~\ref{corollary:mechanism_guidance} and ~\ref{corollary:Asymptotic_analysis_noise} ) focuses on the single-release setting, where the noisy result is only released once (as in Case Studies \MakeUppercase{\romannumeral 1} and \MakeUppercase{\romannumeral 2}). In terms of iterative settings (as in Case Study \MakeUppercase{\romannumeral 3}) where the noisy results are repeatedly released, the final privacy cost depends on the considered privacy amplification and composition methods.  
    
    In our current DPSGD experiments, the privacy loss of the baseline methods are accounted using MA/PRV/CLT approaches, which are unfortunately infeasible for our product noise. Since we have not yet developed a composition approach tailored to our noise, we use a distribution-independent, albeit loose, composition approach, which yields a pessimistic cumulative $\epsilon$ in iterative settings. This accounts for the modest privacy improvement observed in some experiments. 
\end{remark}

The experiment results on other datasets are shown in    Appendix~\ref{app:dpsgd-more-experiments}, i.e., cf. Figure~\ref{fig:adult_dpsgd} (for adult), Figure~\ref{fig:IMDb_dpsgd} (for IMDB) and Figure~\ref{fig:MovieLens_dpsgd} (for MovieLens). Similarly, DPSGD using our product noise also yields smaller privacy parameters under  comparable utility.  

\subsubsection{Robustness against Membership Inference Attacks (MIAs)} 
Now, we evaluate the robustness against  MIAs in a black-box setting by comparing models trained with our product noise against those trained with the classic Gaussian noise. Following the experiment setups in a prior work~\cite{wei2023dpmlbench}, we use the tailored area under the ROC curve (AUC) as the robustness metric. In particular,  \cite{wei2023dpmlbench} defines  $\widetilde{\mathrm{AUC}}\triangleq\max\{\mathrm{AUC},0.5\}$, and the closer it is to 0.5, the more robust the model is against MIAs. Still, for the 5 datasets considered in Table~\ref{tab:method_overview}, we train each corresponding model  for 50 epochs, and  repeat 5 times. All other settings follow Subsection~\ref{sec: acc_dpsgd}.
\begin{table}[htp]
  \centering
  \small
  \setlength{\tabcolsep}{6pt}
  \renewcommand{\arraystretch}{1.0}
  \caption{ Tailored AUC ($\widetilde{\mathrm{AUC}}$) in black-box MIAs~\cite{wei2023dpmlbench}.}
  \label{tab:noise_results_all_transposed}
  \begin{tabular}{|c|c|c|}
    \hline
    \textbf{Dataset} & \textbf{Classic Gaussian noise} & \textbf{Product noise} \\
    \hline
    Adult     & 0.51$\pm$0.004 & \textbf{0.50}$\pm$0.004 \\
    \hline
    IMDb      & 0.51$\pm$0.005 & \textbf{0.50}$\pm$\textbf{0.000} \\
    \hline
    MovieLens & 0.50$\pm$0.000 & 0.50$\pm$0.000 \\
    \hline
    MNIST     & 0.50$\pm$0.000  & 0.50$\pm$0.000  \\
    \hline
    CIFAR-10  & 0.51$\pm$0.005 & 0.51$\pm$\textbf{0.004} \\
    \hline
  \end{tabular}
  \label{tab:mia}
\end{table}

The results of tailored AUC ($\widetilde{\mathrm{AUC}}$) are summarized in Table~\ref{tab:mia}, which shows that the product noise achieves a level of resistance to MIAs comparable to that of the classic Gaussian noise; the obtained $\widetilde{\mathrm{AUC}}$ are  all around $0.5$, which indicates that  the black-box MIAs perform no better than random guessing the presence of a single record in the training dataset. The experiment results verify that when considering real-world privacy threats, even with a smaller noise scale, our product noise does not weaken privacy compared to the Gaussian baseline in DPSGD. 

\section{Conclusion}\label{sec:Conclusion}

In this work, we propose a novel spherically symmetric noise to reduce utility loss and improve the accuracy of differentially private query results. The new noise leads to a new PLRV that can be presented as the product between two random variables, providing tighter measure concentration analysis and leading to small tail bound probability. In contrast, existing classic Gaussian mechanism and its variants usually characterize their  PLRVs using   Gaussian distributions. Simulation results show that when perturbing    $f(\bm{x})\in\R^M$ in high dimension, our proposed noise has expected squared magnitude far smaller than that required by the Gaussian mechanism and its variant.  
To validate the effectiveness of the developed noise, we apply it to privacy-preserving convex and non-convex  ERM. Experiments on multiple datasets demonstrate substantial utility gains in diverse convex ERM models and notable privacy improvements in non-convex ERM models.

\begin{acks}
The work of Shuainan Liu, Tianxi Ji, and Zhongshuo Fang was supported in part by the U.S. Department of Agriculture under Grant AP25VSSP0000C026.
\end{acks}


\appendix

\section{APPENDIX}\label{sec:APPENDIX}

\subsection{Additional Preliminaries on Probability and Special Function}\label{app:additional-app}
\begin{lemma} 
[Transformation of Random Variables{~\cite[p. 51]{casella2002statistical}}] Let $X$ have $\textit{PDF}$ $f_{X}(x)$ and let $Y=g(X)$, where $g$ is a monotone function. $\mathcal{X}=\left\{x: f_{X}(x)>0 \right\}$ and $\mathcal{Y}=\left\{y: y=g(x)\ \text{for some}\ x \in \mathcal{X} \right\}$. Suppose that $f_{X}(x)$ is continuous on $\mathcal{X}$ and that $g^{-1}(y)$ has a continuous derivative on $\mathcal{Y}$. Then the probability density function of $Y$ is given by
\begin{equation*}
\begin{aligned}
    f_Y(y) = 
\begin{cases} 
f_X(g^{-1}(y)) \left| \frac{d}{dy} g^{-1}(y) \right|  & y \in \mathcal{Y} \\
0  & \text{otherwise}
\end{cases}.
\end{aligned}
\end{equation*}
\label{thm:variable-trans}
\end{lemma}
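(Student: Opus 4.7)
The plan is to prove this classical change-of-variables formula by computing the CDF of $Y$ via the CDF of $X$, differentiating, and unifying the two monotonicity cases with the absolute value. I would first fix $y\in\mathcal{Y}$ and observe that, since $g$ is a monotone bijection between $\mathcal{X}$ and $\mathcal{Y}$, the event $\{Y\le y\}=\{g(X)\le y\}$ can be re-expressed purely in terms of $X$ via $g^{-1}$. This reduces the problem to differentiating a composition of the known CDF $F_X$ with $g^{-1}$.

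Next, I would split into two cases. In the increasing case, $\{g(X)\le y\}=\{X\le g^{-1}(y)\}$, so $F_Y(y)=F_X(g^{-1}(y))$; differentiating via the chain rule (valid because $f_X$ is continuous on $\mathcal{X}$ and $g^{-1}$ is continuously differentiable on $\mathcal{Y}$) gives $f_Y(y)=f_X(g^{-1}(y))\,\tfrac{d}{dy}g^{-1}(y)$, with the derivative non-negative. In the decreasing case, $\{g(X)\le y\}=\{X\ge g^{-1}(y)\}$, so $F_Y(y)=1-F_X(g^{-1}(y))$ and differentiating yields $f_Y(y)=-f_X(g^{-1}(y))\,\tfrac{d}{dy}g^{-1}(y)$, with the derivative non-positive. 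In both cases the product of $f_X(g^{-1}(y))$ and the sign-adjusted derivative is non-negative, so both subcases collapse into the single expression $f_Y(y)=f_X(g^{-1}(y))\,\bigl|\tfrac{d}{dy}g^{-1}(y)\bigr|$.

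Finally, I would handle the support carefully: for $y\notin\mathcal{Y}$ the event $\{Y\le y\}$ has locally constant probability, so $f_Y(y)=0$ there, matching the piecewise definition in the statement. A brief remark should confirm that the density so produced integrates to one, which follows from a substitution $u=g^{-1}(y)$ on $\mathcal{Y}$, reducing the integral of $f_Y$ to $\int_{\mathcal{X}} f_X(u)\,du=1$.

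The main obstacle is not conceptual but expository: one must be careful to invoke the regularity hypotheses (continuity of $f_X$ on $\mathcal{X}$ and continuous differentiability of $g^{-1}$ on $\mathcal{Y}$) precisely where differentiation of the CDF is performed, since without them the chain-rule step cannot be applied pointwise. Apart from this, the result is a direct consequence of the fundamental theorem of calculus applied to the two monotone branches.
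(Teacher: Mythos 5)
Your proposal is correct: it is the standard CDF-differentiation argument (express $F_Y$ as $F_X\circ g^{-1}$ or $1-F_X\circ g^{-1}$ according to whether $g$ is increasing or decreasing, differentiate using the stated regularity hypotheses, and absorb the sign into the absolute value). The paper does not prove this lemma itself but simply cites it to Casella and Berger, and your argument is precisely the proof given in that reference, so there is nothing to reconcile.
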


\begin{lemma}[Properties of Gamma functions]
\label{lemma:gamma-property}
\begin{align} 
     &  \sqrt{n+\frac{1}{4}}< \frac{\Gamma \left ( n+ 1 \right )}{\Gamma \left ( n+\frac{1}{2} \right )} <\sqrt{n+\frac{1}{2}}, \quad {~\cite[p. 425]{mortici2010new} \label{eq:ratio-gamma-1}}\\
    &\Gamma(2z)= \pi^{-\frac{1}{2}}2^{2z-1}\Gamma(z)\Gamma(z+\frac{1}{2}), \forall 2z \ne 0, -1, -2, \cdots,  \quad {~\cite[p. 138]{abramowitz1948handbook}\label{eq:duplication-formula}}\\
    &  \Gamma(x)<\frac{x^{x-\frac{1}{2}}}{e^{x-1}}, \quad x \in (1, \infty), \quad {~\cite[p. 258]{joshi1996inequalities}} \label{eq: gamma-1}
\end{align}
\end{lemma}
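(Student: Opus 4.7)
The statement bundles three classical facts: a two-sided bound on the ratio $\Gamma(n+1)/\Gamma(n+1/2)$, the Legendre duplication formula, and a Stirling-type upper bound on $\Gamma(x)$. The plan is to handle each in turn using standard tools (log-convexity, the Beta integral, and the Binet-type digamma representation), so I will outline one coherent strategy per part.

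For the ratio bound, my plan is to exploit the strict log-convexity of $\Gamma$ (a consequence of Bohr--Mollerup). Applied to the triple $\{n,\,n+\tfrac{1}{2},\,n+1\}$, log-convexity gives the slope inequality $\Gamma(n+1)/\Gamma(n+\tfrac{1}{2}) > \Gamma(n+\tfrac{1}{2})/\Gamma(n)$. Multiplying these two ratios yields the exact identity $[\Gamma(n+1)/\Gamma(n+\tfrac{1}{2})]\cdot[\Gamma(n+\tfrac{1}{2})/\Gamma(n)] = n$, which combined with the previous inequality immediately delivers the weak lower bound $\sqrt{n}$; applying the same slope argument to the triple $\{n+\tfrac{1}{2},\,n+1,\,n+\tfrac{3}{2}\}$ yields the clean upper bound $\sqrt{n+\tfrac{1}{2}}$. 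The sharper lower bound $\sqrt{n+\tfrac{1}{4}}$ is the Kershaw/Mortici refinement; I would obtain it by defining $\phi(n)=\Gamma(n+1)/[\Gamma(n+\tfrac{1}{2})\sqrt{n+\tfrac{1}{4}}]$, checking asymptotics $\phi(n)\to 1^+$ by Stirling, and verifying the ratio $\phi(n+1)/\phi(n)=(n+1)/\sqrt{(n+\tfrac{1}{2})(n+\tfrac{5}{4})}>1$ by a direct algebraic manipulation (squaring both sides and collecting terms gives a non-negative polynomial in $n$).

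For the duplication formula, my plan is the Beta-function route. Writing $B(z,z)=\Gamma(z)^2/\Gamma(2z)=\int_0^1 t^{z-1}(1-t)^{z-1}\,dt$ and substituting $t=(1-u)/2$ symmetrizes the integrand into $2^{1-2z}\int_{-1}^{1}(1-u^2)^{z-1}\,du$. Then the change of variable $v=u^2$ reduces this to $2^{1-2z}\int_0^1 v^{-1/2}(1-v)^{z-1}\,dv = 2^{1-2z}B(\tfrac{1}{2},z) = 2^{1-2z}\sqrt{\pi}\,\Gamma(z)/\Gamma(z+\tfrac{1}{2})$. Equating the two expressions for $B(z,z)$ and rearranging yields $\Gamma(2z)=\pi^{-1/2}2^{2z-1}\Gamma(z)\Gamma(z+\tfrac{1}{2})$.

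For the Stirling-type upper bound, define $G(x)=\log\Gamma(x)-(x-\tfrac{1}{2})\log x + (x-1)$; the goal is to show $G(x)<0$ for $x>1$. I would verify $G(1)=0-0+0=0$, and then show $G'(x)<0$ on $(1,\infty)$. Differentiating gives $G'(x)=\psi(x)-\log x - \tfrac{x-1/2}{x}\cdot 0 \ldots$ (keeping the correct terms) $=\psi(x)-\log x+\tfrac{1}{2x}$. Invoking Binet's integral $\psi(x)=\log x-\tfrac{1}{2x}-2\int_0^\infty t/[(t^2+x^2)(e^{2\pi t}-1)]\,dt$ shows $G'(x)=-2\int_0^\infty t/[(t^2+x^2)(e^{2\pi t}-1)]\,dt<0$, so $G$ is strictly decreasing on $(1,\infty)$ starting from $0$, proving the inequality. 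The main obstacle I anticipate is the lower bound in Part~(1): naive log-convexity only delivers $\sqrt{n}$, so the refinement to $\sqrt{n+\tfrac{1}{4}}$ requires either the Kershaw-style induction with an explicit algebraic verification of the one-step ratio, or a careful asymptotic expansion together with a monotonicity check — both are technical but tractable.
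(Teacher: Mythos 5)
The paper does not prove this lemma at all: it is stated as a collection of known facts with citations to Mortici, Abramowitz--Stegun, and Joshi--Bissu, so there is no internal proof to compare against. Your derivations of the duplication formula (via $B(z,z)=2^{1-2z}B(\tfrac12,z)$) and of the Stirling-type bound (via $G(1)=0$ and $G'(x)=\psi(x)-\log x+\tfrac{1}{2x}<0$ from Binet's second formula) are both correct and standard. The weak lower bound $\sqrt{n}$ and the upper bound $\sqrt{n+\tfrac12}$ via log-convexity are also fine.

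The refined lower bound $\sqrt{n+\tfrac14}$ is where your argument, as written, fails. First, the one-step ratio is miscomputed: with $\phi(n)=\Gamma(n+1)/[\Gamma(n+\tfrac12)\sqrt{n+\tfrac14}]$ one gets
\begin{equation*}
\frac{\phi(n+1)}{\phi(n)}=\frac{(n+1)\sqrt{n+\tfrac14}}{(n+\tfrac12)\sqrt{n+\tfrac54}},
\end{equation*}
not $(n+1)/\sqrt{(n+\tfrac12)(n+\tfrac54)}$. Second, and more seriously, the logical direction is reversed: you claim the ratio exceeds $1$ \emph{and} that $\phi(n)\to 1^{+}$, but an increasing sequence cannot converge to its limit from above; an increasing $\phi$ with limit $1$ would force $\phi(n)<1$, i.e.\ $\Gamma(n+1)/\Gamma(n+\tfrac12)<\sqrt{n+\tfrac14}$ --- the opposite of the claim. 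The correct version of your own strategy works: squaring the true ratio reduces the comparison to $(n+1)^2(n+\tfrac14)$ versus $(n+\tfrac12)^2(n+\tfrac54)$, and both expand to $n^3+\tfrac94 n^2+\tfrac32 n$ plus constants $\tfrac14$ and $\tfrac{5}{16}$ respectively, so $\phi(n+1)/\phi(n)<1$. Hence $\phi$ is strictly decreasing with limit $1$ (by Stirling), which gives $\phi(n)>1$ and the desired lower bound. So the method is sound, but the sketch as stated proves the wrong inequality and needs this sign correction before it can stand.
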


\begin{lemma}[Properties of confluent hypergeometric function] 
    \begin{equation*} \label{eq: hypergeometric-ineq}
 {}_1F_1(a; c; x)< 1+\frac{2xa}{c}, \quad c>0, a>0, 0<x<1, \qquad {~\cite[p. 258]{johnson1995continuous}}
\end{equation*}
      \label{thm: hypergeometric-ineq}
\end{lemma}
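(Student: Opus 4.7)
The plan is to prove this bound by expanding ${}_1F_1(a;c;x)$ in its defining power series from Definition~\ref{def:hypergeometric-function} and reducing the claim to an estimate on the tail starting from the quadratic term. Writing
\[
{}_1F_1(a;c;x) \;=\; 1 \;+\; \frac{a}{c}\,x \;+\; \sum_{k=2}^{\infty} \frac{(a)_k}{(c)_k}\,\frac{x^k}{k!},
\]
the inequality $1+\frac{2ax}{c}$ will follow once I show that the tail sum starting at $k=2$ is strictly smaller than $\frac{ax}{c}$, because the $k=1$ term already contributes exactly $\frac{ax}{c}$ and combining these yields the factor of $2$ in the claimed bound.

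The main step is to factor $\frac{ax}{c}$ out of every tail term by using the identity $(a)_k = a\,(a+1)_{k-1}$ and $(c)_k = c\,(c+1)_{k-1}$, giving
\[
\sum_{k=2}^{\infty}\frac{(a)_k}{(c)_k}\,\frac{x^k}{k!} \;=\; \frac{ax}{c}\sum_{k=2}^{\infty}\frac{(a+1)_{k-1}}{(c+1)_{k-1}}\,\frac{x^{k-1}}{k!}.
\]
I would then control the rising-factorial ratio $\frac{(a+1)_{k-1}}{(c+1)_{k-1}}$ under the cited parameter regime (where this ratio stays bounded by a fixed constant, typically $\le 1$ when $a\le c$), and compare the resulting $x$-series against $\sum_{k\ge 2} \frac{x^{k-1}}{k!} = \frac{e^x-1-x}{x}$. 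For $0<x<1$ this quantity is strictly less than $\frac{e-2}{1}<1$, so the tail lies strictly below $\frac{ax}{c}$ as required.

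The hard part will be managing the case where $a$ is comparable to or exceeds $c$, since then $\frac{(a+1)_{k-1}}{(c+1)_{k-1}}$ is no longer $\le 1$ and the naive term-by-term comparison breaks. In that situation I would fall back on the Euler integral representation ${}_1F_1(a;c;x) = \frac{\Gamma(c)}{\Gamma(a)\Gamma(c-a)} \int_0^1 e^{xt}\,t^{a-1}(1-t)^{c-a-1}\,dt$ (valid for $c>a>0$) and use the elementary bound $e^{xt} \le 1 + \frac{(e-1)}{1}xt \le 1+2xt$ for $xt\in(0,1)$; integrating term-by-term against the Beta kernel collapses the integral to $1+\frac{2ax}{c}$ by the Beta-function identity $\int_0^1 t^{a}(1-t)^{c-a-1}\,dt = \frac{\Gamma(a+1)\Gamma(c-a)}{\Gamma(c+1)}$. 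This integral-representation route bypasses the need to bound the rising-factorial ratio directly, at the cost of assuming $c>a$, which is precisely the regime in which the claimed inequality is typically invoked.
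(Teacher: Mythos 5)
The paper never proves this lemma; it is imported verbatim from the cited reference \cite[p. 258]{johnson1995continuous}, so there is no internal proof to compare against. Judged on its own terms, your proposal establishes the bound only in the regime $a\le c$: the series branch needs $\frac{(a+1)_{k-1}}{(c+1)_{k-1}}\le 1$, which holds precisely when $a\le c$, and your fallback via the Euler integral representation is, as you yourself note, ``valid for $c>a>0$'' --- i.e., it covers exactly the same regime, not the case ``where $a$ is comparable to or exceeds $c$'' that you introduced it to handle. The case $a>c$ is therefore left entirely unproved in both branches.

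That gap cannot be closed, because the inequality as stated (with only $a>0$, $c>0$, $0<x<1$) is false when $a>c$. For instance ${}_1F_1(2;1;x)=(1+x)e^{x}$, and at $x=0.9$ this is $\approx 4.67 > 1+\tfrac{2\cdot 0.9\cdot 2}{1}=4.6$; more dramatically, for $a=100$, $c=1$, $x=0.5$ the $k=2$ term of the series alone ($\approx 631$) already exceeds the claimed bound of $101$. A correct version requires an additional hypothesis, e.g.\ $a\le c$, or a smallness condition coupling $x$ to $a/c$. This matters beyond your proof: the paper invokes the lemma with $a=\frac{q+1}{2}\gg c=\frac12$ (and $x=\lambda^2/2$ very small), which is exactly the regime your argument does not reach, so even a fully rigorous write-up of your two branches would not justify the application in Proposition~\ref{prop: approximate-delta}.
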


\subsection{Additional preliminaries on DP}\label{app:dp} 
\begin{theorem}[Privacy Amplification via Poisson Sampling~\cite{balle2018privacy}]
\label{thm:Poisson-Subsampling-ampilification}
Suppose a randomized mechanism $\mathcal{M}$ satisfies $(\epsilon,\delta)$-DP. If the dataset is pre-processed using Poisson subsampling with sampling probability $p$, then the subsampled mechanism satisfies: 
\begin{equation*}
    \begin{aligned}
        \left( \log \left(1 + p(e^{\epsilon} - 1)\right),\ p\delta \right)-DP.
    \end{aligned}
\end{equation*}
\end{theorem}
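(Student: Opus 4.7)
The plan is to prove this standard amplification result by coupling the Poisson subsamples drawn from two neighboring datasets and exploiting the independence of the inclusion indicators. Without loss of generality, let $D$ and $D'=D\cup\{x\}$ be neighbors (the removal direction is symmetric). Writing $T$ for a Poisson$(p)$ subsample of $D$, the key observation is that the Poisson$(p)$ subsample of $D'$ is equal in distribution to $T$ with probability $1-p$ (when the inclusion indicator of $x$ is $0$) and to $T\cup\{x\}$ with probability $p$ (when the indicator is $1$), and these two cases can be coupled so that $T$ itself is the common subsample. This coupling/decomposition is the engine of the entire argument.

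Given this, for any measurable output event $E$, I would decompose
\begin{equation*}
\Pr[\mathcal{M}(\mathrm{Subsample}_p(D'))\in E] = (1-p)\,\Pr[\mathcal{M}(T)\in E] + p\,\Pr[\mathcal{M}(T\cup\{x\})\in E],
\end{equation*}
and note that $T$ and $T\cup\{x\}$ are themselves neighboring datasets. Applying the hypothesized $(\epsilon,\delta)$-DP guarantee of $\mathcal{M}$ therefore yields $\Pr[\mathcal{M}(T\cup\{x\})\in E]\leq e^{\epsilon}\Pr[\mathcal{M}(T)\in E]+\delta$. Substituting this bound into the decomposition collapses the right-hand side to $(1+p(e^{\epsilon}-1))\,\Pr[\mathcal{M}(\mathrm{Subsample}_p(D))\in E]+p\delta$, and identifying $e^{\epsilon'}=1+p(e^{\epsilon}-1)$ and $\delta'=p\delta$ gives exactly the claimed parameters after taking logarithms.

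The main obstacle is handling the reverse direction, namely bounding $\Pr[\mathcal{M}(\mathrm{Subsample}_p(D))\in E]$ in terms of $\Pr[\mathcal{M}(\mathrm{Subsample}_p(D'))\in E]$, because the two-term decomposition above is not symmetric in $D$ and $D'$. I would attack this by applying the reverse DP inequality $\Pr[\mathcal{M}(T)\in E]\leq e^{\epsilon}\Pr[\mathcal{M}(T\cup\{x\})\in E]+\delta$ and then rewriting $\Pr[\mathcal{M}(T)\in E]=(1-p)\Pr[\mathcal{M}(T)\in E]+p\Pr[\mathcal{M}(T)\in E]$ so that one summand matches the $(1-p)$-weighted term in $\Pr[\mathcal{M}(\mathrm{Subsample}_p(D'))\in E]$ and the other can be replaced using the reverse DP bound. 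Matching coefficients reduces the inequality to an elementary claim of the form: if $a\leq e^{\epsilon}b+\delta$, then $a\leq(1+p(e^{\epsilon}-1))[(1-p)a+pb]+p\delta$, which can be verified by rearranging into $p(1-p)(e^{\epsilon}-1)(b-a)\geq 0$ on the good side and a short case split otherwise. I expect the delicate bookkeeping in this reverse direction, together with ensuring the coupling argument is valid for the slightly different notions of ``neighbor'' (add/remove vs.\ replace), to be the most technical part of the write-up; the coupling itself, once set up correctly, reduces the proof to these arithmetic manipulations.
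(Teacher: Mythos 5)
The paper does not actually prove this statement: Theorem~\ref{thm:Poisson-Subsampling-ampilification} is imported verbatim from Balle et al.~\cite{balle2018privacy} as a preliminary in Appendix~\ref{app:dp}, with no proof given. So there is nothing to compare against except the standard literature argument, and your proposal is exactly that argument, correctly assembled. The conditioning on the inclusion indicator of the differing record $x$ is valid because Poisson sampling makes that indicator independent of the rest of the subsample, so $\mathrm{Subsample}_p(D')$ is indeed the mixture $(1-p)\,T + p\,(T\cup\{x\})$ with $T$ a Poisson subsample of $D$; the only notational care needed is that your displayed decomposition should carry an expectation over the random $T$, i.e.\ $\Pr[\mathcal{M}(\mathrm{Subsample}_p(D'))\in E]=\mathbb{E}_T\bigl[(1-p)\Pr[\mathcal{M}(T)\in E]+p\Pr[\mathcal{M}(T\cup\{x\})\in E]\bigr]$, after which the pointwise DP bounds pass through by linearity. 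Your forward direction is immediate, and your reverse-direction reduction is also sound: writing $m=(1-p)a+pb$, the target $a\le(1+p(e^{\epsilon}-1))m+p\delta$ is equivalent to $a-b\le(e^{\epsilon}-1)m+\delta$, which holds trivially when $a\le b$ and follows from $a\le e^{\epsilon}b+\delta$ together with $b\le m$ when $a>b$ — so the "short case split" you anticipate does close. The one caveat worth flagging in a write-up is the neighboring convention: the stated parameters are the add/remove version of the result, consistent with your coupling, whereas the paper's Definition~\ref{def_dp} is vague on add/remove versus replace; under a replace relation the coupling and the resulting constants change.
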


\begin{theorem} [Distribution-Independent Composition~\cite{he2021tighter}] \label{thm: composition_theorem_1}
Suppose an iterative algorithm $\mathcal{A}$ has $T$ steps: $\{W_i (s)\}_{i=0}^{\top}$, where $W_i$ is the learned hypothesis after the $i$-th iteration. Suppose all the iterators 
are ($\epsilon, \delta$)-differentially private. Then, the algorithm $\mathcal{A}$ is ($\epsilon^{\prime}, \delta^{\prime}$)-differentially private that
\small{
\begin{equation*}
        \begin{aligned}
            & \epsilon^{\prime} = T \frac{(e^{\epsilon}-1)\epsilon}{e^{\epsilon} +1 } + \sqrt{2 \log \left( \frac{1}{\tilde{\delta}}  \right)  T \epsilon^2}, \\
            & \delta^{\prime} = 2 - \left( 1-e^\epsilon \frac{\delta}{1 + e^\epsilon} \right)^{\left \lceil \frac{\epsilon^{\prime}}{\epsilon} \right \rceil } \left ( 1- \frac{\delta}{1 + e^{\epsilon}} \right)^{T - \left \lceil \frac{\epsilon^{\prime}}{\epsilon} \right \rceil } - \left( 1 - \frac{\delta}{1 + e^\epsilon} \right)^{T} + \delta^{\prime \prime},
        \end{aligned}
    \end{equation*}
}\ignorespacesafterend
where $\tilde{\delta}$ is an arbitrary positive real constant, and $\delta^{\prime \prime}$ is defined as 
\small{
 \begin{equation*}
        \begin{aligned}
            \delta^{\prime \prime} = e^{- \frac{\epsilon^{\prime} + T\epsilon}{2}} \left ( \frac{1}{1 + e^\epsilon} \left( \frac{2T\epsilon}{T\epsilon - \epsilon^{\prime}}  \right) \right)^{\top} \left ( \frac{T\epsilon + \epsilon^{\prime}}{T\epsilon-\epsilon^{\prime}} \right)^{-\frac{\epsilon^{\prime} +T\epsilon}{2\epsilon}}.
        \end{aligned}
    \end{equation*}
}\ignorespacesafterend
\end{theorem}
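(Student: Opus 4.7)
The plan is to cast the $T$-fold composition as a martingale-concentration problem on the sum of per-step privacy loss random variables, with a careful bookkeeping of the $\delta$-failure events. First, I would recall that any single-step $(\epsilon,\delta)$-DP mechanism $\mathcal{M}_i$ admits a worst-case dominating pair: there exist PLRVs $Y_i = \PLRV_{\mathcal{M}_i}$ and "good" events $G_i$ such that, conditioned on $G_i$, $Y_i$ is bounded in $[-\epsilon,\epsilon]$, and its conditional expectation satisfies $\mathbb{E}[Y_i \mid G_i] \le \frac{(e^\epsilon-1)\epsilon}{e^\epsilon+1}$, the tight centered-Bernoulli bound from Kairouz--Oh--Viswanath. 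The two-sided $(\epsilon,\delta)$ guarantee pins down $\Pr[G_i^c]$ in an asymmetric way, giving per-step failure weight $\tfrac{\delta}{1+e^\epsilon}$ on one side and $\tfrac{e^\epsilon\delta}{1+e^\epsilon}$ on the other.

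Second, conditional on $G = \bigcap_{i=1}^T G_i$, the total loss $\Upsilon = \sum_{i=1}^T Y_i$ is a sum of bounded martingale differences with per-step range at most $2\epsilon$ and drift at most $\tfrac{(e^\epsilon-1)\epsilon}{e^\epsilon+1}$. Applying Azuma--Hoeffding to the centered sum yields
\begin{equation*}
\Pr\!\left[\Upsilon - \mathbb{E}[\Upsilon] \ge \sqrt{2\log(1/\tilde\delta)\, T\epsilon^2} \,\Big|\, G\right] \le \tilde\delta,
\end{equation*}
which combined with $\mathbb{E}[\Upsilon \mid G]\le T\tfrac{(e^\epsilon-1)\epsilon}{e^\epsilon+1}$ reproduces exactly the stated expression for $\epsilon'$. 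The two additive pieces of $\epsilon'$ are thus identified as the mean and the Hoeffding fluctuation, respectively.

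Third, the failure probability $\delta'$ requires a refined accounting that does not simply union-bound $\Pr[G^c]\le T\delta$. The key observation is that a single step whose $Y_i$ exceeds $\epsilon$ contributes at most an extra $\epsilon$ to the running total, so up to $\lceil \epsilon'/\epsilon\rceil$ such "overshooting" steps can be tolerated before $\Upsilon$ is forced past $\epsilon'$. I would therefore view the number of overshooting steps as a binomial-type variable and use the asymmetric per-step failure weights to obtain a combinatorial term of the form $1 - \bigl(1-\tfrac{e^\epsilon\delta}{1+e^\epsilon}\bigr)^{\lceil\epsilon'/\epsilon\rceil}\bigl(1-\tfrac{\delta}{1+e^\epsilon}\bigr)^{T-\lceil\epsilon'/\epsilon\rceil}$, which matches the first block of the stated $\delta'$. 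The correction $\delta''$ then arises from tightening the Chernoff bound at the tilt $\lambda=1$ on the remaining "non-overshooting" contribution, producing the exponential factor $e^{-(\epsilon'+T\epsilon)/2}$ and the ratio $\bigl(\tfrac{T\epsilon+\epsilon'}{T\epsilon-\epsilon'}\bigr)^{-(\epsilon'+T\epsilon)/(2\epsilon)}$.

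The main obstacle will be the third step: justifying that exactly $\lceil\epsilon'/\epsilon\rceil$ is the correct tolerance count, and deriving the precise form of $\delta''$. This requires carefully splitting $\Pr[\Upsilon\ge\epsilon'] \le \Pr[\Upsilon\ge\epsilon' \mid G]\Pr[G] + \Pr[G^c]$ while using the two-sided coupling of $(\epsilon,\delta)$-DP (so that a change of measure sends $\tfrac{\delta}{1+e^\epsilon}$ to $\tfrac{e^\epsilon\delta}{1+e^\epsilon}$), and then optimizing a Chernoff exponent to recover the parabolic-cylinder-like shape of $\delta''$. The Azuma step and the mean bound are standard; the novelty lies entirely in matching this combinatorial-plus-Chernoff estimate to the closed-form $\delta'$ in the theorem statement.
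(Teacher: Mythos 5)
This theorem is not proved in the paper at all: it is quoted verbatim from the cited reference and used purely as a black-box composition tool for accounting the cumulative privacy loss of the product-noise DPSGD, so there is no in-paper proof to compare your argument against. Judged on its own terms, your proposal correctly identifies the standard skeleton for the $\epsilon'$ half: the Kairouz--Oh--Viswanath style decomposition of each $(\epsilon,\delta)$-DP step into a privacy loss bounded in $[-\epsilon,\epsilon]$ on a good event $G_i$ with asymmetric failure weights $\tfrac{\delta}{1+e^\epsilon}$ and $\tfrac{e^\epsilon\delta}{1+e^\epsilon}$, the per-step drift bound $\mathbb{E}[Y_i \mid G_i]\le\tfrac{(e^\epsilon-1)\epsilon}{e^\epsilon+1}$, and a Hoeffding-type fluctuation of $\sqrt{2\log(1/\tilde\delta)\,T\epsilon^2}$. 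That part is sound, provided you use the Hoeffding constant for variables of range $2\epsilon$ rather than the looser Azuma constant, which would otherwise cost you a factor of $2$ inside the square root.

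The genuine gap is in your third step, and it is not merely an unfinished computation. The claim that ``a single step whose $Y_i$ exceeds $\epsilon$ contributes at most an extra $\epsilon$ to the running total'' is false: on the complement of $G_i$ the privacy loss random variable is unbounded --- that is exactly what the $\delta$ slack licenses --- so no tolerance count of $\lceil\epsilon'/\epsilon\rceil$ overshooting steps can be justified by bounding their contribution to the sum. The appearance of $\lceil\epsilon'/\epsilon\rceil$ in $\delta'$, the split of the $T$ survival factors into two groups with different per-step probabilities, and the precise Chernoff form of $\delta''$ (which is the optimized tail of a $\pm\epsilon$ random walk with up-probability $e^\epsilon/(1+e^\epsilon)$, requiring the exceedance count $k\ge(T\epsilon+\epsilon')/(2\epsilon)$) all emerge from a careful change-of-measure and exact-composition computation that your sketch gestures at but does not perform; asserting that a ``tilt at $\lambda=1$'' produces the stated expression is reverse-engineering from the target formula rather than a derivation. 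As it stands the proposal establishes the $\epsilon'$ bound with the naive $\delta'=1-(1-\delta)^T+\tilde\delta$ accounting but not the refined $\delta'$ of the statement; for the purposes of this paper the correct move is simply to cite the source, as the authors do.
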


\subsection{Proof of Proposition~\ref{prop:proof-pdf-sin}} \label{app:proof-pdf-sin}
\begin{proof}
Proposition~\ref{prop:proof-pdf-sin} can be proved by using transformation of random variables. First, we recall   the following Lemma  from \cite[p. 1851]{cai2013distributions}.

\begin{lemma}\label{lemma:cos}
If $X_{1},X_{2},...$ are random points sampled from $\mathbb{S}^{M-1}$ (the unit sphere embedded in in $\mathbb{R}^{M}$) uniformly at random, $M>2$. Let  $\Theta_{ij}$ be the angle between $\overrightarrow{OX_{i}}$ and $\overrightarrow{OX_{j}}$ and $x_{ij}=cos\Theta_{ij}$ for any $i \neq j$. Then $\left\{ x_{ij}; 1\leq i  < j \leq n \right\}$ are pairwise independent and identically distributed with density function
\begin{equation*}
f_{X}(x) = \frac{{1}}{\sqrt \pi} \frac{\Gamma\left( \frac{M}{2} \right)}{\Gamma\left( \frac{M-1}{2} \right)} (1 - x^2)^{\frac{M-3}{2}} \quad \text{for} \quad |x| < 1, 
\end{equation*} 
where $x$ denotes $\cos \theta$.
\end{lemma}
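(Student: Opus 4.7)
The plan is to prove two distinct claims packaged in the lemma: (i) the stated marginal density of $\cos\Theta_{ij}$, and (ii) pairwise independence of the family $\{x_{ij}\}$. Both rest on the rotational invariance of the uniform distribution on $\mathbb{S}^{M-1}$ and the independence of the sample points $X_1, X_2, \ldots$

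For the density, I would first use rotational invariance to reduce to the case $X_i = e_1$, the ``north pole.'' Since the distribution of $X_j$ is invariant under rotations that fix $X_i$, the joint law of $(X_i, x_{ij})$ equals the joint law of $(X_i, \langle e_1, X_j\rangle)$ for any realization of $X_i$. Thus $x_{ij}$ has the same distribution as the first coordinate of a uniform random point on $\mathbb{S}^{M-1}$. To compute that marginal, I would parameterize $X_j = (x,\ \sqrt{1-x^2}\,Y)$ with $x\in(-1,1)$ and $Y$ uniform on $\mathbb{S}^{M-2}$, and use the co-area formula: the slice $\{X\in\mathbb{S}^{M-1}:X_1=x\}$ is a scaled $\mathbb{S}^{M-2}$ of radius $\sqrt{1-x^2}$ (contributing a factor $(1-x^2)^{(M-2)/2}$ to the surface measure), and the projection onto the first coordinate introduces a Jacobian $1/\sqrt{1-x^2}$. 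Combining gives density proportional to $(1-x^2)^{(M-3)/2}$.

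To obtain the normalization constant, I would evaluate the Beta integral
\[
\int_{-1}^{1}(1-x^2)^{(M-3)/2}\,dx \;=\; B\!\left(\tfrac{1}{2},\tfrac{M-1}{2}\right) \;=\; \frac{\Gamma(1/2)\,\Gamma((M-1)/2)}{\Gamma(M/2)},
\]
via the substitution $x=\sin\phi$, and use $\Gamma(1/2)=\sqrt{\pi}$ to read off the constant $\Gamma(M/2)/(\sqrt{\pi}\,\Gamma((M-1)/2))$ stated in the lemma.

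For pairwise independence, I would fix two distinct pairs $(i,j)\neq(k,l)$ with $i<j$, $k<l$, and split into two cases. If $\{i,j\}\cap\{k,l\}=\emptyset$, then $x_{ij}$ and $x_{kl}$ are measurable functions of disjoint groups of independent random vectors, hence independent. If the index sets share one element, say $i=k$, then I would condition on $X_i$: by rotational invariance, for every fixed unit vector $v$, the conditional law of $\langle v, X_j\rangle$ is the same marginal derived above and is independent of $X_l$; since $X_j$ and $X_l$ are independent, $\langle X_i,X_j\rangle$ and $\langle X_i,X_l\rangle$ are conditionally independent with conditional distributions not depending on $X_i$, so integrating out $X_i$ yields unconditional independence and identity of marginals.

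The main subtlety will be the Jacobian factor in the density derivation — one must correctly obtain $(1-x^2)^{-1/2}$ (not a different power), which is cleanest to verify via the explicit parameterization $X_j=(x,\sqrt{1-x^2}\,Y)$ and an elementary change of variables in the surface element. A second, more conceptual point I would flag is that the lemma asserts only \emph{pairwise} independence; triples such as $x_{ij},x_{jk},x_{ik}$ are in general not mutually independent (they satisfy geometric constraints through the common vertices), so the proof must scrupulously argue two pairs at a time rather than invoke any joint factorization of the full family.
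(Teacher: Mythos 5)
Your proposal is correct, but note that the paper does not actually prove this lemma: it is quoted verbatim as a known result from Cai, Fan and Jiang~\cite{cai2013distributions}, so there is no ``paper proof'' to compare against. What you have written is a correct, self-contained derivation of that cited result. The density computation is right: by rotational invariance $x_{ij}$ has the law of the first coordinate of a uniform point on $\mathbb{S}^{M-1}$; the slice at height $x$ is a copy of $\mathbb{S}^{M-2}$ of radius $\sqrt{1-x^2}$, contributing $(1-x^2)^{(M-2)/2}$, and the meridian arc-length element $d\phi = dx/\sqrt{1-x^2}$ supplies the extra $(1-x^2)^{-1/2}$, giving the exponent $(M-3)/2$; the Beta integral $B\bigl(\tfrac12,\tfrac{M-1}{2}\bigr)$ yields exactly the stated constant. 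Your two-case argument for pairwise independence (disjoint index sets, versus one shared vertex handled by conditioning on the common point and using that the conditional marginal is independent of the conditioning value) is also the standard and correct argument, and you are right to flag that only pairwise --- not mutual --- independence holds. One small remark: for the purposes of this paper only the marginal density is actually used (the lemma is invoked for a single angle between the fixed direction $\bm{v}$ and the random direction $\bm{h}$), so the pairwise-independence half of your proof, while correct, is not load-bearing downstream; the fixed-vector-versus-random-vector version follows from the same rotational-invariance reduction you describe in your first paragraph.
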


Lemma~\ref{lemma:cos} applies to two random unit vectors and also holds when one vector is fixed while the other is randomly sampled from the unit sphere~\cite{cai2013distributions,cai2012phase}. In our study, given an arbitrary  pair of $f(\bm{x})$ and $f(\bm{x}')$, it determines a $\bm{v}\triangleq f(\bm{x})-f(\bm{x}')\in\R^M$, while the direction of the 
perturbation noise $\mathbf{n}$ is randomly sampled from the unit sphere (i.e., $\bm{h}\sim\mathbb{S}^{M-1}$). Thus, the cosine value of the  random angle between $\bm{v}$ and $\mathbf{n}$ (see Figure~\ref{fig:geo-dp}) also satisfies Lemma~\ref{lemma:cos}. Then, we will obtain the PDF of $\frac{1}{\sin \theta}$ via a sequence of random variable transformations. 

For the first transformation, we denote $y = g(x)=1-x^{2}=\sin^{2}\theta$. When $x \in (-1,0)$, $g^{-1}(y)=- \sqrt{1-y}$. Applying the transformation of random variables in Lemma~\ref{thm:variable-trans}, we have
\begin{equation*}\label{eq:rho-part1}
\begin{aligned}
f_{Y}(y)  
&= f_{X}(g^{-1}(y)) \left|\frac{d}{dy} g^{-1}(y) \right| \\
&= f_{X}\left( -\sqrt{1-y} \right) \left| \frac{1}{-2\sqrt{1-y}} \right|  \\
&= \frac{1}{2\sqrt{\pi}} \frac{\Gamma\left( \frac{M}{2} \right)}{\Gamma\left( \frac{M-1}{2} \right)} y^{\frac{M-3}{2}} \frac{1}{\sqrt{1-y}}.
\end{aligned}
\end{equation*}
Similarly, when $x \in [0, 1)$, $g^{-1}(y)=\sqrt{1-y}$, we also have
\begin{equation*}\label{eq:rho-part2}
\begin{aligned}
\frac{1}{2\sqrt{\pi}} \frac{\Gamma\left( \frac{M}{2} \right)}{\Gamma\left( \frac{M-1}{2} \right)} y^{\frac{M-3}{2}} \frac{1}{\sqrt{1-y}}.  
\end{aligned}
\end{equation*}
Combine the two results above, we can get
\begin{align*}\label{eq:rho-part}
f_{Y}(y)&= \frac{1}{\sqrt{\pi}}\frac{\Gamma\left( \frac{M}{2} \right)}{\Gamma\left( \frac{M-1}{2} \right)} y^{\frac{M-3}{2}} \frac{1}{\sqrt{1-y}}.
\end{align*}

From Definition~\ref{app:def-beta-and-beta-prime}, it is clear that $f_{Y}(y)$ is  a $\Beta$ distribution with parameters $\alpha= \frac{M-1}{2} $, $\beta= \frac{1}{2}$. 
According to Lemma~\ref{lemma:beta-betaprime}, we have $\frac{Y}{1-Y}\sim\BP (\frac{M-1}{2}, \frac{1}{2} )$ and  $\frac{1-Y}{Y}=\frac{1}{Y}-1 \sim $ \resizebox{0.135\textwidth}{!}{$\BP (\frac{1}{2}, \frac{M-1}{2} )$}. Plug into the PDF of $\BP (\frac{1}{2}, \frac{M-1}{2} )$, we have
\begin{equation*}
\begin{aligned}
f_{\frac{1}{Y}-1}\left(\frac{1}{y}-1 \right)
&= \frac{1}{\sqrt{\pi}}\frac{\Gamma\left( \frac{M}{2} \right)}{\Gamma\left( \frac{M-1}{2} \right)}\left(\frac{1}{y}-1 \right)^{-\frac{1}{2}}\left(\frac{1}{y} \right)^{-\frac{M}{2}}, \frac{1}{y}-1\geq 0.
\end{aligned}
\end{equation*}

Next, we define $ z =h\left ( \frac{1}{y}-1  \right )=\frac{1}{y}$, hence $h^{-1}(z)=z-1\geq 0$. By applying theorem~\ref{thm:variable-trans} again, we arrive at 
\begin{equation*}
\begin{aligned}
f_{Z}(z) &= f_{\frac{1}{Y}-1}(h^{-1}(z)) \left|\frac{d}{dz} h^{-1}(z) \right| \\
&= {\frac{1}{\sqrt{\pi}}}\frac{\Gamma\left( \frac{M}{2} \right)}{\Gamma\left( \frac{M-1}{2} \right)}(z-1)^{-\frac{1}{2}}(z)^{-\frac{M}{2}} \left|1 \right|  \\
&= \frac{1}{\sqrt{\pi}} \frac{\Gamma\left( \frac{M}{2} \right)}{\Gamma\left( \frac{M-1}{2} \right)} (z-1)^{-\frac{1}{2}} z^{-\frac{M}{2}},\quad  z \geq 1,
\end{aligned}
\end{equation*}
which is the PDF of $\frac{1}{\sin^2\theta}$, where $\theta$ is the angle between $\bm{v}$ and $\mathbf{n}$ (or $\bm{h}\sim\mathbb{S}^{M-1}$).

Finally, we define $u=\upsilon\left ( z  \right )= \sqrt{z}$, hence $\upsilon^{-1}(u)=u^2\geq 1$. By applying theorem~\ref{thm:variable-trans} again, we arrive at
\begin{equation*}
\begin{aligned}
f_{U}(u) &= f_{Z}(\upsilon^{-1}(u)) \left|\frac{d}{du} \upsilon^{-1}(u) \right|  \\
&= {\frac{1}{\sqrt{\pi}}}\frac{\Gamma\left( \frac{M}{2} \right)}{\Gamma\left( \frac{M-1}{2} \right)}(u^2-1)^{-\frac{1}{2}}(u^2)^{-\frac{M}{2}} 2u \\
&= \frac{2}{\sqrt{\pi}} \frac{\Gamma\left( \frac{M}{2} \right)}{\Gamma\left( \frac{M-1}{2} \right)} (u^2-1)^{-\frac{1}{2}} u^{1-M},\quad u\geq 1.
\end{aligned}
\end{equation*}
which is the PDF of $\frac{1}{\sin \theta}$. 
Thus, we conclude the proof of Proposition~\ref{prop:proof-pdf-sin}. 
\end{proof}

\subsection{Proof of Proposition~\ref{prop:proof-q-th-moment-sin}} \label{app:proof-q-th-moment} 
\begin{proof}
The $q$-th moment of random variable $U$ is
\begin{equation*}
\begin{aligned}
&\mathbb{E}[U^q] \\
=&\int_{1}^{\infty}u^{q}f_{u}(u)du  \stackrel{(a)}= \frac{2}{\sqrt{\pi}} \frac{\Gamma\left( \frac{M}{2} \right)}{\Gamma\left( \frac{M-1}{2} \right)}\int_{1}^{\infty}u^{q}(u^2-1)^{-\frac{1}{2}} u^{1-M}du  \\
=& \frac{2}{\sqrt{\pi}} \frac{\Gamma\left( \frac{M}{2} \right)}{\Gamma\left( \frac{M-1}{2} \right)} \int_{1}^{\infty} u ^{q-M+1}\left(u^2-1\right)^{-\frac{1}{2}} \frac{1}{2}u^{-1} d u^2 \notag \\
=& \frac{1}{\sqrt{\pi}} \frac{\Gamma\left( \frac{M}{2} \right)}{\Gamma\left( \frac{M-1}{2} \right)} \int_{1}^{\infty} u ^{q-M}\left(u^2-1\right)^{-\frac{1}{2}} d u^2  \\
=& \frac{1}{\sqrt{\pi}} \frac{\Gamma\left( \frac{M}{2} \right)}{\Gamma\left( \frac{M-1}{2} \right)} \int_{1}^{\infty} (t+1) ^{\frac{q-M}{2}}t^{-\frac{1}{2}} d t \qquad \text{let $t=u^2-1$}\\
=& \frac{1}{\sqrt{\pi}} \frac{\Gamma\left( \frac{M}{2} \right)}{\Gamma\left( \frac{M-1}{2} \right)} B\left(\frac{1}{2},\frac{M-q-1}{2} \right) \underbrace{\int_{0}^{\infty} \frac{1}{B\left(\frac{1}{2},\frac{M-q-1}{2} \right)} t^{-\frac{1}{2}} (t+1)^{\frac{q-M}{2}}dt}_{\substack{=1, \text{\ integrand is the kernel of } \\ t\sim\BP(\frac{1}{2},\frac{M-q-1}{2})}}  \\
=& \frac{1}{\sqrt{\pi}} \frac{\Gamma\left( \frac{M}{2} \right)}{\Gamma\left( \frac{M-1}{2} \right)} B\left(\frac{1}{2},\frac{M-q-1}{2} \right) \\ 
=& \frac{1}{\sqrt{\pi}} \frac{\Gamma\left( \frac{M}{2} \right)}{\Gamma\left( \frac{M-1}{2} \right)} \frac{\Gamma \left ( \frac{1}{2} \right )\Gamma \left ( \frac{M-q}{2}-\frac{1}{2}\right )}{\Gamma \left ( \frac{M-q}{2} \right )} \\
=& \frac{\Gamma\left( \frac{M}{2} \right)}{\Gamma\left( \frac{M-1}{2} \right)} \frac{\Gamma \left ( \frac{M-q}{2}-\frac{1}{2} \right )}{\Gamma \left ( \frac{M-q}{2} \right )} \quad(\text{$1<q< M-1$}),
\end{aligned}
\end{equation*}
where $(a)$ is obtained by plugging the PDF of Beta Prime distribution in Proposition~\ref{prop:proof-pdf-sin}. Thus, we complete the proof.
\end{proof}

\subsection{Proof of Proposition~\ref{prop:proof-min-q-result}}\label{app:tail-bound} 
\begin{proof}
Proposition~\ref{prop:proof-q-th-moment-sin} gives $\mathbb{E}[U^q] = \frac{\Gamma\left( \frac{M}{2} \right)}{\Gamma\left( \frac{M-1}{2} \right)} \frac{\Gamma \left ( \frac{M-q}{2}-\frac{1}{2} \right )}{\Gamma \left ( \frac{M-q}{2} \right )}$, where $1<q<M-1$.
By applying (\ref{eq:ratio-gamma-1}), we have
\begin{equation*}
    \frac{\Gamma\left( \frac{M}{2} \right)}{\Gamma\left( \frac{M-1}{2} \right)} < \sqrt{\frac{M-1}{2}}, \quad \frac{\Gamma \left ( \frac{M-q}{2}-\frac{1}{2} \right )}{\Gamma \left ( \frac{M-q}{2} \right )} < \frac{2}{\sqrt{2(M-q)-3}}.
\end{equation*}
Thus,  $\mathbb{E}[U^q] <\sqrt{\frac{M-1}{2}} \frac{2}{\sqrt{2(M-q)-3}}= \frac{\sqrt{M-1}}{\sqrt{M-q-\frac{3}{2}}}$, where $1<q < M-\frac{3}{2}$.

By applying the result of Proposition~\ref{prop:proof-q-th-moment-chi}, we have 
\begin{equation*}
\begin{aligned}
    \mathbb{E}[W^q]\leq\ & 2^{-\frac{q}{2}} e^{-\frac{\lambda^{2}}{2}} \Gamma(q+1) \left(
    \frac{1 }{\Gamma \left(\frac{q+2}{2} \right)}\ 
    {}_1F_1\left(\frac{q+1}{2} ; \frac{1}{2} ; \frac{\lambda^2}{2} \right) \right. \\
    & \left. + \frac{\sqrt{2 }\lambda }{\Gamma \left(\frac{q+1}{2} \right)}\ 
    {}_1F_1\left(\frac{q+2}{2} ; \frac{3}{2} ; \frac{\lambda^2}{2} \right) \right).
\end{aligned}
\end{equation*}

By applying (\ref{eq:duplication-formula}) and setting $z = \frac{q+1}{2}$, we have 
$\Gamma(q+1)= \pi^{-\frac{1}{2}}2^{q}\Gamma(\frac{q+1}{2})\Gamma(\frac{q+2}{2}).$
As a consequence,
\begin{equation*}
    \begin{aligned}
    \mathbb{E}[W^q] \leq & \frac{e^{-\frac{\lambda^{2}}{2}}}{\sqrt{\pi}} 2^{\frac{q}{2}}
    \bigg( \Gamma\left( \frac{q+1}{2}\right)  {}_1F_1\left(\frac{q+1}{2} ; \frac{1}{2} ; \frac{\lambda^2}{2} \right) \\
    &\quad + \sqrt{2}\lambda\Gamma\left( \frac{q+2}{2}\right)  {}_1F_1\left(\frac{q+2}{2} ; \frac{3}{2} ; \frac{\lambda^2}{2} \right) \bigg)\\
    \leq & \frac{e^{-\frac{\lambda^{2}}{2}}}{\sqrt{\pi}} 2^{\frac{q}{2}}  \Gamma\left( \frac{q+2}{2}\right)
    \bigg( {}_1F_1\left(\frac{q+1}{2} ; \frac{1}{2} ; \frac{\lambda^2}{2} \right) \\
    &\quad + \sqrt{2}\lambda\ {}_1F_1\left(\frac{q+2}{2} ; \frac{3}{2} ; \frac{\lambda^2}{2} \right) \bigg) .
    \end{aligned}
\end{equation*}

Next, to make the tail probability negligible, we consider $\mathbb{E}[W^q]t^{-q}$ together. 
In particular, we wish to  bound $2^ {\frac{q}{2}}\Gamma \left(\frac{q+2}{2} \right) t^{-q}$ using a monotonically decreasing function in $q$. We construct such function as the ratio of two gamma functions, i.e., $\frac{1}{k}\frac{\Gamma \left(\frac{q+1}{2}+\frac{1}{2} \right)}{\Gamma \left(\frac{q+1}{2}+1 \right)}$ and $k > 1$. 
To this end, we first find a sufficient condition on $t$ to make the following hold,
\begin{equation*}
\begin{aligned}
2^ {\frac{q}{2}} \Gamma \left(\frac{q+2}{2} \right) t^{-q}&= \frac{\Gamma \left(\frac{q+1}{2}+\frac{1}{2} \right)}{\left(  \frac{t^2}{2} \right)^{\frac{q}{2}}}
&< \frac{1}{k} \frac{\Gamma \left(\frac{q+1}{2}+\frac{1}{2} \right)}{\Gamma \left(\frac{q+1}{2}+1 \right)}, 
\end{aligned}
\end{equation*}
i.e.,
\begin{equation}\label{eq:needing-bound}
    \begin{aligned}
        \left(  \frac{t^2}{2} \right)^{\frac{q}{2}} >k \Gamma \left(\frac{q+1}{2}+1 \right).
    \end{aligned}
\end{equation}

According to  (\ref{eq: gamma-1}), we have
\begin{equation*}
    \begin{aligned}
        \frac{\left(\frac{q+1}{2}+1 \right)^{\left(\frac{q+1}{2}+1 \right)-\frac{1}{2}}}{e^{\left(\frac{q+1}{2}+1 \right)-1}}>\Gamma \left(\frac{q+1}{2}+1 \right).
    \end{aligned}
\end{equation*}

Hence, a sufficient condition for (\ref{eq:needing-bound}) to hold is
\begin{equation*}
   \begin{aligned}
       \left(  \frac{t^2}{2} \right)^{\frac{q}{2}}> k \frac{\left(\frac{q+1}{2}+1 \right)^{\left(\frac{q+1}{2}+1 \right)-\frac{1}{2}}}{e^{\left(\frac{q+1}{2}+1 \right)-1}},
   \end{aligned}
\end{equation*}
which suggests
\begin{equation*}
    \begin{aligned}
        t^2 >  2 k^{\frac{2}{q}} \frac{ \left(\frac{q+3}{2} \right)^{\left(1+\frac{2}{q}\right)}}{e^{(1+\frac{1}{q})}}.
    \end{aligned}
\end{equation*}

As a result, $2^ {\frac{q}{2}} \Gamma \left(\frac{q+2}{2} \right) t^{-q}< \frac{1}{k}\frac{\Gamma \left(\frac{q+1}{2}+\frac{1}{2} \right)}{\Gamma \left(\frac{q+1}{2}+1 \right)}< \frac{1}{k} \frac{1}{\sqrt{q+\frac{3}{4}}}$ (where the last inequality is due to (\ref{eq:ratio-gamma-1})), 
when $t^2 > 2 k^{\frac{2}{q}}  \frac{ \left(\frac{q+3}{2} \right)^{\left(1+\frac{2}{q}\right)}}{e^{(1+\frac{1}{q})}}$. Finally, we   arrive at the tail probability of $\Pr[WU\geq t]$ as shown in (\ref{eq:min-q-result}). which concludes the proof. 
\end{proof}

\subsection{Solving for \texorpdfstring{$\delta = \min \limits_{q} \delta_1\delta_2$}{delta = min delta1 delta2}}\label{app:approximate-delta}
Since $\delta = \delta_1\delta_2$ involves  the confluent hypergeometric function, which is cumbersome to work with, we first consider an approximation of $\delta$ by deriving it upper bound. See Proposition~\ref{prop: approximate-delta}.

\begin{proposition} \label{prop: approximate-delta} 
   Given any $1<q<M-\frac{3}{2}$,  $k>1$,  and  $t^2 = 2 k^{\frac{2}{q}} \frac{ \left(\frac{q+3}{2} \right)^{\left(1+\frac{2}{q}\right)}}{e^{(1+\frac{1}{q})}}$,
we have
\begin{equation}\label{eq:delta-approx}
    \begin{aligned}
        \delta <& \frac{2e^{-\frac{\lambda^{2}}{2}}\sqrt{M-1}}{k\sqrt{\pi}} \frac{1+\lambda^2(q+1)}{\sqrt{(M-q-\frac{3}{2})(q+\frac{3}{4})}},
    \end{aligned}
\end{equation}
which is minimized when $q = q^{\ast} = \frac{ 4M-2\lambda^2M-9}{4\lambda^2M+8-\lambda^2}$.
\end{proposition}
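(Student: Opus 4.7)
The plan has two stages: (i) collapse the two confluent hypergeometric functions appearing in the $\delta_1$ factor of Proposition~\ref{prop:proof-min-q-result} into an elementary polynomial in $\lambda$ and $q$, and (ii) minimize the resulting one-variable function of $q$ in closed form.

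For stage (i), I invoke the inequality ${}_1F_1(a;c;x)<1+2ax/c$ (from the appendix, valid when $0<x<1$), applied to each of the two hypergeometric factors in $\delta_1$. With $(a,c,x)=((q+1)/2,\,1/2,\,\lambda^2/2)$ this gives ${}_1F_1(\tfrac{q+1}{2};\tfrac{1}{2};\tfrac{\lambda^2}{2})<1+\lambda^2(q+1)$, and with $(a,c,x)=((q+2)/2,\,3/2,\,\lambda^2/2)$ it gives ${}_1F_1(\tfrac{q+2}{2};\tfrac{3}{2};\tfrac{\lambda^2}{2})<1+\lambda^2(q+2)/3$. Substituting and grouping the $\sqrt{2}\lambda$-weighted second term, the numerator of $\delta_1$ is dominated by $2(1+\lambda^2(q+1))$ in the regime of interest; plugging this into the product $\delta_1\delta_2$ yields the displayed bound (\ref{eq:delta-approx}).

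For stage (ii), the optimization reduces to minimizing
$$g(q)=\frac{1+\lambda^2(q+1)}{\sqrt{(M-q-\tfrac{3}{2})(q+\tfrac{3}{4})}}$$
over $q\in(1,M-\tfrac{3}{2})$. Setting $\tfrac{d}{dq}\log g(q)=0$ gives
$$\frac{\lambda^2}{1+\lambda^2(q+1)}+\frac{1}{2(M-q-\tfrac{3}{2})}-\frac{1}{2(q+\tfrac{3}{4})}=0,$$
which, after clearing denominators, produces $2\lambda^2(M-q-\tfrac{3}{2})(q+\tfrac{3}{4})=(1+\lambda^2(q+1))(M-2q-\tfrac{9}{4})$. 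Expanding both sides, the coefficients of $q^2$ (each equal to $-2\lambda^2$) cancel identically, leaving the linear equation $q[\lambda^2(M-\tfrac{1}{4})+2]=M-\tfrac{9}{4}-\tfrac{\lambda^2 M}{2}$. Multiplying numerator and denominator by $4$ yields exactly $q^\ast=(4M-2\lambda^2M-9)/(4\lambda^2M+8-\lambda^2)$.

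The main obstacle is the algebraic bookkeeping in stage (ii): it is the nontrivial cancellation of the quadratic-in-$q$ terms on the two sides of the first-order condition that makes $q^\ast$ a rational function in $M$ and $\lambda$ rather than the root of a quadratic, so careful coefficient tracking is essential. A secondary caveat is the domain condition $\lambda^2/2<1$ (equivalently $\sigma_M>\Delta_2 f/\sqrt{2}$) under which the cited hypergeometric inequality is valid; this is consistent with the high-dimensional, high-privacy regime targeted by Theorem~\ref{thm:mian-thm}, where $\sigma_M$ grows with $M$ while $\Delta_2 f$ is fixed.
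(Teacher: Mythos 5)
Your proposal is correct and follows essentially the same route as the paper: it bounds both ${}_1F_1$ factors with the inequality ${}_1F_1(a;c;x)<1+2ax/c$, absorbs the $\sqrt{2}\lambda$-weighted term into an overall factor of $2$, and then solves the first-order condition of the log of the remaining $q$-dependent factor, where the quadratic terms cancel to give exactly $q^\ast=(4M-2\lambda^2M-9)/(4\lambda^2M+8-\lambda^2)$. Two small points to tighten: the absorption into $2\bigl(1+\lambda^2(q+1)\bigr)$ needs the slightly stronger condition $\sqrt{2}\lambda<1$ (the paper assumes $\lambda<1/\sqrt{2}$, not merely $\lambda^2/2<1$), and you should confirm the critical point is a minimum by checking the sign of the derivative (the paper does this via the sign of the linear numerator and the positivity of the denominator on $1<q<M-\tfrac{3}{2}$).
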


\begin{proof}
We set     $\frac{\sigma_M }{\Delta_2f} \epsilon > t$ according to (\ref{eq:plrv-tail}). 
With some foresight, we consider $\lambda< \frac{1}{\sqrt{2}}$, which in hindsight is trivially true. This condition turns out to be a justified posteriori, since $\lambda = \frac{\Delta_2f}{\sigma_M} = \Theta(1/M)$ and our final result indeed implies $\lambda \ll 1$ for high dimensional applications.  As a result, we have $\lambda^2 < \frac{1}{2}$. Then, Lemma~\ref{thm: hypergeometric-ineq} can be invoked to bound the value of the confluent hypergeometric function of the first kind. As a result, an upper bound of  $\delta$ in (\ref{eq:min-q-result}) takes the following form 
{\footnotesize
\begin{equation}\label{eq:approximate-delta-1}
     \begin{aligned}
         \delta < \frac{e^{-\frac{\lambda^{2}}{2}}}{k\sqrt{\pi}} \left( 1+ \lambda^2(q+1) + \sqrt{2}\lambda\left( 1+\frac{\lambda^2(q+2)}{3} \right) \right)
     \frac{\sqrt{M-1}}{\sqrt{(M-q-\frac{3}{2})(q+\frac{3}{4})}}.
     \end{aligned}
\end{equation}
}\ignorespacesafterend
Since for any $1<q<M-\frac{3}{2}$, we have $1>\sqrt{2}\lambda$, $\lambda^2q>\frac{\sqrt{2}\lambda^3}{3}q$, $\lambda^2>\frac{2\sqrt{2}}{3}\lambda^3$. Hence,   we   have $\delta<\text{r.h.s\ of\ }  (\ref{eq:approximate-delta-1})<\text{r.h.s\ of\ }(\ref{eq:delta-approx})$.

Next, we proceed to minimize the r.h.s   of (\ref{eq:delta-approx}) for $1<q<M-\frac{3}{2}$.
The term $\frac{2e^{-\frac{\lambda^{2}}{2}} \sqrt{M-1}}{k\sqrt{\pi}}$ is a positive constant, so we only need to consider the minimum value of $\frac{1+ \lambda^2(q+1) }{\sqrt{(M-q-\frac{3}{2})(q+\frac{3}{4})}}$ with respect to $q$.
Define the function $f(q)= \ln\left( \frac{1+ \lambda^2(q+1) }{\sqrt{(M-q-\frac{3}{2})(q+\frac{3}{4})}} \right)$, then we have first-order derivative of $f(q)$ as 
\begin{equation*} \label{first-order-derivative}
\begin{aligned}
    f^\prime(q) &= \frac{\lambda^2}{1+\lambda^2q+\lambda^2} + \frac{8q-4M+9}{-8q^2+8Mq-18q+6M-9} \\
    &= \frac{\left(4M \lambda^2 + 8 -\lambda^2\right)q +2 M \lambda^2 -4M -9}{\left( 1+\lambda^2q+\lambda^2 \right) \left( -8q^2+8Mq-18q+6M-9 \right)}, 
\end{aligned}
\end{equation*}
Let $f^\prime(q) =0$, we   calculate $q^{\ast}=\frac{ 4M-2M\lambda^2-9}{4M\lambda^2+8-\lambda^2}$.

Next, we study the monotonicity of $f(q)$. 
First, for $-8q^2+8Mq-18q+6M-9$, one can verify that the two roots 
are $q = -\frac{3}{4}$ and $q=M-\frac{3}{2}$, which do not belong to our considered range $1<q<M-\frac{3}{2}$. Thus,  $-8q^2+8Mq-18q+6M-9$ is always positive when $1<q<M-\frac{3}{2}$, which suggests that $\left( 1+\lambda^2q+\lambda^2 \right) \left( -8q^2+8Mq-18q+6M-9 \right)$ is always positive in our considered range.

Second, observe that $4M\lambda^2+8-\lambda^2=\lambda^2(4M-1)+8>0$, since \(M>2\) and \(0<\lambda^2<\frac{1}{2}\). Define $g(q)=\left(4M\lambda^2+8-\lambda^2\right)q+2M\lambda^2-4M-9$. Then $g(q)$ is a strictly increasing linear function in $q$. It follows that $g(q)<0$ for $q<\frac{4M-2M\lambda^2-9}{4M\lambda^2+8-\lambda^2}$, and $g(q)>0$ for $q>\frac{4M-2M\lambda^2-9}{4M\lambda^2+8-\lambda^2}$. Therefore, when $q$ is $\frac{ 4M-2M\lambda^2-9}{4M\lambda^2+8-\lambda^2}\triangleq q^{\ast}$, $f(q)$ is minimized, so does the approximated $\delta$ in (\ref{eq:delta-approx}). This concludes the proof of Proposition~\ref{prop: approximate-delta}.
\end{proof}

Given the fact that, in high-dimensional settings,   $\lambda<\frac{1}{\sqrt{2}}$; thus, we have $q^{\ast} \approx \frac{M}{2}$. As a result, in Theorem~\ref{thm:mian-thm}, we consider a  $\delta$ that is obtained by plugging in $q = \frac{M}{2}$ into (\ref{eq:min-q-result}).
\begin{figure}[htp]
    \centering
    \begin{subfigure}{0.49\columnwidth}  
        \centering
        \includegraphics[width=\linewidth]{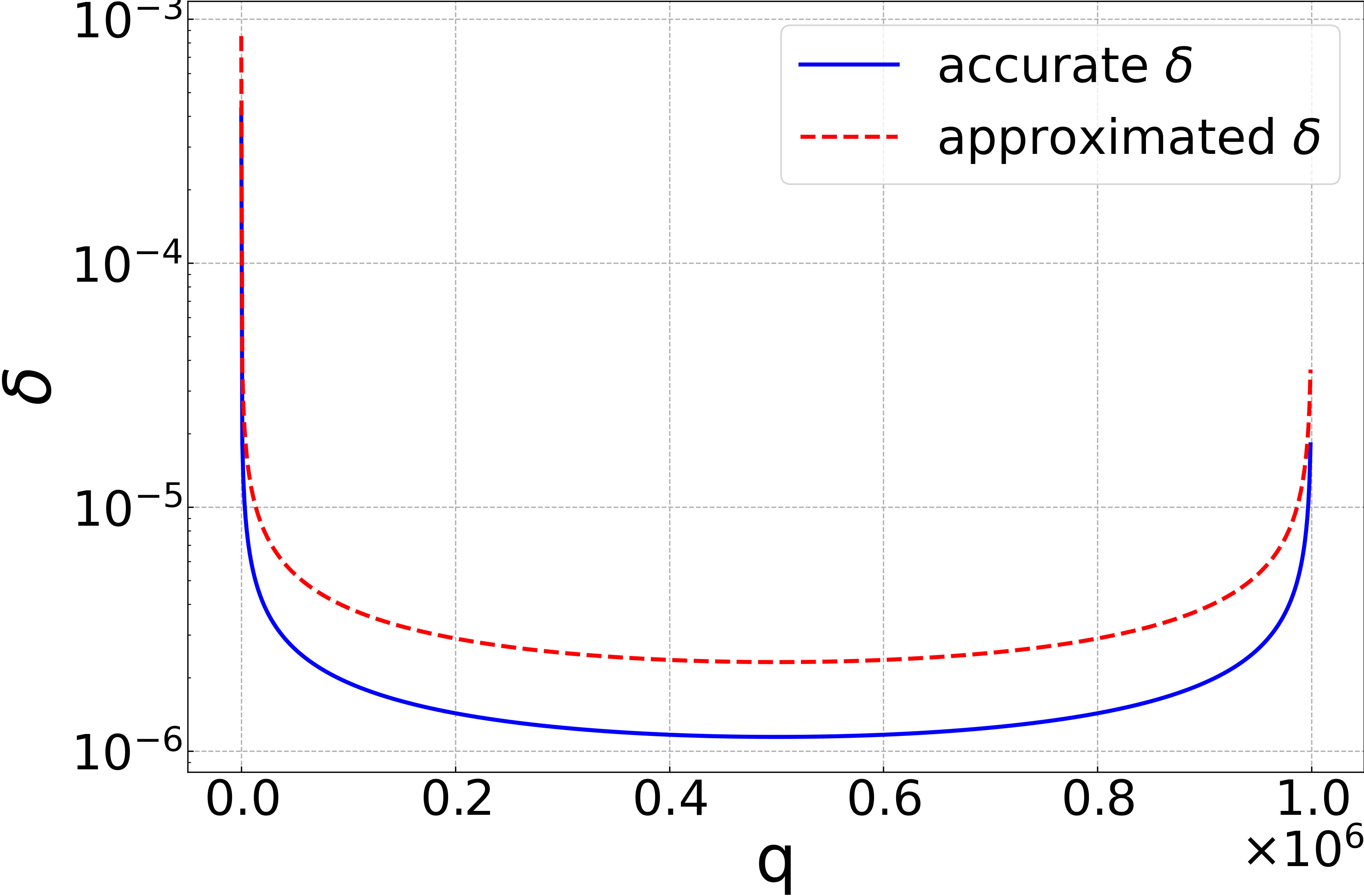}
        \Description{Graph showing delta versus q for M=10^6.}
        \caption{$\delta$ versus $q$ ($M=10^6$)}
        \label{fig:q-delta-10e6}
    \end{subfigure}
    \hfill
    \begin{subfigure}{0.49\columnwidth}
         \centering
        \includegraphics[width=\linewidth]{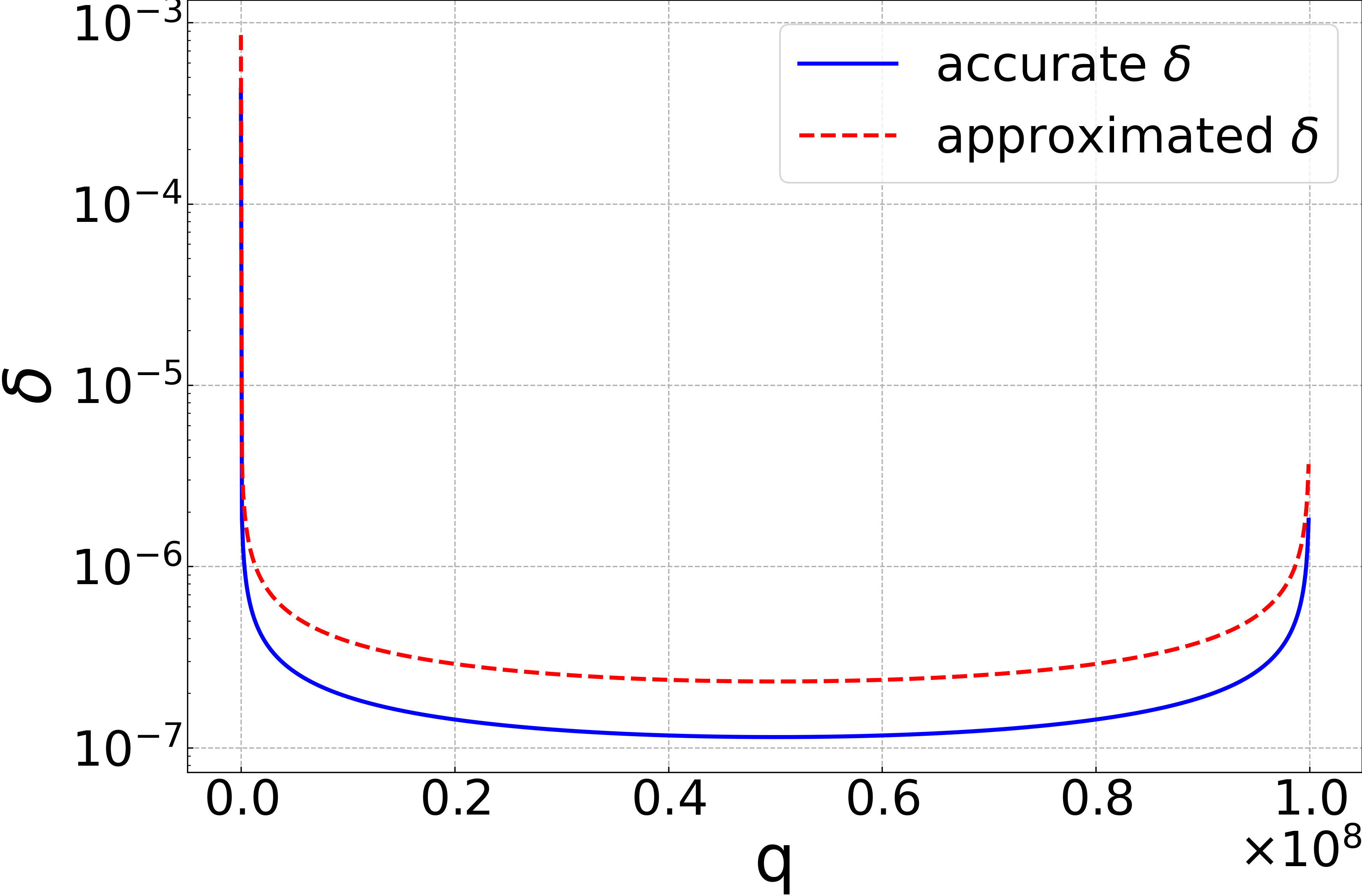}
        \Description{Graph showing delta versus q for M=10^8.}
        \caption{$\delta$ versus $q$ ($M=10^8$)}
        \label{fig:q-delta-10e8}
    \end{subfigure}
    \caption{Evaluations of $\delta$ and $q$ for dimension $M$.}
    \label{fig:simulation-delta}
\end{figure}

To verify the optimality of having $q = \frac{M}{2}$ when evaluating $\delta$, we     plot  the values of  accurate $\delta$ (in (\ref{eq:delta-accurate})) and approximated $\delta$ (in the r.h.s. of   (\ref{eq:delta-approx})) by increasing $q$ from 2 to $M-2$. The plots are visualized in Figure~\ref{fig:simulation-delta}.  In particular, we set $\epsilon=0.1$, $k=1000$, and $M\in\{10^6,10^8\}$. From Figure \ref{fig:simulation-delta}, we observe that when $q$ is about $\frac{M}{2}$, both accurate and approximated $\delta$ are close to the   minimum. For example, when $M = 10^6$, $q^{\ast}$ that minimizes (\ref{eq:delta-accurate}) and the r.h.s. of  (\ref{eq:delta-approx}) are $500,500$ and  $499,499$, respectively, and $\delta$ valuated using (\ref{eq:delta-accurate}) and (\ref{eq:delta-approx}) are   $1.144 \times 10^{-6}$ and $2.318 \times 10^{-6}$, respectively. When $M = 10^8$, $q^{\ast}$ that minimizes (\ref{eq:delta-accurate}) and the r.h.s. of  (\ref{eq:delta-approx}) are $50,050,050$ and  $49,949,950$, respectively, and $\delta$ valuated using (\ref{eq:delta-accurate}) and (\ref{eq:delta-approx}) are   $1.144 \times 10^{-7}$ and $2.318 \times 10^{-7}$, respectively. 

\subsection{Proof of Corollary~\ref{corollary:mechanism_guidance} and Corollary~\ref{corollary:Asymptotic_analysis_noise}}\label{proof:Asymptotic_analysis}

In this section, we prove both Corollary~\ref{corollary:mechanism_guidance} and~\ref{corollary:Asymptotic_analysis_noise}.
\begin{proof}
Consider the product noise in (\ref{eq:noise-generation}) and $\sigma_M$ in Theorem~\ref{thm:mian-thm}. The  expected value of its squared magnitude is:
\begin{equation*} \label{eq:mag-ours}
      \begin{aligned}
          \mathbb{E}[||\mathbf{n}||_2^2] = \sigma_M^2\mathbb{E}[\chi_1^2] = \sigma_M^2 \cdot 1 = \frac{(\Delta_2f)^2}{ \epsilon^2} \cdot  2 k^{\frac{4}{M}} \frac{ \left(\frac{M}{4} +\frac{3}{2} \right)^{\left(1+\frac{4}{M}\right)}}{e^{(1+\frac{2}{M})}}.
      \end{aligned}
\end{equation*}
In contrast, the least expected squared magnitude of the classic Gaussian mechanism~\cite{dwork2006our} is
\begin{equation*} \label{eq:mag-classic}
    \begin{aligned}
        \mathbb{E}[||\mathbf{n}_{\mathrm{classic}}||_2^2] = \sigma^2 \mathbb{E}[\chi_M^2] = \sigma^2 M = \frac{2 \log\left( \frac{1.25}{\delta} \right) (\Delta_2 f)^2}{\epsilon^2} \cdot M.
    \end{aligned}
\end{equation*}
Given same privacy parameters $\epsilon$ for both noises, we have
\begin{equation} \label{eq:ratio_noise}
    \begin{aligned}
        f(M) =&  \frac{\mathbb{E}[||\mathbf{n}||_2^2]}{\mathbb{E}\left[||\mathbf{n}_{\mathrm{classic}}||_2^2\right]} \\
        =&  \frac{\frac{(\Delta_2f)^2}{ \epsilon^2} \cdot  2 k^{\frac{4}{M}} \frac{ \left(\frac{M}{4} +\frac{3}{2} \right)^{\left(1+\frac{4}{M}\right)}}{e^{(1+\frac{2}{M})}}}{ \frac{2 \log\left( \frac{1.25}{\delta} \right) (\Delta_2 f)^2}{\epsilon^2} \cdot M}\\
        =& \frac{k^{\frac{4}{M}}\frac{\left(\frac{M}{4} + \frac{3}{2}\right)^{\left(1+\frac{4}{M}\right)}}{e^{(1+\frac{2}{M})}}}{\log\left(\frac{1.25}{\delta}\right) M }.
    \end{aligned}
\end{equation}

Conservatively, we set $k = 10^5$ to ensure that our mechanism can achieve $\delta \le 10^{-5}$. 
    By requiring $f(M) < 1$, we obtain
    \begin{equation*}
        \begin{aligned}
            \frac{k^{\frac{4}{M}}\left(\frac{M}{4} + \frac{3}{2}\right)^{\left(1+\frac{4}{M}\right)}}{e^{(1+\frac{2}{M})}\log\left(\frac{1.25}{\delta}\right) M} < 1.
        \end{aligned}
    \end{equation*}
A sufficient condition for the above requirement to hold is
    \begin{equation*}
        \begin{aligned}
            k^{\frac{4}{M}} \left(\frac{M}{4} + \frac{3}{2}\right)^{\left(1+\frac{4}{M}\right)}< e \log \left(\frac{1.25}{\delta}\right) M.
        \end{aligned}
    \end{equation*}
Taking the logarithm on both sides yields
    \begin{equation}\label{eq:after-log}
        \begin{aligned}
            \frac{4}{M}\log k + \left(1+\frac{4}{M}\right) \log\left(\frac{M}{4} + \frac{3}{2}\right) < 1 + \log\left(\log\left(\frac{1.25}{\delta}\right)\right) + \log M.
        \end{aligned}
    \end{equation}

Define $g(M) = \left(1+\frac{4}{M}\right)\log\left(\frac{M}{4} + \frac{3}{2}\right) - \log M$. By direct differentiation, one can verify
    \begin{equation*}
        \begin{aligned}
            g^{\prime}(M) =& \left(1+\frac{4}{M}\right)\cdot \frac{1}{\frac{M}{4} + \frac{3}{2}}\cdot \frac{1}{4} - \frac{4}{M^{2}} \log\left(\frac{M}{4} + \frac{3}{2}\right) - \frac{1}{M} \\
            =& \frac{-2M - 4(M+6)\log\left(\frac{M+6}{4}\right)}{M^{2}(M+6)}<0, \forall M>0.
        \end{aligned}
    \end{equation*}
Therefore, $g(M)$ is monotonically decreasing for all $M > 0$.  Since $g(7) = -0.09374$, we have  $\left(1+\frac{4}{M}\right)\log\left(\frac{M}{4} + \frac{3}{2}\right) < \log M$   when $M\geq 7$. Thus, it suffices to require   $\frac{4}{M}\log k < 1 + \log(\log(\frac{1.25}{\delta}))$, which suggests $M>13.2995$ when $k = 10^5, \delta = 10^{-5}$. 

Therefore, when the dimension $M \ge 14$ , we have $f(M)<1$, which means that our product noise has smaller expected noise squared magnitude than the classic Gaussian noise. This concludes Corollary~\ref{corollary:mechanism_guidance}.

Next, we give the proof of Corollary~\ref{corollary:Asymptotic_analysis_noise}.
It is   obtained  by analyzing the asymptotic behavior of $f(M)$ in (\ref{eq:ratio_noise}). Clearly, as $M$ approaches  infinity, $k^{4/M}$ approaches 1, $(\frac{M}{4} + \frac{3}{2})^{\left( 1 + \frac{4}{M} \right)} \approx \frac{M}{4} $, and $e^{\left( 1 + \frac{2}{M} \right)} \approx e$. Consequently, (\ref{eq:ratio_noise}) becomes 
$\frac{1}{4e \log\left(\frac{1.25}{\delta}\right)}$.
\end{proof}

\subsection{Common Assumptions in ERM} \label{sec:common-assumptions}
\begin{definition}\cite{iyengar2019towards}
    A function $f: \mathbb{R}^M \to \mathbb{R}: $
    \begin{itemize}
        \item is a convex function if for all $\omega_1$, $\omega_2 \in \mathbb{R}^M$, $f(\omega_1)-f(\omega_2) \ge \left \langle \nabla f(\omega_2), \omega_1 - \omega_2 \right \rangle $.
        \item is a $\xi$-strongly convex function if for all $\omega_1$, $\omega_2 \in \mathbb{R}^M$, $f(\omega_1) \ge f(\omega_2) + \left \langle \nabla f(\omega_2), \omega_1 - \omega_2 \right \rangle + \frac{\xi}{2} \left \| \omega_1 -\omega_2 \right \|_2^2 $  or equivalently,  $\left \langle \nabla f(\omega_1)- \nabla f(\omega_2), (\omega_1 - \omega_2) \right \rangle \ge \xi \left \| \omega_1 -\omega_2 \right \|_2^2 $.
        \item has $L$-Lipschitz constant $L$ if for all $\omega_1, \omega_2 \in \mathbb{R}^M$, $\ | f(\omega_1)  $ $ - f(\omega_2) \ | \leq  \left \| \omega_1 -\omega_2 \right \|_2$.
        \item is $\beta$-smooth if for all $\omega_1$, $\omega_2 \in \mathbb{R}^M$, $\left \|f(\omega_1) - f(\omega_2) \right \| \leq \beta \cdot \left \|\omega_1 - \omega_2 \right \|_2$.
    \end{itemize}
\end{definition}

\section{Omitted Details in Section~\ref{sec:DP_Convex_Output}}
\subsection{Pseudocode of Product Noise-Based Output Perturbation}\label{app:Pseudocode-pn-output-perturbation}
Algorithm~\ref{alg:pn-output} presents the pseudocode of product noise-based output perturbation. 
\begin{algorithm}[htp]
	\caption{Product Noise-Based Output Perturbation}\label{alg:pn-output}
	\begin{algorithmic}[0]
		\State{\bf Input:} Dataset: $D = \{d_1, \dots, d_n\}$; loss function: $\ell(\bm{\omega}; d_i)$ that has $L$-Lipschitz constant $L$, is convex in $\bm{\omega}$; regularization parameter $\Lambda$; privacy parameter $(\epsilon, \delta)$
        \begin{algorithmic}[1]
        \State Compute the ERM optimal solution $\hat{\bm{\omega} } = \arg \min\limits_{\bm{\omega} \in \mathbb{R}^M} \mathcal{L} (\bm{\omega}; D)$
        \State Draw a noise vector $\mathbf{n} = {\sigma_M} R \boldsymbol{h}, \text{where } {\sigma_M} = \frac{2L}{\Lambda} \cdot \frac{\sqrt{2} k^{\frac{2}{M}} \left(\frac{M}{4} +\frac{3}{2} \right)^{ \left( \frac{1}{2} + \frac{2}{M} \right)}}{\epsilon e^{\left( \frac{1}{2} + \frac{1}{M}\right)}}, R\sim\chi_1\ \text{and } \boldsymbol{h} \sim \mathbb{S}^{M-1}$
        \State Perturb the output $\bm{\omega}_{\mathrm{priv}} = \hat{\bm{\omega} } +\mathbf{n}$
        \end{algorithmic}
{\bf Output: }$\bm{\omega}_{\mathrm{priv}}$
\end{algorithmic}
\end{algorithm}

\subsection{Proof of Corollary~\ref{privacy_guarantee_np_output}} \label{app:output-privacy-proof}
\begin{proof} This proof is adapted from~\cite{chaudhuri2011differentially}. 
    Let $D$ and $D^\prime$ be neighboring datasets that differ in the value of the $i$-th individual, and let $\hat{\bm{\omega}}$ and $\hat{\bm{\omega}}^{\prime}$ be the optimal solutions corresponding to $D$ and $D^\prime$, respectively.
Since the optimal solution satisfies the first-order optimality condition: $\nabla \mathcal{L}(\hat{\bm{\omega}}; D) =  \frac{1}{n} \sum_{i=1}^{n} \nabla \ell(\hat{\bm{\omega}}; d_i) + \frac{\Lambda}{n} \hat{\bm{\omega}} = 0$ and $\nabla \mathcal{L}(\hat{\bm{\omega}}^{\prime}; D^{\prime}) =  \frac{1}{n} \sum_{i=1}^{n} \nabla \ell(\hat{\bm{\omega}}^{\prime}; d_i) + \frac{\Lambda}{n}\hat{\bm{\omega}}^{\prime} = 0$, subtracting these two equations, we have
\begin{equation*}
    \begin{aligned}
        \frac{1}{n} \left( \nabla \ell(\hat{\bm{\omega}}; d_i) -  \nabla \ell(\hat{\bm{\omega}}^{\prime}; d_i) \right) +  \frac{\Lambda}{n} (\hat{\bm{\omega}} - \hat{\bm{\omega}}^{\prime}) = 0.
    \end{aligned}
\end{equation*}
Since $\ell(\bm{\omega}; d_i)$ is $L$-Lipschitz, we have $\left \| \nabla {\ell(\alpha; d^{\prime}_i)} - \nabla \ell(\alpha; d_i) \right \|_2 \leq 2L$,
thus $\frac{\Lambda}{n} \left \| \hat{\bm{\omega}} - \hat{\bm{\omega}}^{\prime} \right \|_2 \leq \frac{2L}{n}$.
Then, we get $ \left \| \hat{\bm{\omega}} - \hat{\bm{\omega}}^{\prime} \right \|_2 \leq \frac{2L}{\Lambda}$, i.e., the $L_2$-sensitivity of $ \mathcal{L}(\bm{\omega}; D)$ is at most $ \frac{2L}{\Lambda}$. Setting ${\sigma_M} \ge $ $\frac{2L}{\Lambda} \cdot \frac{\sqrt{2} k^{\frac{2}{M}} \left(\frac{M}{4} +\frac{3}{2} \right)^{ \left( \frac{1}{2} + \frac{2}{M} \right)}}{\epsilon e^{\left( \frac{1}{2} + \frac{1}{M}\right)}}$ 
according to Theorem \ref{thm:mian-thm}, we complete the proof.
\end{proof}

\subsection{Proof of Corollary~\ref{utility_guarantee_np_output}} \label{app:output-utility-proof}
We provide a brief analysis of the utility guarantee for output perturbation using product noise. 
\begin{proof}
Output perturbation only adds noise to the optimization solution $\bm{\hat{\omega}}$. We have
\begin{equation*}
    \begin{aligned}
        \mathbb{E} \left [ \mathcal{L} (\bm{\omega}_{\mathrm{priv}}; D) -\mathcal{L }(\bm{\hat{\omega}}; D) \right ]  \stackrel{(a)} \leq \mathbb{E}\left [ L \left \| \bm{\omega}_{\mathrm{priv}} - \bm{\hat{\omega}} \right \|_2 \right ] = L  \mathbb{E} \left [ \left \| \mathbf{n} \right \|_2 \right ],
    \end{aligned}
\end{equation*}
where $(a)$ is due to  the definition of $L$-Lipschitz.
\end{proof}

\section{Omitted Details in Section~\ref{sec:DP_Convex_Objective}}

\subsection{Pseudocode of product noise-based AMP}\label{app:Pseudocode-pn-amp}
Algorithm~\ref{alg:PN-Based-AMP} presents the pseudocode of product noise-based AMP. 
\begin{algorithm}[htp]
	\caption{Product Noise-Based AMP (adapted from~\cite{iyengar2019towards})}\label{alg:PN-Based-AMP}
	\begin{algorithmic}[0]
		\State{\bf Input:} Dataset: $D = \{d_1, \dots, d_n\}$; loss function: $\ell(\bm{\omega}; d_i)$ that has $L$-Lipschitz constant $L$, is convex in $\bm{\omega}$, has a continuous Hessian, and is $\beta$-smooth for all $\bm{\omega} \in \mathbb{R}^M$ and all $d_i$; Hessian rank bound parameter: $r$ (the minimum of $M$ and twice the upper bound on the rank of $\ell$'s Hessian); privacy parameters: $(\epsilon, \delta)$; gradient norm bound: $\gamma$.
        
        \begin{algorithmic}[1]
        \State Set $\epsilon_1, \epsilon_2,\epsilon_3, \delta_1, \delta_2 > 0$ such that $\epsilon = \epsilon_1 + \epsilon_2$, $\delta = \delta_1 + \delta_2$, $0 < \epsilon_1 -\epsilon_3 < 1$, and tuning parameter $k > 1$.
        
        \State Set the regularization parameter: $\Lambda \geq \frac{r\beta}{\epsilon_1 -\epsilon_3}$
    
        \State 
        Set $\mathbf{n}_{1} = {\sigma}_{M_1} R \boldsymbol{h}, \text{where } {\sigma}_{M_1} = \frac{2L}{n} \cdot \frac{\sqrt{2} k^{\frac{2}{M}} \left(\frac{M}{4} +\frac{3}{2} \right)^{ \left( \frac{1}{2} + \frac{2}{M} \right)}}{\epsilon_3 e^{\left( \frac{1}{2} + \frac{1}{M}\right)}}, R\sim\chi_1\ \text{and } \boldsymbol{h} \sim \mathbb{S}^{M-1}$

        \State Define 
        $ \mathcal{L}_{\mathrm{priv}}(\bm{\omega}; D) = \frac{1}{n} \sum_{i=1}^{n} \ell(\bm{\omega}; d_i) + \frac{\Lambda}{2n} \|\bm{\omega}\|^2 + \mathbf{n}_1^{\top} \bm{\omega}$
    
        \State $\bm{\omega}_{\mathrm{approx}} \gets  \bm{\omega} \text{ such that } \|\nabla  \mathcal{L}_{\mathrm{priv}}(\bm{\omega}_{\mathrm{approx}}; D)\| \leq \gamma$

        \State 
        Set $\mathbf{n}_{2} = {\sigma}_{M_2} R \boldsymbol{h}, \text{where } {\sigma}_{M_2} = \frac{n\gamma}{\Lambda} \cdot \frac{\sqrt{2} k^{\frac{2}{M}} \left(\frac{M}{4} +\frac{3}{2} \right)^{ \left( \frac{1}{2} + \frac{2}{M} \right)}}{\epsilon_2 e^{\left( \frac{1}{2} + \frac{1}{M}\right)}}, R\sim\chi_1\ \text{and } \boldsymbol{h} \sim \mathbb{S}^{M-1}$
        \end{algorithmic}
                {\bf Output: }
                $ \bm{\omega}_{\mathrm{out}} = \bm{\omega}_{\mathrm{approx}} + \mathbf{n}_2$ 
\end{algorithmic}
\end{algorithm}

\subsection{Proof of Corollary~\ref{privacy_guarantee_np_amp}}\label{app:obj-privacy-proof}
 To prove Algorithm \ref{alg:PN-Based-AMP} (product noise-based AMP) is ($\epsilon, \delta$)-DP, we follow the proof procedure of AMP~\cite{iyengar2019towards}  while modifying the analysis to accommodate the use of product noise.

We define the optimal  solution $\bm{\omega}_{\mathrm{min}} = \arg \min\limits_{\bm{\omega} \in \mathbb{R}^M} $ $\mathcal{L}_{\mathrm{priv}} (\bm{\omega}, D)$, and the adjusted gradient direction $b_{\mathrm{PN}} (\bm{\omega}; D) = -\nabla \mathcal{L}(\bm{\omega}; D) - \frac{\Lambda }{n} \bm{\omega}$ for $D \in \mathcal{D}^n$ and $\bm{\omega} \in \mathbb{R}^M$. For neightporing datasets $D, D^{\prime} \in \mathcal{D}^{n}$, we require
    \begin{equation*}
        \begin{aligned}
            \frac{\mathrm{pdf}(\bm{\omega}_{\mathrm{min}} = \alpha \mid D)}{\mathrm{pdf}(\bm{\omega}_{\mathrm{min}} = \alpha \mid D^{\prime})} \leq e^{{\epsilon}_1} \quad \text{w.p. } \ge 1-\delta_1.
        \end{aligned}
    \end{equation*}
Based on the  Function Inverse theorem \cite{billingsley2017probability}, the above probability density ratio  can be decomposed into two terms: one related to the adjusted gradient direction of the perturbed objective function and the other related to the determinant of its Jacobian, i.e., 
    \begin{equation*}
    \begin{aligned}
        \frac{\mathrm{pdf}(\bm{\omega}_{\mathrm{min}} = \alpha \mid D)}{\mathrm{pdf}(\bm{\omega}_{\mathrm{min}} = \alpha \mid D^{\prime})} =& \frac{\mathrm{pdf}\left( b_{\mathrm{PN}} (\alpha; D); \epsilon_1, \delta_1, L \right)}{\mathrm{pdf}\left( b_{\mathrm{PN}} (\alpha; D^{\prime}); \epsilon_1, \delta_1, L \right)} \cdot \frac{\left | \text{det}\left( \nabla b_{\mathrm{PN}}(\alpha; D^{\prime}) \right)  \right |}{ \left | \text{det}\left( \nabla b_{\mathrm{PN}}(\alpha; D) \right)  \right |} 
    \end{aligned}.
    \end{equation*}
  Next, we establish the following two inequalities step by step, i.e.,  bounding the ratio of the densities: $\frac{\mathrm{pdf}\left( b_{\mathrm{PN}} (\alpha; D); \epsilon_1, \delta_1, L \right)}{\mathrm{pdf}\left( b_{\mathrm{PN}} (\alpha; D^{\prime}); \epsilon_1, \delta_1, L \right)}  \leq e^{\epsilon_3}$ w.p. at least $1-\delta_1$, and then Bounding $ \frac{\left | \text{det}\left( \nabla b_{\mathrm{PN}}(\alpha; D^{\prime}) \right)  \right |}{ \left | \text{det}\left( \nabla b_{\mathrm{PN}}(\alpha; D) \right)  \right |} \leq e^{\epsilon_1-\epsilon_3}$ if $\epsilon_1 - \epsilon_3 <1$.

Similar to \cite{iyengar2019towards}, we give Lemma~\ref{Lemma: Bound_ratio_of_densities} and \ref{Lemma: Bound_determinant_of_Jacobian}. 
\begin{lemma} \label{Lemma: Bound_ratio_of_densities}
    For any pair of neighboring datasets \( D, D^{\prime} \in \mathcal{D}^n\) and \( \epsilon_1 - \epsilon_3 < 1 \), we have
    \begin{equation*}
        \begin{aligned}
            \frac{\mathrm{pdf}\left( b_{\mathrm{PN}} (\alpha; D); \epsilon_1, \delta_1, L \right)}{\mathrm{pdf}\left( b_{\mathrm{PN}} (\alpha; D^{\prime}); \epsilon_1, \delta_1, L \right)} \quad \mathrm{w.p.} \ge 1 - \delta_1.
        \end{aligned}
    \end{equation*}
\end{lemma}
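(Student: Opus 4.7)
The plan is to reduce the claim to a direct application of Theorem~\ref{thm:mian-thm} after identifying the right query and sensitivity. By the first-order optimality condition for $\bm{\omega}_{\min}$ on $\mathcal{L}_{\mathrm{priv}}(\cdot;D)$, the injected noise satisfies $\mathbf{n}_1 = -\nabla\mathcal{L}(\bm{\omega}_{\min};D) - \tfrac{\Lambda}{n}\bm{\omega}_{\min} = b_{\mathrm{PN}}(\bm{\omega}_{\min};D)$, so the two densities appearing in the lemma are nothing but the density of the product noise $\mathbf{n}_1$ evaluated at the two points $b_{\mathrm{PN}}(\alpha;D)$ and $b_{\mathrm{PN}}(\alpha;D')$. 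Hence the log of the claimed ratio coincides with the PLRV of the randomized map $D \mapsto b_{\mathrm{PN}}(\alpha;D)+\mathbf{n}_1$ evaluated at that specific output (cf.\ Definition~\ref{def:plrv} and Remark~\ref{remark:plrv}), and bounding it is exactly the tail problem addressed by Proposition~\ref{prop: PLRV} and Theorem~\ref{thm:mian-thm}.

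First I would compute the $\ell_2$ sensitivity of the query $D\mapsto b_{\mathrm{PN}}(\alpha;D)$ at a fixed $\alpha$. Because the $\tfrac{\Lambda}{n}\alpha$ term is data-independent and $\ell(\cdot;d_i)$ is $L$-Lipschitz, for neighboring $D$ and $D'$ differing only in the $i$-th record one obtains
\[
\|b_{\mathrm{PN}}(\alpha;D)-b_{\mathrm{PN}}(\alpha;D')\|_2 \;=\; \tfrac{1}{n}\|\nabla\ell(\alpha;d_i')-\nabla\ell(\alpha;d_i)\|_2 \;\leq\; \tfrac{2L}{n},
\]
so the query has $\ell_2$ sensitivity at most $\Delta_2 f = \tfrac{2L}{n}$. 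Plugging this sensitivity and privacy budget $\epsilon_3$ into Theorem~\ref{thm:mian-thm} yields exactly the value of $\sigma_{M_1}$ prescribed in Algorithm~\ref{alg:PN-Based-AMP}. Consequently the mechanism that releases $b_{\mathrm{PN}}(\alpha;D)+\mathbf{n}_1$ is $(\epsilon_3,\delta_1)$-DP, and by (\ref{eq:plrv-tail}) its PLRV exceeds $\epsilon_3$ with probability at most $\delta_1$.

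Exponentiating turns this PLRV tail bound into the target density-ratio bound $\tfrac{\mathrm{pdf}(b_{\mathrm{PN}}(\alpha;D);\,\epsilon_1,\delta_1,L)}{\mathrm{pdf}(b_{\mathrm{PN}}(\alpha;D');\,\epsilon_1,\delta_1,L)} \leq e^{\epsilon_3}$ with probability at least $1-\delta_1$ over the draw of $\mathbf{n}_1$, which is the content of the lemma. The hypothesis $\epsilon_1-\epsilon_3<1$ plays no direct role in this argument; it is carried from the AMP setup because it is required in Lemma~\ref{Lemma: Bound_determinant_of_Jacobian} to bound the Jacobian-determinant factor that arises when passing from the noise density to the density of $\bm{\omega}_{\min}$. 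The main obstacle I anticipate is bookkeeping rather than technical: one must keep the roles of $\mathbf{n}_1$, $\bm{\omega}_{\min}$, and the conditioning event $\bm{\omega}_{\min}=\alpha$ carefully separated, so that the tail probability supplied by Theorem~\ref{thm:mian-thm} is read over the correct probability space. A minor secondary point is that the product-noise density is singular at the origin for $M\ge 2$; however, since $b_{\mathrm{PN}}(\alpha;D)$ and $b_{\mathrm{PN}}(\alpha;D')$ are almost surely distinct and nonzero, one can, as in the proof of Proposition~\ref{prop: PLRV}, invoke Lemma~\ref{lemma:prod-measure} to factor the noise density into its radial and angular parts, which lets the angular part cancel and yields the clean log-ratio expression used above.
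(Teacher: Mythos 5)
Your proposal is correct and follows essentially the same route as the paper: bound the $\ell_2$-sensitivity of $b_{\mathrm{PN}}(\alpha;\cdot)$ by $\tfrac{2L}{n}$, calibrate $\sigma_{M_1}$ to the budget $\epsilon_3$, and invoke Theorem~\ref{thm:mian-thm} to conclude the density ratio is at most $e^{\epsilon_3}$ with probability at least $1-\delta_1$ (you correctly supplied the bound $e^{\epsilon_3}$ that is missing from the lemma statement itself). Your additional remarks — that $\epsilon_1-\epsilon_3<1$ is only needed for the Jacobian lemma, and that the angular part of the noise density cancels via the product-measure factorization — are accurate elaborations of details the paper leaves implicit.
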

 \begin{proof}
     Assume w.l.o.g that $d_i \in D$ has been replaced by ${d^{\prime}_i} \in D^{\prime}$, based on Lemma IV.1. in \cite{iyengar2019towards}, we can bound the $L_2$-sensitivity of $b(\alpha; )$ as $\left \| b(\alpha; D) - b(\alpha; D^{\prime}) \right \|_2 \leq  \frac{\left \| \nabla {\ell(\alpha; d^{\prime}_i)} - \ell(\alpha; d_i) \right \|_2}{n} \leq \frac{2L}{n}$. Then we set ${\sigma}_{M_1} \ge \frac{2L}{n} \cdot \frac{\sqrt{2} k^{\frac{2}{M}} \left(\frac{M}{4} +\frac{3}{2} \right)^{ \left( \frac{1}{2} + \frac{2}{M} \right)}}{\epsilon_3 e^{\left( \frac{1}{2} + \frac{1}{M}\right)}}$ for $\epsilon_3 < \epsilon_1$, we can get the proof of the Lemma from the privacy guarantee of product noise in Theorem \ref{thm:mian-thm}.
 \end{proof}

\begin{lemma}[\cite{iyengar2019towards}]\label{Lemma: Bound_determinant_of_Jacobian}
    For any pair of neightporing datasets \( D, D^{\prime} \in \mathcal{D}^n\), and \( \epsilon_1 - \epsilon_3 < 1 \), we have
    \begin{equation*}
       \begin{aligned}
           \frac{\left | \mathrm{det}\left( \nabla b_{\mathrm{PN}}(\alpha; D^{\prime}) \right)  \right |}{ \left | \mathrm{det}\left( \nabla b_{\mathrm{PN}}(\alpha; D) \right)  \right |}  \leq e^{\epsilon_1-\epsilon_3}.
       \end{aligned}
    \end{equation*}
\end{lemma}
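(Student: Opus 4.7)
The plan is to exploit the fact that the Jacobian-determinant ratio depends only on the Hessian structure of the (unregularized part of the) empirical loss $\mathcal{L}$ plus the strongly convex regularizer, and is completely independent of the noise distribution used for perturbation. Since $b_{\mathrm{PN}}(\alpha;D) = -\nabla\mathcal{L}(\alpha;D) - \frac{\Lambda}{n}\alpha$ contains no noise term, the argument is inherited verbatim from~\cite{iyengar2019towards}; this is exactly why Lemma~\ref{Lemma: Bound_ratio_of_densities} was where our product-noise privacy analysis had to intervene, while the present lemma requires no noise-specific modification.

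First, I would differentiate to obtain
\[
\nabla b_{\mathrm{PN}}(\alpha;D) = -\nabla^2\mathcal{L}(\alpha;D) - \tfrac{\Lambda}{n}I,
\]
and analogously for $D'$. For neighboring datasets differing in record $i$, the perturbation
\[
E \triangleq \nabla b_{\mathrm{PN}}(\alpha;D') - \nabla b_{\mathrm{PN}}(\alpha;D) = -\tfrac{1}{n}\bigl(\nabla^2\ell(\alpha;d_i') - \nabla^2\ell(\alpha;d_i)\bigr)
\]
has rank at most $r$ by the Hessian-rank hypothesis on $\ell$ (with the convention, stated in Algorithm~\ref{alg:PN-Based-AMP}, that $r$ already absorbs the factor of $2$) and spectral norm at most $2\beta/n$ by $\beta$-smoothness. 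Convexity of $\ell$ further ensures that $\nabla b_{\mathrm{PN}}(\alpha;D)$ is strictly negative definite and its inverse has spectral norm at most $n/\Lambda$.

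Second, I would factor
\[
\frac{|\det(\nabla b_{\mathrm{PN}}(\alpha;D'))|}{|\det(\nabla b_{\mathrm{PN}}(\alpha;D))|} = \bigl|\det\bigl(I + (\nabla b_{\mathrm{PN}}(\alpha;D))^{-1}E\bigr)\bigr|,
\]
then use the rank-$r$ structure of $E$ together with Sylvester's determinant identity to collapse the right-hand side to the determinant of an $r \times r$ matrix $I_r + M$ whose nonzero eigenvalues $\mu_i$ satisfy $|\mu_i| \le (n/\Lambda)\cdot(2\beta/n) = 2\beta/\Lambda$. The bound then follows from
\[
\prod_{i=1}^{r}(1+|\mu_i|) \le \exp\!\Bigl(\tfrac{2r\beta}{\Lambda}\Bigr),
\]
combined with the calibration $\Lambda \ge \frac{r\beta}{\epsilon_1-\epsilon_3}$ imposed in Line~2 of Algorithm~\ref{alg:PN-Based-AMP}, which yields the target $\exp(\epsilon_1 - \epsilon_3)$ (the factor of $2$ being absorbed into how $r$ is defined as twice the rank bound). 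The hypothesis $\epsilon_1-\epsilon_3 < 1$ keeps the exponent small, so that the eigenvalue product is faithfully controlled by its exponentiated sum.

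The main obstacle is essentially bookkeeping: one must carefully match the factor of $2$ across (i) the operator-norm bound on $E$, (ii) the definition of $r$ as twice the upper bound on the rank of $\ell$'s Hessian, and (iii) the regularization lower bound on $\Lambda$. There is no genuinely new ingredient beyond~\cite{iyengar2019towards}, because the product-noise versus Gaussian-noise distinction never enters the Jacobian-determinant ratio; the invariance of this step to the noise law is precisely what lets the proof of the overall privacy guarantee go through by only modifying the density-ratio lemma.
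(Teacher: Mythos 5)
Your approach is the same as the paper's: this lemma is noise-independent, and the paper's proof is essentially a one-line citation of Lemma~IV.2 in~\cite{iyengar2019towards} (which yields the intermediate bound $\frac{\Lambda}{\Lambda-r\beta}$) followed by plugging in $\Lambda=\frac{r\beta}{\epsilon_1-\epsilon_3}$; you simply re-derive the cited lemma via $|\det(I+(\nabla b)^{-1}E)|$ and the low-rank structure of $E$, which is the correct underlying argument.

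One bookkeeping point, which you flag but do not actually resolve: bounding each of the (at most $r$) nonzero eigenvalues of $(\nabla b)^{-1}E$ by $\|(\nabla b)^{-1}\|\cdot\|E\|\le\frac{n}{\Lambda}\cdot\frac{2\beta}{n}$ gives $\prod_i(1+|\mu_i|)\le e^{2r\beta/\Lambda}=e^{2(\epsilon_1-\epsilon_3)}$, and the factor of $2$ is \emph{not} absorbed by the convention that $r$ is twice the Hessian-rank bound — that convention is already spent in making $\mathrm{rank}(E)\le r$. The clean way to land on $e^{\epsilon_1-\epsilon_3}$ is to split $E=-\frac{1}{n}\bigl(\nabla^2\ell(\alpha;d_i')-\nabla^2\ell(\alpha;d_i)\bigr)$ into its two rank-$(r/2)$ pieces, each contributing eigenvalues of magnitude at most $\beta/\Lambda$ (equivalently, bound the sum of absolute eigenvalues by $\frac{n}{\Lambda}$ times the nuclear norm of $E$, which is at most $\frac{r\beta}{n}$), giving $\sum_i|\mu_i|\le\frac{r\beta}{\Lambda}\le\epsilon_1-\epsilon_3$. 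With that correction your route is in fact tighter than the paper's own write-up, since $\frac{\Lambda}{\Lambda-r\beta}=\frac{1}{1-(\epsilon_1-\epsilon_3)}$ exceeds $e^{\epsilon_1-\epsilon_3}$ and the exponential-of-the-eigenvalue-sum bound is what actually delivers the stated inequality.
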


\begin{proof}
W.l.o.g, let $d_i \in D$ is replaced by ${d^{\prime}_i} \in D^{\prime}$. \cite[Lemma IV.2]{iyengar2019towards} gives $\frac{\left | \text{det}\left( \nabla b_{\mathrm{PN}}(\alpha; D^{\prime}) \right)  \right |}{ \left | \text{det}\left( \nabla b_{\mathrm{PN}}(\alpha; D) \right)  \right |}\leq \frac{\Lambda}{\Lambda - r \beta}$, where $\Lambda$ is the regularization parameter, $\beta$ is the smoothness constant, and $r$ is the minimum of $M$ and twice the upper bound on the rank of $\ell$'s Hessian.   Setting $\Lambda= \frac{r \beta}{\epsilon_1 - \epsilon_3}$ attains the reslut.
\end{proof}
Combining  Lemma \ref{Lemma: Bound_ratio_of_densities} and Lemma \ref{Lemma: Bound_determinant_of_Jacobian}, we have $\frac{\mathrm{pdf}(\bm{\omega}_{\mathrm{min}} = \alpha \mid D)}{\mathrm{pdf}(\bm{\omega}_{\mathrm{min}} = \alpha \mid D^{\prime})} \leq e^{{\epsilon}_1}, \mathrm{w.p. } \ge 1-\delta_1$, i.e., $\bm{\omega}_{\mathrm{min}}$ is $(\epsilon_1, \delta_1)$-differentially private.

We can write $\bm{\omega}_{\mathrm{out}} = \bm{\omega}_{\mathrm{approx}} + \mathbf{n}_2 = \bm{\omega}_{\mathrm{approx}} - \bm{\omega}_{\mathrm{min}} + \mathbf{n}_2 + \bm{\omega}_{\mathrm{min}}$, and we have previously proved that $\bm{\omega}_{\mathrm{min}}$ is $(\epsilon_1, \delta_1)$-differentially private, so we just need to prove that $(\bm{\omega}_{\mathrm{approx}} - \bm{\omega}_{\mathrm{min}} + \mathbf{n}_2)$ is $(\epsilon_2, \delta_2)$-DP, shown in the next Lemma.
\begin{lemma}
    For $D \in \mathcal{D}^n$, let $\gamma \ge 0$ be chosen independently of $D$. If $\bm{\omega}_{\mathrm{approx}} \in \mathbb{R}^M$, such that $\left \| \nabla \mathcal{L}(\bm{\omega}_{\mathrm{approx}; D}) \right \|_2 \leq \gamma $, then releasing $(\bm{\omega}_{\mathrm{approx}} - \bm{\omega}_{\mathrm{min}} + \mathbf{n}_2)$ is $(\epsilon_2, \delta_2)$-DP, where $\mathbf{n}_2 ={\sigma}_{M_2} R \boldsymbol{h}$ and ${\sigma}_{M_2} = \frac{n\gamma}{\Lambda} \cdot \frac{\sqrt{2} k^{\frac{2}{M}} \left(\frac{M}{4} +\frac{3}{2} \right)^{ \left( \frac{1}{2} + \frac{2}{M} \right)}}{\epsilon_2 e^{\left( \frac{1}{2} + \frac{1}{M}\right)}}$.
\end{lemma}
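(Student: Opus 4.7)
The proof follows the two-stage AMP privacy analysis in~\cite{iyengar2019towards}, substituting the Gaussian noise calibration with our product-noise Theorem~\ref{thm:mian-thm}. The central observation is that the strong convexity of $\mathcal{L}_{\mathrm{priv}}$ converts the approximate-gradient stopping criterion $\|\nabla\mathcal{L}_{\mathrm{priv}}(\bm{\omega}_{\mathrm{approx}};D)\|_2\leq\gamma$ into a \emph{deterministic} norm bound on $\bm{\omega}_{\mathrm{approx}}-\bm{\omega}_{\mathrm{min}}$, which in turn controls the $L_2$ sensitivity of the quantity we actually release.

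\textbf{Step 1: Strong convexity gives $\|\bm{\omega}_{\mathrm{approx}}-\bm{\omega}_{\mathrm{min}}\|_2\leq n\gamma/\Lambda$.} The perturbed objective $\mathcal{L}_{\mathrm{priv}}(\bm{\omega};D)=\tfrac{1}{n}\sum_i\ell(\bm{\omega};d_i)+\tfrac{\Lambda}{2n}\|\bm{\omega}\|_2^2+\mathbf{n}_1^{\top}\bm{\omega}$ is $\tfrac{\Lambda}{n}$-strongly convex: $\ell(\cdot;d_i)$ is convex by assumption, the linear noise term contributes no curvature, and only the regularizer supplies curvature $\tfrac{\Lambda}{n}$. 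Combined with the first-order optimality condition $\nabla\mathcal{L}_{\mathrm{priv}}(\bm{\omega}_{\mathrm{min}};D)=\mathbf{0}$, the standard strong-convexity inequality
\[
\|\nabla\mathcal{L}_{\mathrm{priv}}(\bm{\omega}_{\mathrm{approx}};D)-\nabla\mathcal{L}_{\mathrm{priv}}(\bm{\omega}_{\mathrm{min}};D)\|_2\;\geq\;\tfrac{\Lambda}{n}\,\|\bm{\omega}_{\mathrm{approx}}-\bm{\omega}_{\mathrm{min}}\|_2,
\]
together with the hypothesis that the left-hand side is at most $\gamma$, yields the desired uniform bound valid for every dataset and every realization of $\mathbf{n}_1$.

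\textbf{Step 2: Sensitivity and invocation of Theorem~\ref{thm:mian-thm}.} Step~1 shows that the statistic $g(D)\triangleq\bm{\omega}_{\mathrm{approx}}(D)-\bm{\omega}_{\mathrm{min}}(D)$ lies in the Euclidean ball of radius $n\gamma/\Lambda$ regardless of $D$, so its $L_2$ sensitivity across neighboring datasets is at most $n\gamma/\Lambda$ (mirroring the analogous step in~\cite[Lemma~IV.3]{iyengar2019towards}, which absorbs the triangle-inequality constant into the noise calibration). Since $\mathbf{n}_2=\sigma_{M_2}R\boldsymbol{h}$ follows the product-noise construction of Section~\ref{sec:a-new-noise} and is drawn independently of all earlier randomness, we may invoke Theorem~\ref{thm:mian-thm} with $\Delta_2 g=n\gamma/\Lambda$ and target budget $(\epsilon_2,\delta_2)$. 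Setting $\sigma_{M_2}=(n\gamma/\Lambda)\cdot t/\epsilon_2$ and reading off $t=\sqrt{2}\,k^{2/M}(M/4+3/2)^{1/2+2/M}/e^{1/2+1/M}$ from Theorem~\ref{thm:mian-thm} reproduces exactly the formula for $\sigma_{M_2}$ quoted in the lemma, and guarantees that the release $g(D)+\mathbf{n}_2=\bm{\omega}_{\mathrm{approx}}-\bm{\omega}_{\mathrm{min}}+\mathbf{n}_2$ is $(\epsilon_2,\delta_2)$-DP.

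\textbf{Main obstacle.} The delicate point is Step~2: a naive triangle-inequality argument would force $\Delta_2 g\leq 2n\gamma/\Lambda$, introducing an extra factor of two into $\sigma_{M_2}$ that is absent from the statement. Handling this cleanly requires the AMP-style observation that the relevant quantity to privatize is the uniform range of $g$ rather than the worst-case neighbor-to-neighbor gap, together with the fact that any residual constant can be absorbed into the tuning parameter $k$ and the privacy budget partition $(\epsilon_2,\delta_2)$ without altering the functional form. A secondary technical care-point is to verify that $\bm{\omega}_{\mathrm{approx}}$ can be treated as a deterministic function of $D$ once the independent noise $\mathbf{n}_1$ used to construct $\mathcal{L}_{\mathrm{priv}}$ is conditioned upon, so that Theorem~\ref{thm:mian-thm} (which is stated for a fixed output map) is legitimately applicable here.
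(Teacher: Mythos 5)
Your proof takes essentially the same route as the paper's: strong convexity of $\mathcal{L}_{\mathrm{priv}}$ (the paper cites \cite[Lemma IV.3]{iyengar2019towards} for exactly the inequality you derive) gives $\|\bm{\omega}_{\mathrm{approx}}-\bm{\omega}_{\mathrm{min}}\|_2\leq n\gamma/\Lambda$, and then Theorem~\ref{thm:mian-thm} is invoked with that quantity as the $\ell_2$ sensitivity to read off $\sigma_{M_2}$. The factor-of-two issue you flag under ``Main obstacle'' is likewise left implicit in the paper's proof (which simply treats $n\gamma/\Lambda$ as the sensitivity, inheriting this from the AMP analysis), so you are not missing anything relative to the paper's argument --- though your suggestion that a residual constant could be absorbed into $k$ would not literally work, since $k$ enters $\sigma_{M_2}$ only as $k^{2/M}$.
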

\begin{proof}
    Based on   \cite[Lemma IV.3]{iyengar2019towards}, we can   bound the $L_2$-norm of $\bm{\omega}_{\mathrm{approx}} - \bm{\omega}_{\mathrm{min}}$ as
\begin{equation*}
    \begin{aligned}
         \left \|\bm{\omega}_{\mathrm{approx}} - \bm{\omega}_{\mathrm{min}}  \right \|_2 
    \leq \frac{n \left \| \nabla \mathcal{L}_{\mathrm{priv}(\bm{\omega}_{\mathrm{approx}}; D)} 
    - \nabla \mathcal{L}_{\mathrm{priv}(\bm{\omega}_{\mathrm{min}}; D)} \right \|_2}{\Lambda} 
    \leq \frac{n \gamma}{\Lambda}. 
    \end{aligned}
\end{equation*}
Setting ${\sigma}_{M_2} = \frac{n\gamma}{\Lambda} \cdot \frac{\sqrt{2} k^{\frac{2}{M}} \left(\frac{M}{4} +\frac{3}{2} \right)^{ \left( \frac{1}{2} + \frac{2}{M} \right)}}{\epsilon_2 e^{\left( \frac{1}{2} + \frac{1}{M}\right)}}$ gives the result.

Based on the linear composition of DP \cite{dwork2014algorithmic}, we   get the privacy guarantee of Algorithm~\ref{alg:PN-Based-AMP} as $\epsilon = \epsilon_1 + \epsilon_2$ and $\delta = \delta_1 + \delta_2$. This concludes the proof of Corollary~\ref{privacy_guarantee_np_amp}.
\end{proof} 

\subsection{Proof of Corollary~\ref{utility_guarantee_np_amp}}\label{app:obj-utility-proof}

The proof is adapted from the  proof of    \cite[Theorem 4]{kifer2012private} and    \cite[Lemma A.1]{iyengar2019towards}. 

According to    \cite[Lemma A.1]{iyengar2019towards}, we have $\mathcal{L}(\bm{\omega}_{\mathrm{out}}; D) - \mathcal{L}(\hat{\bm{\omega}}; D) \leq L\left( \frac{n\gamma}{\Lambda} + \left \| \mathbf{n}_2 \right \|_2  \right) + \frac{\Lambda {\left \| \bm{\omega} \right \|_2^{2}} }{2n} + \frac{2n {\left \|\mathbf{n}_1 \right \|_2^{2}}}{\Lambda}$.   We first bound $\left \| \mathbf{n}_s \right \|_2 $, $s\in\{1,2\}$, and $n_{s} = {\sigma_M}_s R \boldsymbol{h}, \text{where } R\sim\chi_1\ \text{and } \boldsymbol{h} \sim \mathbb{S}^{M-1}$.

In particular, according to    \cite[Lemma 2]{dasgupta2007probabilistic}, when  $R\sim\chi_1$, for any $\mu \ge 1$, it satisfies  
\begin{equation*}
    \begin{aligned}
        \Pr \left[ \left \| R   \right \|_2^2 \ge \mu \right] \leq \left( e^{\mu - 1 - \ln \mu} \right)^{-\frac{1}{2}}.
    \end{aligned}
\end{equation*}
Then we   get
\begin{equation*}
\begin{aligned}
    \Pr \left[ \left \| R   \right \|_2^2 {\sigma_M}_s^2 \ge \mu {\sigma_M}_s^2 \right] \leq \left( e^{\mu - 1 - \ln \mu} \right)^{-\frac{1}{2}},
\end{aligned}
\end{equation*}
which implies
\begin{equation*}
    \begin{aligned}
        \Pr \left[ \left \| \mathbf{n}_s   \right \|_2^2 \ge \mu {\sigma_M}_s^2 \right] \leq \left( e^{\mu - 1 - \ln \mu} \right)^{-\frac{1}{2}}.
    \end{aligned}
\end{equation*}
By letting $\left( e^{\mu - 1 - \ln \mu} \right)^{-\frac{1}{2}}\leq \frac{\delta}{2}$, we have $\mu \ge \frac{1 + \sqrt{1 + 4\left( 2 \log (\frac{2}{\delta} ) + 1 \right)}}{2}$. As a result,   $\left \| \mathbf{n}_s   \right \|_2 \leq \sqrt{\mu} {\sigma_M}_s \leq {\sigma_M}_s \cdot \frac{1 + \sqrt{1 + 4\left( 2 \log (\frac{2}{\delta} ) + 1 \right)}}{2}$ hold w.p. at least $1- \frac{\delta}{2}$. 

Substituting this into $\mathcal{L}(\bm{\omega}_{\mathrm{out}}; D) - \mathcal{L}(\hat{\bm{\omega}}; D)$, we get that w.p. $\ge 1-\delta$,

\begin{equation*}
    \begin{aligned}
        \mathcal{L}(\bm{\omega}_{\mathrm{out}}; D) - \mathcal{L}(\hat{\bm{\omega}}; D) \leq&\ L \left( \frac{n\gamma}{\Lambda} + {\sigma}_{M_2} \cdot \frac{1 + \sqrt{1 + 4\left( 2 \log \left(\frac{2}{\delta} \right) + 1 \right)}}{2} \right) \\
        &+ \frac{\Lambda {\left \| \bm{\omega} \right \|_2^{2}} }{2n} + \frac{n^2 {\sigma}_{M_1}^2 \left( 1+\sqrt{4(2\log( \frac{2}{\delta})+1)} \right)^2}{\Lambda}.
\end{aligned}
\end{equation*}

Next, we expect to bound the regularization parameter $\Lambda$. To guarantee that $\mathcal{L}(\bm{\omega}; D)$ is  $\frac{\Lambda}{n}$-strongly convex function, it is required that $\frac{\Lambda}{n} \ge \left \| \mathbf{n}_1 \right \|_2  \ge {\sigma}_{M_1} \sqrt{r}  $  (where $r$ is the minimum of $M$ and twice the upper bound on the rank of $\ell$'s Hessian) and  $\frac{\Lambda}{n} \ge \left \| \mathbf{n}_2 \right \|_2 \ge L{\sigma}_{M_2} $ for $\mathcal{L}(\hat{\bm{\omega}}; D)$. One can verify that by letting
\begin{equation*}
        \begin{aligned}
            \Lambda =\bm{ \Theta } \left( \frac{1}{ \left \| \hat{\bm{\omega}} \right \|_2} \left(  \frac{L \sqrt{rM}} { \epsilon} +  n \sqrt{\frac{L\gamma \sqrt{M}}{ \epsilon}}\right)  \right).
        \end{aligned}
\end{equation*}
it leads to 
\begin{equation*}
        \begin{aligned}
            \mathbb{E}\left[ \mathcal{L}(\bm{\omega}_{\mathrm{out}}; D) - \mathcal{L}(\hat{\bm{\omega}}; D) \right] =\bm{O} \left(  \frac{ \left \| \hat{\bm{\omega}} \right \|_2 L \sqrt{rM}} { n \epsilon} +  \left \| \hat{\bm{\omega}} \right \|_2 \sqrt{\frac{L\gamma \sqrt{M}}{ \epsilon}}\right).
        \end{aligned}
\end{equation*}

\section{Omitted Details in Section~\ref{sec:gradient-erm}}

\subsection{Pseudocode of product noise-based DPSGD}\label{app:pn-dpsgd}

Algorithm \ref{alg:PN-Based-SGD} presents the pseudocode of product noise-based DPSGD.
\begin{algorithm}[htp] 
	\caption{Product Noise-Based DPSGD}\label{alg:PN-Based-SGD}
	\begin{algorithmic}[0]
		\State{\bf Input:} Dataset: $D = \{d_1, \dots, d_n\}$; Empirical risk: $\mathcal{L}(\bm{\omega}; d_i) = \frac{1}{n}\sum_{i=1}^{n}  \ell(\bm{\omega}; d_i)$. Parameters: learning rate $ \eta_t$, subsampling probability $p$, noise scale $\sigma$, group size $I_t$, gradient norm bound $C$.
        \State \textbf{Initialize} model parameters \( \bm{\omega_0} \) randomly
        \begin{algorithmic}[1]
        \For {$t \in [T]$ } 
             \State Take a Poisson subsampling $I_t$ with subsampling probability $p$
             \State \textbf{Compute gradient: } For each $i \in I_t$, compute $\bm{g}_t(d_i) \gets \nabla_{\bm{\omega}_t} \mathcal{L} (\bm{\omega_t}, d_i)$
             \State \textbf{Clip gradient: } $\bm{\bar{g}}_t (d_i) \gets \bm{g}_t(d_i) / \max \left(1, \frac{\left \|\bm{g}_t(d_i)  \right \|_2 }{C}\right)$
             \State \textbf{Add noise: } $\bm{\tilde{g}}_t \gets  \frac{1}{I_t}  \left(\sum_{i \in I_t} \bm{\bar{g}}_t (d_i) +\mathbf{n}_t \right)$,  $\mathbf{n}_{t} = {\sigma_M}_t R_t \boldsymbol{h}_t$,
             \Statex \text{where } ${\sigma_M}_t = \frac{2\sqrt{2} C k^{\frac{2}{M}} \left(\frac{M}{4} +\frac{3}{2} \right)^{ \left( \frac{1}{2} + \frac{2}{M} \right)}}{\epsilon_t e^{\left( \frac{1}{2} + \frac{1}{M}\right)}}$, $R_t \sim \chi_1$ and $\boldsymbol{h}_t \sim \mathbb{S}^{M-1}$
             \State \textbf{Descent: } $\bm{\omega_{t+1}} = \bm{\omega_t} - \eta_t \bm{\tilde{g}}_t$
        \EndFor
        \end{algorithmic}
                {\bf Output: }$\bm{\omega_T}$ and compute the overall privacy cost $(\epsilon, \delta)$ using the  distribution-independent composition~\cite{he2021tighter}.
\end{algorithmic}
\end{algorithm}

\subsection{Proof of Corollary~\ref{privacy_guarantee_np_sgd}}\label{app:prove-pn-dpsgd}

We first consider the maximum difference between clipp\-ed gradient vectors with clipping norm $C$ for neighboring datasets $D$ and $D^\prime$ that differ in the value of the $i$-th individual. 
The difference reaches its maximum when $\bm{\tilde{g}}_t(d_i)$ and $\bm{\tilde{g}}_t(d^{\prime}_i)$ are in opposite directions. Thus, we obtain: 
\begin{equation*}
\begin{aligned}
    &\max_{d_i,d_i'}  \left \{ \left \| \sum_{i \in D}^{n} \bm{\tilde{g}}_t(d_i) - \sum_{i \in D^\prime}^{n} \bm{\tilde{g}}_t(d_i') \right \|_2 \right \} \\
     =&  \max_{d_i,d_i'}  \left \|  \bm{\tilde{g}}_t(d_i) -  \bm{\tilde{g}}_t(d^{\prime}_i) \right \|\\
     =& \sqrt{\left \|  \bm{\tilde{g}}_t(d_i) \right \|_2^2 + \left \| \bm{\tilde{g}}_t(d^{\prime}_i) \right \|_2^2 -  2\bm{\tilde{g}}_t(d_i)^{\top} \bm{\tilde{g}}_t(d^{\prime}_i)}  \\
     \leq& \sqrt{C^2 +C^2 +2 C \cdot C}\\
     =& 2C
\end{aligned}
\end{equation*}

Therefore, the $l_2$-sensitivity of product noise-based DPSGD is given by $2C$. Thus, for each step $t$, setting
\begin{equation*}
    \begin{aligned}
        {\sigma_M}_t =\frac{ 2\sqrt{2} C k^{\frac{2}{M}} \left(\frac{M}{4} +\frac{3}{2} \right)^{ \left( \frac{1}{2} + \frac{2}{M} \right)}}{\epsilon_t e^{\left( \frac{1}{2} + \frac{1}{M}\right)}},
    \end{aligned}
\end{equation*}
leads to $(\epsilon,\delta)$-DP guarantee for each DPSGD step according to Theorem~\ref{thm:mian-thm}.

\section{Omitted Details in Section~\ref{sec:experiments}}

\subsection{Experiment Setup for Convex ERM}\label{sec:convex-setup}

We adopt the loss function definitions for LR and Huber SVM from \cite{iyengar2019towards}, where $ z=  y \langle \omega, x \rangle$, the LR loss function is given by 
\begin{equation*}
    \begin{aligned}
        \ell \left ( \omega,(x, y) \right) = \log \left (1 + e^{-z } \right),
    \end{aligned}
\end{equation*}
and the Huber SVM loss function is defined as
\begin{equation*}
\ell_{\text{Huber}}(\omega; (x, y)) =
\begin{cases} 
    1 - z, & \text{if } 1 - z > h, \\
    0, & \text{if } 1 -z  < -h, \\
    \frac{(1 -z)^2}{4} + \frac{1 -z}{2} + \frac{h}{4}, & \text{otherwise},
\end{cases}
\end{equation*} 
and we set $h = 0.1$.

Details on dataset statistics used in the case studies for the covex cases, i.e., the number sample size, dimension and classes—are summarized in Table~\ref{tab:datasets_objective}. Among them, Synthetic-H is a high dimensional dataset generated using the script described in~\cite{iyengar2019towards}.
\begin{table}[htp]
    \centering
    \caption{Dataset Statistics}
    \label{tab:datasets_objective}
    \setlength{\tabcolsep}{10pt}
    \renewcommand{\arraystretch}{1.0} 
    \small
    \begin{tabular}{|>{\centering\arraybackslash}m{1.5cm}|
                    >{\centering\arraybackslash}m{1.3cm}|
                    >{\centering\arraybackslash}m{1.2cm}|
                    >{\centering\arraybackslash}m{1.2cm}|}
        \hline
        \textbf{Datasets} & \textbf{\# Samples} & \textbf{\# Dim} & \textbf{\# Classes} \\ 
        \hline
        Adult & 45,220 & 104 & 2 \\ 
        \hline
        KDDCup99 & 70,000 & 114 & 2 \\ 
        \hline
        MNIST & 65,000 & 784 & 10 \\ 
        \hline
        Synthetic-H & 70,000 & 20,000 & 2 \\ 
        \hline
        Real-sim & 72,309 & 20,958 & 2 \\ 
        \hline
        RCV1 & 50,000 & 47,236 & 2 \\ 
        \hline
    \end{tabular}
\end{table}

\subsection{Supplemental Experiments for Output Perturbation} \label{app:supplemental_exp_output}
Consistent with the analysis in Section~\ref{sec:case-study-output-perturbation}, product noise-based output perturbation also demonstrates superior performance on LR tasks across the remaining datasets.

\noindent \textbf{Supplemental Experiments for Output Perturbation Using Classic Gaussian Noise and Analytic Gaussian Noise on LR. }
As shown in Figure~\ref{fig:output_lr_other_datasets},  under the same privacy parameter $\epsilon = 10^{-2}$, the test accuracy of product noise method reaches $83.94\%$, $86.80\%$, $93.30\%$, and $92.70\%$ on the Adult, KDDCup99, Synthetic-H, and Real-sim datasets, respectively, while the corresponding accuracy using classic Gaussian noise are only $63.92\%$, $83.60\%$, $54.17\%$, and $55.18\%$, respectively; using analytic Gaussian noise  are only $70.13\%$, $86.34\%$, $62.36\%$, and $76.80\%$, respectively. With a higher privacy parameter ($\epsilon = 10^{-1}$), our method achieves the test accuracy on Synthetic-H and Real-sim to $95.65\%$ and $95.73\%$, exceeding the non-private baselines of $93.52\%$ and $93.85\%$, respectively. Even under a stringent privacy parameter ($\epsilon = 10^{-\frac{5}{2}}$), our method achieves $81.03\%$ on Synthetic-H, while the the one using while the one using classic Gaussian noise and analytic Gaussian noise achieve only $69.07\%$ and $76.60\%$, respectively, under a looser privacy parameter ($\epsilon = 10^{-\frac{3}{2}}$).
\begin{figure}[htp]
    \centering
    \begin{subfigure}{0.49\columnwidth}  
        \centering
        \includegraphics[width=\linewidth]{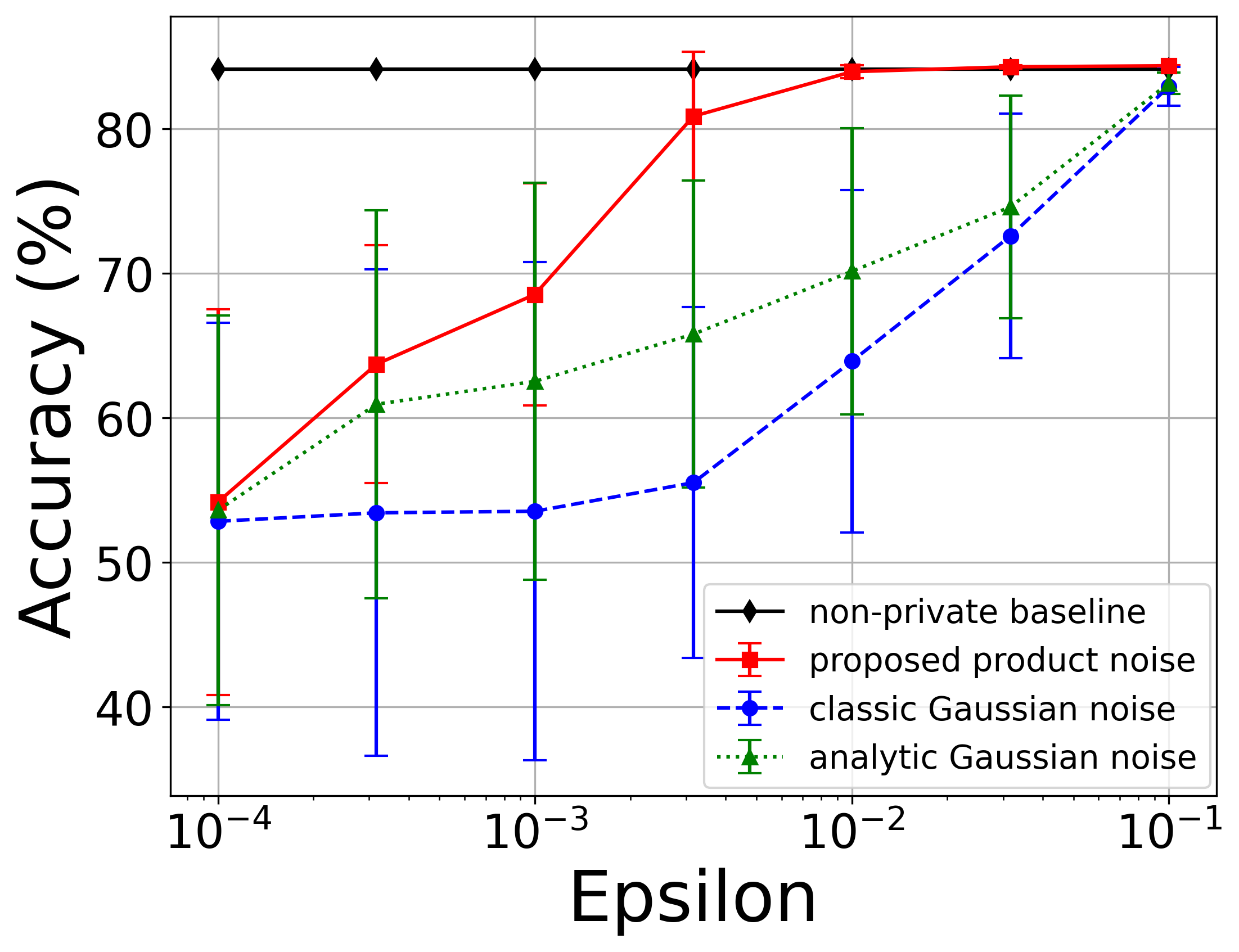}
        \Description{Accuracy results of output perturbation for Adult dataset.}
        \caption{\centering  Adult }
        \label{fig:adult_test_accuracy_output_lr}
    \end{subfigure}
    \hfill
    \begin{subfigure}{0.49\columnwidth}
        \centering
        \includegraphics[width=\linewidth]{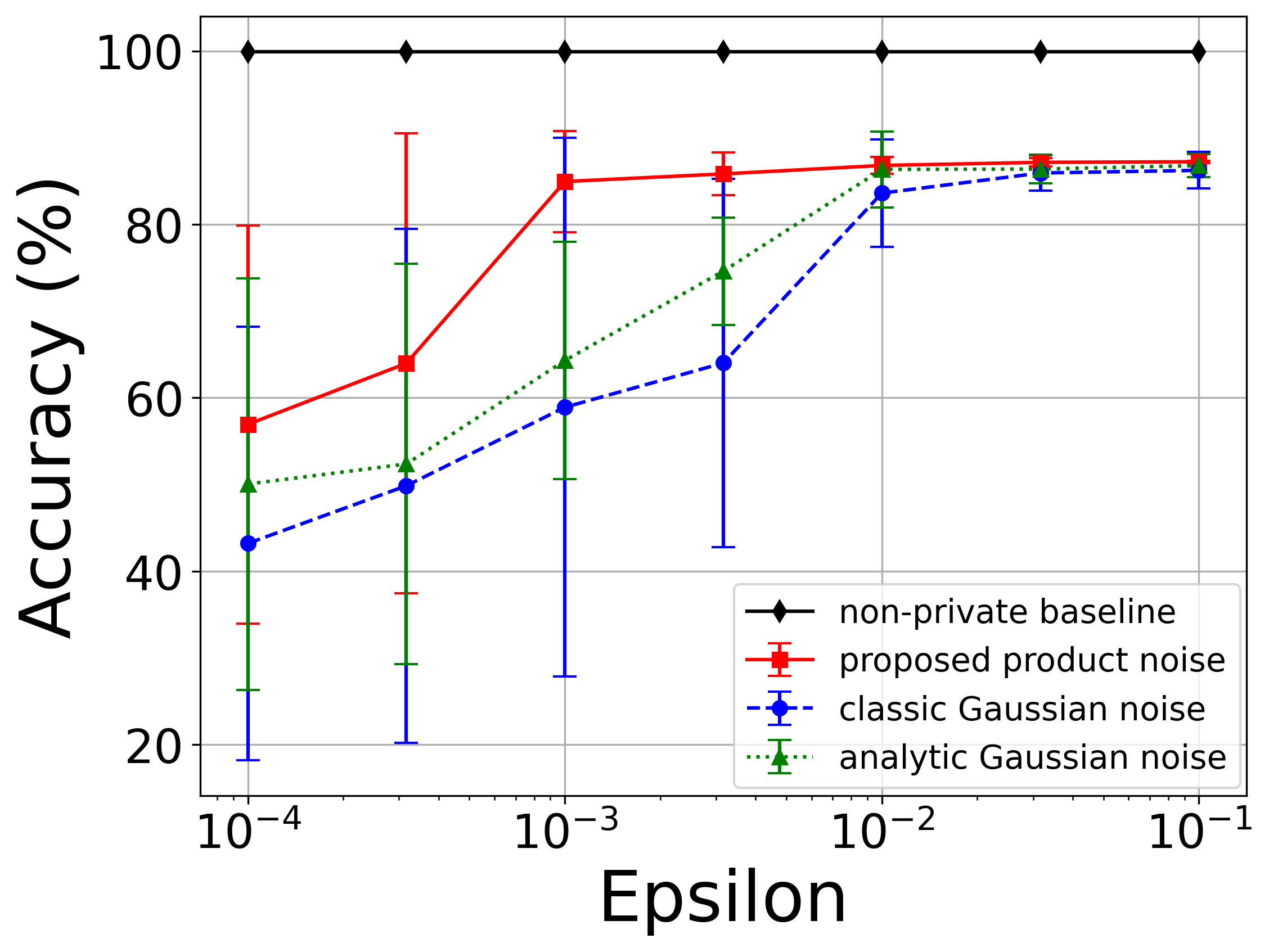}
        \Description{Accuracy results of output perturbation for KDDCup99 dataset.}
        \caption{\centering  KDDCup99 }
        \label{fig:KDDCup99_test_accuracy_output_lr}
    \end{subfigure}
    \hfill
    \begin{subfigure}{0.49\columnwidth}
        \centering
        \includegraphics[width=\linewidth]{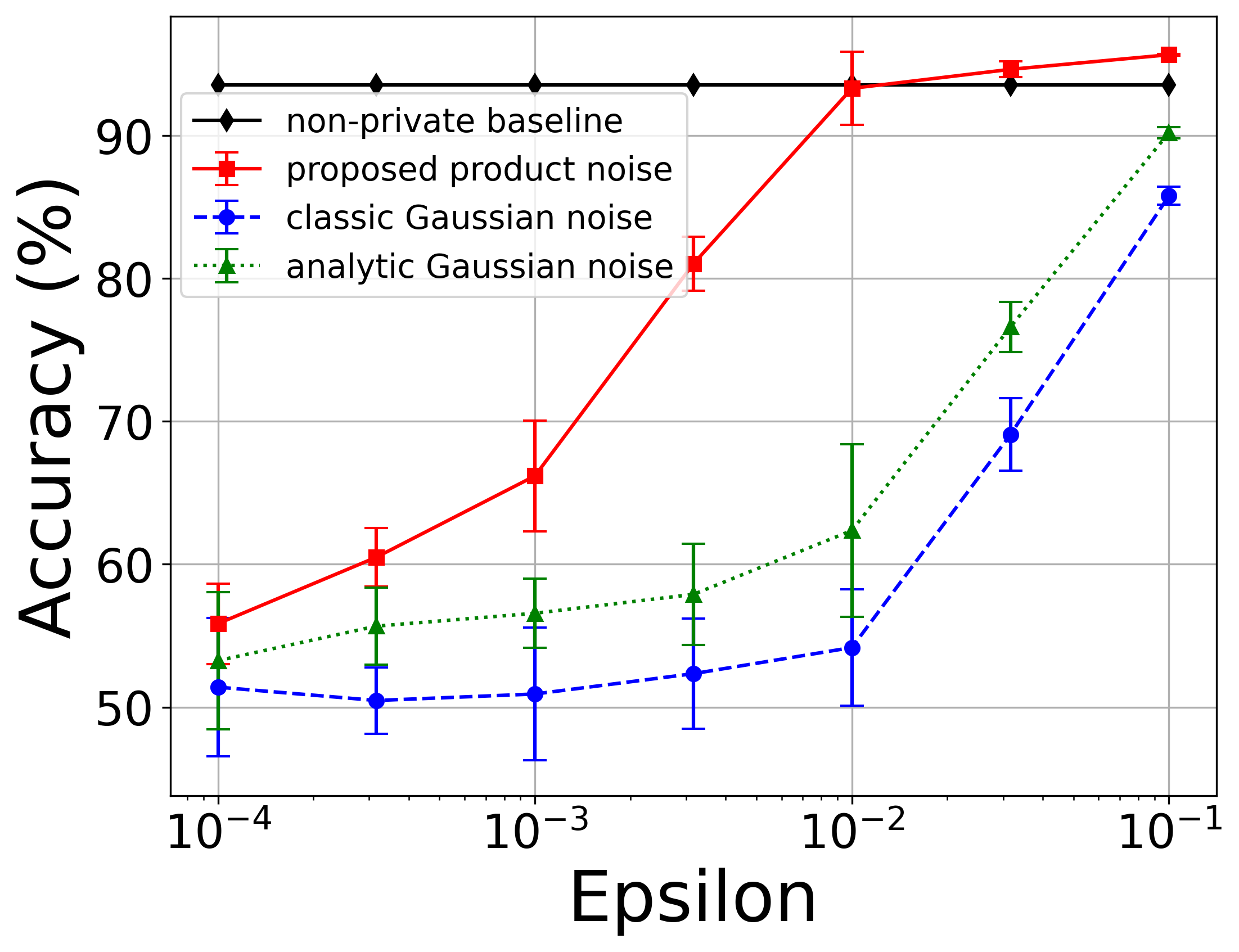}
        \Description{Accuracy results of output perturbation for Synthetic-H dataset.}
        \caption{\centering  Synthetic-H }
        \label{fig:synthetich_test_accuracy_output_lr}
    \end{subfigure}
    \hfill
    \begin{subfigure}{0.49\columnwidth}
        \centering
        \includegraphics[width=\linewidth]{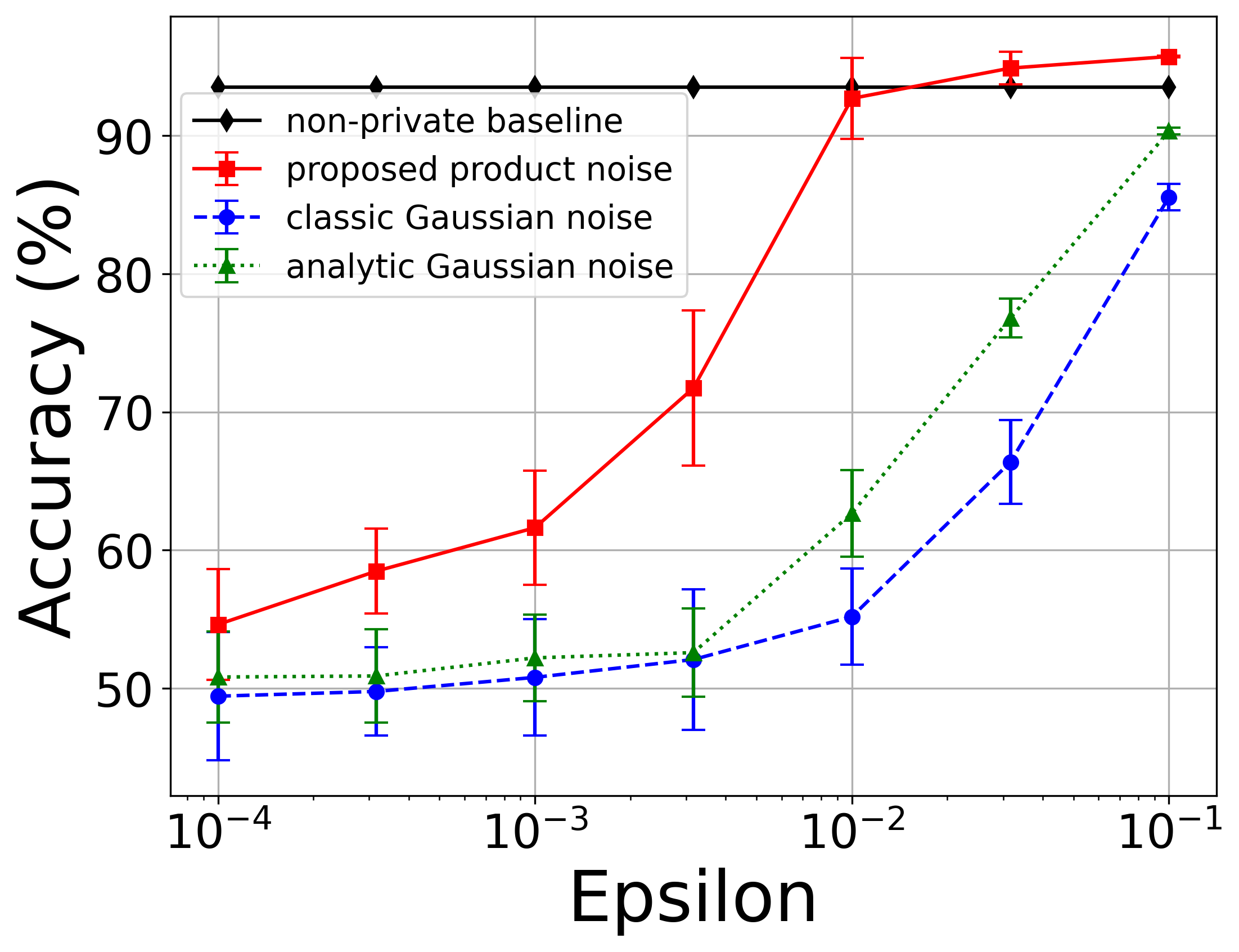}
        \Description{Accuracy results of output perturbation for Real-sim dataset  .}
        \caption{\centering Real-sim }
        \label{fig:realsim_test_accuracy_output_lr}
    \end{subfigure}
  \caption{Test accuracy of Output Perturbation on LR (Adult, KDDCup99, Synthetic-H and Real-sim). Product noise v.s. classic Gaussian noise and analytic Gaussian noise. The experiment setups, e.g., iteration and solver,  follow ~\cite{chaudhuri2011differentially}.}
    \label{fig:output_lr_other_datasets}
\end{figure}

In terms of utility stability, product noise method also exhibits lower standard deviation. For instance, on the Real-sim dataset with $\epsilon = 10^{-\frac{3}{2}}$, the standard deviation of test accuracy under the product noise is 0.01179, significantly lower than that of classic Gaussian noise (0.03025) and analytic Gaussian noise (0.01401).

\begin{table}[htp]
  \caption{$\ell_2$ error of Output Perturbation on LR.}
  \label{tab:l2_error_output_lr}
  \centering
  \footnotesize
  \setlength{\tabcolsep}{4pt}
  \renewcommand{\arraystretch}{1.0}
  \resizebox{\columnwidth}{!}{%
  \begin{tabular}{c|c|c|c|c|c}
    \toprule
    \multirow{2}{*}{\textbf{Dataset}} &
    \multirow{2}{*}{\textbf{Mechanism}} &
    \multicolumn{4}{c}{\textbf{$\boldsymbol{\epsilon}$}} \\
    \cline{3-6}
     &  & {\cellvcenter $\mathbf{10^{-4}}$}
        & {\cellvcenter $\mathbf{10^{-3}}$}
        & {\cellvcenter $\mathbf{10^{-2}}$}
        & {\cellvcenter $\mathbf{10^{-1}}$} \\
    \midrule
    \multirow[c]{3}{*}{Adult} 
      & classic   & 2.7$\times10^{3}$ & 2.8$\times10^{2}$ & 28 & 3 \\
      & analytic  & 5.2$\times10^{2}$ & 97 & 14 & 2 \\
      & ours      & \textbf{2.6\bm{$\times10^{2}$}} & \textbf{31} & \textbf{2} & \textbf{3\bm{$\times10^{-1}$}} \\
    \midrule
    \multirow[c]{3}{*}{KDDCup99} 
      & classic   & 1.8$\times10^{3}$ & 1.8$\times10^{2}$ & 19 & 2 \\
      & analytic  & 3.7$\times10^{2}$ & 66 & 8 & 1 \\
      & ours      & \textbf{87} & \textbf{18} & \textbf{2} & \textbf{2\bm{$\times10^{-1}$}} \\
    \midrule
    \multirow[c]{3}{*}{MNIST} 
      & classic  & 4.9$\times10^{5}$ & 4.8$\times10^{4}$ & 4.9$\times10^{3}$ & 4.8$\times10^{2}$ \\
      & analytic & 9.4$\times10^{4}$ & 1.7$\times10^{4}$ & 2.5$\times10^{3}$ & 3.1$\times10^{2}$ \\
      & ours     & \textbf{3.2\bm{$\times10^{4}$}} & \textbf{4.0\bm{$\times10^{3}$}} & \textbf{3.4\bm{$\times10^{2}$}} & \textbf{46} \\
      \midrule
    \multirow[c]{3}{*}{Synthetic-H} 
      & classic   & 2.4$\times10^{6}$ & 2.4$\times10^{5}$ & 2.4$\times10^{4}$ & 2.4$\times10^{3}$ \\
      & analytic  & 4.7$\times10^{5}$ & 8.6$\times10^{4}$ & 1.2$\times10^{4}$ & 1.5$\times10^{3}$ \\
      & ours      & \textbf{1.5\bm{$\times10^{5}$}} & \textbf{9.9\bm{$\times10^{3}$}} & \textbf{2.2\bm{$\times10^{3}$}} & \textbf{1.7\bm{$\times10^2$}} \\
    \midrule
    \multirow[c]{3}{*}{Real-sim} 
      & classic   & 2.4$\times10^{6}$ & 2.4$\times10^{5}$ & 2.4$\times10^{4}$ & 2.4$\times10^{3}$ \\
      & analytic  & 4.7$\times10^{5}$ & 8.6$\times10^{4}$ & 1.2$\times10^{4}$ & 1.5$\times10^{3}$ \\
      & ours      & \textbf{1.6\bm{$\times10^{5}$}} & \textbf{1.3\bm{$\times10^{4}$}} & \textbf{1.3\bm{$\times10^{3}$}} & \textbf{92} \\
      \midrule
    \multirow[c]{3}{*}{RCV1} 
      & classic  & 5.3$\times10^{6}$ & 5.3$\times10^{5}$ & 5.3$\times10^{4}$ & 5.3$\times10^{3}$ \\
      & analytic & 1.0$\times10^{6}$ & 1.9$\times10^{5}$ & 2.7$\times10^{4}$ & 3.3$\times10^{3}$ \\
      & ours     & \textbf{3.2\bm{$\times10^{5}$}} & \textbf{5.0\bm{$\times10^{4}$}} & \textbf{2.8\bm{$\times10^{3}$}} & \textbf{3.2\bm{$\times10^{2}$}} \\
    \bottomrule
  \end{tabular}}
\end{table}

As shown in Table~\ref{tab:l2_error_output_lr}, we quantify the $\ell_2$ error between the private and non-private models. It is clear that our product noise consistently achieves significantly smaller $\ell_2$ errors than both the classic Gaussian noise and analytic Gaussian noise under the same privacy parameters.

Additionally, we also evaluate model robustness using FPR in Table~\ref{tab:fpr_output_lr}. 
It is clear that, for all given privacy parameters, our method consistently yields lower FPRs, indicating fewer misclassifications of negative samples as positive. 
\begin{table}[htp]
  \centering
  \caption{FPR of Output Perturbation on LR.}
  \label{tab:fpr_output_lr}
  \footnotesize
  \setlength{\tabcolsep}{4pt}
  \renewcommand{\arraystretch}{1.0}
  \resizebox{\columnwidth}{!}{%
  \begin{tabular}{c|c|c|c|c|c}
    \toprule
    \multirow{2}{*}{\textbf{Dataset}} &
    \multirow{2}{*}{\textbf{Mechanism}} &
    \multicolumn{4}{c}{\textbf{$\boldsymbol{\epsilon}$}} \\ 
    \cline{3-6}
     &  & {\cellvcenter $\mathbf{10^{-4}}$}
        & {\cellvcenter $\mathbf{10^{-3}}$}
        & {\cellvcenter $\mathbf{10^{-2}}$}
        & {\cellvcenter $\mathbf{10^{-1}}$} \\
    \midrule
    \multirow[c]{3}{*}{Adult} 
      & classic  & 0.548$\pm$0.177 & 0.496$\pm$0.295 & 0.418$\pm$0.279 & 0.080$\pm$0.045 \\
      & analytic & 0.528$\pm$0.334 & 0.535$\pm$0.336 & 0.306$\pm$0.259 & 0.069$\pm$0.033 \\
      & ours     & \textbf{0.499$\pm$0.160} & \textbf{0.414$\pm$0.224} & \textbf{0.060$\pm$0.019} & \textbf{0.060$\pm$0.004} \\
    \midrule
    \multirow[c]{3}{*}{KDDCup99} 
      & classic  & 0.507$\pm$0.049 & 0.524$\pm$0.057 & 0.514$\pm$0.047 & 0.460$\pm$0.027 \\
      & analytic & 0.510$\pm$0.063 & 0.484$\pm$0.049 & 0.485$\pm$0.067 & 0.401$\pm$0.054 \\
      & ours     & \textbf{0.277$\pm$0.028} & \textbf{0.226$\pm$0.013} & \textbf{0.148$\pm$0.014} & \textbf{0.006$\pm$0.003} \\
      \midrule
    \multirow[c]{3}{*}{MNIST} 
      & classic  & 0.055$\pm$0.022 & 0.072$\pm$0.011 & 0.068$\pm$0.017 & 0.055$\pm$0.016 \\
      & analytic & 0.070$\pm$0.023 & 0.060$\pm$0.020 & 0.072$\pm$0.021 & 0.054$\pm$0.021 \\
      & ours & \textbf{0.049$\pm$0.023} & \textbf{0.055$\pm$0.019} & \textbf{0.064$\pm$0.026} & \textbf{0.022$\pm$0.010} \\
      \midrule
    \multirow[c]{3}{*}{Synthetic} 
      & classic  & 0.447$\pm$0.271 & 0.466$\pm$0.400 & 0.315$\pm$0.303 & 0.212$\pm$0.175 \\
      & analytic & 0.496$\pm$0.296 & 0.698$\pm$0.168 & 0.410$\pm$0.321 & 0.083$\pm$0.080 \\
      & ours     & \textbf{0.409$\pm$0.216} & \textbf{0.363$\pm$0.143} & \textbf{0.139$\pm$0.150} & \textbf{0.039$\pm$0.011} \\
    \midrule
    \multirow[c]{3}{*}{Real-sim} 
      & classic  & 0.632$\pm$0.296 & 0.368$\pm$0.181 & 0.617$\pm$0.333 & 0.162$\pm$0.144 \\
      & analytic & 0.371$\pm$0.228 & 0.457$\pm$0.316 & 0.402$\pm$0.302 & 0.052$\pm$0.044 \\
      & ours     & \textbf{0.291$\pm$0.125} & \textbf{0.337$\pm$0.168} & \textbf{0.031$\pm$0.024} & \textbf{0.038$\pm$0.008} \\
    \midrule
    \multirow[c]{3}{*}{RCV1} 
      & classic  & 0.648$\pm$0.336 & 0.605$\pm$0.256 & 0.368$\pm$0.265 & 0.262$\pm$0.235 \\
      & analytic & 0.456$\pm$0.331 & 0.507$\pm$0.230 & 0.308$\pm$0.181 & 0.309$\pm$0.257 \\
      & ours & \textbf{0.428$\pm$0.246} & \textbf{0.399$\pm$0.171} & \textbf{0.173$\pm$0.131} & \textbf{0.081$\pm$0.022} \\
    \bottomrule
  \end{tabular}}
\end{table}

\noindent \textbf{Supplemental Experiments for Output Perturbation Using Classic Gaussian Noise and Analytic Gaussian Noise on Huber SVM. }
As shown in Figure~\ref{fig:output_svm}, under the same privacy parameter ($\epsilon = 10^{-2}$), on Adult, KDDCup99, MNIST, Synthetic-H, Real-sim, and RCV1, our method achieves test accuracy of $78.01\%$, $96.15\%$, $85.39\%$, $65.79\%$, $64.06\%$, and $92.24\%$, respectively, significantly higher than the one using classic Gaussian noise ($61.08\%$, $63.31\%$, $75.38\%$, $52.35\%$, $50.50\%$, and $85.16\%$), and the one using analytic Gaussian noise ($65.78\%$, $80.04\%$, $81.14\%$, $61.43\%$, $61.43\%$, and $88.75\%$). Moreover,
Our method yields test accuracy comparable to or exceeding non-private baselines at higher privacy parameters (e.g., $\epsilon = 10^{-1}$). It achieves $91.88\%$ on MNIST and $93.11\%$ on RCV1, outperforming their respective baselines ($91.84\%$ and $91.85\%$). Even under stringent privacy parameters (e.g., $\epsilon = 10^{-2.5}$), it maintains high test accuracy, such as $91.20\%$ on RCV1, whereas the classic Gaussian mechanism under more relaxed privacy (e.g., $\epsilon = 10^{-1}$) reaches only $91.85\%$.

Regarding utility stability, our method demonstrates lower standard deviation of test accuracy. For example, on Real-sim with $\epsilon = 10^{-2}$, it achieves a standard deviation of 0.04046, compared to 0.12991 with the classic Gaussian noise and 0.05620 with analytic Gaussian noise, indicating a more stable utility.
\begin{figure}[htp]
    \centering
    \begin{subfigure}{0.49\columnwidth}  
        \centering
        \includegraphics[width=\linewidth]{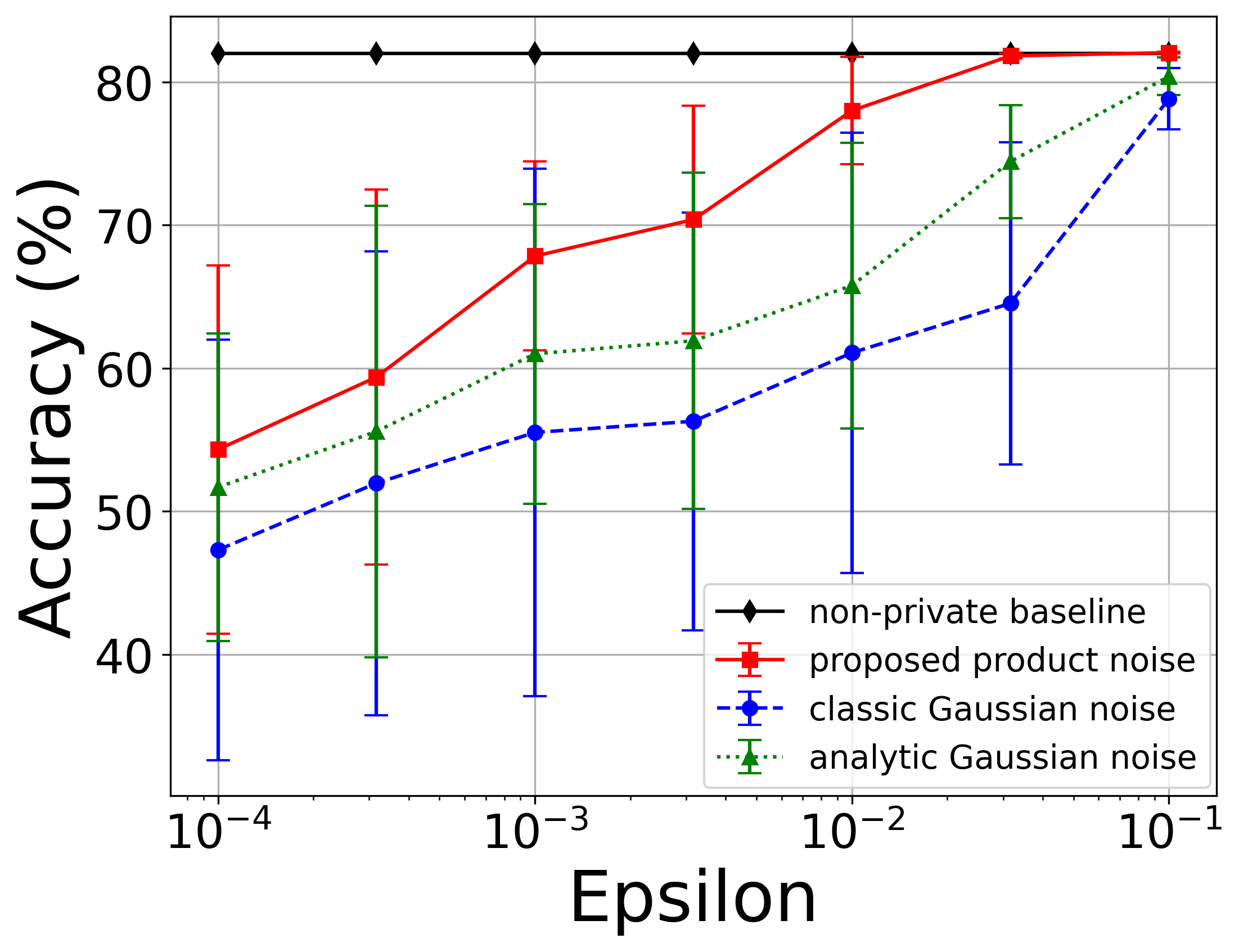}
        \Description{Accuracy results of output perturbation for Adult dataset   .}
        \caption{\centering  Adult}
        \label{fig:adult_test_accuracy_output_svm}
    \end{subfigure}
    \hfill
    \begin{subfigure}{0.49\columnwidth}
        \centering
        \includegraphics[width=\linewidth, height=3.2cm]{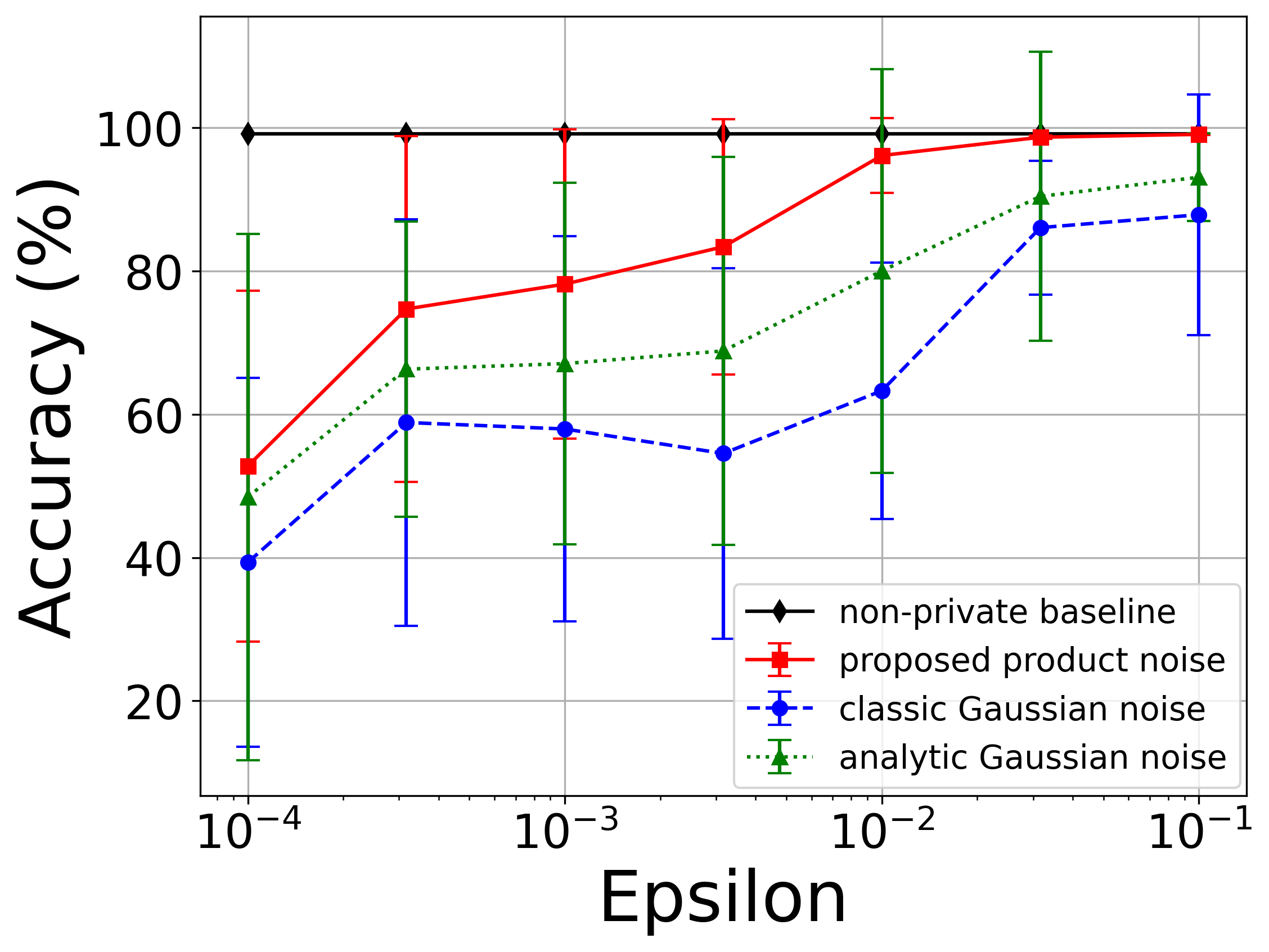}
        \Description{Accuracy results of output perturbation for KDDCup99 dataset   .}
        \caption{\centering  KDDCup99 }
        \label{fig:KDDCup99_test_accuracy_output_svm}
    \end{subfigure}
     \hfill
    \begin{subfigure}{0.49\columnwidth}
        \centering
        \includegraphics[width=\linewidth]{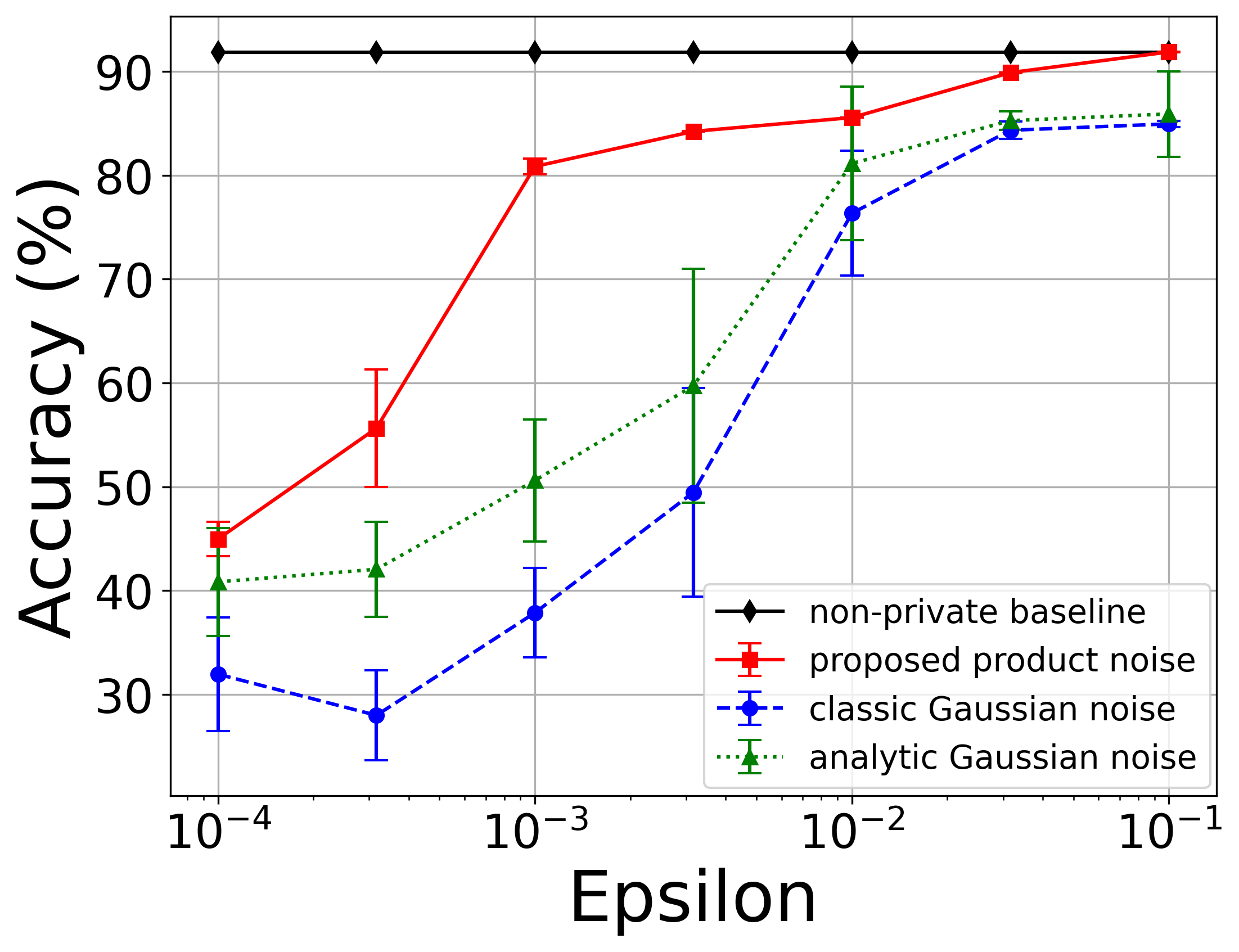}
        \Description{Accuracy results of output perturbation for MNIST dataset   .}
        \caption{\centering  MNIST }
        \label{fig:mnist_test_accuracy_output_svm}
    \end{subfigure}
     \hfill
    \begin{subfigure}{0.49\columnwidth}
        \centering
        \includegraphics[width=\linewidth]{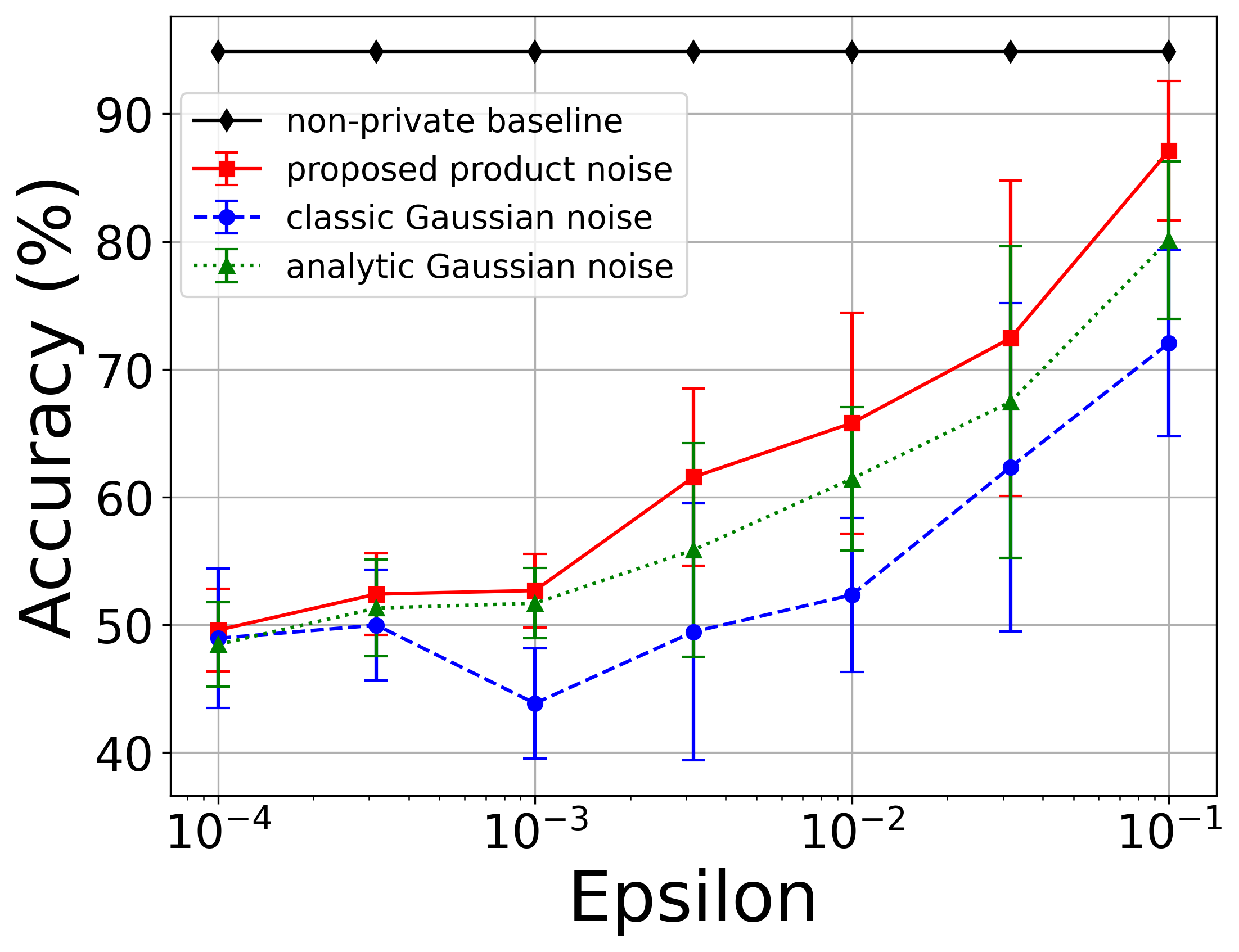}
        \Description{Accuracy results of output perturbation for Synthetic-H dataset  .}
        \caption{\centering  Synthetic-H }
        \label{fig:synthetich_test_accuracy_output_svm}
    \end{subfigure}
     \hfill
    \begin{subfigure}{0.49\columnwidth}
        \centering
        \includegraphics[width=\linewidth]{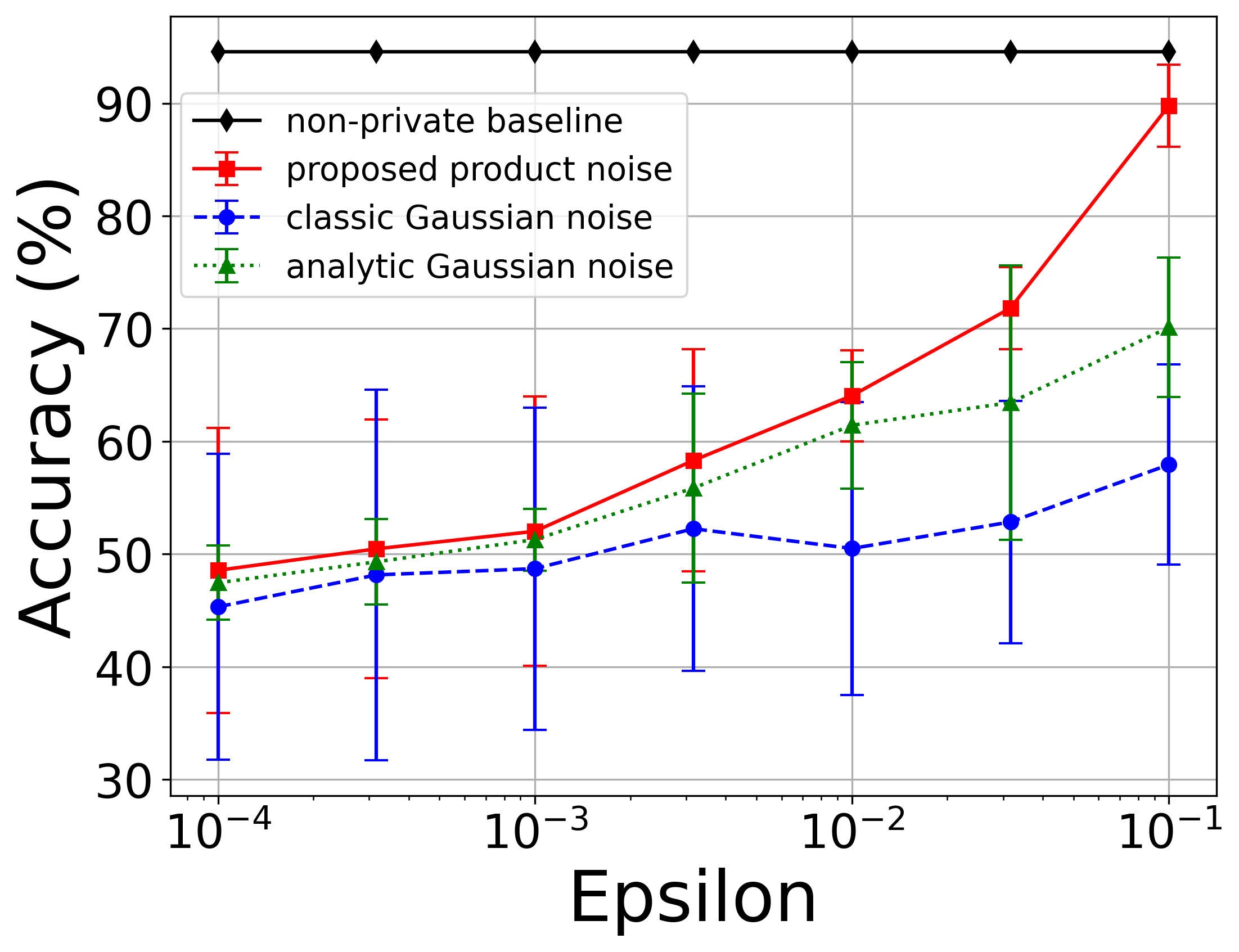}
        \Description{Accuracy results of output perturbation for Real-sim dataset  .}
        \caption{\centering Real-sim  }
        \label{fig:realsim_test_accuracy_output_svm}
    \end{subfigure}
    \hfill
    \begin{subfigure}{0.49\columnwidth}
        \centering
        \includegraphics[width=\linewidth]{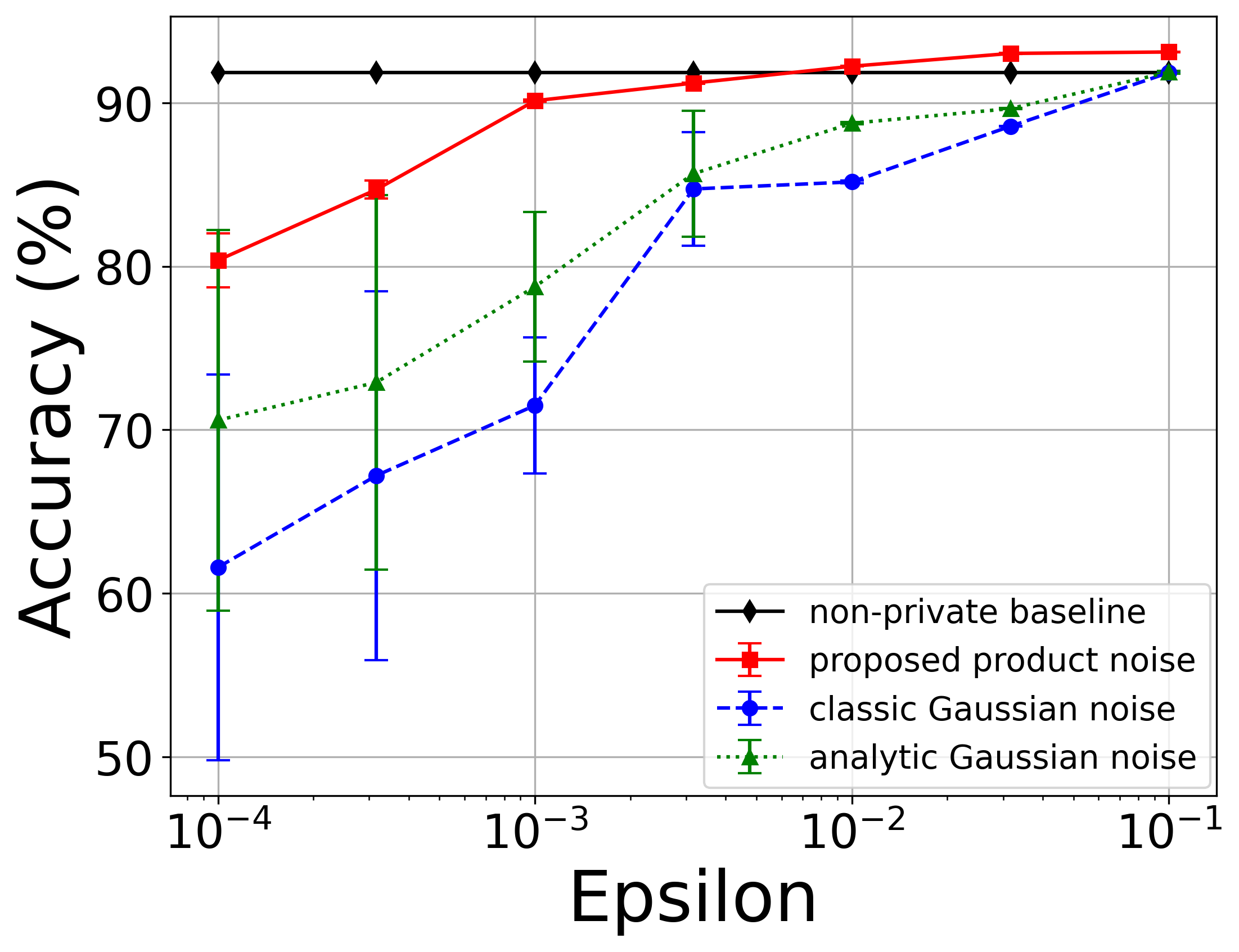}
        \Description{Accuracy results of output perturbation for RCV1 dataset  .}
        \caption{\centering RCV1 }
        \label{fig:rcv1_test_accuracy_output_svm}
    \end{subfigure}
  \caption{Test accuracy  of Output Perturbation on Huber SVM. Product noise v.s. classic Gaussian noise and  analytic Gaussian noise. The experiment setups, e.g., iteration and solver,  follow ~\cite{chaudhuri2011differentially}.} 
    \label{fig:output_svm}
\end{figure}

We also report $\ell_2$ error and FPR for Huber SVM in Table~\ref{tab:l2_error_SVM} and Table~\ref{tab:fpr_SVM}, respectively. The experiment results reach the same conclusion as the earlier LR experiments: product noise consistently exhibits smaller $\ell_2$ errors, and achieves lower FPRs, confirming superior utility and robustness across settings.
\begin{table}[htp]
  \centering
  \caption{$\ell_2$ error of Output Perturbation on Huber SVM.}
  \label{tab:l2_error_SVM}
  \footnotesize
  \setlength{\tabcolsep}{4pt}
  \renewcommand{\arraystretch}{1.0}
  \resizebox{\columnwidth}{!}{%
  \begin{tabular}{c|c|c|c|c|c}
    \toprule
    \multirow{2}{*}{\textbf{Dataset}} &
    \multirow{2}{*}{\textbf{Mechanism}} &
    \multicolumn{4}{c}{\textbf{$\boldsymbol{\epsilon}$}} \\ 
    \cline{3-6}
     &  & {\cellvcenter $\mathbf{10^{-4}}$}
        & {\cellvcenter $\mathbf{10^{-3}}$}
        & {\cellvcenter $\mathbf{10^{-2}}$}
        & {\cellvcenter $\mathbf{10^{-1}}$} \\
    \midrule
    \multirow[c]{3}{*}{Adult} 
      & classic  & 2.7$\times10^3$ & 2.6$\times10^2$ & 27.3 & 2.7 \\
      & analytic & 5.5$\times10^2$ & 98.1 & 13.9 & 1.8 \\
      & ours     & \textbf{2.4\bm{$\times10^2$}} & \textbf{16.1} & \textbf{2.7} & \textbf{2.0\bm{$\times10^{-2}$}} \\
    \midrule
    \multirow[c]{3}{*}{KDDCup99} 
      & classic  & 1.8$\times10^3$ & 1.9$\times10^3$ & 18.5 & 1.9 \\
      & analytic & 3.5$\times10^3$  & 66.6  & 9.4  & 1.2 \\
      & ours     & \textbf{1.3\bm{$\times10^3$}} & \textbf{22.7} & \textbf{1.7} & \textbf{1.6\bm{$\times10^{-1}$}} \\
    \midrule
    \multirow[c]{3}{*}{MNIST} 
      & classic  & 4.8$\times10^6$ & 4.9$\times10^5$ & 4.8$\times10^4$ & 4.9$\times10^3$ \\
      & analytic & 9.4$\times10^5$ & 1.7$\times10^5$ & 2.4$\times10^4$ & 3.1$\times10^3$ \\
      & ours     & \textbf{3.4\bm{$\times10^5$}} & \textbf{3.3\bm{$\times10^4$}} & \textbf{3.8\bm{$\times10^3$}} & \textbf{3.4\bm{$\times10^2$}} \\
    \midrule
    \multirow[c]{3}{*}{Synthetic-H} 
      & classic  & 2.4$\times10^6$ & 2.4$\times10^5$ & 2.4$\times10^4$ & 2.4$\times10^3$ \\
      & analytic & 4.7$\times10^5$ & 8.6$\times10^4$ & 1.2$\times10^4$ & 1.5$\times10^3$ \\
      & ours     & \textbf{2.7\bm{$\times10^5$}} & \textbf{2.1\bm{$\times10^4$}} & \textbf{2.2\bm{$\times10^3$}} & \textbf{1.8\bm{$\times10^2$}} \\
    \midrule
    \multirow[c]{3}{*}{Real-sim} 
      & classic  & 2.4$\times10^6$ & 2.4$\times10^5$ & 2.4$\times10^4$ & 2.4$\times10^3$ \\
      & analytic & 4.7$\times10^5$ & 8.6$\times10^4$ & 1.2$\times10^4$ & 1.5$\times10^3$ \\
      & ours     & \textbf{1.8\bm{$\times10^5$}} & \textbf{2.0\bm{$\times10^4$}} & \textbf{8.8\bm{$\times10^2$}} & \textbf{2.4\bm{$\times10^2$}} \\
    \midrule
    \multirow[c]{3}{*}{RCV1} 
      & classic  & 5.3$\times10^5$ & 5.3$\times10^4$ & 5.3$\times10^3$ & 5.3$\times10^2$ \\
      & analytic & 1.0$\times10^5$ & 1.9$\times10^4$ & 2.6$\times10^3$ & 3.3$\times10^2$ \\
      & ours     & {\textbf{4.3$\bm{\times10^4}$}} & \textbf{4.2\bm{$\times10^3$}} & \textbf{3.8\bm{$\times10^2$}} & \textbf{35.2} \\
    \bottomrule
  \end{tabular}}
\end{table}

\begin{table}[htp]
  \centering
  \caption{FPR of Output Perturbation on Huber SVM.}
  \label{tab:fpr_SVM}
  \footnotesize
  \setlength{\tabcolsep}{4pt}
  \renewcommand{\arraystretch}{1.0}
  \resizebox{\columnwidth}{!}{%
  \begin{tabular}{c|c|c|c|c|c}
    \toprule
    \multirow{2}{*}{\textbf{Dataset}} &
    \multirow{2}{*}{\textbf{Mechanism}} &
    \multicolumn{4}{c}{\textbf{$\boldsymbol{\epsilon}$}} \\ 
    \cline{3-6}
     &  & {\cellvcenter $\mathbf{10^{-4}}$}
        & {\cellvcenter $\mathbf{10^{-3}}$}
        & {\cellvcenter $\mathbf{10^{-2}}$}
        & {\cellvcenter $\mathbf{10^{-1}}$} \\
    \midrule
    \multirow[c]{4}{*}{MNIST} 
      & classic  & 0.419$\pm$0.264 & 0.492$\pm$0.312 & 0.397$\pm$0.293 & 0.101$\pm$0.062 \\
      & analytic & 0.574$\pm$0.232 & 0.462$\pm$0.262 & 0.522$\pm$0.273 & 0.110$\pm$0.076 \\
      & ours     & \textbf{0.405$\pm$0.135} & \textbf{0.358$\pm$0.260} & \textbf{0.093$\pm$0.049} & \textbf{0.062$\pm$0.007} \\
      & baseline & 0.063 & 0.063 & 0.063 & 0.063 \\
    \midrule
    \multirow[c]{4}{*}{Synthetic-H} 
      & classic  & 0.356$\pm$0.251 & 0.546$\pm$0.339 & 0.419$\pm$0.310 & 0.079$\pm$0.091 \\
      & analytic & 0.553$\pm$0.336 & 0.462$\pm$0.402 & 0.177$\pm$0.202 & 0.035$\pm$0.027 \\
      & ours     & \textbf{0.266$\pm$0.225} & \textbf{0.434$\pm$0.275} & \textbf{0.147$\pm$0.112} & \textbf{0.013$\pm$0.002} \\
      & baseline & 0.010 & 0.010 & 0.010 & 0.010 \\
    \midrule
    \multirow[c]{4}{*}{Adult} 
      & classic  & 0.099$\pm$0.003 & 0.101$\pm$0.003 & 0.101$\pm$0.002 & 0.096$\pm$0.004 \\
      & analytic & 0.100$\pm$0.003 & 0.100$\pm$0.004 & 0.099$\pm$0.003 & 0.099$\pm$0.005 \\
      & ours     & \textbf{0.080$\pm$0.001} & \textbf{0.089$\pm$0.001} & \textbf{0.091$\pm$0.001} & \textbf{0.071$\pm$0.001} \\
      & baseline & 0.009 & 0.009 & 0.009 & 0.009 \\
    \midrule
    \multirow[c]{4}{*}{Real-sim} 
      & classic  & 0.489$\pm$0.078 & 0.538$\pm$0.076 & 0.465$\pm$0.069 & 0.460$\pm$0.051 \\
      & analytic & 0.497$\pm$0.088 & 0.491$\pm$0.070 & 0.483$\pm$0.059 & 0.389$\pm$0.066 \\
      & ours     & \textbf{0.411$\pm$0.061} & \textbf{0.468$\pm$0.068} & \textbf{0.392$\pm$0.019} & \textbf{0.089$\pm$0.019} \\
      & baseline & 0.033 & 0.033 & 0.033 & 0.033 \\
    \midrule
    \multirow[c]{4}{*}{KDDCup99} 
      & classic  & 0.527$\pm$0.064 & 0.508$\pm$0.071 & 0.491$\pm$0.078 & 0.458$\pm$0.073 \\
      & analytic & 0.503$\pm$0.082 & 0.485$\pm$0.045 & 0.511$\pm$0.051 & 0.420$\pm$0.059 \\
      & ours     & \textbf{0.420$\pm$0.057} & \textbf{0.393$\pm$0.027} & \textbf{0.319$\pm$0.032} & \textbf{0.095$\pm$0.037} \\
      & baseline & 0.033 & 0.033 & 0.033 & 0.033 \\
    \midrule
    \multirow[c]{4}{*}{RCV1} 
      & classic  & 0.477$\pm$0.055 & 0.478$\pm$0.037 & 0.438$\pm$0.062 & 0.311$\pm$0.079 \\
      & analytic & 0.532$\pm$0.050 & 0.455$\pm$0.054 & 0.506$\pm$0.069 & 0.306$\pm$0.054 \\
      & ours     & \textbf{0.438$\pm$0.045} & \textbf{0.384$\pm$0.024} & \textbf{0.254$\pm$0.041} & \textbf{0.056$\pm$0.001} \\
      & baseline & 0.045 & 0.045 & 0.045 & 0.045 \\
    \bottomrule
  \end{tabular}}
\end{table}

\begin{figure}[htp]
    \centering
    \begin{subfigure}{0.49\columnwidth}  
        \centering
        \includegraphics[width=\linewidth]{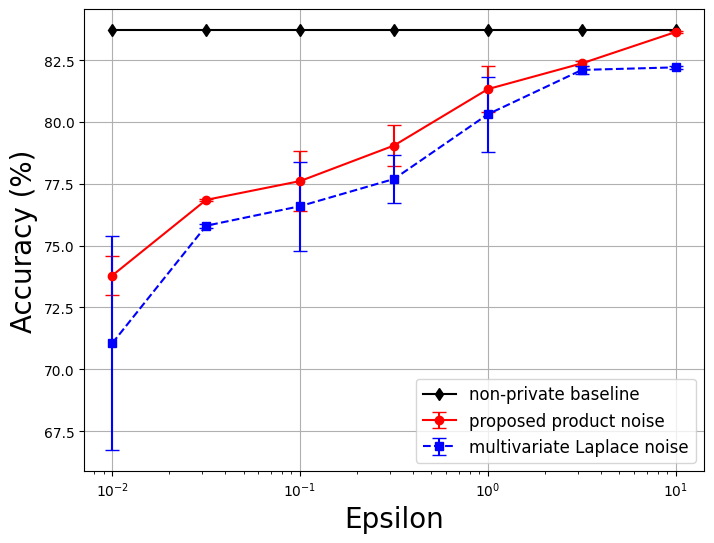}
        \Description{Accuracy results of Private Strongly Convex Permutation-based for Adult dataset   .}
        \caption{\centering  Adult  }
        \label{fig:adult_test_accuracy_ppsgd_lr}
    \end{subfigure}
    \hfill
        \begin{subfigure}{0.49\columnwidth}
        \centering
        \includegraphics[width=\linewidth]{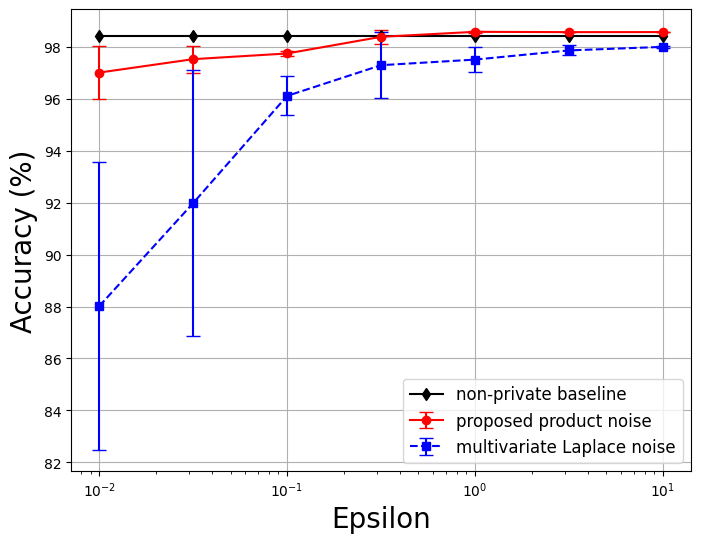}
        \Description{Accuracy results of  Private Strongly Convex Permutation-based for KDDCup99 dataset   .}
        \caption{\centering  KDDCup99 }
        \label{fig:KDDCup99_test_accuracy_ppsgd_lr}
    \end{subfigure}
    \hfill
    \begin{subfigure}{0.49\columnwidth}
        \centering
        \includegraphics[width=\linewidth]{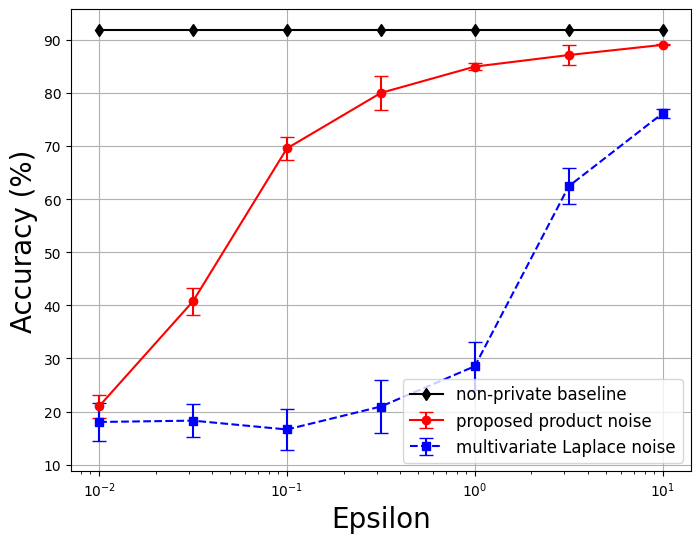}
        \Description{Accuracy results of  Private Strongly Convex Permutation-based for MNIST dataset (Low-Dim).}
        \caption{\centering  MNIST}
        \label{fig:mnist_test_accuracy_ppsgd_lr}
    \end{subfigure}
    \hfill
        \begin{subfigure}{0.49\columnwidth}
        \centering
        \includegraphics[width=\linewidth]{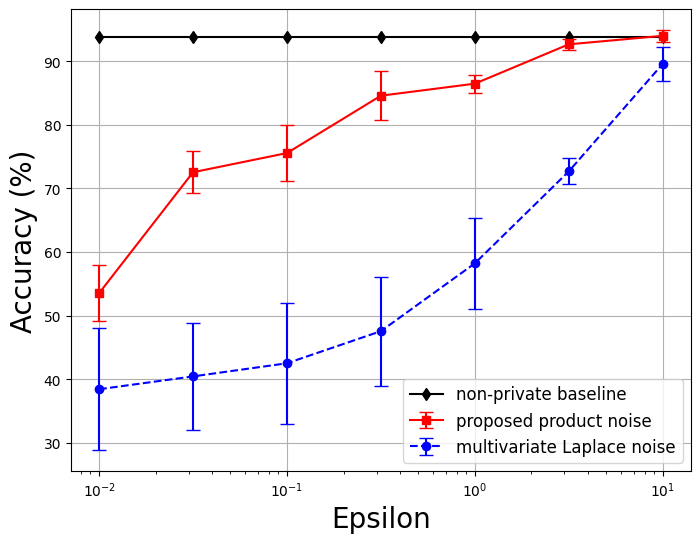}
        \Description{Accuracy results of o Private Strongly Convex Permutation-based for Synthetic-H dataset  .}
        \caption{\centering  Synthetic-H } 
        \label{fig:synthetich_test_accuracy_ppsgd_lr}
    \end{subfigure}
    \hfill
    \begin{subfigure}{0.49\columnwidth}
        \centering
        \includegraphics[width=\linewidth]{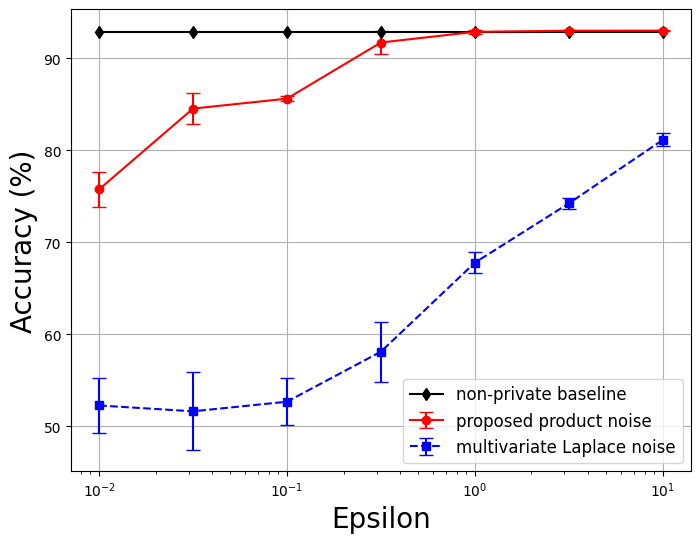}
        \Description{Accuracy results of  Private Strongly Convex Permutation-based for Real-sim dataset.}
        \caption{\centering Real-sim}
        \label{fig:realsim_test_accuracy_ppsgd_lr}
    \end{subfigure}
    \label{fig:ppsgd_lr_other_datasets}
            \hfill
    \begin{subfigure}{0.49\columnwidth}
        \centering
        \includegraphics[width=\linewidth]{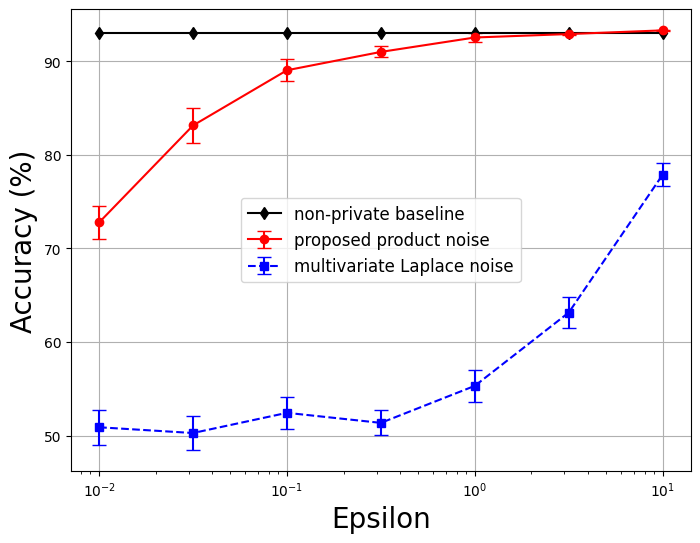}
        \Description{Accuracy results of  Private Strongly Convex Permutation-based for RCV1 dataset (High-Dim).}
        \caption{\centering RCV1 }
        \label{fig:rcv1_test_accuracy_ppsgd_lr}
    \end{subfigure}
    \caption{Test accuracy of Output Perturbation on LR. Product noise v.s.  multivariate Laplace noise. The experiment setups, e.g., iteration and solver,  follow~\cite{wu2017bolt}.}
    \label{fig:ppsgd_lr}
\end{figure}

\noindent \textbf{Supplemental Experiments for Output Perturbation Using Multivariate Laplace Noise on LR. }
We further compare our product noise with multivariate Laplace noise on the Private Strongly Convex Permutation-based SGD algorithm~\cite{wu2017bolt} to evaluate their performance on LR tasks.
As shown in Figure~\ref{fig:ppsgd_lr}, the LR model using product noise consistently achieves higher test accuracy across privacy parameters. At $\epsilon = 1$, our method achieves $84.95\%$ on MNIST and $92.53\%$ on RCV1, while the one using Laplace noise yields only $28.54\%$ and $55.31\%$, respectively. Under relaxed privacy parameters, our method can even surpass the non-private baselines. For example, on RCV1 with $\epsilon = 10$, our method reaches $93.29\%$, outperforming the baseline ($92.99\%$). Notably,  when  $\epsilon = 0.1$, our method still achieves $89.03\%$ test accuracy, while  Laplace mechanism only reaches $77.87\%$ even at $\epsilon = 10$. These results again reaffirm the Observation~\ref{observation-higher-accuracy}, demonstrating the strong utility preservation of product noise under tight privacy guarantees. 

Our method also demonstrates improved utility stability over the one using Laplace noise. The product noise offers lower standard deviation of test accuracy. For instance, on RCV1 with $\epsilon = 1$, our method yields a standard deviation of   0.00472, compared to 0.01677 using  Laplace mechanism, confirming Observation~\ref{observation-higher-stability} again.

\noindent \textbf{Supplemental Experiments of Output Perturbation using multivariate Laplace Noise on Huber SVM.}
As shown in Figure~\ref{fig:ppsgd_svm}, experiment results demonstrate that SVM models trained with product noise consistently achieve better test accuracy across all evaluated settings under the same privacy parameter. When $\epsilon = 1$, the test accuracy of product noise method on the Adult, KDDcup99, MNIST, Synthetic-H, Real-sim and RCV1 datasets reach $ 83.12\%$, $ 98.62\%$, $86.00\%$, $82.98\%$, $91.44\%$, and $93.81\%$, respectively, whereas the one using multivariate Laplace noise achieve $ 80.86\%$, $91.53\%$, $28.94\%$, $70.22\%$, $72.83\%$, and $59.84\%$. At high privacy parameters, product noise method can achieve test accuracy close to or surpasses the non-private baselines in high-dimensional datasets. For example, in the Synthetic-H, Real-sim and RCV1 datasets with $\epsilon = 10$, the test accuracy of product noise method reaches $93.98\%$, $93.02\%$ and $93.90\%$, respectively, surpassing the non-private baseline ($93.80\%$, $92.85\%$ and $93.59\%$).  
Moreover, the method using multivariate Laplace noise suffers from a noticeable performance drop under stricter privacy parameters, whereas our method maintains higher utility. For instance, under a lower privacy parameter ($\epsilon = 10^{-1}$), the test accuracy of product noise method on the Real-sim and RCV1 datasets reach $89.72\%$ and  $ 91.05\%$, respectively. In contrast, even under a higher privacy parameter ($\epsilon = 10$),  the one using multivariate Laplace noise only reaches $84.97\%$ and $82.72\%$.
\begin{figure}[htp]
    \centering
    \begin{subfigure}{0.49\columnwidth}  
        \centering
        \includegraphics[width=\linewidth]{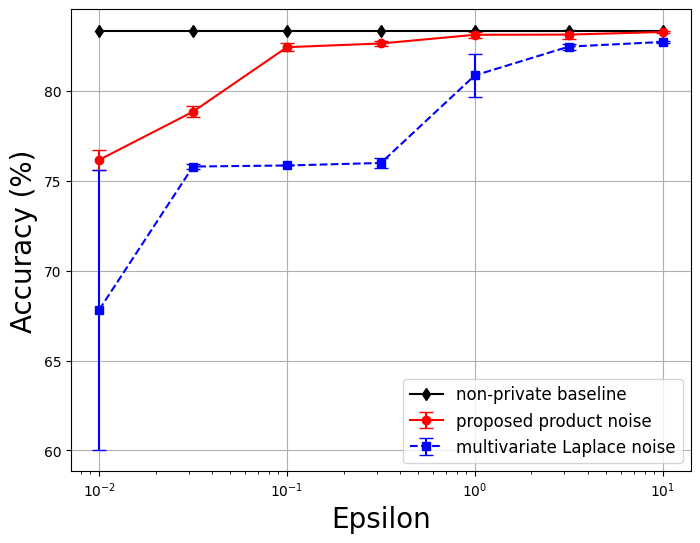}
        \Description{Accuracy results of output perturbation for Adult dataset .}
        \caption{\centering  Adult  }
        \label{fig:adult_test_accuracy_ppsgd_svm}
    \end{subfigure}
    \hfill
    \begin{subfigure}{0.49\columnwidth}
        \centering
        \includegraphics[width=\linewidth, height=3.2cm]{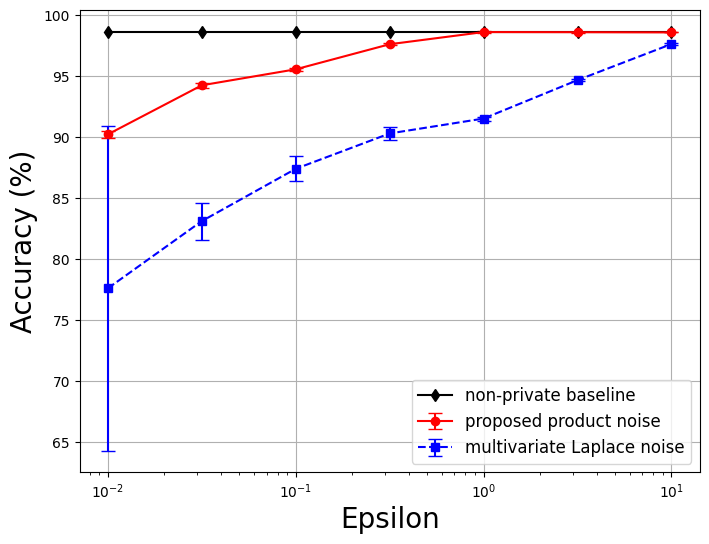}
        \Description{Accuracy results of output perturbation for KDDCup99 dataset .}
        \caption{\centering  KDDCup99 }
        \label{fig:KDDCup99_test_accuracy_ppsgd_svm}
    \end{subfigure}
     \hfill
    \begin{subfigure}{0.49\columnwidth}
        \centering
        \includegraphics[width=\linewidth]{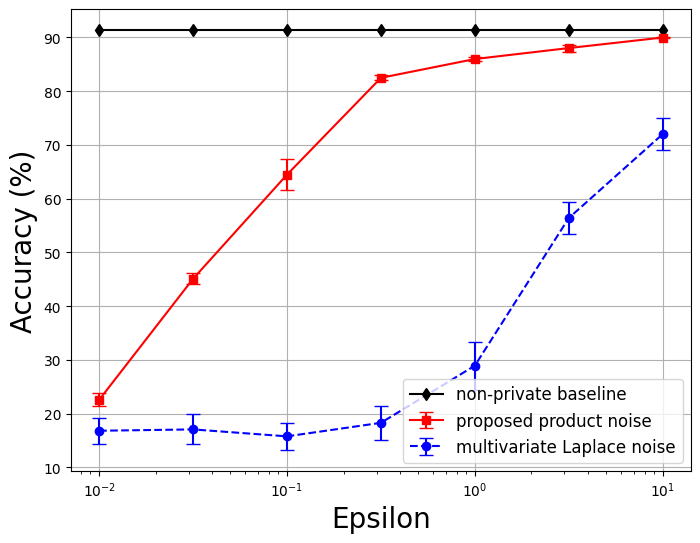}
        \Description{Accuracy results of output perturbation for MNIST dataset.}
        \caption{\centering  MNIST}
        \label{fig:mnist_test_accuracy_ppsgd_svm}
    \end{subfigure}
     \hfill
    \begin{subfigure}{0.49\columnwidth}
        \centering
        \includegraphics[width=\linewidth]{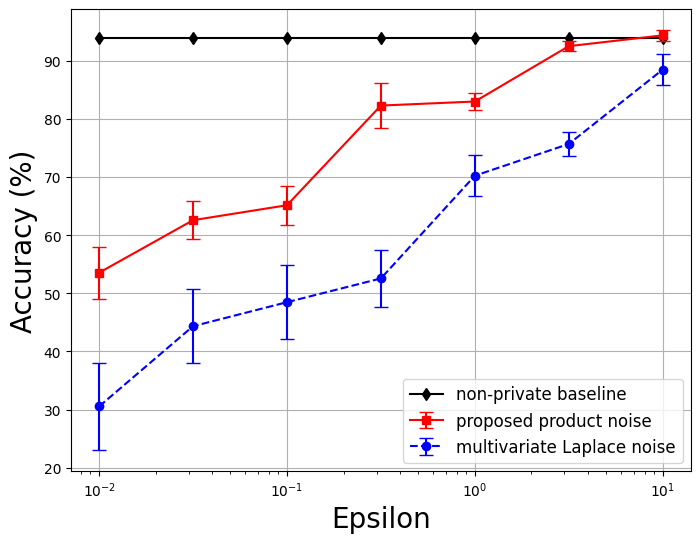}
        \Description{Accuracy results of output perturbation for Synthetic-H dataset  .}
        \caption{\centering  Synthetic-H  }
        \label{fig:synthetich_test_accuracy_ppsgd_svm}
    \end{subfigure}
     \hfill
    \begin{subfigure}{0.49\columnwidth}
        \centering
        \includegraphics[width=\linewidth]{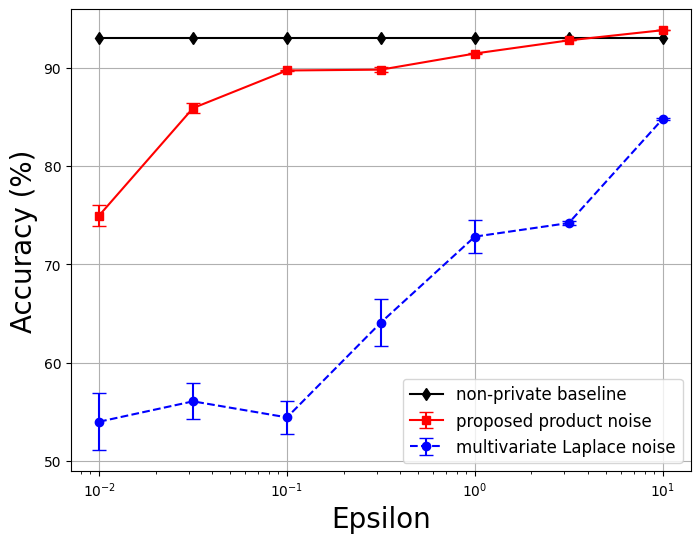}
        \Description{Accuracy results of output perturbation for Real-sim dataset  .}
        \caption{\centering Real-sim  }
        \label{fig:realsim_test_accuracy_ppsgd_svm}
    \end{subfigure}
    \hfill
    \begin{subfigure}{0.49\columnwidth}
        \centering
        \includegraphics[width=\linewidth]{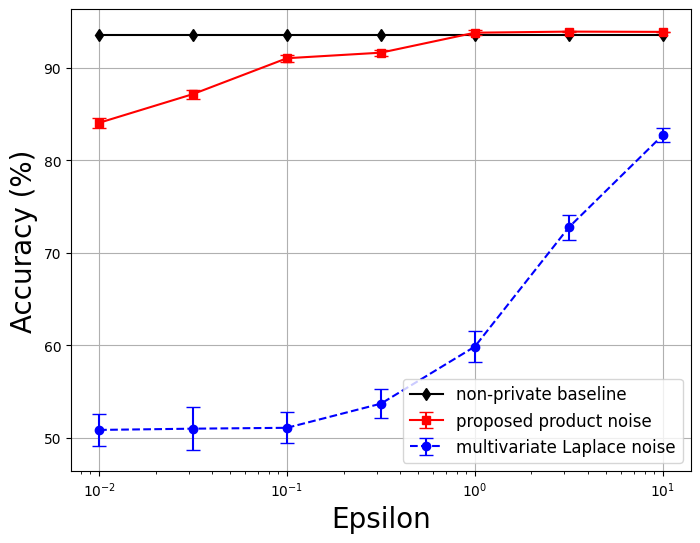}
        \Description{Accuracy results of output perturbation for RCV1 dataset.}
        \caption{\centering RCV1   }
        \label{fig:rcv1_test_accuracy_ppsgd_svm}
    \end{subfigure}
    \caption{Test accuracy  of  Output Perturbation on Huber SVM. Product Noise v.s.  multivariate  Laplace noise. The experiment setups, e.g., iteration and solver,  follow ~\cite{wu2017bolt}.} 
    \label{fig:ppsgd_svm}
\end{figure}

Regarding utility stability, the product noise-based output perturbation exhibits lower standard deviation on all datasets than the one using multivariate Laplace noise. This indicates the training process is more stable under product noise output perturbation. For instance, in the RCV1 dataset with $\epsilon = 10^{-1}$, the standard deviation of the test accuracy achieved by the product noise-based output perturbation is only 0.00360, whereas that of the multivariate Laplace mechanism is 0.01666.

\subsection{Supplemental Experiments for Objective Perturbation on Huber SVM}\label{app:svm-exp-objective}
As illustrated in Figure~\ref{fig:objective_svm}, product noise-based AMP consistently delivers stronger performance compared to the Gaussian noise-based H-F AMP across all evaluated datasets. At $\epsilon = 1$, it achieves test accuracies of $76.54\%$, $89.82\%$, $90.53\%$, and $91.95\%$ on MNIST, Synthetic-H, Real-sim, and RCV1, respectively, significantly surpasses classic  Gaussian noise-based H-F AMP and analytic Gaussian noise-based H-F AMP, particularly on RCV1 ($84.83\%$, $88.93\%$).
Product noise-based AMP retains this advantage at higher privacy parameters and can even surpass non-private baselines. For instance, on the Synthetic-H dataset with $\epsilon = 10$, our method achieves $95.33\%$ test accuracy, outperforming the baseline's $94.85\%$. 
Product noise demonstrates strong utility under tight privacy parameters, especially for high-dimensional tasks. On the RCV1 dataset, even at $\epsilon = 1$, our method reaches $91.95\%$. In contrast, classic Gaussian noise-based H-F AMP achieves only $90.59\%$ and analytic Gaussian noise-based H-F AMP achieves only $91.38\%$ under a much more relaxed privacy Parameter ($\epsilon = 10$). 
\begin{figure}[htp]
    \centering
    \begin{subfigure}{0.49\columnwidth}  
        \centering
        \includegraphics[width=\linewidth]{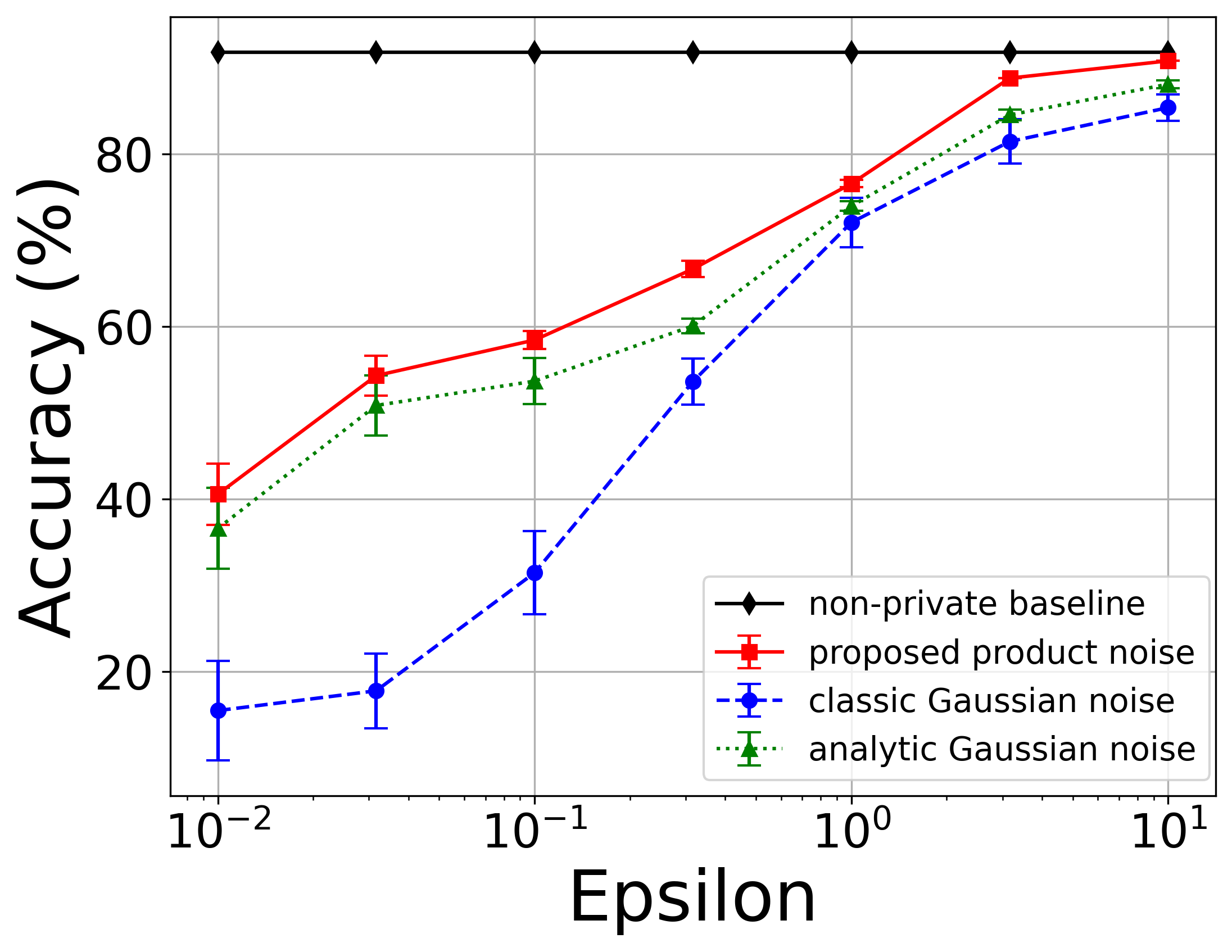}
        \Description{Graph showing accuracy versus epsilon for MNIST dataset    under objective perturbation for Huber SVM.}
        \caption{\centering  MNIST    }
        \label{fig:mnist_test_accuracy_svm}
    \end{subfigure}
    \hfill
    \begin{subfigure}{0.49\columnwidth}
        \centering
        \includegraphics[width=\linewidth]{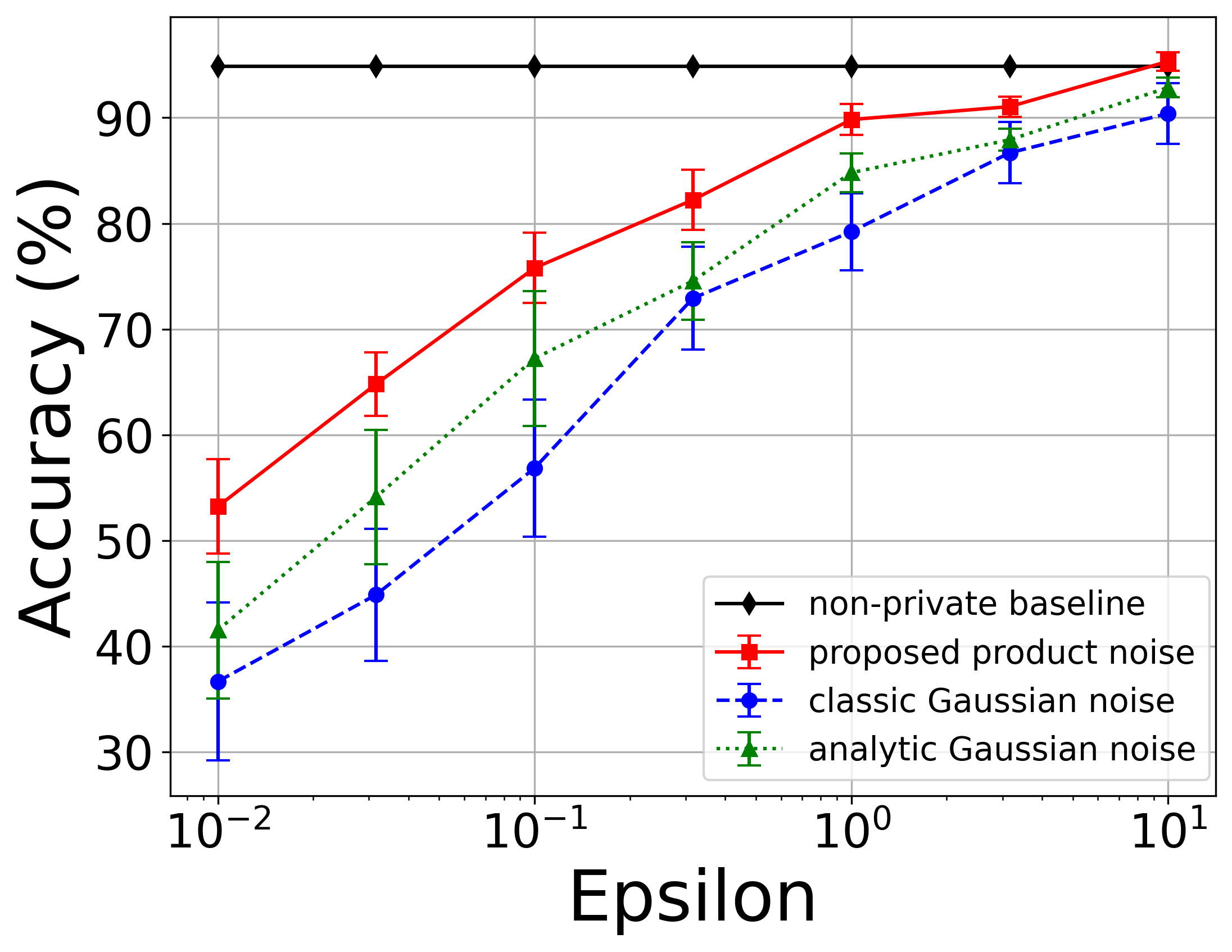}
        \Description{Graph showing accuracy versus epsilon for Synthetic-H dataset   under objective perturbation for Huber SVM.}
        \caption{\centering   Synthetic-H   }
        \label{fig:syntheticH_test_accuracy_svm}
    \end{subfigure}
    \hfill
    \begin{subfigure}{0.49\columnwidth}
        \centering
        \includegraphics[width=\linewidth]{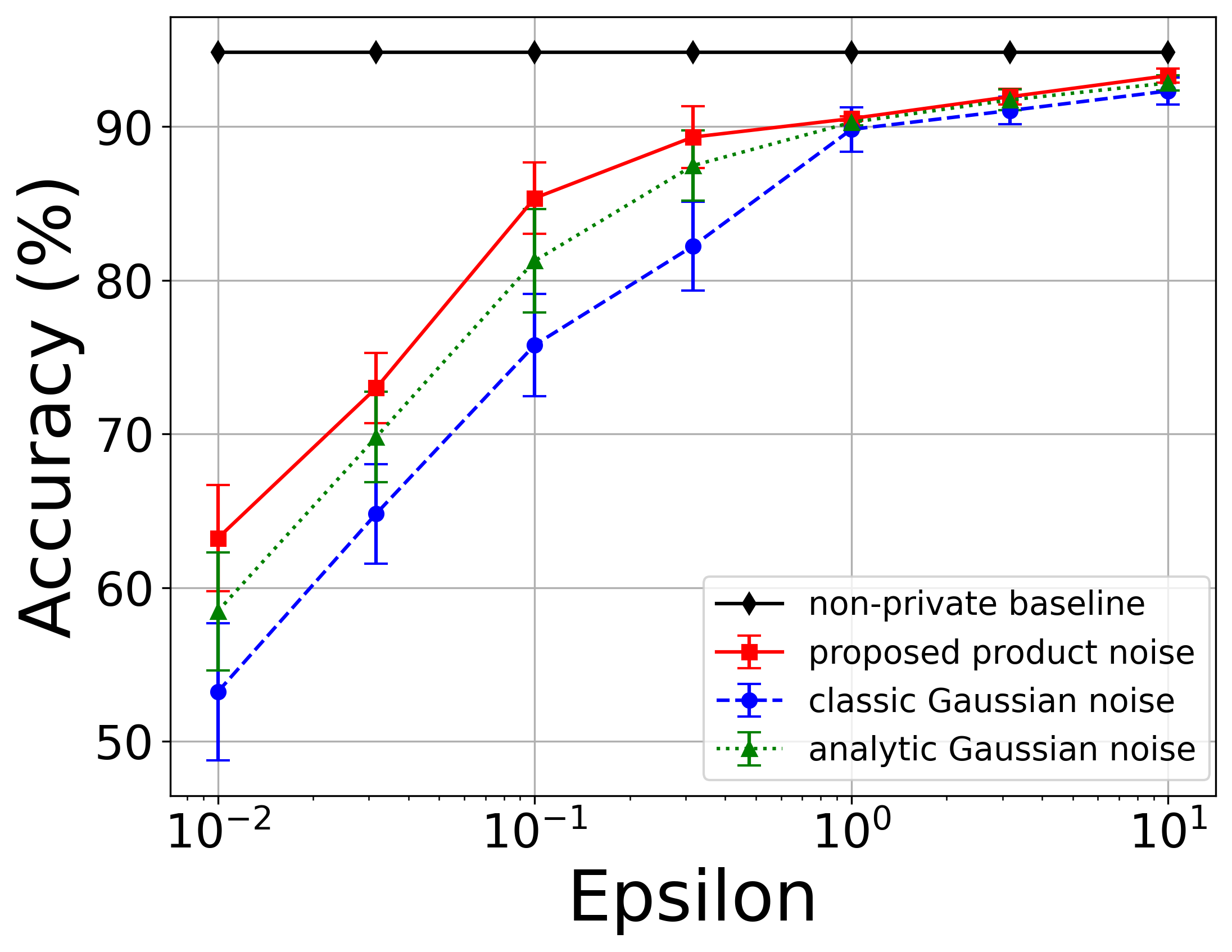}
        \Description{Graph showing accuracy versus epsilon for Real-sim dataset   under objective perturbation for Huber SVM.}
        \caption{\centering Real-sim}
        \label{fig:realsim_test_accuracy_svm}
    \end{subfigure}
    \hfill
    \begin{subfigure}{0.49\columnwidth}
        \centering
        \includegraphics[width=\linewidth]{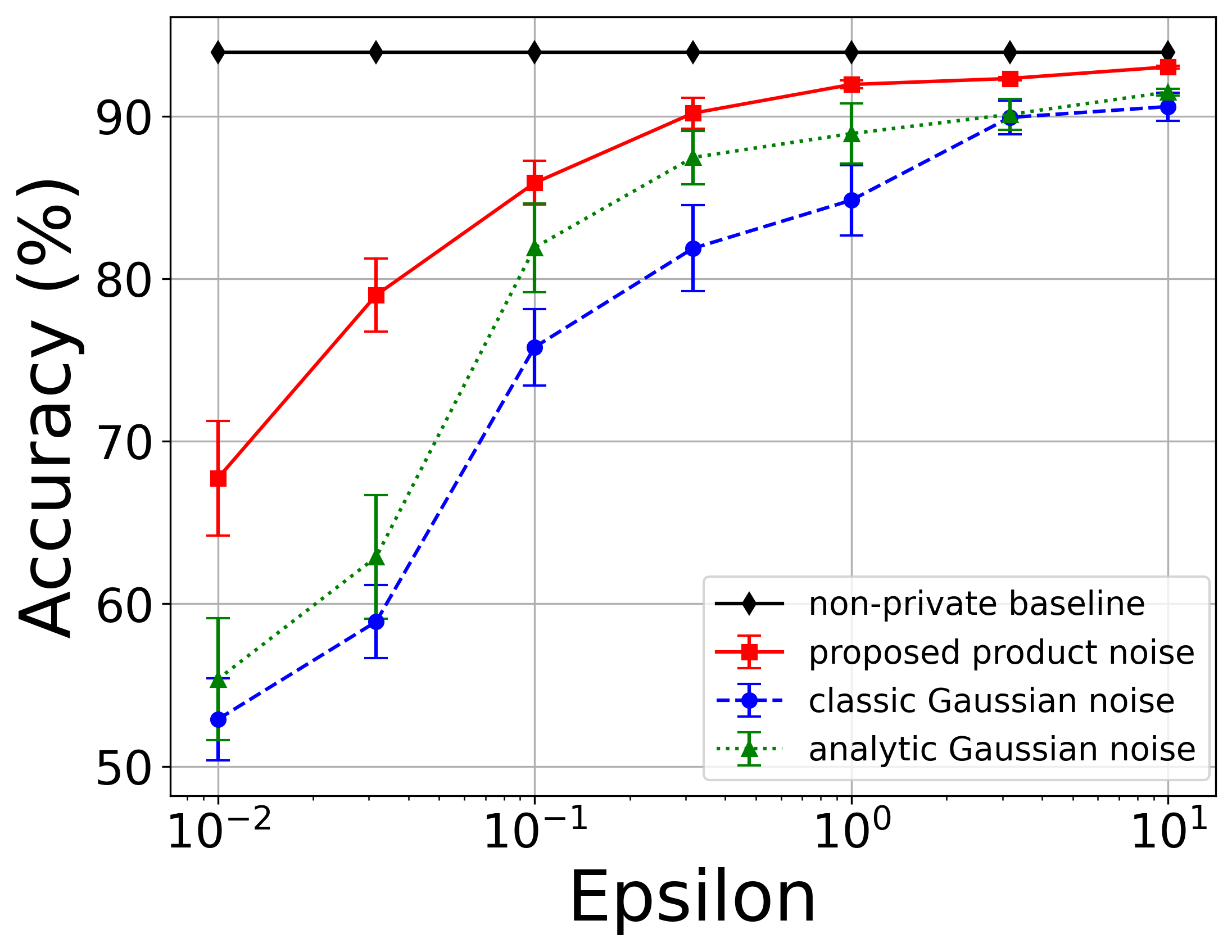}
        \Description{Graph showing accuracy versus epsilon for RCV1 dataset   under objective perturbation for Huber SVM.}
        \caption{\centering  RCV1   }
        \label{fig:rcv1_test_accuracy_svm}
    \end{subfigure}
  \caption{Test accuracy  of Objective Perturbation on Huber SVM.}
    \label{fig:objective_svm}
\end{figure}

Regarding utility stability, the product noise-based AMP consistently shows lower test accuracy standard deviation than the Gaussian noise-based H-F AMP, indicating a more stable optimization process. For example, on the RCV1 dataset with $\epsilon = 1$, the standard deviation under our method is only $0.00245$, significantly lower than $0.02153$ achieved by the classic Gaussian noise-based H-F AMP and $0.01847$ achieved by the analytic Gaussian noise-based H-F AMP.
\begin{table}[htp]
  \centering
  \caption{$\ell_2$ error of Objective Perturbation on Huber SVM.}
  \label{tab:l2_error_obj_SVM}
  \footnotesize
  \setlength{\tabcolsep}{10pt}
  \renewcommand{\arraystretch}{0.8}
  \resizebox{\columnwidth}{!}{%
  \begin{tabular}{c|c|c|c|c|c}
    \toprule
    \multirow{2}{*}{\textbf{Dataset}} &
    \multirow{2}{*}{\textbf{Mechanism}} &
    \multicolumn{4}{c}{\textbf{$\boldsymbol{\epsilon}$}} \\ 
    \cline{3-6}
     &  & \textbf{$10^{-2}$} & \textbf{$10^{-1}$} & \textbf{$10^{0}$} & \textbf{$10^{1}$} \\
    \midrule
    \multirow[c]{3}{*}{MNIST} 
      & classic  & 102.4 & 42.6 & 34.2 & 49.1 \\
      & analytic & 23.4 & 20.6 & 24.6 & 26.1 \\
      & ours     & \textbf{0.2} & \textbf{4.9} & \textbf{2.3} & \textbf{4.6} \\
    \midrule
    \multirow[c]{3}{*}{Synthetic-H} 
      & classic  & 60.2 & 212.0 & 211.4 & 70.8 \\
      & analytic & 37.6 & 35.7 & 41.8 & 40.5 \\
      & ours     & \textbf{3.0} & \textbf{1.0} & \textbf{1.2} & \textbf{2.3} \\
    \midrule
    \multirow[c]{3}{*}{Real-sim} 
      & classic  & 57.8 & 218.8 & 218.4 & 76.3 \\
      & analytic & 36.1 & 39.6 & 42.9 & 48.1 \\
      & ours     & \textbf{5.5} & \textbf{0.0} & \textbf{0.0} & \textbf{3.7} \\
    \midrule
    \multirow[c]{3}{*}{RCV1} 
      & classic  & 86.2 & 141.5 & 457.6 & 139.9 \\
      & analytic & 53.2 & 57.9 & 62.9 & 71.4 \\
      & ours     & \textbf{4.0} & \textbf{2.7} & \textbf{3.5} & \textbf{2.5} \\
    \bottomrule
  \end{tabular}}
\end{table}

\begin{table}[htp]
  \centering
  \caption{FPR of Objective Perturbation on Huber SVM.}
  \label{tab:fpr_obj_svm}
  \footnotesize
  \setlength{\tabcolsep}{4pt}
  \renewcommand{\arraystretch}{1.0}
  \resizebox{\columnwidth}{!}{%
  \begin{tabular}{c|c|c|c|c|c}
    \toprule
    \multirow{2}{*}{\textbf{Dataset}} &
    \multirow{2}{*}{\textbf{Mechanism}} &
    \multicolumn{4}{c}{\textbf{$\boldsymbol{\epsilon}$}} \\ 
    \cline{3-6}
     &  & \textbf{$10^{-2}$} & \textbf{$10^{-1}$} & \textbf{$10^{0}$} & \textbf{$10^{1}$} \\
    \midrule
    \multirow[c]{4}{*}{MNIST} 
      & classic  & 0.102$\pm$0.013 & 0.097$\pm$0.016 & 0.083$\pm$0.008 & 0.006$\pm$0.001 \\
      & analytic & 0.102$\pm$0.022 & 0.097$\pm$0.011 & 0.098$\pm$0.017 & 0.009$\pm$0.002 \\
      & ours     & \textbf{0.099$\pm$0.002} & \textbf{0.081$\pm$0.001} & \textbf{0.075$\pm$0.002} & \textbf{0.003$\pm$0.001} \\
      & baseline & 0.001 & 0.001 & 0.001 & 0.001 \\
    \midrule
    \multirow[c]{4}{*}{Synthetic-H} 
      & classic  & 0.329$\pm$0.046 & 0.158$\pm$0.032 & 0.095$\pm$0.028 & 0.033$\pm$0.018 \\
      & analytic & 0.394$\pm$0.029 & 0.241$\pm$0.013 & 0.118$\pm$0.009 & 0.077$\pm$0.005 \\
      & ours     & \textbf{0.119$\pm$0.011} & \textbf{0.059$\pm$0.008} & \textbf{0.030$\pm$0.004} & \textbf{0.008$\pm$0.002} \\
      & baseline & 0.000 & 0.000 & 0.000 & 0.000 \\
    \midrule
    \multirow[c]{4}{*}{Real-sim} 
      & classic  & 0.463$\pm$0.075 & 0.199$\pm$0.022 & 0.150$\pm$0.024 & 0.069$\pm$0.010 \\
      & analytic & 0.440$\pm$0.068 & 0.292$\pm$0.036 & 0.205$\pm$0.025 & 0.105$\pm$0.002 \\
      & ours     & \textbf{0.444$\pm$0.074} & \textbf{0.091$\pm$0.030} & \textbf{0.020$\pm$0.009} & \textbf{0.007$\pm$0.000} \\
      & baseline & 0.002 & 0.002 & 0.002 & 0.002 \\
    \midrule
    \multirow[c]{4}{*}{RCV1} 
      & classic  & 0.456$\pm$0.088 & 0.430$\pm$0.044 & 0.298$\pm$0.036 & 0.087$\pm$0.013 \\
      & analytic & 0.471$\pm$0.065 & 0.407$\pm$0.054 & 0.413$\pm$0.065 & 0.102$\pm$0.008 \\
      & ours     & \textbf{0.426$\pm$0.036} & \textbf{0.106$\pm$0.005} & \textbf{0.007$\pm$0.001} & \textbf{0.003$\pm$0.000} \\
      & baseline & 0.002 & 0.002 & 0.002 & 0.002 \\
    \bottomrule
  \end{tabular}}
\end{table}

In addition, we also report the $\ell_2$ error and FPR results for Objective Perturbation on Huber SVM in Table~\ref{tab:l2_error_obj_SVM} and Table \ref{tab:fpr_obj_svm}. The results follow the same trend as those in Section~\ref{sec:case-study-objective-perturbation}, where the product noise consistently achieves smaller $\ell_2$ errors and lower FPRs across all privacy parameters. These findings further confirm the generality of our approach, demonstrating that product noise maintains both high utility and robustness under differential privacy guarantees.

\subsection{Experiment Setup for non Convex ERM} \label{app:non-convex-setup}
We evaluate our product noise-based DPSGD on five datasets: Adult~\cite{adult_2}, IMDb~\cite{maas2011learning}, MovieLens~\cite{philips1995moment}, MNIST~\cite{lecun1998gradient}, and CIFAR-10~\cite{cifar10}. For all datasets, we adopt neural network architectures consistent with those used in~\cite{bu2020deep}. The parameter settings, including the initial privacy parameter $\epsilon$ for our method, are carefully chosen to ensure that different methods achieve comparable utility. Table~\ref{tab:dpsgd_parameters} summarizes the detailed experiment configurations for various datasets, including batch size, learning rate, gradient clipping threshold, noise multiplier (only used by classic DPSGD), and initial $\epsilon$ (only used by our product noise-based DPSGD).
\begin{table}[htp]
    \centering
    \caption{Parameters for DPSGD across various datasets.}
    \label{tab:dpsgd_parameters}
    \setlength{\tabcolsep}{4pt}
  \renewcommand{\arraystretch}{1.0}
    \small
    \resizebox{\columnwidth}{!}{ 
    \begin{tabular}{|c|c|c|c|c|c|}
        \hline
        \multirow{2}{*}{\textbf{Parameter}} & \multicolumn{5}{c|}{\textbf{Datasets}} \\ 
        \cline{2-6} 
        & \textbf{Adult} & \textbf{IMDb} & \textbf{MovieLens} & \textbf{MNIST} & \textbf{CIFAR-10} \\ 
        \hline
        Samples & 29,305 & 25,000 & 800,167 & 60,000 & 60,000 \\ 
        \hline
        \makecell{Subsampling \\ Probability ($p$)} & $\frac{256}{29305}$ & $\frac{512}{25000}$ & $\frac{10000}{800167}$ & $\frac{256}{60000}$ & $\frac{512}{60000}$ \\ 
        \hline
        Epochs & 30 & 60 & 20 & 30 & 100 \\ 
        \hline
        \makecell{Learning \\ Rate ($\eta_t$)} & 0.15 & 0.02 & 0.01 & 0.15 & 0.25 \\ 
        \hline
        \makecell{Noise \\ Multiplier ($\sigma$)} & 0.55 & 0.56 & 0.60 & 1.30 & 0.50 \\ 
        \hline
        Clip Norm ($C$) & 1.0 & 1.0 & 5.0 & 1.0 & 1.5 \\ 
        \hline
        Initial $\epsilon$ & 0.8 & 1.0 & 0.8 & 0.3 & 1.8 \\ 
        \hline
        \makecell{Tuning \\ Parameter ($k$)} & 200,000 & 40,000 & 20,000 & 40,000 & 300,000 \\ 
        \hline
    \end{tabular}
    }
\end{table}

\subsection{Gradient Perturbation on Other Datasets}\label{app:dpsgd-more-experiments}
\noindent \textbf{Adult.}  The Adult dataset consists of $32,561$ examples, with $10\%$ (3,256 examples) randomly selected as the test set, while the remaining $29,305$ examples are used for training.
\begin{figure*}[htp]
    \centering
    \begin{subfigure}{0.33\textwidth}  
        \centering
        \includegraphics[width=\linewidth]{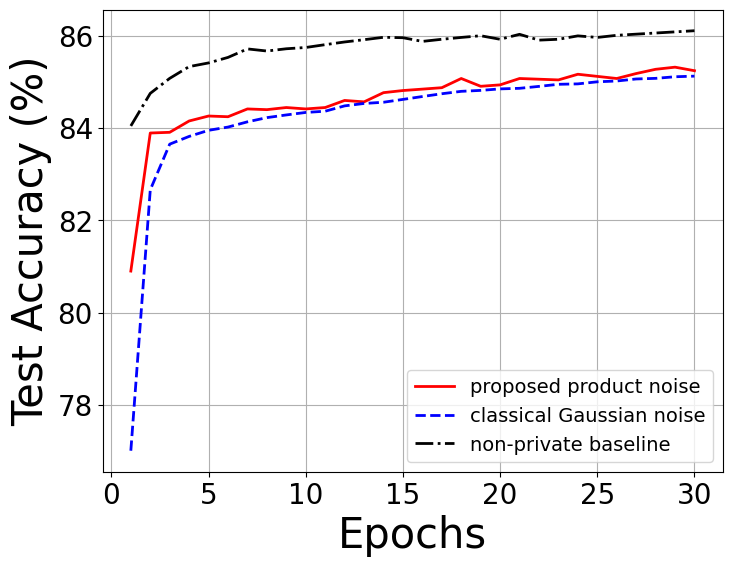}
        \Description{Graph showing test accuracy results on the Adult dataset using DPSGD.}
        \caption{Test Accuracy vs Epochs}
        \label{fig:adult_test_accuracy}
    \end{subfigure}
    \hfill
    \begin{subfigure}{0.33\textwidth}
        \centering
        \includegraphics[width=\linewidth]{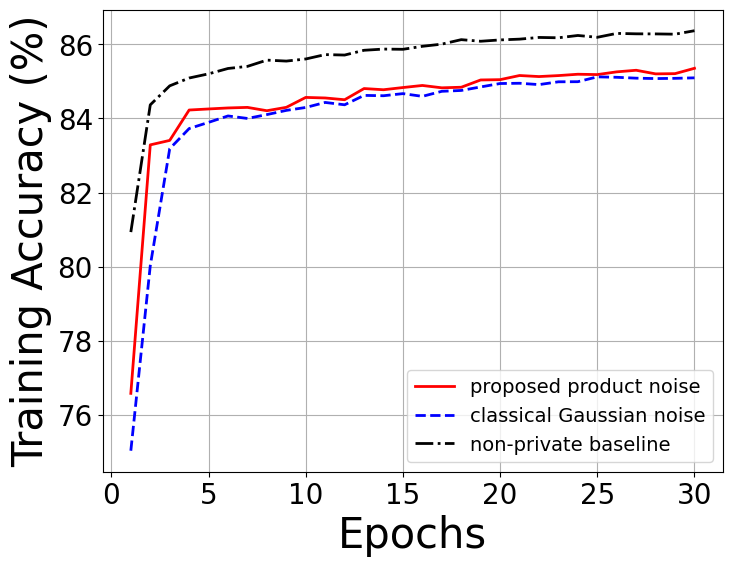}
        \caption{Training Accuracy vs Epochs}
        \Description{Training accuracy curve of the Adult dataset with DPSGD.}
        \label{fig:adult_Training_accuracy}
    \end{subfigure}
        \hfill
      \begin{subfigure}{0.33\textwidth}
        \centering
        \includegraphics[width=\linewidth]{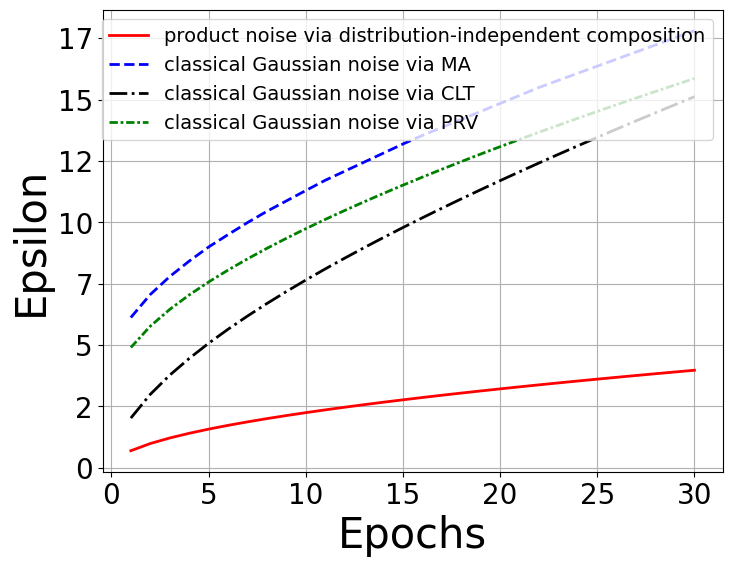}
        \Description{Graph comparing epsilon values on the Adult dataset under DPSGD.}
        \caption{Epsilon vs Epochs}
        \label{fig:adult_epsilon}
    \end{subfigure}
    \caption{DPSGD results on the Adult dataset: Test accuracy,  training accuracy, and privacy parameter ($\epsilon$) over epochs.}
    \label{fig:adult_dpsgd}
\end{figure*}

\begin{figure*}[htp]
    \centering
    \begin{subfigure}{0.33\textwidth}  
        \centering
        \includegraphics[width=\linewidth]{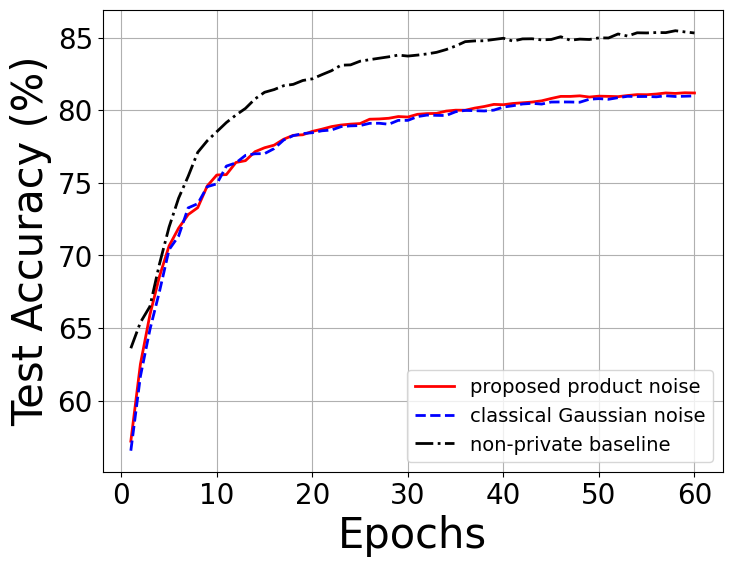}
        \Description{Graph showing test accuracy results on the IMDb dataset using DPSGD.}
        \caption{Test Accuracy vs Epochs}
        \label{fig:IMDb_test_accuracy}
    \end{subfigure}
    \hfill
    \begin{subfigure}{0.33\textwidth}
        \centering
        \includegraphics[width=\linewidth]{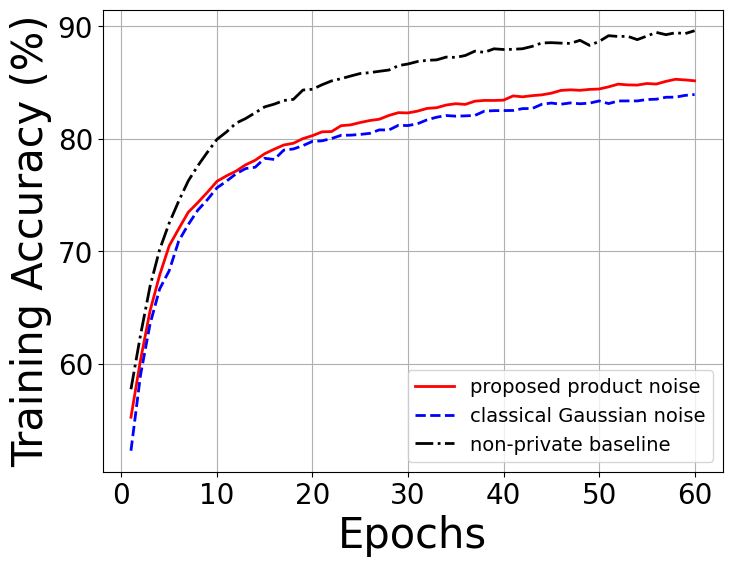}
        \caption{Training accuracy vs Epochs}
        \Description{Training accuracy curve of the IMDb dataset with DPSGD.}
        \label{fig:IMDb_Training_accuracy}
    \end{subfigure}
        \hfill
      \begin{subfigure}{0.33\textwidth}
        \centering
        \includegraphics[width=\linewidth]{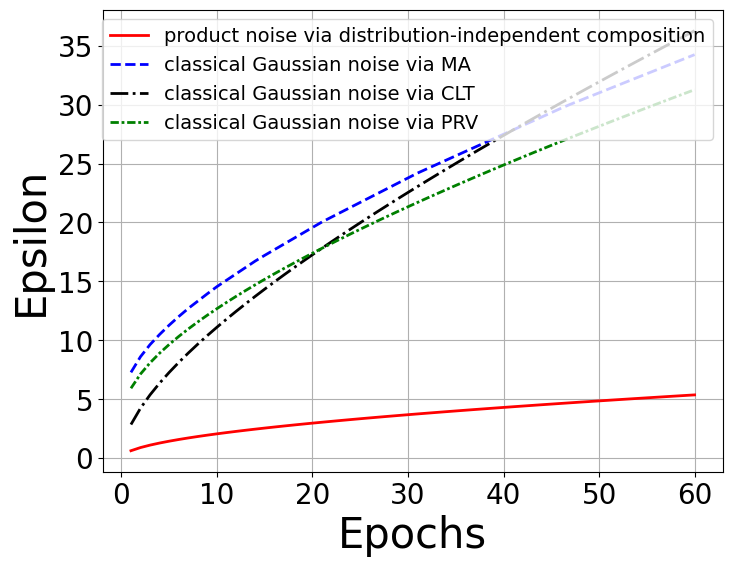}
        \Description{Graph comparing epsilon values on the IMDb dataset under DPSGD.}
        \caption{Epsilon vs Epochs}
        \label{fig:IMDb_epsilon}
    \end{subfigure}
    \caption{ DPSGD results on the IMDb dataset: Test accuracy,  training accuracy, and privacy parameter ($\epsilon$) over epochs.}
    \label{fig:IMDb_dpsgd}
\end{figure*}
As shown in Figure~\ref{fig:adult_dpsgd} (a) and (b), our product noise-based DPSGD and the Gaussian noise-based DPSGD achieve comparable utility. Over 30 epochs, the test accuracy of the product noise-based DPSGD improves from $80.90\%$ to $85.24\%$, while the Gaussian noise-based DPSGD improves from $77.00\%$ to $85.13\%$. Similarly, the training accuracy increases from $76.58\%$ to $85.35\%$ for our method, and from $75.03\%$ to $85.09\%$ for the Gaussian method. These results indicate that both methods achieve similar performance in model utility under the current experiment settings.
Under this comparable utility, our noise-based DPSGD product offers significantly stronger privacy preservation. As shown in Figure~\ref{fig:adult_dpsgd} (c), for 30 epochs, our method achieves $(3.97, 8.10 \times 10^{-6})$-DP (evaluated using privacy amplification followed by distribution-independent composition). This is notably smaller than the privacy guarantees obtained by the Gaussian noise-based DPSGD under MA ($(17.80, 10^{-5})$-DP), PRV ($(15,86, 10^{-5})$-DP), and CLT ($(15.12, 10^{-5})$-DP) composition approaches. Furthermore, the smaller $\delta$ value indicates a lower failure probability for the $(\epsilon,\delta)$-DP guarantee. These results support our argument in Observation~\ref{observation-stronger-privacy} that our method can achieve tighter privacy guarantees under comparable utility.

\begin{figure*}[htp]
    \centering
    \begin{subfigure}{0.33\textwidth}  
        \centering
        \includegraphics[width=\linewidth]{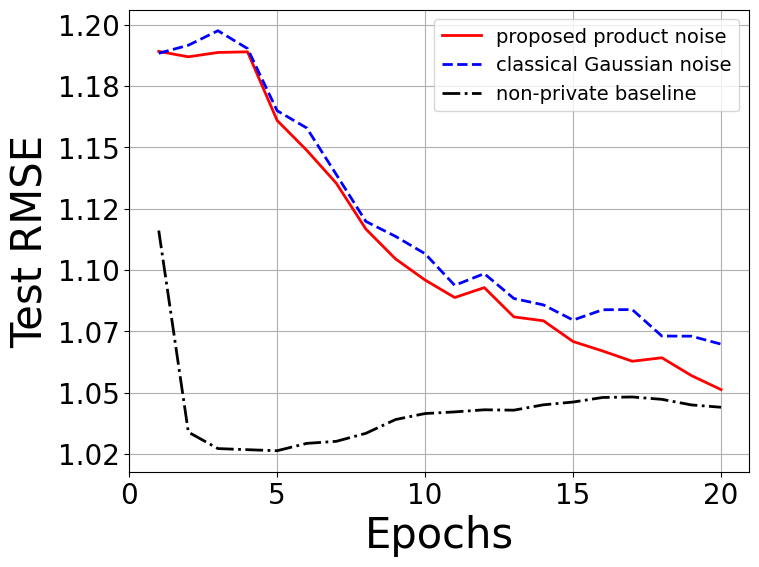}
        \Description{Graph showing test accuracy results on the MovieLens dataset using DPSGD.}
        \caption{Test Accuracy vs Epochs}
        \label{fig:MovieLens_test_accuracy}
    \end{subfigure}
    \hfill
    \begin{subfigure}{0.33\textwidth}
        \centering
        \includegraphics[width=\linewidth]{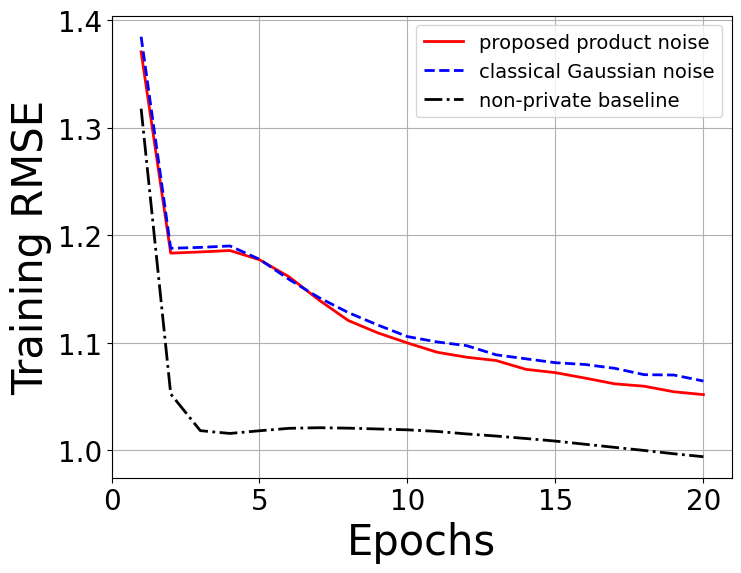}
        \caption{Training Accuracy vs Epochs}
        \Description{Training accuracy curve of the MovieLens dataset with DPSGD.}
        \label{fig:MovieLens_Training_accuracy}
    \end{subfigure}
        \hfill
      \begin{subfigure}{0.33\textwidth}
        \centering
        \includegraphics[width=\linewidth]{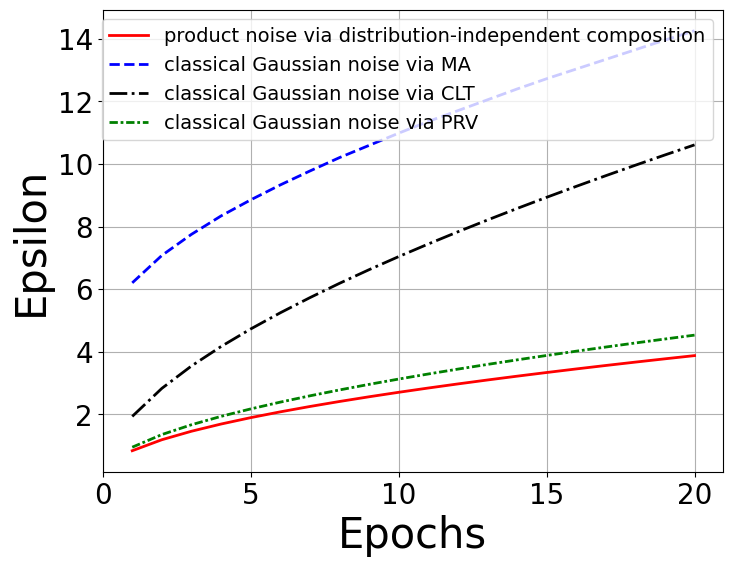}
        \Description{Graph comparing epsilon values on the MovieLens dataset under DPSGD.}
        \caption{Epsilon vs Epochs}
        \label{fig:MovieLens_epsilon}
    \end{subfigure}
    \caption{ DPSGD results on the MovieLens dataset: Test RMSE,  training  RMSE, and privacy parameter ($\epsilon$) over epochs.}
    \label{fig:MovieLens_dpsgd}
\end{figure*}

\noindent \textbf{IMDb.}  The IMDb movie review dataset contains $25,000$ training examples and $25,000$ test examples. We set each example to 256 words, truncating the length or padding with zeros if necessary.

Figure~\ref{fig:IMDb_dpsgd} (a) and (b) demonstrate that our product noise-based DPSGD performs comparably to the Gaussian noise-based DPSGD in terms of utility. For 60 epochs, the test accuracy for the product noise method increases from $57.21\%$ to $81.18\%$, closely matching the Gaussian mechanism, which rises from $56.55\%$ to $80.98\%$. The training accuracy also exhibits similar progress, with our method improving from $55.26\%$ to $85.15\%$ and the Gaussian mechanism from $53.29\%$ to $83.95\%$. These results suggest that under the current experiment configuration, both methods achieve nearly equivalent performance in model utility.
Under comparable utility, our method exhibits a clear advantage in privacy preservation. As illustrated in Figure~\ref{fig:IMDb_dpsgd} (c), at the 60th epoch, our method achieves a privacy guarantee of $(12.99, 6.50 \times 10^{-6})$-DP, which is significantly tighter than those obtained by the Gaussian mechanism using the MA ($(34.24, 10^{-5})$-DP), PRV ($(31.26, 10^{-5})$-DP), and CLT ($(36.27, 10^{-5})$-DP) composition approaches, which indicates that our method can provide stronger privacy protection under the comparable utility.

\noindent \textbf{MovieLens.} The MovieLens 1M dataset contains $1,000,209$ movie ratings, and there are totally $6,040$ users rated $3,706$ different movies. We randomly selected $20\%$ of the examples as the test set and the rest as the training set, and chosen Root Mean Square Error (RMSE) as the performance measure.

As shown in Figure~\ref{fig:MovieLens_dpsgd}(a) and (b), the product noise-based DPSGD and the Gaussian noise-based mechanism exhibit comparable utility performance during training. Over 20 epochs, the test RMSE of the product noise method decreases from 1.91 to 1.05, while that of the Gaussian mechanism decreases from 1.21 to 1.06. The training RMSE shows a similar trend, decreasing from 1.37 to 1.05 and from 1.38 to 1.06, respectively. These results indicate that, under the current experiment configuration, both methods achieve similar levels of model utility.
Under this comparable utility, the product noise mechanism also demonstrates a clear advantage in terms of privacy protection. As shown in Figure~\ref{fig:MovieLens_dpsgd} (c), at the 20th  epoch, our method achieves a privacy guarantee of $(3.88, 5.90 \times 10^{-6})$-DP, which is significantly tighter than those obtained by the classical DPSGD evaluated via MA ($(14.26, 10^{-5})$-DP), PRV ($(4.53, 10^{-5})$-DP), and CLT ($(10.61, 10^{-5})$-DP) composition approaches. This result further validates our conclusion that the product noise mechanism can achieve stronger privacy guarantees under comparable utility.

\bibliographystyle{ACM-Reference-Format}
\bibliography{ref}

\end{document}